\newtheorem{theorem}{Theorem}
\newtheorem{corollary}[theorem]{Corollary}
\newtheorem{definition}[theorem]{Definition}
\newtheorem{example}[theorem]{Example}
\newtheorem{lemma}[theorem]{Lemma}
\newtheorem{notation}[theorem]{Notation}
\newtheorem{proposition}[theorem]{Proposition}
\newtheorem{remark}[theorem]{Remark}
\newcommand\be{\begin{equation}}
\newcommand\ee{\end{equation}}
\newcommand\Span{\operatorname{span}}
\newcommand\Hom{\operatorname{Hom}}
\newcommand\ind{\operatorname{ind}}
\newcommand\id{\operatorname{id}}
\newcommand\sgn{\operatorname{sgn}}
\newcommand\rank{\operatorname{rank}}
\newcommand\tr{\operatorname{tr}}
\begin{document}
\reversemarginpar

\title{Structure and properties of the algebra of partially transposed permutation operators}

\author{Marek Mozrzymas$^1$, Micha{\l} Horodecki$^{2,3}$ and Micha{\l} Studzi{\'n}ski$^{2,3}$}

\affiliation{
$^1$Institute for Theoretical Physics,
University of Wroc{\l}aw, 50-204 Wroc{\l}aw, Poland\\
$^2$Institute for Theoretical Physics and Astrophysics,
University of Gda{\'n}sk, 80-952 Gda{\'n}sk, Poland\\
$^3$National Quantum Information Centre of Gda\'{n}sk, 81-824 Sopot, Poland}

\date{\today}

\begin{abstract}
We consider the structure of algebra of operators, acting in $n-$fold tensor product
space, which are partially transposed on the last term. Using purely algebraical
methods we show that this algebra is semi-simple and then, considering its regular
representation, we derive basic properties of the algebra. In particular, we describe
all irreducible representations of the algebra of partially transposed operators and
derive expressions for matrix elements of the representations. It appears that there
are two types of irreducible representations of the algebra. The first one is strictly
connected with the representations of the group $S(n-1)$ induced by irreducible
representations of the group $S(n-2)$. The second type is structurally connected
with irreducible representations of the group $S(n-1)$. 
\end{abstract}

\keywords{partial transposition, Peres-Horodecki criterion, symmetric group, irreducible representation}

\maketitle 

\section{ Introduction}
\label{introduction}
In this manuscript we present strong mathematical tool to study PPT property for certain class of states. Namely we consider  case when some operator is invariant under $U^{\otimes (n-k)}\otimes (U^*)^{\otimes k}$ transformations (where  $*$ denotes complex conjugation) and then it can be decomposed in terms of partially transposed permutation operators~\cite{Zhang1}.
Thanks to this we see that to full analysis it is enough to investigate properties algebra  of partially transposed permutation operators on last $k$ subsystems.  We present full solution for the simplest but nontrivial case $k=1$. Namely we present how to construct irreducible representations of the algebra of partially transposed permutation operators for an arbitrary number of subsystems $n$ and an arbitrary dimension $d$ of local Hilbert space  $\mathcal{H}$.

The concept of partial transposition plays an important role in the theory
of quantum information. The most known example is the PPT-criterion (\textsl{Positive Partial Transpose}) or Peres-Horodecki criterion~\cite{Horodecki2,Peres1} which gives us necessary and sufficient condition for separability in bipartite case when dimensions of subsystems are $2 \times 2$ and $2 \times 3$. From these works we know that whenever spectrum of partially transposed bipartite density operator is positive our state is separable. For higher dimensions and multipartite case Peres-Horodecki criterion gives only necessary condition, but still concept of partial transposition is worth to investigate. When we are dealing with states  having some symmetries problem of separability becomes much more easier to analysis. In particular Eggeling and Werner in~\cite{Eggeling1} present result on separability properties for tripartite states which are $U^{\otimes 3}$ invariant using PPT  property and tools from group theory for an arbitrary dimension of subsystem space. 
In~\cite{Tura1, Augusiak1} authors present solution on open problem of existence of four-qubit entangled symmetric states with positive partial transposition and generalize them to systems consisting an arbitrary number of qubits. Namely authors provide criteria for separability of such states formulated in terms of their ranks. PPT property turned out also relevant for a problems in computer science: it is relaxation of some complexity problem, which can be written in terms of separability~\cite{Montanaro1,Brandao1}. 

A more concrete application of the results of this paper is given in~\cite{Studzinski2}. Namely in this paper the authors present
description of universal quantum cloning machines using techniques presented here and in~\cite{Studzinski1}. They connect
quantity which describes quality of the clones (so called fidelity) with matrix representations of above mentioned
algebra. This result gives new insight into mathematical structure of quantum cloning machines and quantum
cloning in general.
One more concrete motivation behind the present study is the following: the future generalization of this paper,
i.e. extension main results to $U^{\otimes (n-k)}\otimes (U^*)^{\otimes k}$ case, for an arbitrary $k$ might allow us to find analytical expressions
for output R{\'e}nyi entropy for two copies of channel coming from some subspaces suggested by Schur-Weyl~\cite{Boerner}
duality for some fixed number of subsystems $n$ and an arbitrary dimension of Hilbert space $d$, which is relevant
for violation of additivity of minimum output entropy.

The algebra of partially transposed permutation operators, which will be denoted $A_{n}^{t_{n}}(d)$, is the algebra of operators representing the elements of the symmetric group $S(n)$ acting permutationally on a basis of tensor
product space $\mathcal{(\mathbb{C}}^{d})^{\otimes n}$, which are transposed on the last term of the tensor
product. From the definition of this algebra it follows that it is
generated, in the natural way, by partially transposed operators representing
permutations of the group $S(n)$, which will be denoted $V_{d}(\sigma
)^{t_{n}}:\sigma \in S(n).$ The important feature of the algebra $%
A_{n}^{t_{n}}(d),$ is the fact that it contains a subalgebra $A_{n-1}(d)$,
generated by operators representing the subgroup $S(n-1)\subset S(n)$ ,
which are not changed by the partial transposition and which act on the
basis of the space $\mathcal{(\mathbb{C}}^{d})^{\otimes n}$ in the standard permutational way (these operators will
be denoted $V_{d}(\sigma ):\sigma \in S(n-1)$). From this it follows that
the algebra $A_{n}^{t_{n}}(d)$ is a sum of two
subspaces 
\[
A_{n}^{t_{n}}(d)=M+A_{n-1}(d),
\]%
where the subspace $\ M$ is an ideal generated by operators $V_{d}(\sigma
)^{t_{n}}$ representing permutations $\sigma \in S(n)$, which permute the
number $n.$ The operators generating the ideal $M$ are non-invertible,
which shows that the partial transposition changes strongly the properties
of the generating operators. This ideal $M$ 
is the most important in our studies (therefore we call it $M$
as main) because, as we will show, it contains all non-trivial irreducible
representations of the algebra $A_{n}^{t_{n}}(d)$, where by
non-trivial representation we mean a representation in which the
non-invertible generators of $A_{n}^{t_{n}}(d)$ are represented by
operators not equal to zero. The natural generators of the algebra $%
A_{n}^{t_{n}}(d)$ may be linearly dependent or linearly independent, which
depends on the relation between $n$ and $d.$ The structure the algebra $%
A_{n}^{t_{n}}(d)$ and in particular, the structure of its irreducible
representations, also depends on the relation between $n$ and $d$. The
properties of the algebra $A_{n}^{t_{n}}(d)$ were considered in our previous
paper~\cite{Studzinski1}, where we have described the irreducible representations of
the algebra $A_{n}^{t_{n}}(d)$ in the case when the natural generators of $%
A_{n}^{t_{n}}(d)$ are linearly independent. We have described also the
irreducible representations of the algebra $A_{n}^{t_{n}}(d)$ in some
particular cases when the generators of $A_{n}^{t_{n}}(d)$ were linearly
dependent. These results were obtained using the representation approach, in
which we considered the action of the operators of the algebra $%
A_{n}^{t_{n}}(d)$ on a basis of the natural representation space $\mathcal{(%
\mathbb{C}
}^{d})^{\otimes n}.$

In this paper we apply a different, purely algebraical methods. First we
derive the formula for the multiplication rule for the natural generators of
the algebra $A_{n}^{t_{n}}(d)$ and then, treating the algebra in an abstract
way, we consider its properties. In particular we show that the algebra $%
A_{n}^{t_{n}}(d)$ is semi-simple. From the semi-simplicity of the algebra $%
A_{n}^{t_{n}}(d),$ it follows that it is a direct sum of matrix ideals,
which are in turns, a direct sums of all left minimal ideals of $A_{n}^{t_{n}}(d),$
or equivalently, of all irreducible representations of $A_{n}^{t_{n}}(d).$
Considering the left regular representation of the algebra $A_{n}^{t_{n}}(d)$
we construct all matrix ideals of $A_{n}^{t_{n}}(d)$ and next we describe
all irreducible representations of this algebra. From our results it follows
that the irreducible representations of the algebra $A_{n}^{t_{n}}(d)$ are
of  two types, which differ structurally. In the first type, included in the ideal $M,$ the irreducible
representations of $A_{n}^{t_{n}}(d)$ are indexed by irreducible
representations of the group $S(n-2)$ and they are strictly connected with
the representations of the group $S(n-1)$ induced by these irreducible
representations of $S(n-2).$ In these representations, when the condition  $d>n-2$ is satisfied, the 
elements $V_{d}(\sigma ):\sigma \in S(n-1)$  of the algebra $%
A_{n}^{t_{n}}(d)$ are represented as in   the representations of
the group $S(n-1)$ induced by irreducible representations of $S(n-2)$ and the dimension of such a representation of the first
type of $A_{n}^{t_{n}}(d)$ is equal to the dimension the induced
representation of $S(n-1).$  The non-invertible generators of the
ideal $M$ are represented in these representations by non-trivial
and rather complicated formulas. When $d\leq n-2$, the
situation is more complicated, because in this case some of the irreducible
representations of the first type may
be defined on some subspace of the representation space of induced
representation of $S(n-1)$. From the semi-simplicity
of the algebra $A_{n}^{t_{n}}(d)$ it follows that there exists in %
$A_{n}^{t_{n}}(d)$ an ideal $S$ such that %
\[
A_{n}^{t_{n}}(d)=M\oplus S.
\]%
The ideal $S$ contains irreducible representations of the
algebra $A_{n}^{t_{n}}(d)$ of different structure then the
representations of the first type included in the ideal $M.$ These
representations of the second type are indexed by some irreducible
representations of the group $S(n-1).$  The generators $V(\sigma
):\sigma \in S(n-1)$  of the algebra $A_{n}^{t_{n}}(d)$ 
are represented naturally by operators of irreducible representations of $%
S(n-1)$, whereas the generators of the ideal $M$  in $%
A_{n}^{t_{n}}(d)$  e.i. partially transposed operators, are
represented trivially by zero operators. So in the
representations of the second type only
the subalgebra $A_{n-1}(d)$ of $A_{n}^{t_{n}}(d)$ is represented
non-trivially  and the ideal $S$  contains
irreducible representations of the algebra $A_{n}^{t_{n}}(d)$  that
may be called semi-trivial.

In the case when the natural generators of the algebra $A_{n}^{t_{n}}(d)$
are linearly independent the results obtained in this paper coincide with the
results obtained by different method in our previous paper~\cite{Studzinski1}.

At the end of this section we note connection between algebra $A_{n}^{t_{n}}(d)$ and Walled Brauer Algebra~\cite{Benkart1,Koike1,Turaev1,Cox1} (subalgebra of Brauer Algebra~\cite{Brauer1,Pan1,Gavarini1}). Namely algebra of partially transposed permutation operators is a representation of Walled Brauer Algebra~\cite{Zhang1}. From the paper~\cite{Brundan1} we know that whenever $d>n-1$ the dimension of Walled Brauer Algebra is equal to $n!$ and dimension of $A_{n}^{t_{n}}(d)$ is also equal to $n!$ . So in this case this two algebras are isomorphic. When condition $d>n-1$ is not fulfilled we have $\operatorname{dim}A_{n}^{t_{n}}(d) <n!$ while dimension of Walled Brauer Algebra is still equal to $n!$ - we do not have isomorphic between this two algebras. 

One important implication of lack of isomorphism is the issue of semisimplicity.
Translating necessary and sufficient condition from~\cite{Cox1} into language of number of systems and local dimensions of the Hilbert space we obtain that Walled Brauer Algebra is semisimple iff $d>n-2$ and also from the same work we know how to label irreducible components. 
For $d=n-1$ both the algebra $A_{n}^{t_{n}}(d)$ and Walled Brauer Algebra are not isomorphic anymore, but they are still semisimple. When condition $d>n-2$ is not satisfy, then  algebra of partially transposed permutation operators $A_{n}^{t_{n}}(d)$ is still semisimple while Walled Brauer Algebra is not.

Summarizing, from previous works~\cite{Benkart1,Koike1,Turaev1,Cox1} we know conditions when Walled Brauer Algebra is simply reducible and we know how to label their irreducible components, but we do not know how to construct matrix elements of irreducible representations in our case (algebra of partially transposed permutation operators). This work gives solution of this problem for the simplest case, namely when partial transposition is taken over last subsystem. We have solved this problem also for the case when algebra $A_{n}^{t_{n}}(d)$ is not isomorphic to Walled Brauer Algebra. 
Namely in our paper it is shown that, the irreducible representations of the algebra $A_{n}^{t_{n}}(d)$ are labelled by representations of the group $S(n-1)$ induced by irreducible representations of $S(n-2)$ in the first kind and by irreducible representations of $S(n-1)$ in the case of the second kind of irreducible representations of the algebra $A_{n}^{t_{n}}(d)$. The matrix forms of these irreducible representations of the algebra $A_{n}^{t_{n}}(d)$ are expressed by matrices of irreducible representations of the groups $S(n-1)$ and $S(n-2)$.

Our paper is organized in the following way. In section~\ref{preliminaria} we remind briefly
the basic concepts and results in theory of groups, complex
finite-dimensional algebras and their representations. In section~\ref{partially_trans} we
introduce the algebra of partially transformed operators $A_{n}^{t_{n}}(d)$
and derive some its properties. The main results of this paper is presented
in section~\ref{irreps1}. In the remaining sections and appendices of the paper contain
the derivation of the main results. In particular in section 5 we give the
construction of the matrix ideals and minimal left ideals (irreducible
representations) of the algebra $A_{n}^{t_{n}}(d).$

\section{Preliminaria}
\label{preliminaria}
In this paper we will have to deal with the group $S(n)$, finite-dimensional
complex algebras, in particular the group algebra $\mathbb{C} \lbrack S(n)]$, and their representations, therefore in this section we
remind briefly basic properties of these structures that will be applied
later on. These results my be found in~\cite{Fulton,Landsman,Curtis,Littlewood}. In the
following we will use terminology of modules and representations, which in
fact, are equivalent, but in case of algebras we will use rather concept of
of modules and in case of group we will use the terminology of
representations. The representations of finite groups and left modules of
their group algebras are strictly connected, namely we have

\begin{theorem}
\label{th1}
There is a one-to-one correspondence between finite-dimensional complex representations of a finite-dimensional algebras and the left modules of such algebras. In particular it holds for any group algebra of a finite group and any matrix representation of the group $G$ defines, in a unique way, a matrix representation of the its group algebra $\mathbb{C}[G]$.
\end{theorem}

If the representation of the group is unitary, then the corresponding
representation of the group algebra, as an algebra of operators, is a $
C^{\ast }-$algebra. The algebra of partially transformed operators, that we
will have to deal with is in fact, as we will see, a semi-simple algebra.
Therefore we remind here the definition and the basic properties of
semisimple algebras.

\begin{definition}~\cite{Littlewood}
\label{df2}
The algebra $A$ is semisimple if it not possesses properly nilpotent~\footnote{$x\in A$ is a nilpotent element of algebra $A$ if $x^n=0$ for some $n \in \mathbb{N}$.}
elements other then zero. An element $a\in A$ is properly nilpotent if $
\forall x\in A$ the elements $xa$ and $ax$ are a nilpotent elements.
Note however, that if $xa$ is nilpotent then $ax$ is also nilpotent.
\end{definition}

We have the following criterium for the semisimplicity.

\begin{theorem}
\label{th3}
A finite-dimensional complex algebra is semi-simple iff it is a direct sum
of ideals, such that each one is isomorphic with a matrix algebra. We will
call such ideals matrix ideals.
\end{theorem}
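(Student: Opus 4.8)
The plan is to prove both directions of this equivalence, recognizing the statement as the Wedderburn--Artin structure theorem specialized to the nilpotency-based notion of semisimplicity of Definition~\ref{df2} over the algebraically closed field $\mathbb{C}$. Throughout I would work with the left regular representation of $A$ on itself and exploit the correspondence of Theorem~\ref{th1} between left modules and matrix representations.

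For the reverse implication, suppose $A=\bigoplus_i B_i$ with each $B_i$ a two-sided ideal isomorphic to a full matrix algebra $M_{n_i}(\mathbb{C})$. I would first observe that the set $N$ of properly nilpotent elements is the largest nilpotent two-sided ideal and that it respects the direct-sum decomposition, so that $N=\bigoplus_i N_i$ with $N_i$ the corresponding set in $B_i$. It then suffices to check that a full matrix algebra has no nonzero properly nilpotent elements: if $a\in M_{n_i}(\mathbb{C})$ is nonzero, a suitable product $xa$ (for instance with $x$ a matrix unit selecting a nonzero entry of $a$) has nonzero trace, hence is not nilpotent, so $a$ is not properly nilpotent. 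Therefore $N=0$ and $A$ is semisimple.

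For the forward implication, assume $A$ has no nonzero properly nilpotent elements. The first step is to upgrade this to complete reducibility of the left regular module: because the radical $N$ vanishes, I would show that every left ideal $L\subseteq A$ admits a complementary left ideal, so that $A$ decomposes as a direct sum of minimal left ideals, $A=\bigoplus_\alpha L_\alpha$. Grouping these minimal left ideals into isotypic components $U_k$ (the sum of all $L_\alpha$ isomorphic to a fixed simple module $W_k$), I would then prove that each $U_k$ is in fact a two-sided ideal: for any $a\in A$, right multiplication $x\mapsto xa$ is a homomorphism of left $A$-modules, hence sends each simple summand of $U_k$ to a submodule isomorphic to a quotient of $W_k$, which, $W_k$ being simple, is again isomorphic to $W_k$ and so lies in $U_k$. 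This yields $A=\bigoplus_k U_k$ as a direct sum of two-sided ideals, reducing the problem to identifying each $U_k$ with a matrix algebra.

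To finish, I would compute endomorphism rings. Writing $U_k\cong W_k^{\oplus m_k}$ as a left module and invoking Schur's lemma together with the algebraic closure of $\mathbb{C}$ (so that $\operatorname{End}_A(W_k)\cong\mathbb{C}$), one obtains $\operatorname{End}_A(U_k)\cong M_{m_k}(\mathbb{C})$. On the other hand, the standard identification $\operatorname{End}_A(A)\cong A^{\mathrm{op}}$ restricts on each two-sided ideal to give $U_k\cong\operatorname{End}_A(U_k)^{\mathrm{op}}\cong M_{m_k}(\mathbb{C})^{\mathrm{op}}\cong M_{m_k}(\mathbb{C})$, the last step being transposition. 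Collecting the factors yields $A\cong\bigoplus_k M_{m_k}(\mathbb{C})$, as required. I expect the main obstacle to be the forward direction: specifically, deriving complete reducibility of the regular module purely from the absence of nilpotents (rather than assuming it), and carrying out the endomorphism-ring identification, where the algebraic closure of $\mathbb{C}$ is essential to exclude nontrivial finite-dimensional division algebras and thereby force the factors to be matrix algebras over $\mathbb{C}$.
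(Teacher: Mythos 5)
The paper offers no proof of Theorem~\ref{th3} to compare against: it is recalled as a classical criterion (Wedderburn--Artin structure theory, as in Littlewood and Curtis--Reiner), so your proposal can only be judged on its own merits. Its architecture is the standard one, and much of it is sound. The reverse direction is essentially complete: since $B_iB_j=0$ for $i\neq j$, proper nilpotency can be tested blockwise, and your trace argument (a nilpotent matrix has zero trace, while $e_{kj}a$ has trace $a_{jk}\neq 0$) correctly shows $M_{n_i}(\mathbb{C})$ has no nonzero properly nilpotent elements; note you do not actually need the heavier claim that the properly nilpotent elements form ``the largest nilpotent two-sided ideal,'' which is itself a nontrivial classical lemma. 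Likewise, in the forward direction, your arguments that isotypic components are two-sided ideals and that $U_k\cong\operatorname{End}_A(U_k)^{\mathrm{op}}\cong M_{m_k}(\mathbb{C})$ via Schur's lemma over the algebraically closed field $\mathbb{C}$ are correct as far as they go.

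The genuine gap is exactly the step you flag and then skip: ``because the radical $N$ vanishes, I would show that every left ideal $L\subseteq A$ admits a complementary left ideal.'' No argument is given, and under the paper's Definition~\ref{df2} (semisimple $=$ no nonzero properly nilpotent elements) this implication is the crux of the whole theorem; everything after it is routine. The missing chain is: (i) if a left ideal satisfies $L^{2}=0$, then every $x\in L$ is properly nilpotent, since $ax\in L$ gives $(xa)^{2}=(xax)a=0$ and $(ax)^{2}=a(xax)=0$; hence semisimplicity forces $L^{2}\neq 0$ for every nonzero left ideal; (ii) a minimal left ideal with $L^{2}\neq0$ contains a nonzero idempotent $e$ with $L=Ae$ (pick $x$ with $Lx=L$, get $ex=x$, and kill $e^{2}-e$ because the annihilator of $x$ in $L$ is a proper subideal, hence zero); (iii) split off $Ae$ by the Peirce decomposition and induct on dimension to obtain complete reducibility, after which your isotypic and endomorphism arguments finish the proof. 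A second, related omission: Definition~\ref{df2} does not assume $A$ is unital, yet your identification $\operatorname{End}_A(A)\cong A^{\mathrm{op}}$ and the complement $A(\text{\noindent\(\mathds{1}\)}-e)$ both require a unit; the existence of a unit in a finite-dimensional algebra with vanishing radical is itself a consequence of steps (i)--(iii), not something that may be presupposed.
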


If the algebra is a $C^{\ast }-$algebra then semi-simplicity follows from
its structure , and we have

\begin{theorem}~\cite{Landsman}
\label{th4}
Every finite-dimensional $C^{\ast }-$algebra is a direct sum of matrix
algebras and consequently is a semisimple algebra.
\end{theorem}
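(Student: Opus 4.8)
The plan is to prove the structural decomposition by way of the semisimplicity criterion already established in Theorem~\ref{th3}. That theorem tells us that a finite-dimensional complex algebra is a direct sum of matrix ideals precisely when it is semisimple, so it suffices to establish that a finite-dimensional $C^{\ast}$-algebra $A$ is semisimple, i.e. that it possesses no nonzero properly nilpotent element in the sense of Definition~\ref{df2}. The matrix decomposition then follows immediately from Theorem~\ref{th3}, and the ``consequently semisimple'' assertion is exactly the semisimplicity we will have proved.

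The whole argument rests on the $C^{\ast}$-identity $\|a^{\ast}a\|=\|a\|^{2}$. First I would record the key lemma that a self-adjoint nilpotent element must vanish. If $b=b^{\ast}$ and $b^{m}=0$ for some $m$, then the $C^{\ast}$-identity applied to $b$ gives $\|b^{2}\|=\|b^{\ast}b\|=\|b\|^{2}$; since every power $b^{2^{k}}$ is again self-adjoint, iterating yields $\|b^{2^{k}}\|=\|b\|^{2^{k}}$. Choosing $k$ with $2^{k}\geq m$ forces $b^{2^{k}}=0$, hence $\|b\|=0$ and $b=0$.

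Next I would feed this into the definition of proper nilpotency. Suppose $a\in A$ is properly nilpotent. By Definition~\ref{df2} the element $xa$ is nilpotent for every $x\in A$, so in particular the choice $x=a^{\ast}$ makes $a^{\ast}a$ nilpotent. But $a^{\ast}a$ is self-adjoint, so the lemma gives $a^{\ast}a=0$, and then the $C^{\ast}$-identity yields $\|a\|^{2}=\|a^{\ast}a\|=0$, i.e. $a=0$. Thus $A$ contains no nonzero properly nilpotent element and is therefore semisimple. Invoking Theorem~\ref{th3} now expresses $A$ as a direct sum of matrix ideals, completing the proof.

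I expect the only delicate point to be the self-adjoint-nilpotent lemma: it is exactly here that the $C^{\ast}$-structure (rather than mere finite-dimensionality) is indispensable, since a general finite-dimensional algebra can certainly contain nonzero nilpotents. The remaining steps are bookkeeping once Theorem~\ref{th3} is available.
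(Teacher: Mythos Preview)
The paper does not actually supply a proof of Theorem~\ref{th4}; it is quoted as a known result with a citation to~\cite{Landsman}. So there is no ``paper's own proof'' against which to compare your attempt.

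That said, your argument is correct and is essentially the standard one. The self-adjoint--nilpotent lemma is right: from the $C^{\ast}$-identity one gets $\|b^{2}\|=\|b\|^{2}$ for $b=b^{\ast}$, and since every power $b^{2^{k}}$ remains self-adjoint the iteration $\|b^{2^{k}}\|=\|b\|^{2^{k}}$ goes through, forcing $b=0$ once some power vanishes. Feeding $x=a^{\ast}$ into Definition~\ref{df2} then kills any properly nilpotent $a$, and Theorem~\ref{th3} finishes the job. One small remark: the paper's Proposition~\ref{prop23} phrases the $C^{\ast}$-property only as $(ab)^{\dagger}=b^{\dagger}a^{\dagger}$, but since $A_{n}^{t_{n}}(d)$ sits inside the full matrix algebra $\Hom((\mathbb{C}^{d})^{\otimes n})$ and is closed under the hermitian adjoint, it inherits the operator norm and the genuine $C^{\ast}$-identity $\|a^{\dagger}a\|=\|a\|^{2}$ that your proof relies on. With that understood, your proof is complete.
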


The semi-simplicity determines the structure and properties of irreducible
representations of the algebra in the following way

\begin{theorem}~\cite{Littlewood}
\label{th5}
If an associative algebra $A$ over $\mathbb{C}$ is semi-simple then every irreducible $A$-module (i.e. every irreducible
representation of $A$) is isomorphic to some left minimal (i.e. irreducible)
ideal of $A.$ Moreover any left minimal ideal $I$ of $A$ is of the form%
\[
I=Ae=\{ae:a\in A\},
\]
where $e\in A$ is a primitive idempotent i.e. $e^{2}=e$ and $e$ is not the
sum if two idempotents $a,b\in A$ such that $ab=0.$
\end{theorem}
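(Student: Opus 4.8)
The plan is to derive both assertions from the structure theorem for semisimple algebras, Theorem~\ref{th3}, which already supplies a decomposition $A=\bigoplus_{i}M_{i}$ into two-sided matrix ideals $M_{i}\cong M_{d_{i}}(\mathbb{C})$, supplemented only by elementary manipulations with idempotents. Throughout I take $A$ to be unital and write $1$ for its identity, and I note that Theorem~\ref{th3} licenses passing from the nilpotent-free definition of semisimplicity used in the text to this matrix-ideal structure.

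First I would settle the classification of irreducible modules, which yields the first assertion. Since each $M_{i}$ is a two-sided ideal with $M_{i}M_{j}=0$ for $i\neq j$, the identity decomposes as $1=\sum_{i}e_{i}$ with central idempotents $e_{i}\in M_{i}$, and every $A$-module $V$ splits as $V=\bigoplus_{i}e_{i}V$, where $e_{i}V$ is an $M_{i}$-module on which the remaining factors act as zero. If $V$ is irreducible, exactly one summand $e_{i}V$ is nonzero, so $V$ is an irreducible module over a single matrix algebra $M_{d_{i}}(\mathbb{C})$. That algebra has, up to isomorphism, a unique irreducible module, the column space $\mathbb{C}^{d_{i}}$, and this module is realized inside $A$ as the minimal left ideal $M_{i}E_{jj}$, where $E_{jj}$ is any diagonal matrix unit of $M_{i}$. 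Hence every irreducible $A$-module is isomorphic to a minimal left ideal of $A$, and the minimal left ideals are precisely these column-type ideals inside the individual blocks $M_{i}$.

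Next I would show that an arbitrary minimal left ideal $I$ has the form $Ae$ with $e$ a primitive idempotent. Complete reducibility, again a consequence of Theorem~\ref{th3}, gives $A=I\oplus L$ for some left ideal $L$; I then decompose $1=e+f$ with $e\in I$, $f\in L$. For each $a\in I$ one has $a=ae+af$ with $ae\in I$ and $af\in L$, so comparison with the decomposition of $a\in I$ forces $af=0$ and $a=ae$; taking $a=e$ gives $e^{2}=e$, and $a=ae$ for all $a\in I$ gives $I=Ae$. To see $e$ is primitive, suppose $e=a+b$ with $a^{2}=a$, $b^{2}=b$, $ab=0$ and $a,b\neq 0$. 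Because $e$ is a right identity on $I=Ae$, every $x\in I$ satisfies $x=xe=xa+xb$, so $I=Aa+Ab$; and if $z=ua=vb$ then $zb=uab=0$ while $z=vb=vb^{2}=zb$, whence $z=0$, so the sum is direct with both summands nonzero. This contradicts the minimality of $I$, so $e$ is primitive.

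The main obstacle is not any individual computation but ensuring the two notions of semisimplicity are reconciled: the text defines semisimplicity by the absence of nonzero properly nilpotent elements, while the arguments above rely on complete reducibility and the matrix decomposition. I would bridge this gap precisely by invoking Theorem~\ref{th3}, which is the equivalence that converts the nilpotent-free hypothesis into the matrix-ideal structure; once that identification is made, the classification of irreducibles and the idempotent extraction are routine.
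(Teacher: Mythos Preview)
The paper does not supply a proof of this theorem: it is stated in the Preliminaria section as a background result quoted from Littlewood, with no argument given. So there is nothing in the paper to compare your attempt against.

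Your proof is correct and follows the standard route via Artin--Wedderburn (Theorem~\ref{th3}). One small point worth making explicit in the primitivity step: to contradict minimality you need $Aa,Ab\subseteq I$, i.e.\ $a,b\in I$. This follows because $e=a+b$, $e^{2}=e$, $a^{2}=a$, $b^{2}=b$, $ab=0$ together force $ba=0$ (expand $(a+b)^{2}=a+b$), whence $a=ea\in Ae=I$ and likewise $b\in I$. With that remark added, the argument is complete.
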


\begin{example}
\label{ex6}
\bigskip The matrix algebra 
\[
M(n,\mathbb{C})=\Span_{\mathbb{C}}\{e_{ij}:e_{ij}e_{kl}=\delta _{jk}e_{il},\quad i,j=1,..,n\} 
\]
is a semi-simple algebra of dimension $n^{2}$ and the irreducible $M(n,\mathbb{C})$ left-modules (representations) are generated by primitive idempotents $e_{ii},$ 
$i=1,...,n$ and are isomorphic to the space $\mathbb{C}^{n}.$ The algebra $M(n,\mathbb{C})$ endowed with hermitian conjugation is a $C^{\ast }$-algebra.
\end{example}

\begin{proposition}
\label{prop7}
\bigskip The structure of the matrix ideals (e.i the ideals that are
generated by elements $E_{ij}:\quad i,j=1,..,n$ which satisfy matrix
multiplication) which appears in the decomposition of a semisimple algebra
in the statement of Th.~\ref{th3} is the same as the structure of the matrix algebras
described in the above Ex.~\ref{ex6}. They are direct sums of left minimal ideals of
the algebra generated by the primitive idempotents (the "diagonal" elements) 
$E_{ii},$ $i=1,...,n$ of the ideal.
\end{proposition}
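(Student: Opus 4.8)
The plan is to exploit the defining isomorphism of a matrix ideal and to transport across it the module-theoretic structure already established in Example~\ref{ex6}. By Theorem~\ref{th3} we may write the semisimple algebra $A$ as a direct sum of matrix ideals, $A=\bigoplus_k B_k$, and each summand $B$ comes, by the very meaning of ``matrix ideal'', equipped with an algebra isomorphism $\phi:M(n,\mathbb{C})\to B$ for a suitable $n$. First I would set $E_{ij}:=\phi(e_{ij})$; since $\phi$ respects products, these elements automatically satisfy the matrix relations $E_{ij}E_{kl}=\delta_{jk}E_{il}$, so that $B$ literally carries the structure of $M(n,\mathbb{C})$ displayed in Example~\ref{ex6}.

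Next I would transport the idempotent data through $\phi$. In $M(n,\mathbb{C})$ the diagonal elements $e_{ii}$ are pairwise orthogonal primitive idempotents summing to the unit, and each generates an irreducible module $M(n,\mathbb{C})e_{ii}\cong\mathbb{C}^n$; all of these are purely algebraic properties and hence are preserved by the isomorphism. Thus the images $E_{ii}$ are pairwise orthogonal primitive idempotents with $\sum_i E_{ii}$ equal to the unit of $B$, and the decomposition $M(n,\mathbb{C})=\bigoplus_i M(n,\mathbb{C})e_{ii}$ pushes forward to $B=\bigoplus_i BE_{ii}$, with each $BE_{ii}\cong\mathbb{C}^n$ a minimal (irreducible) left ideal. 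This already shows that the internal structure of every matrix ideal reproduces that of Example~\ref{ex6}.

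The one point requiring care---and the step I expect to be the main, if modest, obstacle---is the passage from left ideals of $B$ to left ideals of the ambient algebra $A$, since Theorem~\ref{th5} and the statement refer to minimal left ideals of $A$ itself. Here I would use that $A=\bigoplus_k B_k$ is a direct sum of \emph{two-sided} ideals, whence $B_jB_k=0$ for $j\neq k$. Consequently, for $E_{ii}\in B$ the cross terms annihilate $E_{ii}$ and $AE_{ii}=\sum_k B_k E_{ii}=BE_{ii}$, so the left ideal generated by $E_{ii}$ in $A$ coincides with the one generated inside $B$. Hence the $E_{ii}$ are primitive idempotents of $A$ in the sense of Theorem~\ref{th5}, and $B=\bigoplus_i AE_{ii}$ exhibits each matrix ideal as a direct sum of minimal left ideals of $A$ generated by its diagonal idempotents, which is exactly the assertion.
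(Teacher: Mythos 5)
Your proposal is correct, but note that the paper never actually proves Proposition~\ref{prop7}: it is stated as a piece of standard structure theory of semisimple algebras (in the spirit of the cited Littlewood reference), and the text immediately afterwards merely points to Th.~\ref{th81}~\ref{IV}) in Appendix~\ref{AppC} as an \emph{instance} of the phenomenon, not as a proof. So your argument supplies what the paper leaves implicit, and it does so by exactly the route one would want: you transport the matrix units $e_{ij}$ through the defining isomorphism $\phi:M(n,\mathbb{C})\to B$ to get elements $E_{ij}$ satisfying the matrix relations, push the decomposition $M(n,\mathbb{C})=\bigoplus_i M(n,\mathbb{C})e_{ii}$ forward to $B=\bigoplus_i BE_{ii}$, and then--this is the genuinely necessary step, which you correctly isolate--use that the blocks are two-sided ideals with $B_jB_k=0$ for $j\neq k$ to get $AE_{ii}=BE_{ii}$, so that minimal left ideals of the block are minimal left ideals of the ambient algebra $A$, as Theorem~\ref{th5} requires. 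The only point you state slightly faster than you argue is that $E_{ii}$ is primitive \emph{in $A$}: this follows from what you proved via the standard equivalence that, in a semisimple algebra, an idempotent $e$ is primitive precisely when $Ae$ is a minimal left ideal (if $e=a+b$ with $a,b$ orthogonal nonzero idempotents, then $a=ae$ and $b=be$ split $Ae$ into a direct sum, contradicting minimality); one sentence to that effect would close the argument completely.
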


The construction of matrix ideals of the algebra representing a group
algebra (such an algebra is always semisimple) is described in the Appendix~\ref{AppC}, where the statement of Th.~\ref{th81}~\ref{IV}) is an example for of the Proposition~\ref{prop7}.

Thus from the above it follows that in order to find all irreducible
representations of a semisimple algebra $A$ we have to look for the matrix
ideals which contain left minimal (i.e. irreducible) ideals of $A$ generated
by the primitive idempotents of the ideals. In our considerations of the
properties of the irreducible representations of the algebra of operators
with one partial transposition we will construct, in Section~\ref{construction1}, such a matrix
ideals, which contains all irreducible representations of the algebra.

In our studies of irreducible representations of the algebra of partially
transposed operators the concept of induced representation of a group will
play an important role. \ In the matrix form the induced representation $%
\Psi $ of a group $G$ induced by a representation $\varphi $ of the subgroup 
$H$ is defined in the following way

\begin{definition}~\cite{Curtis}
\label{def8}
Let $\varphi :H\rightarrow M(n,%
\mathbb{C}
)$ be a matrix representation of a subgroup $H$ of the group $G.$ \ Then the
matrix form of the induced representation $\pi =\ind_{H}^{G}(\varphi )$ of a
group $G$ induced by an irrep. $\varphi $ of the subgroup $H\subset G$ has
the following block matrix form 
\[
\forall g\in G\quad \pi _{ai,bj}(g)=(\widehat{\varphi }%
_{ij}(g_{a}^{-1}gg_{b})), 
\]%
where $g_{a},$ $a=1,...,[G:H]$ are representatives of the left cosets $G/H$
and 
\[
\widehat{\varphi }_{ij}(g_{a}^{-1}gg_{b}) = \left\{ \begin{array}{ll}
\varphi _{ij}(g_{a}^{-1}gg_{b}) & \textrm{if $ \ g_{a}^{-1}gg_{b}\in H$,}\\
0 & \textrm{if $ \ g_{a}^{-1}gg_{b}\notin H$}.
\end{array} \right.
\]
\end{definition}

Any induced representation of a given group $G$ may be extended to the
representation of its group algebra $%
\mathbb{C}
\lbrack G].$

In next section we will use the following

\begin{notation}
\label{not9}
Any permutation $\sigma \in S(n)$ defines, in a natural and unique way, two
natural numbers $a,b\in \{1,2,...,n\}$%
\[
n=\sigma (a),\qquad b=\sigma (n) 
\]

Thus we may characterize any permutation by these two numbers in the
following way%
\[
\sigma \equiv \sigma _{(a,b)}\equiv \sigma_{ab}. 
\]%
Note that in general $a,b$ may be different except the case, when one of
them is equal to $n,$ because in this case we have 
\[
a=n\Leftrightarrow b=n. 
\]%
When $a=n=b,$ then $\sigma (n)=n$ and we will use abbreviation $\sigma
=\sigma _{(n,n)}\equiv \sigma _{n}\in S(n-1) \subset S(n).$
\end{notation}

\begin{example}
\label{ex10}
\bigskip 
\[
\left( 
\begin{array}{ccc}
1 & 2 & 3 \\ 
2 & 3 & 1%
\end{array}%
\right) =\left( 
\begin{array}{ccc}
1 & 2 & 3 \\ 
2 & 3 & 1%
\end{array}%
\right) _{(2,1)},\qquad \left( 
\begin{array}{cccc}
1 & 2 & 3 & 4 \\ 
2 & 1 & 4 & 3%
\end{array}%
\right) =\left( 
\begin{array}{cccc}
1 & 2 & 3 & 4 \\ 
2 & 1 & 4 & 3%
\end{array}%
\right) _{(3,3)}. 
\]
\end{example}

The irreducible representations of the symmetric group $S(m)$, which are
uniquely characterized by partitions $\lambda =(\lambda _{1},\lambda
_{2},..,\lambda _{k})$ or equivalently by Young diagrams $Y(\lambda )$ will
be denoted by greek symbols $\varphi ^{\lambda }$, $\psi ^{\nu }$ etc.

\section{The algebra of partially transposed operators.}
\label{partially_trans}
Let us consider a representation $V$ of the group $S(n)$ in the space 
 $\mathcal{H\equiv (%
\mathbb{C}
}^{d})^{\otimes n}$ defined in the following way

\begin{definition}
\label{def11}
$V:$ $S(n)$ $\rightarrow \Hom(\mathcal{(%
\mathbb{C}
}^{d})^{\otimes n})$ and 
\[
\forall \sigma \in S(n)\qquad V(\sigma ).e_{i_{1}}\otimes e_{i_{2}}\otimes
..\otimes e_{i_{n}}=e_{i_{\sigma ^{-1}(1)}}\otimes e_{i_{\sigma
^{-1}(2)}}\otimes ..\otimes e_{i_{\sigma ^{-1}(n)}}, 
\]%
where $d\in 
\mathbb{N}
$ and $\{e_{i}\}_{i=1}^{d}$ is an orthonormal basis of the space $\mathcal{%
\mathbb{C}
}^{d}.$
\end{definition}

The representation $V:$ $S(n)$ $\rightarrow \Hom(\mathcal{(%
\mathbb{C}
}^{d})^{\otimes n})$ is defined in a given basis $\{e_{i}\}_{i=1}^{d}$ of
the space $\mathcal{%
\mathbb{C}
}^{d}$ (and consequently in a given basis of $H$), so in fact it is a matrix
representation.

\begin{remark}
\label{rem12}
The representation $V:$ $S(n)$ $\rightarrow \Hom(\mathcal{(%
\mathbb{C}
}^{d})^{\otimes n})$ depends explicitley on the dimension $d$, so in fact we
should write $V\equiv V_{d}$ but for simplicity we will omit the index $d$,
unless it will be necessary.
\end{remark}

The representation $V$ of the group $S(n)$ is unitary and we have 
\[
\forall \sigma \in S(n)\qquad V(\sigma )^{\dagger}=V(\sigma ^{-1}),
\]%
where $\dagger$ denotes usual hermitian conjugation with respect to the scalar product in $
\mathcal{(\mathbb{C}}^{d})^{\otimes n}$.

For $d>1$ the representation $V$ is always reducible and we have

\begin{proposition}~\cite{Fulton}
\label{prop13}
The irreducible representation $\varphi ^{\alpha }$ of $S(n),$ indexed by
the partition $\alpha =(\alpha _{1},\alpha _{2},..,\alpha _{k}),$ is
contained in $V$ if $d\geq k\equiv h(\alpha ).$ In particular if $d\geq n$
then all irreducible representations of $S(n)$ are included in the
representation $V$ of $S(n).$ When $d\geq k\equiv h(\alpha )$ then the
multiplicity of the irreducible representation $\varphi ^{\alpha }$ of $S(n)$
is equal to%
\[
\frac{1}{n!}\sum_{\sigma \in S(n)}\chi ^{\alpha }(\sigma ^{-1})d^{l(\sigma
)}, 
\]%
where $\chi ^{\alpha }$ is the character of $\varphi ^{\alpha }$ , $l(\sigma
)$ is the number of cycles in the permutation $\sigma $ and $\chi
^{V}(\sigma )=d^{l(\sigma )}$ is the character of the representation $V:$ $%
S(n)$ $\rightarrow \Hom(\mathcal{(%
\mathbb{C}
}^{d})^{\otimes n}).$
\end{proposition}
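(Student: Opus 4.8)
The plan is to read off both the multiplicity formula and the nonvanishing condition from the character of $V$, combined with Schur--Weyl duality.

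\emph{Step 1: the character of $V$.} First I would compute $\chi ^{V}(\sigma )=\tr V(\sigma )$ directly in the tensor basis $\{e_{i_{1}}\otimes \cdots \otimes e_{i_{n}}\}$, indexed by maps $\mathbf{i}\colon \{1,\dots ,n\}\rightarrow \{1,\dots ,d\}$. By Definition~\ref{def11} the diagonal matrix element of $V(\sigma )$ on such a basis vector equals $1$ exactly when $i_{\sigma ^{-1}(k)}=i_{k}$ for all $k$, i.e. when $\mathbf{i}$ is constant along each cycle of $\sigma$, and vanishes otherwise. Counting these index maps gives one free choice of value in $\{1,\dots ,d\}$ per cycle, so that $\tr V(\sigma )=d^{l(\sigma )}$, where $l(\sigma )$ is the number of cycles of $\sigma$. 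This establishes the stated identity $\chi ^{V}(\sigma )=d^{l(\sigma )}$.

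\emph{Step 2: the multiplicity formula.} Since $V$ is unitary and hence completely reducible, the multiplicity $m_{\alpha }$ of $\varphi ^{\alpha }$ in $V$ is given by the orthogonality relations for characters of $S(n)$:
\be
m_{\alpha }=\langle \chi ^{V},\chi ^{\alpha }\rangle =\frac{1}{n!}\sum_{\sigma \in S(n)}\chi ^{V}(\sigma )\,\overline{\chi ^{\alpha }(\sigma )}.
\ee
Using the general identity $\overline{\chi ^{\alpha }(\sigma )}=\chi ^{\alpha }(\sigma ^{-1})$ and substituting $\chi ^{V}(\sigma )=d^{l(\sigma )}$ from Step~1 gives
\be
m_{\alpha }=\frac{1}{n!}\sum_{\sigma \in S(n)}\chi ^{\alpha }(\sigma ^{-1})\,d^{l(\sigma )},
\ee
which is exactly the claimed expression.

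\emph{Step 3: when the multiplicity is nonzero.} The remaining and central point is to show $m_{\alpha }>0$ if and only if $d\geq h(\alpha )$. Here I would invoke Schur--Weyl duality: as a module over $GL(d,\mathbb{C})\times S(n)$ the space $(\mathbb{C}^{d})^{\otimes n}$ decomposes as
\be
(\mathbb{C}^{d})^{\otimes n}\cong \bigoplus_{\lambda \vdash n}W_{\lambda }\otimes \varphi ^{\lambda },
\ee
where $W_{\lambda }$ is the Weyl (Schur functor) module of $GL(d,\mathbb{C})$ attached to $\lambda$. Restricting to $S(n)$ shows that the multiplicity of $\varphi ^{\alpha }$ equals $\dim W_{\alpha }$. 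The crux is that $\dim W_{\alpha }$ counts semistandard Young tableaux of shape $\alpha$ with entries in $\{1,\dots ,d\}$, and such a tableau exists precisely when the strictly increasing first column of length $h(\alpha )$ can be filled from $\{1,\dots ,d\}$, i.e. when $h(\alpha )\leq d$. Hence $m_{\alpha }=\dim W_{\alpha }>0\iff d\geq h(\alpha )$, and in particular every irreducible appears once $d\geq n$, since then $h(\alpha )\leq n\leq d$ for all $\alpha \vdash n$.

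The hard part is Step~3: the closed form of Step~2 does not by itself reveal for which $\alpha$ the character sum is positive. Identifying the multiplicity space with the $GL(d,\mathbb{C})$-module $W_{\alpha }$ is what converts the analytic sum into the clean combinatorial height condition $h(\alpha )\leq d$. One could alternatively argue nonvanishing directly via the Weyl dimension formula or the semistandard tableau count, but the duality provides the most conceptual route.
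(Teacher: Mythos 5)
Your proposal is correct, and there is nothing in the paper to compare it against: Proposition~13 is stated without proof, imported from the cited reference \cite{Fulton}. Your three-step argument --- computing $\chi^{V}(\sigma)=d^{l(\sigma)}$ by counting index tuples constant on cycles, extracting the multiplicity by character orthogonality (using $\overline{\chi^{\alpha}(\sigma)}=\chi^{\alpha}(\sigma^{-1})$), and settling positivity via Schur--Weyl duality and the semistandard-tableau count --- is exactly the standard proof given in that source, and it in fact establishes the sharper ``if and only if'' version of the containment condition, of which the paper states only the ``if'' direction.
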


The representation $V$ of $S(n)$ extends in a natural way to the
representation of the group algebra $%
\mathbb{C}
\lbrack S(n)]$ and in this way we get the algebra 
\[
A_{n}(d)=\Span_{%
\mathbb{C}
}\{V(\sigma ):\sigma \in S(n)\}\subset \Hom(\mathcal{(%
\mathbb{C}
}^{d})^{\otimes n}) 
\]%
of operators representing the elements of the group algebra $%
\mathbb{C}
\lbrack S(n)]$. Obviously the dimension of the algebra $A_{n}(d)$ depends on
the dimension $d$ and in general we have $n!=\dim 
\mathbb{C}
\lbrack S(n)]\geq $ $\dim A_{n}(d)$ and therefore the algebra $A_{n}(d)$ is
only homomorphic to $%
\mathbb{C}
\lbrack S(n)]$. From Prop.~\ref{prop13} and Th.~\ref{th84} in App.~\ref{AppC} it follows

\begin{proposition}
\label{prop14}
If $d\geq n$ then the operators $V(\sigma ):\sigma \in S(n)$ are linearly
independent and $\dim A_{n}(d)=n!$, if $d<n$ then the operators are linearly
dependent and the dimension of the algebra $A_{n}(d)$ is smaller then $n!.$
\end{proposition}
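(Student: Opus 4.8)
The plan is to realize $A_{n}(d)$ as the image of the group algebra $\mathbb{C}[S(n)]$ under the linear extension of $V$, and to read off its dimension from the decomposition of $V$ into irreducibles provided by Prop.~\ref{prop13}. Since the $n!$ operators $V(\sigma)$, $\sigma\in S(n)$, span $A_{n}(d)$ by definition, they are linearly independent if and only if $\dim A_{n}(d)=n!$; thus both assertions of the proposition reduce to a single dimension count.

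First I would use that $\mathbb{C}[S(n)]$ is semisimple, so by Th.~\ref{th3} it is a direct sum of matrix ideals, $\mathbb{C}[S(n)]\cong\bigoplus_{\alpha\vdash n}M(d_{\alpha},\mathbb{C})$, with one block for each partition $\alpha$ of $n$, where $d_{\alpha}=\dim\varphi^{\alpha}$ and $\sum_{\alpha\vdash n}d_{\alpha}^{2}=\dim\mathbb{C}[S(n)]=n!$. The extension of $V$ is a surjective algebra homomorphism $V:\mathbb{C}[S(n)]\rightarrow A_{n}(d)$, so its kernel is a two-sided ideal, and in a semisimple algebra every two-sided ideal is the sum of a subfamily of the simple blocks. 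The block $M(d_{\alpha},\mathbb{C})$ lies in the kernel precisely when $\varphi^{\alpha}$ is absent from $V$, because $V$ restricts on that block to $m_{\alpha}$ copies of the defining module and hence vanishes exactly when the multiplicity $m_{\alpha}$ is zero. Therefore $A_{n}(d)\cong\bigoplus_{\alpha:\,m_{\alpha}>0}M(d_{\alpha},\mathbb{C})$ and
\[
\dim A_{n}(d)=\sum_{\substack{\alpha\vdash n\\ m_{\alpha}>0}}d_{\alpha}^{2}.
\]
By Prop.~\ref{prop13} the condition $m_{\alpha}>0$ is equivalent to $h(\alpha)\leq d$.

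For $d\geq n$, every partition $\alpha$ of $n$ has at most $n$ rows, so $h(\alpha)\leq n\leq d$ and every block contributes; hence $\dim A_{n}(d)=\sum_{\alpha\vdash n}d_{\alpha}^{2}=n!$, and the spanning operators $V(\sigma)$ are linearly independent. For $d<n$, I would exhibit one explicitly excluded partition: the hook $\alpha_{0}=(n-d,1^{d})$ is a partition of $n$ with $d+1$ rows, so $h(\alpha_{0})=d+1>d$ and, by Prop.~\ref{prop13}, its multiplicity in $V$ is zero. Its block is then missing from the sum, giving
\[
\dim A_{n}(d)=\sum_{\substack{\alpha\vdash n\\ h(\alpha)\leq d}}d_{\alpha}^{2}\leq n!-d_{\alpha_{0}}^{2}<n!,
\]
since $d_{\alpha_{0}}^{2}\geq 1$. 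As the $n!$ operators $V(\sigma)$ then span a space of dimension strictly below $n!$, they must be linearly dependent.

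The main obstacle is the structural step identifying the kernel blocks with the vanishing multiplicities, namely that passing from $\mathbb{C}[S(n)]$ to its image under $V$ deletes exactly those simple components whose irreducible representation is absent from $V$. This rests on the semisimplicity of the group algebra together with the fact that a two-sided ideal of a semisimple algebra is a sum of whole simple components; once this is established, both cases follow at once from Prop.~\ref{prop13} and the identity $\sum_{\alpha\vdash n}d_{\alpha}^{2}=n!$.
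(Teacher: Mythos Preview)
Your proof is correct and follows essentially the same route as the paper: both establish $\dim A_n(d)=\sum_{\alpha:\,\varphi^\alpha\in V}(\dim\varphi^\alpha)^2$ and then invoke Prop.~\ref{prop13}. The paper derives this dimension formula via the explicit matrix operators $E_{ij}^\alpha$ of Appendix~\ref{AppC} (Th.~\ref{th81} and Th.~\ref{th84}), whose orthogonality and nonvanishing encode precisely your abstract kernel-of-semisimple-blocks argument.
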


The algebra $A_{n}(d)$ contain a natural subalgebra 
\[
A_{n-1}(d)=\Span_{%
\mathbb{C}
}\{V(\sigma _{n}):\sigma _{n}\in S(n-1)\}. 
\]%
Extending anti-linearly the hermitian conjugation $\dagger$ of the representation $%
V[S(n)]$, we endove the algebra $A_{n}(d)$ with a $C^{\ast }-$algebra
structure getting a $C^{\ast }-$algebra. The semi-simplicity of the algebra $%
A_{n}(d)$ follows from the complete reducibility of the representation $V:$ $%
S(n)$ $\rightarrow \Hom(\mathcal{(%
\mathbb{C}
}^{d})^{\otimes n})$ or from the $C^{\ast }-$algebra structure of $A_{n}(d)$.

Our main task in this paper is to study the properties of the following
algebra of partially transposed operators

\begin{definition}
\label{def15}
For $A_{n}(d)=\Span_{%
\mathbb{C}
}\{V(\sigma ):\sigma \in S(n)\}$ we define a new complex algebra 
\[
A_{n}^{t_{n}}(d)=\Span_{%
\mathbb{C}
}\{V(\sigma )^{t_{n}}:\sigma \in S(n)\}\subset \Hom(\mathcal{(%
\mathbb{C}
}^{d})^{\otimes n}), 
\]%
where the symbol $t_{n}$ describes the partial transpose in the last place
in the space $\Hom(\mathcal{(%
\mathbb{C}
}^{d})^{\otimes n}).$ The elements $V(\sigma )^{t_{n}}:\sigma \in S(n)$ will
be called natural generators of the algebra $A_{n}^{t_{n}}(d).$
\end{definition}

Directly from this definition it follows

\begin{proposition}
\label{prop16}
\bigskip a) if $\sigma =\sigma _{n}\in S(n-1)\subset S(n)$ then $V(\sigma
_{n})^{t_{n}}=V(\sigma _{n})$ i.e. $V(S(n-1))\subset A_{n}^{t_{n}}(d),$
where $S(n-1)\subset S(n)$ means the natural embedding: $\sigma =\sigma
_{n}\in S(n-1)\subset S(n)$ then $\sigma (n)=n,$

b) $\dim _{%
\mathbb{C}
}A_{n}(d)=\dim _{%
\mathbb{C}
}A_{n}^{t_{n}}(d)$
\end{proposition}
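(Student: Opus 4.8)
The plan is to treat the two parts independently, using only elementary properties of the partial transpose; neither part needs the structural results (semi-simplicity, matrix ideals) developed later.

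For part a), I would exploit the fact that a permutation $\sigma_{n}\in S(n-1)$ fixes the last index, $\sigma_{n}(n)=n$, equivalently $\sigma_{n}^{-1}(n)=n$, so $V(\sigma_{n})$ leaves the last tensor factor untouched. Concretely, from Definition~\ref{def11} the last slot of $V(\sigma_{n}).e_{i_{1}}\otimes\cdots\otimes e_{i_{n}}$ is $e_{i_{\sigma_{n}^{-1}(n)}}=e_{i_{n}}$, while the first $n-1$ slots are permuted among themselves; hence
\[
V(\sigma_{n})=\widetilde{V}(\sigma_{n})\otimes\mathds{1}_{d},
\]
where $\widetilde{V}(\sigma_{n})$ is the permutation operator of $\sigma_{n}$ restricted to $\{1,\dots,n-1\}$, acting on $(\mathbb{C}^{d})^{\otimes(n-1)}$, and $\mathds{1}_{d}$ is the identity on the last factor. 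Since $t_{n}$ acts only on the last factor and the identity is symmetric, $\mathds{1}_{d}^{\,t}=\mathds{1}_{d}$, we obtain $V(\sigma_{n})^{t_{n}}=\widetilde{V}(\sigma_{n})\otimes\mathds{1}_{d}^{\,t}=\widetilde{V}(\sigma_{n})\otimes\mathds{1}_{d}=V(\sigma_{n})$. Equivalently, one checks directly in the tensor basis that the last-slot factor of the matrix element is the symmetric Kronecker factor $\delta_{i_{n}j_{n}}$, which transposition leaves unchanged. The containment $V(S(n-1))\subset A_{n}^{t_{n}}(d)$ is then immediate: each such $V(\sigma_{n})=V(\sigma_{n})^{t_{n}}$ is by Definition~\ref{def15} a natural generator of $A_{n}^{t_{n}}(d)$.

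For part b), the key observation is that the partial transpose $t_{n}:\Hom((\mathbb{C}^{d})^{\otimes n})\to\Hom((\mathbb{C}^{d})^{\otimes n})$, $X\mapsto X^{t_{n}}$, is a linear involution — transposition on a single tensor factor is linear and satisfies $(X^{t_{n}})^{t_{n}}=X$ — and hence a linear isomorphism of the whole operator space. By linearity, applying $t_{n}$ to a general element $\sum_{\sigma}c_{\sigma}V(\sigma)$ of $A_{n}(d)$ gives $\sum_{\sigma}c_{\sigma}V(\sigma)^{t_{n}}$, so $t_{n}(A_{n}(d))=A_{n}^{t_{n}}(d)$ as vector spaces. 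A linear isomorphism preserves the dimension of any subspace, whence $\dim_{\mathbb{C}}A_{n}^{t_{n}}(d)=\dim_{\mathbb{C}}t_{n}(A_{n}(d))=\dim_{\mathbb{C}}A_{n}(d)$.

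There is no serious obstacle here: the proposition rests on the two elementary facts that an element $\sigma_{n}\in S(n-1)$ yields an operator factoring as $(\,\cdot\,)\otimes\mathds{1}_{d}$, and that $t_{n}$ is a linear bijection. The only point that deserves a moment of care is making the identification $t_{n}(A_{n}(d))=A_{n}^{t_{n}}(d)$ explicit, i.e. that the transpose of a linear combination of the $V(\sigma)$ is the corresponding linear combination of the $V(\sigma)^{t_{n}}$ — which is nothing more than linearity of $t_{n}$.
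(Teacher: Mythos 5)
Your proof is correct and follows essentially the same route the paper takes: part a) is exactly the ``directly from the definition'' observation that $V(\sigma_{n})$ acts as the identity on the last tensor factor (hence is fixed by $t_{n}$), and part b) is precisely the paper's appeal to the invertibility of the partial transpose, which you correctly spell out as a linear bijection mapping $A_{n}(d)$ onto $A_{n}^{t_{n}}(d)$ and hence preserving dimension. No gaps; your write-up just supplies the details the paper leaves implicit.
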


The second statement in this Proposition follows from the invertibility of
the partial transpose and from the Propositions~\ref{prop14},~\ref{prop16} we get

\begin{corollary}
\label{col17}
\bigskip When $d<n$ then the generating elements $V(\sigma )^{t_{n}}$ $%
:\sigma \in S(n)$ of the algebra $A_{n}^{t_{n}}(d)$ are linearly dependent.
\end{corollary}

\begin{remark}
\label{rem18}
Because the partial transpose does not change the elements $V(\sigma _{n})$ $%
\in $ $V(S(n-1))\subset A_{n}^{t_{n}}(d)$ therefore, in the following, we
will write simply $V(\sigma _{n})$ instead $V(\sigma _{n})^{t_{n}}$ when $%
\sigma _{n}$ $\in $ $S(n-1)$ $.$
\end{remark}

The essential for our studies of the properties of the algebra $%
A_{n}^{t_{n}}(d)$ is to describe the composition law of this algebra$.$ A
rather laborious direct calculation gives the following result

\begin{theorem}
\label{th19}
The elements $V(\sigma )^{t_{n}}:\sigma \in S(n)$ which span the algebra $%
A_{n}^{t_{n}}(d)$ have the following composition rule%
\[
V(\sigma _{n})^{t_{n}}V(\rho _{n})^{t_{n}}=V(\sigma _{n})V(\rho
_{n})=V(\sigma _{n}\rho _{n}), 
\]%
\[
V(\sigma _{n})V(\rho _{(a,b)})^{t_{n}}=V(\sigma _{n}\rho
_{(a,b)})^{t_{n}},\qquad V(\sigma _{(a,b)})^{t_{n}}V(\rho _{n})=V(\sigma
_{(a,b)}\rho _{n})^{t_{n}}, 
\]%
\[
V(\sigma _{(a,b)})^{t_{n}}V(\rho _{(p,q)})^{t_{n}}=d^{\delta _{aq}}V[(\sigma
(q)n)\sigma _{(a,b)}\rho _{(p,q)}(pn)]^{t_{n}}, 
\]%
where $a,b\neq n$ and $c,d\neq n$ and the multiplication rule depends
explicitly on $d.$
\end{theorem}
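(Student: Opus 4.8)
My plan is to reduce everything to a careful bookkeeping of how the partial transpose $t_n$ interacts with the permutation action of $V(\sigma)$, and then to verify each of the four composition rules by tracking what happens on the last tensor factor. The essential observation is that for $\sigma \in S(n-1)$ (i.e. $\sigma=\sigma_n$ with $\sigma(n)=n$), the operator $V(\sigma)$ acts trivially on the $n$-th factor, so $V(\sigma_n)^{t_n}=V(\sigma_n)$; this is already recorded in Proposition~\ref{prop16} and Remark~\ref{rem18}. Hence the first line is immediate: $V(\sigma_n)^{t_n}V(\rho_n)^{t_n}=V(\sigma_n)V(\rho_n)=V(\sigma_n\rho_n)$, since these are just ordinary group-algebra products inside the subalgebra $A_{n-1}(d)$.

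For the second line, I would exploit the general identity that for an operator $X$ acting only on the first $n-1$ factors and an arbitrary $Y$, one has $X\,(Y^{t_n})=(X Y)^{t_n}$ and likewise $(Y^{t_n})\,X=(Y X)^{t_n}$, because $t_n$ transposes only the last slot, on which $X=V(\sigma_n)$ acts as the identity and therefore commutes with the transposition operation. Concretely, writing $V(\sigma_n)$ as $W\otimes \id$ on $(\mathbb{C}^d)^{\otimes(n-1)}\otimes\mathbb{C}^d$, we get $(W\otimes\id)(Y)^{t_n}=((W\otimes\id)Y)^{t_n}$ from $(A\otimes B)^{t_n}=A\otimes B^{T}$ and the fact that $\id^{T}=\id$. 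Applying this with $Y=V(\rho_{(a,b)})$ gives $V(\sigma_n)V(\rho_{(a,b)})^{t_n}=V(\sigma_n\rho_{(a,b)})^{t_n}$, and the mirror computation gives the right-hand relation. The mild subtlety is checking that $\sigma_n\rho_{(a,b)}$ and $\sigma_{(a,b)}\rho_n$ are indexed correctly in the $(a,b)$ notation of Notation~\ref{not9}, but since left/right multiplication by an element fixing $n$ changes only one of the two distinguished indices, this is a routine check.

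**The hard part.**

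The genuine work, and the main obstacle, is the last relation
\[
V(\sigma_{(a,b)})^{t_n}V(\rho_{(p,q)})^{t_n}=d^{\delta_{aq}}V[(\sigma(q)\,n)\,\sigma_{(a,b)}\,\rho_{(p,q)}\,(p\,n)]^{t_n},
\]
where \emph{both} factors are partially transposed on the same slot, so the transposes do not simply pass through one another. Here I would work directly in the basis $\{e_{i_1}\otimes\cdots\otimes e_{i_n}\}$ and compute matrix elements. Writing out $V(\sigma)^{t_n}$ explicitly, its matrix elements are obtained from those of $V(\sigma)$ by swapping the two indices sitting in the $n$-th slot; composing two such objects then forces a contraction over the last index, and this contraction is precisely where the factor $d^{\delta_{aq}}$ and the correcting transpositions $(\sigma(q)\,n)$ and $(p\,n)$ appear. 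The delta $\delta_{aq}$ records the case in which the contraction over the $n$-th slot closes into a free loop, producing a trace $\sum_i 1 = d$; otherwise the contraction merely relabels indices and is absorbed into the composite permutation. I expect the bulk of the difficulty to be purely combinatorial: identifying, from the index chase, that the resulting operator is again a partially transposed permutation operator and that the underlying permutation is exactly $(\sigma(q)\,n)\,\sigma_{(a,b)}\,\rho_{(p,q)}\,(p\,n)$, with the transpositions inserted so as to restore the normal form of Notation~\ref{not9}. Once the two generic cases $a=q$ and $a\neq q$ are handled, and one checks consistency with the three easier relations as limiting instances, the composition law is established, which is what the theorem asserts.
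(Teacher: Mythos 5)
Your proposal is correct and takes essentially the same route as the paper: the paper gives no written proof of this theorem, stating only that ``a rather laborious direct calculation'' yields the composition rule, and your outline is precisely that calculation --- the trivial product inside $A_{n-1}(d)$, the tensor-factorization argument $ (W\otimes\id)Y^{t_n}=((W\otimes\id)Y)^{t_n}$ for the mixed products, and the matrix-element index chase over the contracted $n$-th slot, with the factor $d^{\delta_{aq}}$ correctly attributed to the free summation (closed loop) that arises exactly when $a=q$. Carrying the index chase through does produce the stated permutation $(\sigma(q)\,n)\,\sigma_{(a,b)}\,\rho_{(p,q)}\,(p\,n)$ in both cases $a=q$ and $a\neq q$, so your plan is sound and no step of it fails.
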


According to these formulas the composition of operators $V(\sigma
_{(a,b)})^{t_{n}}$ is expressed by composition of standard permutations.

\begin{remark}
\label{rem20}
\bigskip Note that, contrary to the case of the standard algebra $A_{n}(d)$,
the composition rule in the algebra $A_{n}^{t_{n}}(d)$ depends explicitly
on the dimension $d.$ Therefore for different values of $d$ we have to deal
with different algebras.
\end{remark}

\begin{remark}
\label{rem21}
\bigskip The explicit formulas for the multiplication in the algebra $%
A_{n}^{t_{n}}(d)$ allows to consider this algebra, in purely abstract way,
as an algebra generated by the elements $V(\sigma )^{t_{n}}$ $:\sigma \in
S(n)$, which satisfies the multiplication rules stated in the Th.~\ref{th19}. In the
following we will treat the algebra in this way.
\end{remark}

Many particular cases follow almost immediately from the above composition
law.

\begin{example}
\label{ex22}
\[
V(kn)^{t_{n}}V(jn)^{t_{n}}=V[(jn)(kn)]^{t_{n}},\quad k\neq j. 
\]%
\[
V(kn)^{t_{n}}V(kn)^{t_{n}}=dV[(kn)]^{t_{n}}, 
\]%
\[
V(ijn)^{t_{n}}V(ijn)^{t_{n}}=V(ijn)^{t_{n}} 
\]%
so in particular $V(kn)^{t_{n}}$ and $V(ijn)^{t_{n}}$ are (essential)
projectors.
\end{example}
\begin{example}
In the table below we present composition properties for algebra $A_3^{t_3}(d)$.
\begin{table}[h!]
\label{table1}
\begin{tabular}{l|llllll}
$\circ $ & $\text{\noindent\(\mathds{1}\)}$ & $(132)^{t}$ & $(123)^{t}$ & $(12)^{t}$ & $(13)^{t}$
& $(23)^{t}$ \\ 
\hline
$\text{\noindent\(\mathds{1}\)}$ & $\text{\noindent\(\mathds{1}\)}$ & $(132)^{t}$ & $(123)^{t}$ & $(12)^{t}$ & $%
(13)^{t}$ & $(23)^{t}$ \\ 
$(132)^{t}$ & $(132)^{t}$ & $(132)^{t}$ & $d(23)^{t}$ & $(23)^{t}$ & $%
d(132)^{t}$ & $(23)^{t}$ \\ 
$(123)^{t}$ & $(123)^{t}$ & $d(13)^{t}$ & $(123)^{t}$ & $(13)^{t}$ & $%
(13)^{t}$ & $d(123)^{t}$ \\ 
$(12)^{t}$ & $(12)^{t}$ & $(13)^{t}$ & $(23)^{t}$ & $\text{\noindent\(\mathds{1}\)}$ & $%
(132)^{t} $ & $(123)^{t}$ \\ 
$(13)^{t}$ & $(13)^{t}$ & $(13)^{t}$ & $d(123)^{t}$ & $(123)^{t}$ & $%
d(13)^{t}$ & $(123)^{t}$ \\ 
$(23)^{t}$ & $(32)^{t}$ & $d(132)^{t}$ & $(23)^{t}$ & $(132)^{t}$ & $%
(132)^{t}$ & $d(23)^{t}$%
\end{tabular}
\caption{For simplicity here we have $t=t_{3}.$ From this table it follows that $(132)^{t},$ $(123)^{t},$ $%
(13)^{t},(23)^{t}$ \ of the algebra $A_{3}^{t_{3}}(d)$ are idempotents or
essential idempotents. The operator $(12)$, as well $\ \text{\noindent\(\mathds{1}\)},$ remains
unchanged under the transposition on the third position $t_{3}.$}
\end{table}
\end{example}

\bigskip The hermitian conjugation in the space $\Hom(\mathcal{(%
\mathbb{C}
}^{d})^{\otimes n})$ commutes with the partial transpose i.e. we have 
\[
\forall \sigma \in S(n)\quad (V(\sigma )^{t_{n}})^{\dagger}=((V(\sigma
)^{\dagger})^{t_{n}}=V(\sigma ^{-1})^{t_{n}} 
\]%
so the algebra $A_{n}^{t_{n}}(d)$ is invariant under the hermitian
conjugation $\dagger$ and moreover we have

Moreover directly form the definition of the operators $V(\sigma )^{t_{n}}$ we have:
\begin{proposition}
\label{prop23}
The algebra $A_{n}^{t_{n}}(d)$ with the hermitian conjugation is a $C^{\ast
}-$algebra i.e.%
\[
\forall a,b\in A_{n}^{t_{n}}(d)\quad (ab)^{\dagger}=b^{\dagger}a^{\dagger}. 
\]
\end{proposition}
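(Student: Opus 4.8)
The plan is to exploit the fact that $A_{n}^{t_{n}}(d)$ sits inside the full operator algebra $\Hom(\mathcal{(\mathbb{C}}^{d})^{\otimes n})$, which --- equipped with the operator norm and the usual hermitian conjugation $\dagger$ --- is itself a $C^{\ast}$-algebra. Thus the entire statement reduces to checking that $A_{n}^{t_{n}}(d)$ is a $\dagger$-closed subalgebra of this ambient $C^{\ast}$-algebra; the $C^{\ast}$-axioms then descend automatically, and in particular the asserted identity $(ab)^{\dagger}=b^{\dagger}a^{\dagger}$ comes for free.

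First I would record the two closure properties. Closure under multiplication is exactly the content of the composition law of Theorem~\ref{th19}: every product of natural generators $V(\sigma)^{t_{n}}V(\rho)^{t_{n}}$ is again a scalar multiple of some $V(\tau)^{t_{n}}$, hence stays in $\Span_{\mathbb{C}}\{V(\sigma)^{t_{n}}:\sigma\in S(n)\}=A_{n}^{t_{n}}(d)$. Closure under $\dagger$ is the relation displayed just above the proposition, namely $(V(\sigma)^{t_{n}})^{\dagger}=(V(\sigma)^{\dagger})^{t_{n}}=V(\sigma^{-1})^{t_{n}}$, which holds because partial transposition commutes with hermitian conjugation and $V$ is unitary; extending $\dagger$ anti-linearly over $\mathbb{C}$ then maps the whole span into itself.

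Next I would observe that the anti-multiplicativity $(ab)^{\dagger}=b^{\dagger}a^{\dagger}$ stated in the proposition is not to be verified by hand on generators: since the elements of $A_{n}^{t_{n}}(d)$ are genuine linear operators on the Hilbert space $\mathcal{(\mathbb{C}}^{d})^{\otimes n}$, this property of the adjoint holds for all such operators and is simply inherited from the ambient algebra. Likewise the $C^{\ast}$-identity $\|a^{\dagger}a\|=\|a\|^{2}$ is inherited from the operator norm on $\Hom(\mathcal{(\mathbb{C}}^{d})^{\otimes n})$. Finally, because $A_{n}^{t_{n}}(d)$ is finite-dimensional it is automatically norm-closed, so it is a norm-closed $\dagger$-subalgebra of a $C^{\ast}$-algebra, and therefore a $C^{\ast}$-algebra in its own right.

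There is no serious obstacle here; the only point requiring a moment's care is the commutation of partial transposition with the adjoint, i.e. $(M^{t_{n}})^{\dagger}=(M^{\dagger})^{t_{n}}$, which is a one-line check on matrix elements (conjugating and then swapping the last pair of tensor indices agrees with swapping and then conjugating). Once that is in place, the result follows from the general principle that a $\dagger$-closed subalgebra of a $C^{\ast}$-algebra is again a $C^{\ast}$-algebra.
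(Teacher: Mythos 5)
Your proof is correct and takes essentially the same route as the paper: the paper likewise notes that hermitian conjugation commutes with the partial transpose, so that $(V(\sigma)^{t_{n}})^{\dagger}=V(\sigma^{-1})^{t_{n}}$ makes $A_{n}^{t_{n}}(d)$ a $\dagger$-closed subalgebra of the concrete operator algebra $\Hom((\mathbb{C}^{d})^{\otimes n})$, from which the $C^{\ast}$-property (in particular $(ab)^{\dagger}=b^{\dagger}a^{\dagger}$) is inherited. Your write-up merely makes explicit what the paper leaves implicit, namely closure under multiplication via Theorem~\ref{th19} and norm-closedness from finite-dimensionality.
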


From the Theorem~\ref{th4} we get

\begin{corollary}
\label{col24}
\bigskip The algebra $A_{n}^{t_{n}}(d)$ is a semi-simple algebra for any
value of $d$ and $n.$
\end{corollary}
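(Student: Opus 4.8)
The plan is to deduce semisimplicity directly from the $C^{\ast}$-algebra structure, since all the substantive work has already been carried out upstream. Concretely, I would first observe that $A_{n}^{t_{n}}(d)$ is finite-dimensional: by Definition~\ref{def15} it is the complex span of the finite family $\{V(\sigma)^{t_{n}}:\sigma\in S(n)\}$ inside $\Hom(\mathcal{(\mathbb{C}}^{d})^{\otimes n})$, so $\dim A_{n}^{t_{n}}(d)\leq n!<\infty$. That it is genuinely a subalgebra (closed under composition, not merely a linear subspace) is precisely what the multiplication rule of Theorem~\ref{th19} guarantees: every product of two natural generators is again a scalar multiple of a single generator $V(\tau)^{t_{n}}$, hence lies in the span.

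Next I would establish that this algebra is a $C^{\ast}$-algebra. The ambient algebra $\Hom(\mathcal{(\mathbb{C}}^{d})^{\otimes n})\cong M(d^{n},\mathbb{C})$, equipped with the operator norm and the hermitian conjugation $\dagger$, is a $C^{\ast}$-algebra. I would use the identity $(V(\sigma)^{t_{n}})^{\dagger}=V(\sigma^{-1})^{t_{n}}$ recorded just above Proposition~\ref{prop23} to see that $A_{n}^{t_{n}}(d)$ is invariant under $\dagger$, i.e.\ it is a $\ast$-closed subalgebra. Combined with Proposition~\ref{prop23}, which asserts $(ab)^{\dagger}=b^{\dagger}a^{\dagger}$ on $A_{n}^{t_{n}}(d)$, this realizes $A_{n}^{t_{n}}(d)$ as a $\ast$-closed subalgebra of a $C^{\ast}$-algebra; it therefore inherits the operator norm and the $C^{\ast}$-identity $\|a^{\dagger}a\|=\|a\|^{2}$ automatically, and so is itself a finite-dimensional $C^{\ast}$-algebra.

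Finally, I would invoke Theorem~\ref{th4}: every finite-dimensional $C^{\ast}$-algebra is a direct sum of matrix algebras and is consequently semisimple. Applying this to $A_{n}^{t_{n}}(d)$ yields the claim for every value of $d$ and $n$, with no dependence on whether the natural generators are linearly independent.

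As for the main obstacle, there is essentially none at this stage, because the real labor sits in the earlier results, above all the composition law of Theorem~\ref{th19} and the $\ast$-compatibility of Proposition~\ref{prop23}. The one point deserving a moment of care is verifying that the $\ast$-structure is a bona fide $C^{\ast}$-structure rather than merely an abstract involution; but this is immediate once one recognizes $A_{n}^{t_{n}}(d)$ as a concrete $\ast$-subalgebra of bounded operators on $\mathcal{(\mathbb{C}}^{d})^{\otimes n}$, where the $C^{\ast}$-norm condition is simply restricted from the ambient matrix algebra. Everything else reduces to a direct citation of Theorem~\ref{th4}.
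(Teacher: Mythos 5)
Your proposal is correct and takes essentially the same route as the paper: the paper deduces Corollary~\ref{col24} directly from Proposition~\ref{prop23} (the $\dagger$-invariance making $A_{n}^{t_{n}}(d)$ a finite-dimensional $C^{\ast}$-algebra) together with Theorem~\ref{th4}, exactly as you do. The additional details you spell out (finite-dimensionality, closure under multiplication via Theorem~\ref{th19}, and inheritance of the $C^{\ast}$-norm from $\Hom(\mathcal{(\mathbb{C}}^{d})^{\otimes n})$) are left implicit in the paper but are the same underlying argument.
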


\bigskip We have also one more consequence of the Theorem~\ref{th19}

\begin{corollary}
\label{col25}
For $a,b\neq n$ the elements $V(\sigma _{(a,b)})^{t_{n}}$ are not invertible
and the set $M=\Span_{%
\mathbb{C}
}\{V(\sigma _{(a,b)})^{t_{n}}:a,b\neq n\}$ is an ideal of the algebra $%
A_{n}^{t_{n}}(d).$ We have the following decomposition of the algebra $%
A_{n}^{t_{n}}(d)$ 
\[
A_{n}^{t_{n}}(d)=M+ \Span_{%
\mathbb{C}
}\{V(\sigma _{n})^{t_{n}}=V(\sigma _{n}):\sigma _{n}\in S(n-1)\}. 
\]
\end{corollary}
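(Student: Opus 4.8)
The statement bundles three assertions: the decomposition $A_{n}^{t_{n}}(d)=M+\Span_{\mathbb{C}}\{V(\sigma _{n})^{t_{n}}=V(\sigma _{n}):\sigma _{n}\in S(n-1)\}$, the fact that $M$ is a two-sided ideal, and the non-invertibility of each generator $V(\sigma _{(a,b)})^{t_{n}}$ with $a,b\neq n$. The plan is to treat them in increasing order of difficulty, using only Notation~\ref{not9}, the multiplication rules of Theorem~\ref{th19}, Example~\ref{ex22}, and the $C^{\ast}$-structure from Proposition~\ref{prop23}. The decomposition is immediate: by Notation~\ref{not9} every $\sigma \in S(n)$ either fixes $n$ (then $\sigma =\sigma _{n}\in S(n-1)$ and $V(\sigma )^{t_{n}}=V(\sigma _{n})$) or moves $n$ (then $\sigma =\sigma _{(a,b)}$ with $a,b\neq n$ and $V(\sigma )^{t_{n}}$ is by definition a generator of $M$). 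Since $A_{n}^{t_{n}}(d)$ is spanned by all the $V(\sigma )^{t_{n}}$, partitioning $S(n)$ into $S(n-1)$ and its complement yields the claimed sum; it is a sum, not necessarily direct, because for $d<n$ the generators are linearly dependent by Corollary~\ref{col17}.

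For the ideal property the key observation I would isolate first is purely combinatorial: a generator $V(\pi )^{t_{n}}$ lies in $M$ exactly when $\pi $ moves $n$, i.e. $\pi (n)\neq n$. It then suffices to check, on generators, that every product having at least one factor in $M$ again produces a scalar multiple of such a generator, since both $M$ and $A_{n}^{t_{n}}(d)$ are spanned by generators and multiplication is bilinear. I would run through the three relevant cases of Theorem~\ref{th19}. In $V(\sigma _{n})V(\rho _{(p,q)})^{t_{n}}=V(\sigma _{n}\rho _{(p,q)})^{t_{n}}$ one has $(\sigma _{n}\rho )(n)=\sigma _{n}(q)\neq n$, since $q\neq n$ and $\sigma _{n}$ permutes $\{1,\dots ,n-1\}$; in $V(\sigma _{(a,b)})^{t_{n}}V(\rho _{n})=V(\sigma _{(a,b)}\rho _{n})^{t_{n}}$ one has $(\sigma \rho _{n})(n)=\sigma (n)=b\neq n$. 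These give $V(S(n-1))\,M\subseteq M$ and $M\,V(S(n-1))\subseteq M$.

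The one genuinely delicate case, and the main obstacle, is the product of two $M$-generators, $V(\sigma _{(a,b)})^{t_{n}}V(\rho _{(p,q)})^{t_{n}}=d^{\delta _{aq}}V[\mu ]^{t_{n}}$ with $\mu =(\sigma (q)n)\,\sigma \rho \,(pn)$. Here I must verify $\mu (n)\neq n$. Tracing the image, $(pn)$ sends $n\mapsto p$, then $\sigma \rho $ sends $p\mapsto \sigma (\rho (p))=\sigma (n)=b$, and finally the transposition $(\sigma (q)n)$ acts on $b$. The crucial point is that $b$ is fixed by this last transposition: since $q\neq n$ and $\sigma $ is injective, $\sigma (q)\neq \sigma (n)=b$, and also $b\neq n$, so $(\sigma (q)n)$ moves neither of its endpoints to $b$; hence $\mu (n)=b\neq n$. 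When $q=a$ one has $\sigma (q)=n$, the transposition $(\sigma (q)n)$ degenerates to the identity, and still $\mu (n)=b\neq n$. Thus $M\,M\subseteq M$, and together with the previous cases $M$ is a two-sided ideal.

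Finally, for non-invertibility I would exploit the $C^{\ast}$-structure together with Example~\ref{ex22}. Writing $x=V(\sigma _{(a,b)})^{t_{n}}$ and using $(V(\sigma )^{t_{n}})^{\dagger}=V(\sigma ^{-1})^{t_{n}}$, a single application of the last rule of Theorem~\ref{th19} to $x^{\dagger}x=V(\sigma ^{-1})^{t_{n}}V(\sigma )^{t_{n}}$ collapses the product: the labels of $\sigma ^{-1}$ are $(b,a)$, so $\delta =1$ and the left correcting transposition $(\sigma ^{-1}(b)\,n)=(n\,n)$ degenerates to the identity, giving $x^{\dagger}x=d\,V((an))^{t_{n}}$. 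By Example~\ref{ex22}, $\tfrac{1}{d}V((an))^{t_{n}}$ is a nonzero idempotent, and it is not the unit, since $\tr V((an))^{t_{n}}=d^{\,n-1}\neq d^{\,n+1}=\tr (d\,\id)$ for $d\geq 2$. A nontrivial idempotent is never invertible, so $x^{\dagger}x$ is non-invertible; as $x^{\dagger}$ lies in the algebra, $x$ cannot be invertible either. Equivalently, $x^{\dagger}x$ being a nonzero multiple of a proper idempotent shows $x$ has nontrivial kernel as an operator on $(\mathbb{C}^{d})^{\otimes n}$, of rank $d^{\,n-2}<d^{\,n}$. This proves that every generator of $M$ is non-invertible and completes the argument. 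I expect the combinatorial verification of the two-$M$-generator case to be the only real content; the rest is bookkeeping.
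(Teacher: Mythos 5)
Your proposal is correct and takes essentially the same approach as the paper: the corollary is stated there as a direct consequence of the multiplication rules of Theorem~\ref{th19} (with Example~\ref{ex22} supplying the essential-projector structure), and your argument is exactly that, with the combinatorial checks the paper omits (that each product permutation still moves $n$, including the degenerate case $q=a$) and the reduction $x^{\dagger}x=d\,V((an))^{t_{n}}$ for non-invertibility filled in correctly. The only caveat, which your trace comparison implicitly flags, is that non-invertibility requires $d\geq 2$, consistent with the paper's standing assumption.
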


Note that the second component in the simple sum,  which in
general, is not simple, is not an ideal of the
algebra $A_{n}^{t_{n}}(d).$ The ideal $M,$ being semi-simple, has a unit
element $e,$ which in fact an idempotent of the algebra $A_{n}^{t_{n}}(d)$
and $M=eA_{n}^{t_{n}}(d)e.$ Using the basic properties of the idempotents $e,$
$\text{\noindent\(\mathds{1}\)}-e,$ one can construct another ideal $S=(\text{\noindent\(\mathds{1}\)}%
-e)A_{n}^{t_{n}}(d)(\text{\noindent\(\mathds{1}\)}-e)\subset A_{n}^{t_{n}}(d)$ which is, as we
will see in next section, closely related to the group algebra $%
\mathbb{C}
\lbrack S(n-1)]$ such that

\begin{proposition}
\label{prop26}
The algebra $A_{n}^{t_{n}}(d)$ has the following decomposition into a direct
sum of two  ideals%
\[
A_{n}^{t_{n}}(d)=eA_{n}^{t_{n}}(d)e\oplus (\text{\noindent\(\mathds{1}\)}-e)A_{n}^{t_{n}}(d)(%
\text{\noindent\(\mathds{1}\)}-e)\equiv M\oplus S, 
\]%
where 
\[
MS=0. 
\]
\end{proposition}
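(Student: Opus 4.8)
The plan is to exploit the idempotent $e$ that serves as the unit of the ideal $M$ and to show that it is in fact \emph{central} in $A_{n}^{t_{n}}(d)$; the whole statement then follows from the elementary theory of orthogonal central idempotents. First I would recall that, by Corollary~\ref{col25}, $M$ is a two-sided ideal of $A_{n}^{t_{n}}(d)$, while by Corollary~\ref{col24} the algebra $A_{n}^{t_{n}}(d)$ is semi-simple. By Theorem~\ref{th3} a finite-dimensional semi-simple algebra is a direct sum of matrix ideals, and any two-sided ideal $M$ is necessarily the direct sum of some of these (simple) components; hence $M$ is itself a semi-simple algebra and possesses a unit element $e$, namely the sum of the units of its components. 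By construction $e\in M$ and $e^{2}=e$, so $e$ is an idempotent of $A_{n}^{t_{n}}(d)$. Moreover $M=eA_{n}^{t_{n}}(d)e$, since $m=em\in eA_{n}^{t_{n}}(d)$ for every $m\in M$, while $eA_{n}^{t_{n}}(d)\subseteq M$ because $M$ is an ideal containing $e$.

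The central and only genuinely substantive step is to prove that $e$ commutes with every element of the algebra. Fix $a\in A_{n}^{t_{n}}(d)$. Since $M$ is a two-sided ideal, both $ea$ and $ae$ lie in $M$, and since $e$ is the two-sided unit of $M$ it acts as the identity on them; in particular $(ea)e=ea$ and $e(ae)=ae$. Reading these together with associativity of the product gives
\[
ea=eae=ae,
\]
so $e$ is central. Consequently $\mathds{1}-e$ is also a central idempotent and the two are orthogonal, $e(\mathds{1}-e)=(\mathds{1}-e)e=0$.

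With centrality in hand the decomposition is routine. For any $a$ I would expand $a=(e+(\mathds{1}-e))\,a\,(e+(\mathds{1}-e))$ into four terms; the two mixed terms vanish because, by centrality and orthogonality, $ea(\mathds{1}-e)=a\,e(\mathds{1}-e)=0$ and $(\mathds{1}-e)ae=(\mathds{1}-e)e\,a=0$. Hence
\[
a=eae+(\mathds{1}-e)a(\mathds{1}-e),
\]
which shows $A_{n}^{t_{n}}(d)=eA_{n}^{t_{n}}(d)e+(\mathds{1}-e)A_{n}^{t_{n}}(d)(\mathds{1}-e)=M+S$. Centrality of $e$ also gives $eA_{n}^{t_{n}}(d)e=eA_{n}^{t_{n}}(d)=A_{n}^{t_{n}}(d)e$ and the analogous identities for $\mathds{1}-e$, so both $M$ and $S=(\mathds{1}-e)A_{n}^{t_{n}}(d)(\mathds{1}-e)$ are two-sided ideals.

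Finally I would verify directness and the vanishing of the product. If $x\in M\cap S$ then $ex=x$ (as $e$ is the unit of $M$) and simultaneously $ex=0$ (as $x\in(\mathds{1}-e)A_{n}^{t_{n}}(d)$), whence $x=0$ and the sum is direct. For the product, every $m\in M$ satisfies $me=m$ and every $s\in S$ satisfies $(\mathds{1}-e)s=s$, so $ms=me(\mathds{1}-e)s=0$ by orthogonality; thus $MS=0$, and symmetrically $SM=0$. This yields the asserted decomposition $A_{n}^{t_{n}}(d)=M\oplus S$. Everything past the centrality of $e$ is formal idempotent algebra, so that centrality is the crux of the argument.
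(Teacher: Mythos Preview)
Your proof is correct and follows essentially the same approach as the paper: the paper simply asserts that $e$ is the unit of the semi-simple ideal $M$ and then says the decomposition follows ``using the basic properties of the idempotents $e$, $\mathds{1}-e$'' (see the paragraph preceding Prop.~\ref{prop26} and the remark after Prop.~\ref{prop52}), without spelling those properties out. Your contribution is to make explicit the one nontrivial point hidden in that phrase, namely that the unit of a two-sided ideal is automatically central ($ea=eae=ae$ because $ea,ae\in M$ and $e$ acts as identity on $M$); once that is established, the rest is exactly the routine Peirce decomposition the paper has in mind.
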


\bigskip This Proposition is the first step in the decomposition of the
algebra $A_{n}^{t_{n}}(d)$ into a direct sum of matrix algebras. In the
Section 5 we will show that the ideals $M$ and $S$ are in fact, the direct
sums of matrix ideals of the algebra $A_{n}^{t_{n}}(d)$, such that each one
is a direct sum of minimal left ideals (irreducible representations) of the
algebra $A_{n}^{t_{n}}(d).$ The construction however, is rather complicated
and laborious and it is the content of the remaining part of this paper,
therefore in the following section we formulate the theorems describing the
structure and irreducible representations of the algebra $A_{n}^{t_{n}}(d),$
which is in fact the main result of the paper.

\section{Main result: the structure and irreducible representations of \ the
algebra $A_{n}^{t_{n}}(d).$}
\label{irreps1}

It appears that the structure of irreducible representations of the algebra $%
A_{n}^{t_{n}}(d)$ is closely related to the structure of the representation $%
\ind_{S(n-2)}^{S(n-1)}(\varphi ^{\alpha })$ of the group $S(n-1)$ induced by
irreducible representations $\varphi ^{\alpha }$ of the group $S(n-2)$ and
the properties of irreducible representations of $A_{n}^{t_{n}}(d)$ depends
strongly on the relation between $d$ and $n$. Before presenting the main
results of this paper we have to describe briefly some object appearing in
the structure of the algebra $A_{n}^{t_{n}}(d),$ in particular the properties
of the induced representation $\ind_{S(n-2)}^{S(n-1)}(\varphi ^{\alpha }).$
The matrix form of such a representations given in Def.~\ref{def8}. The irreducible
representations of the group $S(n-2)$ are characterized by the partitions $%
\alpha =(\alpha _{1},...,\alpha _{k})$ of $n-2,$ which describe also the
corresponding Young diagram $Y(\alpha ).$ \ The representation $%
\ind_{S(n-2)}^{S(n-1)}(\varphi ^{\alpha })$ is completely and simply
reducible i.e. we have~\cite{Fulton}.

\begin{proposition}
\label{prop27}
\[
\ind_{S(n-2)}^{S(n-1)}(\varphi ^{\alpha })=\bigoplus _{\nu }\psi ^{\nu }, 
\]%
where the sum is over all partitions $\nu =(\nu _{1},...,\nu _{k})$ of $n-1,$
such that their Young diagrams $Y(\nu )$ are obtained from $Y(\alpha )$ by
adding, in a proper way, one box.
\end{proposition}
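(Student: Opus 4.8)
The plan is to compute the multiplicity of each irreducible $\psi^\nu$ of $S(n-1)$ in $\ind_{S(n-2)}^{S(n-1)}(\varphi^\alpha)$ by passing to the dual problem of \emph{restriction} via Frobenius reciprocity. Since we work over $\mathbb{C}$, every representation of a finite group is completely reducible, so the decomposition of the induced representation is determined entirely by the integers $m_\nu=\langle \ind_{S(n-2)}^{S(n-1)}(\varphi^\alpha),\psi^\nu\rangle_{S(n-1)}$. Frobenius reciprocity converts each such inner product into $\langle \varphi^\alpha,\operatorname{Res}^{S(n-1)}_{S(n-2)}(\psi^\nu)\rangle_{S(n-2)}$, reducing the whole statement to understanding how an irreducible representation of $S(n-1)$ decomposes when restricted to $S(n-2)$.

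First I would invoke the branching rule for restriction between consecutive symmetric groups: $\operatorname{Res}^{S(n-1)}_{S(n-2)}(\psi^\nu)=\bigoplus_\mu \varphi^\mu$, where $\mu$ runs over all partitions of $n-2$ whose Young diagram $Y(\mu)$ is obtained from $Y(\nu)$ by deleting a single removable box, and every such $\mu$ occurs with multiplicity exactly one. This multiplicity-free restriction is the genuine combinatorial input; it can be established from the construction of the Specht (or Young) modules, by exhibiting an explicit filtration of $\psi^\nu$ restricted to $S(n-2)$ whose successive quotients are the $\varphi^\mu$, or equivalently from the Murnaghan--Nakayama character formula. In the present context it may simply be taken from \cite{Fulton}.

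Combining the two steps, $m_\nu=\langle \varphi^\alpha,\operatorname{Res}^{S(n-1)}_{S(n-2)}(\psi^\nu)\rangle_{S(n-2)}$ equals $1$ precisely when $Y(\alpha)$ is obtained from $Y(\nu)$ by removing one box, that is, when $Y(\nu)$ is obtained from $Y(\alpha)$ by adding one box, and equals $0$ otherwise. This gives exactly the claimed decomposition, with each admissible $\psi^\nu$ appearing once. As a final consistency check I would compare dimensions: the left side has dimension $[S(n-1):S(n-2)]\dim\varphi^\alpha=(n-1)\dim\varphi^\alpha$, and one verifies that $\sum_\nu \dim\psi^\nu$ over the box-addable $\nu$ reproduces this value, confirming that no irreducible component has been missed. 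The only real obstacle is the restriction branching rule itself; once it is available, Frobenius reciprocity and semisimplicity make the rest automatic.
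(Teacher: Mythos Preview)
Your argument is correct: Frobenius reciprocity together with the multiplicity-free branching rule for restriction from $S(n-1)$ to $S(n-2)$ yields exactly the claimed decomposition, and the dimension check is a nice sanity verification. Note, however, that the paper does not supply its own proof of this proposition; it is stated as a known fact with a reference to \cite{Fulton}, so there is no paper-internal argument to compare against. Your sketch is essentially the standard textbook derivation (indeed the one found in \cite{Fulton}), and nothing is missing.
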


From this Proposition it follows that the induced representation $%
\ind_{S(n-2)}^{S(n-1)}(\varphi ^{\alpha })$ may be described in two bases.
The first one, is the basis of the matrix form of the induced representation
given in Def.~\ref{def8} and it is of the form%
\[
\{e_{i}^{a}(\alpha ):a=1,...,n-1,\quad i=1,...,\dim \varphi ^{\alpha }\},
\]%
where the index $a=1,...,n-1$ describes the the costes $S(n-1)/S(n-2)$ and
the the index $i=1,...,\dim \varphi ^{\alpha }$ is the index of a matrix
form of $\varphi ^{\alpha }.$ The second one is a basis of the reduced form
of $\ind_{S(n-2)}^{S(n-1)}(\varphi ^{\alpha })$, given Prop.~\ref{prop27} which is of the
form%
\[
\{f_{j_{\nu }}^{\nu }:\psi ^{\nu }\in \ind_{S(n-2)}^{S(n-1)}(\varphi ^{\alpha
}),\quad j_{\nu }=1,...,\dim \psi ^{\nu }\}. 
\]

\bigskip The next important objects are the following matrices

\begin{definition}
\label{def28}
For any irreducible representation $\varphi ^{\alpha }$ of the group $S(n-2)$
we define the block matrix 
\[
Q_{n-1}^{d}(\alpha )\equiv Q(\alpha )=(d^{\delta _{ab}}\varphi _{ij}^{\alpha
}[(a \ n-1)(ab)(b \ n-1)])=(Q_{ij}^{ab}(\alpha )), 
\]%
where $a,b=1,...,n-1,\quad i,j=1,...,\dim \varphi ^{\alpha }$ and the blocs
of the matrix $Q(\alpha )$ are labelled by indices $(a,b)$ whereas the
elements of the blocks are labelled by the indices of the irreducible
representation $\varphi ^{\alpha }=(\varphi _{ij}^{\alpha })$ of the group $%
S(n-2)$ and $Q(\alpha )\in M((n-1)w^{\alpha },\mathbb{C}).$ Note that if $a,b \neq n-1 \Rightarrow (a \ n-1)(ab)(b \ n-1)=(ab)$, but if $a=n-1$ and $b \neq n-1$ then $(a \ n-1)(ab)(b \ n-1)=id$, where $id$ denotes identity permutation.
\end{definition}

The matrices $Q(\alpha )$ are hermitian and their structure and properties
are described in the Appendix~\ref{AppA}, where it has been shown, that the
eigenvalues $\lambda _{\nu }$ of the matrix $Q(\alpha )$ are labelled by
the irreducible representations $\psi ^{\nu }\in
\ind_{S(n-2)}^{S(n-1)}(\varphi ^{\alpha })$ and the multiplicity of $\lambda
_{\nu }$ is equal to $\dim \psi ^{\nu }$. The essential for properties of
the irreducible representations of the algebra $A_{n}^{t_{n}}(d)$ is the
fact, that at most one (up to the multiplicity) eigenvalue $\lambda _{\nu }$
of the matrix $Q(\alpha )$ may be equal to zero (Cor.~\ref{col67} in App.~\ref{AppA}).

The structure of the algebra $A_{n}^{t_{n}}(d)$ is the following

\begin{theorem}
\label{th29}
The algebra $A_{n}^{t_{n}}(d)$ is a direct sum of two ideals
\[
A=M\oplus S 
\]
and the ideals $M$ and $S$ has different structures.
\begin{enumerate}[a)]
\item The ideal $M$ is of the form 
\[
M=\bigoplus _{\alpha }U(\alpha ), 
\]
where $U(\alpha )$ are ideals of the algebra $A_{n}^{t_{n}}(d)$
characterized by the irreducible representations $\varphi ^{\alpha }$ of the
group $S(n-2)$, such that $\varphi ^{\alpha }\in V_{d}[S(n-2)]$ and 
\[
U(\alpha )=\Span_{\mathbb{C}}\{u_{ij}^{ab}(\alpha ):a,b=1,...,n-1,\quad i,j=1,...,w^{\alpha }\} 
\]
with
\[
u_{ij}^{ab}(\alpha )u_{kl}^{pq}(\beta )=\delta _{\alpha \beta
}Q_{ik}^{bp}(\alpha )u_{il}^{aq}(\alpha ). 
\]
The ideals $U(\alpha )$ are matrix ideals such that
\[
U(\alpha )\simeq M(\rank Q(\alpha ),\mathbb{C}), 
\]
in particular when $\det Q(\alpha )\neq 0$ we have 
\[
U(\alpha )\simeq M((n-1)\dim \varphi ^{\alpha },\mathbb{C}). 
\]
\item The ideal $S$ has the following structure:
\newline
I) if %
$d\geq n$ 
\[
S\simeq \bigoplus _{\nu }M(\dim \psi ^{\nu },%
\mathbb{C}
) 
\]%
where $\nu $ runs over all irreducible representations of
the group $S(n-1)$\newline
II) If $d<n$ then
\[
S\simeq \bigoplus _{\nu }M(\dim \psi ^{\nu },%
\mathbb{C}
) 
\]%
where now $\nu $ runs over all irreducible representations
of the group $S(n-1)$\, such that $d>h(\nu )$ ($h(\nu )$%
 is defined in Prop.13).
\end{enumerate}
\end{theorem}

\bigskip The matrix ideals contained in the ideals $M$ and $S$ contains all
minimal left ideals i.e. all irreducible representations of the algebra $%
A_{n}^{t_{n}}(d)$. The next theorems describes all these representations.

The structure of the irreducible representations of the algebra $%
A_{n}^{t_{n}}(d)$, included in the ideal $M$, is completely determined by
irreducible representations $\varphi ^{\alpha }$ of the group $S(n-2)$,
therefore we will denote them $\Phi _{A}^{\alpha }.$

\begin{theorem}
\label{th30}
The irreducible representations $\Phi _{A}^{\alpha }$ of the algebra $%
A_{n}^{t_{n}}(d)$ contained in the ideal $U(\alpha )\subset M$ (Th.30) are
indexed by the irreducible representations $\varphi ^{\alpha }$ of the group 
$S(n-2)$, such that $\varphi ^{\alpha }\in V_{d}[S(n-2)]$ and if $%
\{f_{j_{\nu }}^{\nu }:\psi ^{\nu }\in \ind_{S(n-2)}^{S(n-1)}(\varphi ^{\alpha
}),\quad j_{\nu }=1,...,\dim \psi ^{\nu }\}$ is the reduced basis of the
induced representation $\ind_{S(n-2)}^{S(n-1)}(\varphi ^{\alpha })$, then the
vectors $\{f_{j_{\nu }}^{\nu }:\lambda _{\nu }\neq 0\}$ form the basis of
the irreducible representation of the algebra $A_{n}^{t_{n}}(d)$ and the
natural generators of $A_{n}^{t_{n}}(d)$ act on it in the following way%
\be
V(an)^{t_{n}}{}f_{j\nu }^{\nu }(\alpha )=\sum_{\rho ,j_{\rho }}(\sum_{k}%
\sqrt{\lambda _{\rho }}z^{+}(\alpha )_{j_{\rho }k}^{\rho a}z(\alpha
)_{kj_{\nu }}^{a\nu }\sqrt{\lambda _{\nu }}f_{j_{\rho }}^{\rho }(\alpha ) 
\ee
where the summation is over $\rho $ such that $\lambda _{\rho }\neq 0.$ 
Due to the condition $\varphi ^{\alpha }\in V_{d}[S(n-2)]$
the eigen-values $\lambda _{\nu }$ of $\ Q(\alpha )$ are
non-negative. The unitary matrix $Z(\alpha )=(z(\alpha )_{kj_{\nu }}^{a\nu
})$ has the form 
\be
 z(\alpha )_{kj_{\nu }}^{a\nu }=\frac{\dim \psi ^{\nu }}{\sqrt{%
N_{j_{\nu }}^{\nu }}(n-1)!}\sum_{\sigma \in S(n-1)}\psi _{j_{\nu }j_{\nu
}}^{\nu }(\sigma ^{-1})\delta _{a \ \sigma (q)}\varphi _{kr}^{\alpha
}[(a \ n-1)\sigma (q \  n-1)],
\ee
with 
\be
 N_{j_{\nu }}^{\nu }=\frac{\dim \psi ^{\nu }}{(n-1)!}%
\sum_{\sigma \in S(n-1)}\psi _{j_{\nu }j_{\nu }}^{\nu }(\sigma ^{-1})\delta
_{q \ \sigma (q)}\varphi _{rr}^{\alpha }[(q \ n-1)\sigma (q \ n-1)],
\ee
where the indices $q=1,..,n-1,$ $r=1,..,\dim \varphi
^{\alpha }$ are fixed and such that $N_{j_{\nu }}^{\nu }>0$ (see Th.59, 85 and Cor. 86). For $\sigma _{n}\in S(n-1)$ we have 
\[
V(\sigma _{n})f_{j_{\nu }}^{\nu }(\alpha )=\sum_{\rho ,j_{\rho }}\psi
_{i_{\nu }j_{\nu }}^{\nu }(\sigma _{n})f_{i_{\nu }}^{\nu }(\alpha ). 
\]%
In particular when $\det Q(\alpha )\neq 0,$ (e.i. when all $\lambda _{\nu
}\neq 0$) then the representation $\Phi _{A}^{\alpha }$ is the induced
representation $\ind_{S(n-2)}^{S(n-1)}(\varphi ^{\alpha })$ (in the reduced
form) for the sublagebra $V_{d}[S(n-1)]\subset $ $A_{n}^{t_{n}}(d).$ In this
case the dimension of the irreducible representation is equal to 
\[
\dim \Phi _{A}^{\alpha }=(n-1)\dim \varphi ^{\alpha }=\dim
(\ind_{S(n-2)}^{S(n-1)}(\varphi ^{\alpha })). 
\]%
When $\det Q(\alpha )=0,$ (e.i. when one, up to the multilicity, eigen-value 
$\lambda _{\theta }$ of $Q(\alpha )$ is equal to $0$)$,$ then the
irreducible representation of $A_{n}^{t_{n}}(d)$ is defined on a subspace $%
\{y_{j_{\nu }}^{\nu }:\lambda _{\nu }\neq \lambda _{\theta }\}$ of the
representation space $\ind_{S(n-2)}^{S(n-1)}(\varphi ^{\alpha })$ and the
representation has dimension is equal to 
\[
\dim \Phi _{A}^{\alpha }are=\dim (_{S(n-2)}^{S(n-1)}(\varphi ^{\alpha
}))-\dim \psi ^{\theta }=\rank Q(\alpha ). 
\]%
This case takes the place when 
\[
d=i-\alpha _{i}-1 
\]%
for some $\alpha _{i}$ in the partition $\alpha =(\alpha _{1},..,\alpha
_{i},..,\alpha _{k})$ characterizing the irreducible representation $\varphi
^{\alpha }$, under condition that $\nu =(\alpha _{1},..,\alpha
_{i}+1,..,\alpha _{k})$ characterizes the representation $\psi ^{\nu }$ of $%
S(n-1).$

The ideal $U(\alpha )$ is a direct sum of $\dim \Phi _{A}^{\alpha }$ of
irreducible representations $\Phi _{A}^{\alpha }.$
\end{theorem}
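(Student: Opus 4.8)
The plan is to exploit Theorem~\ref{th29}a), which already gives $U(\alpha)\simeq M(\rank Q(\alpha),\mathbb{C})$: a full matrix algebra of size $r=\rank Q(\alpha)$ has a single irreducible module, of dimension $r$, occurring with multiplicity $r$. Thus the whole content of Theorem~\ref{th30} reduces to (i) realizing this module explicitly on (a subspace of) the induced representation space and (ii) computing the action of the natural generators on it. The pivotal object is the Hermitian matrix $Q(\alpha)$ of Def.~\ref{def28}. I would first establish — this is the role of Appendix~\ref{AppA} — that $Q(\alpha)$ commutes with $\ind_{S(n-2)}^{S(n-1)}(\varphi^{\alpha})$; since that induced representation is multiplicity-free (Prop.~\ref{prop27}), Schur's lemma forces $Q(\alpha)$ to be the scalar $\lambda_{\nu}$ on each irreducible component $\psi^{\nu}$, with multiplicity $\dim\psi^{\nu}$. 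The hypothesis $\varphi^{\alpha}\in V_{d}[S(n-2)]$ guarantees $Q(\alpha)\geq 0$, so all $\lambda_{\nu}\geq 0$ and the positive square root $Q(\alpha)^{1/2}=\operatorname{diag}(\sqrt{\lambda_{\nu}})$ is real. The unitary $Z(\alpha)$ diagonalizing $Q(\alpha)$ is precisely the passage from the coset (block) basis $\{e_{i}^{a}(\alpha)\}$ of Def.~\ref{def8} to the reduced basis $\{f_{j_{\nu}}^{\nu}\}$, and its entries $z(\alpha)_{kj_{\nu}}^{a\nu}$ are the standard reduction coefficients written through matrix elements of $\psi^{\nu}$ and $\varphi^{\alpha}$; unitarity fixes the normalization $N_{j_{\nu}}^{\nu}$, whose positivity and independence of the auxiliary indices $q,r$ again follow from $\varphi^{\alpha}\in V_{d}[S(n-2)]$.

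With $Q(\alpha)$ diagonalized I would build $\Phi_{A}^{\alpha}$ on $\Span\{f_{j_{\nu}}^{\nu}:\lambda_{\nu}\neq 0\}$. The action of the subalgebra $V_{d}[S(n-1)]$ is then immediate: by construction the $f_{j_{\nu}}^{\nu}$ are the reduced basis of $\ind_{S(n-2)}^{S(n-1)}(\varphi^{\alpha})$, so $V(\sigma_{n})$ acts block-diagonally by $\psi^{\nu}(\sigma_{n})$, and because $Q(\alpha)$ commutes with this action each $\lambda_{\nu}$-eigenspace — hence the whole non-zero sector — is preserved. This yields the stated formula for $V(\sigma_{n})$ and shows that, restricted to the group generators, $\Phi_{A}^{\alpha}$ is a subrepresentation of the induced representation.

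The harder step is the action of the partial-transpose generators $V(an)^{t_{n}}$; the remaining generators $V(\sigma_{(a,b)})^{t_{n}}$ are recovered from these together with $V_{d}[S(n-1)]$ via Theorem~\ref{th19}. I expect $V(an)^{t_{n}}$ to be represented by $Q(\alpha)^{1/2}P_{a}Q(\alpha)^{1/2}$, where $P_{a}$ is the orthogonal projector onto the $a$-th coset block of the induced representation. Indeed, in the reduced basis $Q(\alpha)^{1/2}=\operatorname{diag}(\sqrt{\lambda_{\nu}})$ while $P_{a}$ has matrix elements $\sum_{k}z^{+}(\alpha)_{j_{\rho}k}^{\rho a}z(\alpha)_{kj_{\nu}}^{a\nu}$, so $Q(\alpha)^{1/2}P_{a}Q(\alpha)^{1/2}$ reproduces exactly the factor $\sqrt{\lambda_{\rho}}\,z^{+}(\alpha)\,z(\alpha)\,\sqrt{\lambda_{\nu}}$ of the action formula; on the zero-eigenvalue sector $Q(\alpha)^{1/2}$ vanishes, so $V(an)^{t_{n}}=0$ there and those vectors genuinely drop out of the module. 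The obstacle is to justify this $Q^{1/2}$-symmetrized form from first principles — expressing $V(an)^{t_{n}}$ through the matrix-unit-like elements $u_{ij}^{ab}(\alpha)$, whose product is not that of honest matrix units but the $Q$-deformed rule $u_{ij}^{ab}(\alpha)u_{kl}^{pq}(\alpha)=Q_{ik}^{bp}(\alpha)u_{il}^{aq}(\alpha)$ of Theorem~\ref{th29}, and tracking the square-root rescalings $\lambda_{\nu}^{-1/2}$ that convert these into a bona fide representation. Verifying that the operators so defined satisfy the full $d^{\delta_{aq}}$-weighted composition law of Theorem~\ref{th19} is the principal technical hurdle.

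Finally I would assemble the dimension and multiplicity statements. If $\det Q(\alpha)\neq 0$ then every $\lambda_{\nu}\neq 0$, the module fills the whole induced space, $\Phi_{A}^{\alpha}$ restricts on $V_{d}[S(n-1)]$ to $\ind_{S(n-2)}^{S(n-1)}(\varphi^{\alpha})$, and $\dim\Phi_{A}^{\alpha}=(n-1)\dim\varphi^{\alpha}$. If $\det Q(\alpha)=0$ then, by Cor.~\ref{col67}, exactly one eigenvalue (up to multiplicity) $\lambda_{\theta}$ vanishes; the component $\psi^{\theta}$ is annihilated and $\dim\Phi_{A}^{\alpha}=(n-1)\dim\varphi^{\alpha}-\dim\psi^{\theta}=\rank Q(\alpha)$. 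The explicit condition $d=i-\alpha_{i}-1$ is read off from the eigenvalue formula for $Q(\alpha)$ derived in Appendix~\ref{AppA}, the eigenvalue attached to the box added to row $i$ of $\alpha$ vanishing precisely at this value. The multiplicity claim is then immediate from $U(\alpha)\simeq M(\rank Q(\alpha),\mathbb{C})$: a matrix algebra of size $r$ is the direct sum of $r=\dim\Phi_{A}^{\alpha}$ copies of its unique irreducible module.
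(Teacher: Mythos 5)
Your proposal reproduces the structural skeleton of the paper's argument correctly: $Q(\alpha)$ lies in (the commutant-trivial part of) the algebra of the induced representation, Schur's lemma plus multiplicity-freeness (Prop.~\ref{prop27}) makes it the scalar $\lambda_{\nu}$ on each component $\psi^{\nu}$, the unitary $Z(\alpha)$ that reduces $\ind_{S(n-2)}^{S(n-1)}(\varphi^{\alpha})$ diagonalizes $Q(\alpha)$ (Th.~\ref{th58}), the zero-eigenvalue sector is governed by Cor.~\ref{col67}, and the dimension and multiplicity counts follow from $U(\alpha)\simeq M(\rank Q(\alpha),\mathbb{C})$. Your reading of the action of $V(an)^{t_{n}}$ as $Q(\alpha)^{1/2}P_{a}Q(\alpha)^{1/2}$, with $P_{a}$ the coset-block projector, is also a correct and illuminating interpretation of the stated formula. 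But there is a genuine gap at exactly the point you flag as ``the principal technical hurdle'': the formula for $V(an)^{t_{n}}$ is the core content of the theorem, and your argument for it is circular --- you \emph{guess} the operator $Q^{1/2}P_{a}Q^{1/2}$ and then observe that its matrix elements agree with the formula to be proven, which is just re-reading that formula. Moreover, your plan to then ``verify that the operators so defined satisfy the full $d^{\delta_{aq}}$-weighted composition law of Theorem~\ref{th19}'' has the logic backwards, and even if completed it would not suffice: a homomorphism check does not by itself show that the resulting representation is irreducible, nor that it is the one carried by $U(\alpha)$.

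What is missing is the paper's key idea that the module need not be postulated on an external space at all: it is a minimal \emph{left ideal inside $U(\alpha)$ itself}. Concretely, the paper starts from the generators $u_{ij}^{ab}(\alpha)$ with the $Q$-deformed product, derives from Th.~\ref{th19} the left action of the natural generators on them (Prop.~\ref{prop42}, Eqs.~\eqref{42eq1}--\eqref{42eq2}; this is also where the identification with the induced representation for $\sigma_{n}\in S(n-1)$ is \emph{proved}, Th.~\ref{th44}, rather than assumed ``by construction''), then passes to the basis $y_{j_{\nu}j_{\mu}}^{\nu\mu}(\alpha)$ via the change of basis~\eqref{eq4} with $Z(\alpha)$ and rescales by $\lambda^{-1/2}$ (Th.~\ref{th77}, App.~\ref{AppB}) to obtain genuine matrix units $f_{j_{\nu}j_{\mu}}^{\nu\mu}(\alpha)$. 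A column $\{f_{j_{\nu}j_{\mu}}^{\nu\mu}:\nu\neq\theta\}$ (fixed $\mu,j_{\mu}$) is then a minimal left ideal by Th.~\ref{th5} and Prop.~\ref{prop7}, the action of $V(an)^{t_{n}}$ on it is literal multiplication in the algebra --- computed, not verified --- yielding Prop.~\ref{prop45} and hence Th.~\ref{th30}, and irreducibility is automatic from the matrix-unit structure. Without carrying out this computation (or an equivalent derivation of the $\sqrt{\lambda_{\rho}}\,z^{\dagger}z\,\sqrt{\lambda_{\nu}}$ coefficients from the algebra's composition law), your proposal establishes the periphery of the theorem but not its center.
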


The formula for the eigenvalues $\lambda _{\nu }$ of matrices $Q(\alpha )$
are derived in the Appendix~\ref{AppA} (Th.~\ref{th58}).

\begin{remark}
\label{rem31}
\bigskip Note that even if $\dim \varphi ^{\alpha }=1$, we have $\dim \Phi
^{\alpha }=n-1.$
\end{remark}

The matrix forms of these representations are the
following

\begin{proposition}
\label{prop32}
\bigskip In the reduced matrix basis $\{f_{j\nu }^{\nu }:\nu \neq \theta \}$
of the ideal $U(\alpha )$ the natural generators $V(\sigma _{ab})^{t_n }$
and $V(\sigma _{n})$ of $A_{n}^{t_{n}}(d)$ are represented by the following
matrices
\[
M_{f}^{\alpha }(V^{t_{n}}(an))_{j_{\rho }j_{\nu }}^{\rho \nu
}=\sum_{k=1,..,\dim \varphi ^{\alpha }}\sqrt{\lambda _{\rho }}z^{\dagger}(\alpha
)_{j_{\rho }k}^{\rho a}z(\alpha )_{kj_{\nu }}^{a\nu }\sqrt{\lambda _{\nu }}
:\rho ,\nu \neq \theta , 
\]
\[
M_{f}^{\alpha }(V(\sigma _{n}))_{j_{\nu ^{\prime }}j_{\nu }}^{\nu ^{\prime
}\nu }=\delta ^{\nu ^{\prime }\nu }\psi _{j_{\nu ^{\prime }}j_{\nu }}^{\nu
}(\sigma _{n}). 
\]
\end{proposition}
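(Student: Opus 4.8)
The plan is to obtain Proposition~\ref{prop32} as a direct transcription of the action formulas already established in Theorem~\ref{th30}. The matrix of a linear operator $T$ in a basis $\{f_a\}$ is by definition the array of coefficients $M_{ba}$ appearing in $Tf_a=\sum_b M_{ba}f_b$, so the whole task reduces to identifying these coefficients from the explicit action of the generators on the reduced basis $\{f_{j_\nu}^\nu(\alpha):\nu\neq\theta\}$ and to checking that this basis spans a subspace invariant under every generator, so that the restriction is a genuine representation.

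First I would read off the matrix of $V(an)^{t_n}$. Theorem~\ref{th30} already expresses $V(an)^{t_n}f_{j_\nu}^\nu(\alpha)$ as a linear combination of the vectors $f_{j_\rho}^\rho(\alpha)$ with coefficient $\sum_k\sqrt{\lambda_\rho}\,z^{+}(\alpha)_{j_\rho k}^{\rho a}z(\alpha)_{kj_\nu}^{a\nu}\sqrt{\lambda_\nu}$, where the sum over $\rho$ runs only over indices with $\lambda_\rho\neq0$. Interpreting $z^{+}$ as the hermitian conjugate $z^\dagger$, this coefficient is precisely the entry $M_f^\alpha(V^{t_n}(an))_{j_\rho j_\nu}^{\rho\nu}$ claimed in the statement, subject to the index constraint $\rho,\nu\neq\theta$. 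Invariance of the subspace is then automatic: the factor $\sqrt{\lambda_\rho}$ forces the coefficient of any $f^\theta$ (for which $\lambda_\theta=0$) to vanish, so $V(an)^{t_n}$ never produces a component outside $\{f_{j_\nu}^\nu:\nu\neq\theta\}$.

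Next I would treat the generators $V(\sigma_n)$ with $\sigma_n\in S(n-1)$. By the last display of Theorem~\ref{th30} these act block-diagonally, mapping each $\nu$-block into itself through the irreducible matrix $\psi^\nu(\sigma_n)$; reading off the coefficients gives exactly $M_f^\alpha(V(\sigma_n))_{j_{\nu'}j_\nu}^{\nu'\nu}=\delta^{\nu'\nu}\psi_{j_{\nu'}j_\nu}^\nu(\sigma_n)$, and this block-diagonal form manifestly preserves $\{f_{j_\nu}^\nu:\nu\neq\theta\}$ as well. I would then remark that these two families of operators already determine the representation on all of $A_n^{t_n}(d)$: using the composition rules of Theorem~\ref{th19}, any generator $V(\sigma_{(a,b)})^{t_n}$ of the ideal $M$ can be written as $V(\sigma_n)V(an)^{t_n}V(\tau_n)$ for suitable $\sigma_n,\tau_n\in S(n-1)$ (equivalently, the transpositions $(an)$ together with $S(n-1)$ generate $S(n)$), so no further computation of matrix forms is required.

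The proof carries essentially no analytic difficulty, since the substantive work---the diagonalisation of $Q(\alpha)$, the construction of the unitary $Z(\alpha)$, and the derivation of the action formulas---was already carried out in Theorem~\ref{th30} and Appendix~\ref{AppA}. The only point requiring genuine care, which I would flag as the main (if modest) obstacle, is the bookkeeping that confirms the restricted family is well defined: one must verify that excising the null-eigenvalue vectors $f^\theta$ leaves a subspace invariant under every generator, so that the displayed arrays really are the matrices of operators acting on $\{f_{j_\nu}^\nu:\nu\neq\theta\}$. This invariance follows uniformly from the $\sqrt{\lambda}$ weighting in the first formula and the block-diagonal form of the second.
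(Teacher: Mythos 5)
Your proof is correct and follows essentially the same route as the paper: the paper obtains Proposition~\ref{prop32} by transcribing into matrix form the action formulas for $V(an)^{t_{n}}$ and $V(\sigma _{n})$ on the reduced matrix basis (Proposition~\ref{prop45}, which is the ideal-wide restatement of the formulas you cite from Theorem~\ref{th30}), with the restriction $\rho ,\nu \neq \theta$ built in because the $\theta$-labelled basis vectors vanish identically. Your supplementary remarks (subspace invariance via the $\sqrt{\lambda _{\rho }}$ factors, and that $V(an)^{t_{n}}$ together with $V(\sigma _{n})$ suffice by the composition rules of Theorem~\ref{th19}) are consistent with, and implicit in, the paper's construction.
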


From the properties of the matrix $Q(\alpha )$ (Cor.~\ref{68}) one gets

\begin{proposition}
\label{prop33}
if $d>n-2$, then $\det Q(\alpha )\neq 0$ and the irreducible representations 
$\Phi _{A}^{\alpha }$ described in Th.~\ref{th30} are induced representation $
\ind_{S(n-2)}^{S(n-1)}(\varphi ^{\alpha })$ for the subalgebra $
V_{d}[S(n-1)]\subset $ $A_{n}^{t_{n}}(d),$ so their dimension is equal to $
(n-1)\dim \varphi ^{\alpha }.$ When $d\leq n-2$, then for some $\varphi
^{\alpha }$ it may appear that $\det Q(\alpha )=0$ and consequently the
irreducible representation $\Phi ^{\alpha }$ of $A_{n}^{t_{n}}(d)$ is define
on a subspace of the irreducible representation $\ind_{S(n-2)}^{S(n-1)}(
\varphi ^{\alpha })$.
\end{proposition}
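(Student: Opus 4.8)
The plan is to reduce the whole statement to a sign analysis of the eigenvalues of the Hermitian matrix $Q(\alpha)$. Theorem~\ref{th30} already tells us that $\Phi_A^\alpha$ coincides with the full induced representation $\ind_{S(n-2)}^{S(n-1)}(\varphi^\alpha)$ precisely when $\det Q(\alpha)\neq 0$, and is defined on the proper subspace $\{y_{j_\nu}^\nu:\lambda_\nu\neq\lambda_\theta\}$ (of dimension $\rank Q(\alpha)$) exactly when one eigenvalue $\lambda_\theta$ vanishes. Hence everything follows once I control when the eigenvalues $\lambda_\nu$ can be zero as a function of $d$. I would take as given the closed form of the eigenvalues proved in Appendix~\ref{AppA} (Th.~\ref{th58}) together with the fact (Cor.~\ref{col67}) that at most one eigenvalue, up to multiplicity, can vanish; the $\lambda_\nu$ are indexed by the irreps $\psi^\nu\in\ind_{S(n-2)}^{S(n-1)}(\varphi^\alpha)$, i.e.\ by the diagrams $Y(\nu)$ obtained from $Y(\alpha)$ by adding one box (Prop.~\ref{prop27}).

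First I would rewrite the vanishing condition $d=i-\alpha_i-1$ of Theorem~\ref{th30} in terms of the content of the added box: if $\nu$ arises by adding a box in row $i$, that box occupies position $(i,\alpha_i+1)$ with content $(\alpha_i+1)-i$, so $\lambda_\nu=d+(\alpha_i+1-i)$, which is zero exactly when $d=i-\alpha_i-1$. The key combinatorial observation is that, over all addable boxes of a partition $\alpha\vdash n-2$, the quantity $i-\alpha_i-1$ is maximised by the box starting a brand-new row $i=h(\alpha)+1$ (where $\alpha_i=0$), giving the value $h(\alpha)$; for any interior row $i\le h(\alpha)$ one has $\alpha_i\ge 1$ and thus $i-\alpha_i-1\le h(\alpha)-2$. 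Since $\alpha$ is a partition of $n-2$ we always have $h(\alpha)\le n-2$, so
\[
\max_{\nu}\,(i-\alpha_i-1)=h(\alpha)\le n-2 .
\]
Consequently, when $d>n-2$ the equation $d=i-\alpha_i-1$ has no solution, every eigenvalue $\lambda_\nu=d+(\alpha_i+1-i)\ge d-h(\alpha)\ge d-(n-2)>0$ is strictly positive, and therefore $\det Q(\alpha)=\prod_\nu\lambda_\nu^{\dim\psi^\nu}\neq 0$ for every $\alpha$ with $\varphi^\alpha\in V_d[S(n-2)]$ (all of which occur, since $d>n-2\ge h(\alpha)$). Invoking the $\det Q(\alpha)\neq0$ branch of Theorem~\ref{th30} then gives $\Phi_A^\alpha=\ind_{S(n-2)}^{S(n-1)}(\varphi^\alpha)$ on $V_d[S(n-1)]$, of dimension $(n-1)\dim\varphi^\alpha$.

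For the second half I would exhibit, for each $d\le n-2$, one partition forcing a vanishing determinant: choose $\alpha\vdash n-2$ with exactly $d$ rows, for instance $\alpha=(n-1-d,1^{\,d-1})$, which is a genuine partition precisely because $1\le d\le n-2$ and has $h(\alpha)=d$, so $\varphi^\alpha\in V_d[S(n-2)]$ by Prop.~\ref{prop13}. The box added at $(d+1,1)$ then satisfies $i-\alpha_i-1=(d+1)-0-1=d$, so the associated $\lambda_\nu=0$ and $\det Q(\alpha)=0$; the $\det Q(\alpha)=0$ branch of Theorem~\ref{th30} places $\Phi^\alpha$ on a proper subspace of $\ind_{S(n-2)}^{S(n-1)}(\varphi^\alpha)$. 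I expect the real difficulty to be entirely absorbed into the inputs from Appendix~\ref{AppA}, namely deriving the closed form of $\lambda_\nu$ and proving that at most one of them can vanish; once those are available, Proposition~\ref{prop33} is a short arithmetic consequence, the only delicate point being to verify that the content bound $h(\alpha)\le n-2$ is sharp, which is exactly what pins the threshold at $d=n-2$ rather than at a larger value.
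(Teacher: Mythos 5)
Your proof is correct and takes essentially the same route as the paper: the paper deduces Prop.~\ref{prop33} directly from Cor.~\ref{68} (strict positivity of $Q(\alpha )$ for $d>n-2$, itself an immediate consequence of the eigenvalue formula $\lambda _{\nu }=d+\alpha _{i}+1-i$ of Th.~\ref{th58}) together with Cor.~\ref{col66}/Cor.~\ref{col67} (a zero eigenvalue occurs exactly when $d=i-\alpha _{i}-1$, e.g.\ for a partition with exactly $d$ rows). Your content bound $\max _{\nu }(i-\alpha _{i}-1)=h(\alpha )\leq n-2$ and the explicit hook partition $(n-1-d,1^{\,d-1})$ are just concrete instantiations of those two corollaries.
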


When $\det Q(\alpha )\neq 0$, the equivalent form of the irreducible
representation $\Phi _{A}^{\alpha }$ form Th.~\ref{th30}, in the basis $
\{e_{i}^{a}(\alpha ):a=1,...,n-1,\quad i=1,...,\dim \varphi ^{\alpha }\}$ is
given in the Prop.~\ref{prop48},~\ref{prop50}.

The representations of the algebra $A_{n}^{t_{n}}(d)$ included in the ideal $
S$ are much simpler.

\begin{theorem}
\label{th34}
Each irreducible representation $\psi ^{\nu }$ of the group $S(n-1)$, which
appears in the decomposition of the ideal $S$ given in the Th.~\ref{th30} b) defines
irreducible representations $\Psi ^{\nu }$ of the algebra $A_{n}^{t_{n}}(d)$
in the following way
\[
\Psi ^{\nu }(a)= \left\{ \begin{array}{ll}
0 & \textrm{if $ \ a\in M$,}\\
\psi ^{\nu }(\sigma _{n}) & \textrm{if $ \ a=\sigma _{n}\in S(n-1)$.}
\end{array} \right.
\]
So in this representation the non-invertible element of the ideal $M$ are
represented trivially by zero and therefore we call these representation of
the algebra $A_{n}^{t_{n}}(d)$ semi-trivial. The matrix forms of these
representations \ are simply matrix forms of the irreducible representations
of the group algebra $\mathbb{C}\lbrack S(n-1)]\subset A_{n}^{t_{n}}(d)$ and zero matrices for the elements
of the ideal $M$.
\end{theorem}

\begin{corollary}
\label{col35}
\bigskip All irreducible representations of the algebra $A_{n}^{t_{n}}(d)$
of dimension one are included in the ideal $S.$ In particular, because 
 for the identity representation $\psi ^{id}$%
 of $S(n-1)$ we have $h(\psi ^{id})=1<d:d\geq 2$, the algebra $A_{n}^{t_{n}}(d)$ has a trivial
representation $\Psi^{id}$, in which the elements of the ideal $M$ are represented by
zero and the elements $V_{d}(\sigma ):\sigma \in S(n-1)$ are represented by
number $1$.
\end{corollary}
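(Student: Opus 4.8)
The plan is to use the orthogonal decomposition $A_{n}^{t_{n}}(d)=M\oplus S$ with $MS=0$ from Prop.~\ref{prop26} to classify all irreducible representations and then to show that none of dimension one can live in $M$. Since $M$ and $S$ are two-sided ideals whose product vanishes, the unit $\mathds{1}=e+(\mathds{1}-e)$ splits into central idempotents $e\in M$ and $\mathds{1}-e\in S$, so on any irreducible module $e$ acts either as the identity or as zero. Hence every irreducible representation is either of the first type $\Phi_{A}^{\alpha}$ supported in $M$ (enumerated in Th.~\ref{th30}) or of the second, semi-trivial type $\Psi^{\nu}$ supported in $S$ (enumerated in Th.~\ref{th34}). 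The $\Psi^{\nu}$ are by construction matrix forms of the irreducible representations $\psi^{\nu}$ of $S(n-1)$, so their one-dimensional members are precisely those coming from one-dimensional $\psi^{\nu}$. It therefore suffices to prove that no first-type representation is one-dimensional, i.e. that $\dim\Phi_{A}^{\alpha}=\rank Q(\alpha)\geq 2$ for every admissible $\alpha$ (I work under the standing assumptions $d\geq 2$, as in the statement, and $n\geq 3$; the cases $d=1$ and $n=2$ collapse the first-type structure and must be excluded).

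First I would dispose of the non-degenerate case: if $\det Q(\alpha)\neq 0$ then by Th.~\ref{th30} and Rem.~\ref{rem31} one has $\dim\Phi_{A}^{\alpha}=(n-1)\dim\varphi^{\alpha}\geq n-1\geq 2$. For the degenerate case $\det Q(\alpha)=0$ I would invoke Cor.~\ref{col67}, which guarantees that at most one eigenvalue $\lambda_{\theta}$ of $Q(\alpha)$ vanishes; combined with the decomposition $\ind_{S(n-2)}^{S(n-1)}(\varphi^{\alpha})=\bigoplus_{\nu}\psi^{\nu}$ of Prop.~\ref{prop27} this gives $\rank Q(\alpha)=\sum_{\nu\neq\theta}\dim\psi^{\nu}$. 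Because any Young diagram $Y(\alpha)$ of size $n-2\geq 1$ admits at least two addable boxes, the sum over $\nu$ always contains at least two summands, so after deleting the single $\theta$ at least one summand survives.

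The main obstacle is the boundary possibility that exactly one summand survives and it happens to be one-dimensional; this can occur only when $Y(\alpha)$ is a rectangle (exactly two addable boxes) and the surviving $\psi^{\nu}$ is the trivial or sign representation, forcing $\alpha=(n-2)$ or $\alpha=(1^{n-2})$. I would rule this out by comparing the explicit degeneracy condition $d=i-\alpha_{i}-1$ of Th.~\ref{th30} with $d\geq 2$: for $\alpha=(n-2)$, leaving the one-dimensional $\psi^{(n-1)}$ requires the zero eigenvalue to sit on the box added in the new row, giving $d=1$; for $\alpha=(1^{n-2})$, leaving $\psi^{(1^{n-1})}$ requires it on the box added in the first row, giving $d=-1$; both contradict $d\geq 2$. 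In the only remaining degenerate subcases the surviving representation is the hook $\psi^{(n-2,1)}$ or $\psi^{(2,1^{n-3})}$, each of dimension $n-2\geq 2$ once $d\geq 2$ and the admissibility $d\geq h(\alpha)$ are imposed. Thus $\rank Q(\alpha)\geq 2$ in every case, no first-type representation is one-dimensional, and the first assertion follows.

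For the concrete statement I would note that the identity representation $\psi^{id}$ of $S(n-1)$ corresponds to the single-row partition $(n-1)$, so $h(\psi^{id})=1$. Since $d\geq 2>1=h(\psi^{id})$, Th.~\ref{th29} b) (case I when $d\geq n$, case II when $d<n$) places $\psi^{id}$ in the decomposition of the ideal $S$, and the associated second-type representation $\Psi^{id}$ of Th.~\ref{th34} sends every element of $M$ to zero and every generator $V_{d}(\sigma)$, $\sigma\in S(n-1)$, to $\psi^{id}(\sigma)=1$. This is exactly the desired one-dimensional trivial representation of $A_{n}^{t_{n}}(d)$, and by the first part it necessarily lies in $S$.
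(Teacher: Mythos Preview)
Your proof is correct and follows the line the paper leaves implicit: the corollary is stated without proof, as an immediate consequence of the classification in Theorems~\ref{th29}, \ref{th30}, \ref{th34} together with Remark~\ref{rem31}. You make the same move---split irreducibles between $M$ and $S$ via the central idempotent $e$ and bound $\dim\Phi_A^\alpha=\rank Q(\alpha)$ from below---but you actually carry out the degenerate-case analysis ($\det Q(\alpha)=0$ with $\alpha$ a single row or column) that the paper omits entirely. Your explicit exclusion of $n=2$ is also warranted and worth flagging: the paper's own $n=2$ example exhibits a one-dimensional irreducible inside $M$, so the first assertion of the corollary tacitly assumes $n\geq 3$.
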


Let us consider some examples.

\begin{example}
The simplest case is when $n=2$  and $d\geq
2.$ The corresponding algebra $A_{2}^{t_{2}}(d)=\operatorname{span}_{\mathbb{C}
}\{1,V(12)^{t_{2}}:(V(12)^{t_{2}})^{2}=dV(12)^{t_{2}}\}$  is
two-dimensional and commutative. In this simplest case we cannot apply  Th.
31 because the group $S(n-2)=S(2-2)=S(0)$ does not exist. However
due to the simplicity, in particular commutativity of $A_{2}^{t_{2}}(d)$%
  it is easy to derive the structure of the algebra using
elementary calculations, which give%
\[
A_{2}^{t_{2}}(d)=M\oplus S:M=%
\mathbb{C}
V(12)^{t_{2}},\quad S=%
\mathbb{C}
(\text{\noindent
\(\mathds{1}\)}-\frac{1}{d}V(12)^{t_{2}})\Rightarrow A_{2}^{t_{2}}(d)\simeq 
\mathbb{C}
\oplus 
\mathbb{C}
\]%
and %
\[
V(12)^{t_{2}})^{2}=dV(12)^{t_{2}},\quad V(12)^{t_{2}}(\text{\noindent
\(\mathds{1}\)}-\frac{1}{d}%
V(12)^{t_{2}})=0.
\]%
In this simplest all irreducible representations of the algebra $%
A_{2}^{t_{2}}(d)$  are one-dimensional and they are generated by
one-dimensional ideals $M$  and $S$. In the  second
representation the generator  $V(12)^{t_{2}}$  is represented by
the zero operator.
\end{example}

\begin{example}
\label{ex36}
\begin{enumerate}[I)]
\item $n=3$ and $d\geq 3$ all
generating elements $V_{d}^{t_{n}}(\sigma ):\sigma \in S(n)$  are
linearly independent (Th.87, Prop.16) and we have \ $S(3-2)=S(1)=\{\id\}$ and this group has only trivial
irreducible representation $\varphi ^{\id}$. The group $S(3-1)=S(2)$ has two
one-dimensional irreducible representations, trivial $\psi ^{\id}$ and $\sgn$
denoted $\psi ^{s}$ and the induced representation $\ind_{S(1)}^{S(2)}(
\varphi ^{\id})$ has the following decomposition 
\[
\ind_{S(1)}^{S(2)}(\varphi ^{\id})=\psi ^{\id}\oplus \sgn 
\]%
and is a regular representation of the group $S(2).$ In this case the
matrices $Q(\alpha )$ and $Z(\alpha )$ have the form 
\[
Q(\id)=Q(\sgn)=\left( 
\begin{array}{cc}
d & 1 \\ 
1 & d
\end{array}
\right) ,\qquad Z(\id)=\frac{\sqrt{2}}{2}\left( 
\begin{array}{cc}
1 & -1 \\ 
1 & 1
\end{array}
\right) 
\]
and, for $d\geq 2$, $Q(\id)$ is always invertible, because we have $d>n-2=1$
and therefore in the ideal $M$ we have to deal with the induced
presentations of $A_{2}(d)\subset A_{3}^{t_{3}}(d)$ only. The irreducible
representation $\Phi _{A}^{\id}$ of \ the algebra $A_{3}^{t_{3}}(d)$,
included in the ideal $M$, has the following matrix form 
\[
M_{f}^{\alpha }(V^{t_{3}}(13))=\frac{1}{2}\left( 
\begin{array}{cc}
d+1 & -\sqrt{d^{2}-1} \\ 
-\sqrt{d^{2}-1} & d-1%
\end{array}%
\right) ,~M_{f}^{\alpha }(V^{t_{3}}(23))=\frac{1}{2}\left( 
\begin{array}{cc}
d+1 & \sqrt{d^{2}-1} \\ 
\sqrt{d^{2}-1} & d-1%
\end{array}%
\right) 
\]%
\[
M_{f}^{\alpha }(V^{t_{3}}(123))=\frac{1}{2}\left( 
\begin{array}{cc}
d+1 & \sqrt{d^{2}-1} \\ 
-\sqrt{d^{2}-1} & 1-d%
\end{array}%
\right) ,~M_{f}^{\alpha }(V(\sigma _{3}))=\left( 
\begin{array}{cc}
\id(\sigma _{3}) & 0 \\ 
0 & \sgn(\sigma _{3})%
\end{array}%
\right) , 
\]%
where $\sigma _{3}\in S(2).$ The ideal $S$ is in this case two-dimensional
and is a direct sum of two non-isomorphic one-dimensional semi-trivial
representations of algebra $A_{3}^{t_{3}}(d)$, generated by all irreducible
representations of $S(2)$ (because we have $d\geq n-1=2$), which are the
following%
\[
\Psi ^{\id}(\id)=\Psi ^{\id}(12)=1,\qquad \Psi ^{\id}((123)^{t_{3}})=\Psi
^{\id}((132)^{t_{3}})=\Psi ^{\id}((13)^{t_{3}})=\Psi ^{\id}((23)^{t_{3}})=0 
\]%
and 
\[
\Psi ^{s}(\id)=\Psi ^{s}(12)=-1,\qquad \Psi ^{s}((123)^{t_{3}})=\Psi
^{s}((132)^{t_{3}})=\Psi ^{s}((13)^{t_{3}})=\Psi ^{s}(23)^{t_{3}})=0, 
\]%
The algebra $A_{3}^{t_{3}}(d)$ has the following structure%
\[
A_{3}^{t_{3}}(d)\simeq M(2,\mathbb{C})\oplus \mathbb{C}\oplus \mathbb{C},
\]
so it is isomorphic to algebra $
\mathbb{C} \lbrack V_{d}(S(3)]$.
The representation $\Phi _{A}^{\id}$ from the the ideal $M\simeq M(2,\mathbb{C})$, may be written, using Prop.~\ref{prop50}, in an equivalent but simpler form 
\[
\Phi ^{\id}((12))=\left( 
\begin{array}{cc}
0 & 1 \\ 
1 & 0%
\end{array}%
\right) ,\qquad \Phi ^{\id}((132)^{t_{3}})=\left( 
\begin{array}{cc}
1 & d \\ 
0 & 0%
\end{array}%
\right) ,\qquad \Phi ^{\id}((123)^{t_{3}})=\left( 
\begin{array}{cc}
0 & 0 \\ 
d & 1%
\end{array}%
\right) , 
\]%
\[
\Phi ^{\id}(\id)=\left( 
\begin{array}{cc}
1 & 0 \\ 
0 & 1%
\end{array}%
\right) ,\qquad \Phi ^{\id}((13)^{t_{3}})=\left( 
\begin{array}{cc}
0 & 0 \\ 
1 & d%
\end{array}%
\right) ,\qquad \Phi ^{\id}((23)^{t_{3}})=\left( 
\begin{array}{cc}
d & 1 \\ 
0 & 0%
\end{array}%
\right) .\qquad 
\]%
For the group $S(2)$ this representation is the induced representation $
\ind_{S(1)}^{S(2)}(\varphi ^{\id})$ in a non-reduced form.

\item When $n=3,$ $d=2.$  In this case
the irreducible representation $\sgn$  of the group $S(3)$ 
is not included in $A_{3}(2)$, because $h(\sgn)=3>d=2$,
therefore the generating elements $V_{d}^{t_{n}}(\sigma ):\sigma
\in S(n)$ are linearly dependent (Th.87, Prop.16). The matrix $%
Q(id)$ is still invertible because $d=2\geq n-1=2$ so the
ideal $M$ is the same as in case I). The ideal $S$ is
contains only one irreducible representation $\Psi ^{id}$, because %
$h(\psi ^{id})=1<d=2$ and $h(\psi ^{\sgn})=2\nless d=2.$ So %
$S\simeq 
\mathbb{C}
$ and 
\[
A_{3}^{t_{3}}(2)\simeq M(2,%
\mathbb{C}
)\oplus 
\mathbb{C} \simeq
\mathbb{C} \lbrack V_{2}(S(3)].
\]
\end{enumerate}
\end{example}

\bigskip

The next example  is more interesting.

\begin{example}
\label{ex37}
For $n=4$ we have $S(4-2)=S(2)$ and we have two $one-$dimensional
representations of the group $S(2)$ $\varphi ^{\id},$ $\varphi ^{\sgn}$ and the
group $S(4-1)=S(3)$ has three non-trivial irreducible representations,
characterized by partitions $\nu _{\id}=(3),$ $\nu _{2}=(2,1)$ and $\nu
_{s}=(1,1,1)$. The induced representations $\ind_{S(2)}^{S(3)}(\varphi
^{\alpha })$, where $\alpha =\id, \sgn$ are irreducible representations of $%
S(2)$ have the following structure 
\[
\ind_{S(2)}^{S(3)}(\varphi ^{\id})=\psi ^{\alpha _{2}}\oplus \psi ^{\id},\qquad
\ind_{S(2)}^{S(3)}(\varphi ^{\sgn})=\psi ^{\alpha _{2}}\oplus \psi ^{\sgn} 
\]%
and we chose the following matrix representation for $\psi ^{\nu _{2}}$%
\[
\psi ^{\nu _{2}}(12)=\left( 
\begin{array}{cc}
0 & 1 \\ 
1 & 0%
\end{array}%
\right) ,\quad \psi ^{\nu _{2}}(13)=\left( 
\begin{array}{cc}
0 & \varepsilon \\ 
\varepsilon ^{-1} & 0%
\end{array}%
\right) ,\quad \psi ^{\nu _{2}}(23)=\left( 
\begin{array}{cc}
0 & \varepsilon ^{-1} \\ 
\varepsilon & 0%
\end{array}%
\right) ,\quad 
\]%
\[
\psi ^{\nu _{2}}(123)=\left( 
\begin{array}{cc}
\varepsilon & 0 \\ 
0 & \varepsilon ^{-1}%
\end{array}%
\right) ,\quad \psi ^{\nu _{2}}(132)=\left( 
\begin{array}{cc}
\varepsilon ^{-1} & 0 \\ 
0 & \varepsilon%
\end{array}%
\right) ,\quad 
\]%
where $\varepsilon ^{3}=1.$ In case of the algebra $A_{4}^{t_{4}}(d)$, its
structure and the structure of its irreducible representations depends on
the value of the dimension $d.$ We have two cases.

\begin{enumerate}[I)]
\item When $n=4, d \geq 4$, then the condition $d>n-2=2$ is satisfied and the
irreducible representations $\Phi _{A}^{\alpha }$, from the ideal $M,$ are
induced representations $\ind_{S(2)}^{S(3)}(\varphi ^{\alpha })$ for the
subalgebra $A_{3}(d)\subset A_{4}^{t_{4}}(d)$, where $\alpha =\id,$ $\sgn$ are
irreducible representations of $S(2).$ Using the explicit form of the
representation $\psi ^{\nu _{2}}$ of \ $S(3)$ we get that the matrices $%
Q(\id) $ and $Z(\id)$ are of the form 
\[
Q(\id)=\left( 
\begin{array}{ccc}
d & 1 & 1 \\ 
1 & d & 1 \\ 
1 & 1 & d%
\end{array}%
\right) ,\quad Z(\id)=\frac{1}{\sqrt{3}}\left( 
\begin{array}{ccc}
\varepsilon & \varepsilon ^{2} & 1 \\ 
\varepsilon ^{2} & \varepsilon & 1 \\ 
1 & 1 & 1%
\end{array}%
\right) . 
\]%
The irreducible representation $\Phi _{A}^{\id}$ of \ the algebra $%
A_{4}^{t_{4}}(d)$, included in the ideal $M$, has the following matrix form 
\[
M_{f}^{\id}(V^{t_{4}}(14))=\frac{1}{3}D^{\id}\left( 
\begin{array}{ccc}
1 & \varepsilon & \varepsilon ^{2} \\ 
\varepsilon ^{2} & 1 & \varepsilon \\ 
\varepsilon & \varepsilon ^{2} & 1%
\end{array}%
\right) D^{\id},\quad M_{f}^{id}(V^{t_{4}}(24))=\frac{1}{3}D^{\id}\left( 
\begin{array}{ccc}
1 & \varepsilon ^{2} & \varepsilon \\ 
\varepsilon & 1 & \varepsilon ^{2} \\ 
\varepsilon ^{2} & \varepsilon & 1%
\end{array}%
\right) D^{\id}, 
\]%
\[
M_{f}^{\id}(V^{t_{4}}(34))=\frac{1}{3}D^{\id}\left( 
\begin{array}{ccc}
1 & 1 & 1 \\ 
1 & 1 & 1 \\ 
1 & 1 & 1%
\end{array}%
\right) D^{\id},\quad M_{f}^{\id}(V(\sigma _{4}))=\left( 
\begin{array}{cc}
\psi ^{2}(\sigma _{4}) & 0 \\ 
0 & 1%
\end{array}%
\right) ,\quad \sigma _{4}\in S(3), 
\]%
where 
\[
D^{\id}=\left( 
\begin{array}{ccc}
\sqrt{d-1} & 0 & 0 \\ 
0 & \sqrt{d-1} & 0 \\ 
0 & 0 & \sqrt{d+2}%
\end{array}%
\right) 
\]%
In the basis from Prop.~\ref{prop50} this irreducible representation of $%
A_{4}^{t_{4}}(d)$ looks simpler 
\[
M_{e}^{\id}(V^{t_{4}}(14))=\left( 
\begin{array}{ccc}
d & 1 & 1 \\ 
0 & 0 & 0 \\ 
0 & 0 & 0%
\end{array}%
\right) ,\quad M_{e}^{\id}(V^{t_{4}}(24))=\left( 
\begin{array}{ccc}
0 & 0 & 0 \\ 
1 & d & 1 \\ 
0 & 0 & 0%
\end{array}%
\right) , 
\]%
\[
M_{e}^{\id}(V^{t_{4}}(34))=\left( 
\begin{array}{ccc}
0 & 0 & 0 \\ 
0 & 0 & 0 \\ 
1 & 1 & d%
\end{array}%
\right) ,\quad M_{e}^{\id}(V(\sigma _{4})_{ij}=\delta _{i\sigma (j)},\quad
\sigma _{4}\in S(3) 
\]%
So for $A_{3}(d)\subset A_{4}^{t_{4}}(d)$ this a natural representation.
Similarly, for the irreducible representation $\Phi _{A}^{\sgn}$ we get that 
\[
Q(\sgn)=\left( 
\begin{array}{ccc}
d & -1 & 1 \\ 
-1 & d & 1 \\ 
1 & 1 & d%
\end{array}%
\right) ,\quad Z(\sgn)=\frac{1}{\sqrt{3}}\left( 
\begin{array}{ccc}
-\varepsilon & -\varepsilon ^{-1} & -1 \\ 
-\varepsilon ^{-1} & -\varepsilon & -1 \\ 
1 & 1 & 1%
\end{array}%
\right) 
\]%
The irreducible representation $\Phi _{A}^{\sgn}$ of \ the algebra $%
A_{4}^{t_{4}}(d)$, included in the ideal $M$, has the following matrix form 
\[
M_{f}^{\sgn}(V^{t_{4}}(14))=\frac{1}{3}D^{s}\left( 
\begin{array}{ccc}
1 & \varepsilon ^{2} & \varepsilon \\ 
\varepsilon & 1 & \varepsilon ^{2} \\ 
\varepsilon ^{2} & \varepsilon & 1%
\end{array}%
\right) D^{s},\quad M_{f}^{\sgn}(V^{t_{4}}(24))=\frac{1}{3}D^{s}\left( 
\begin{array}{ccc}
1 & \varepsilon & \varepsilon ^{2} \\ 
\varepsilon ^{2} & 1 & \varepsilon \\ 
\varepsilon & \varepsilon ^{2} & 1%
\end{array}%
\right) D^{s}, 
\]%
\[
M_{f}^{\sgn}(V^{t_{4}}(34))=\frac{1}{3}D^{s}\left( 
\begin{array}{ccc}
1 & 1 & 1 \\ 
1 & 1 & 1 \\ 
1 & 1 & 1%
\end{array}%
\right) D^{s},\quad M_{f}^{\sgn}(V(\sigma _{4}))=\left( 
\begin{array}{cc}
\psi ^{2}(\sigma _{4}) & 0 \\ 
0 & \sgn(\sigma _{4})%
\end{array}%
\right) , 
\]%
where 
\[
D^{s}=\left( 
\begin{array}{ccc}
\sqrt{d+1} & 0 & 0 \\ 
0 & \sqrt{d+1} & 0 \\ 
0 & 0 & \sqrt{d-2}%
\end{array}%
\right) 
\]%
Again, in the basis from Prop.~\ref{prop50} the irreducible representation $\Phi
_{A}^{\sgn}$ of $A_{4}^{t_{4}}(d)$ looks simpler 
\[
M_{e}^{\sgn}(V^{t_{4}}(14))=\left( 
\begin{array}{ccc}
d & -1 & 1 \\ 
0 & 0 & 0 \\ 
0 & 0 & 0%
\end{array}%
\right) ,\quad M_{e}^{\sgn}(V^{t_{4}}(24))=\left( 
\begin{array}{ccc}
0 & 0 & 0 \\ 
-1 & d & 1 \\ 
0 & 0 & 0%
\end{array}%
\right) , 
\]%
\[
M_{e}^{\sgn}(V^{t_{4}}(34))=\left( 
\begin{array}{ccc}
0 & 0 & 0 \\ 
0 & 0 & 0 \\ 
1 & 1 & d%
\end{array}%
\right) ,\quad M_{e}^{\sgn}(V(\sigma _{4})_{ij}=\delta _{i\sigma
(j)}\sgn[(i3)\sigma _{4}(j3)], 
\]%
The ideal $M$ has the following structure 
\[
M=M(3,\mathbb{C})\oplus M(3,\mathbb{C}),
\]%
where each induced representation $\ind_{S(2)}^{S(3)}(\varphi ^{\alpha })$ is
included with the multiplicities equal to their dimensions.The ideal $S$, is a
direct sum of three non-isomorphic semi-trivial representations $\Psi ^{\nu
} $ of the algebra $A_{4}^{t_{4}}(d)$ (also with multiplicities equal to their
dimensions), generated by all irreducible representations $\psi ^{\nu }$ of $%
S(3)$ ($\nu _{\id}=(3),$ $\nu _{2}=(2,1)$ and $\nu _{s}=(1,1,1))$, because in
this case we have $d\geq n=4$ (see Theorem~\ref{th29}b). It means that, in these semi-trivial
representations the natural generators $V(\sigma _{n}):\sigma _{n}\in S(n-1)$
are represented by operators $\psi ^{\nu }(\sigma _{n})$, whereas the
elements of the ideal $M$ are represented by zero operator. Therefore the
ideal $S$ has the following structure 
\[
S=M(2,\mathbb{C})\oplus \mathbb{C} \oplus \mathbb{C}. 
\]
\item $n=4,$ $d=3$. In this case the irreducible
representation $\sgn$ of the group $S(4)$ is not included
in $A_{4}(3)$, because $h(\sgn)=4>d=3$, so from (Th.87,
Prop.16) we get that  the generating elements $%
V_{d}^{t_{n}}(\sigma ):\sigma \in S(4)$ are linearly dependent. The
matrices $Q(id)$ and $Q(\sgn)$ are invertible, because $%
d=3\geq n-1=3,$ therefore the ideal $M$ has the same form
as in previous case I). The ideal $S$ is contains the irreducible
representations $\Psi ^{id}$ and $\Psi ^{\nu _{2}}$, $%
h(\Psi ^{id}),$ $h(\Psi ^{\nu _{2}})<3,$ so $S\simeq M(2,%
\mathbb{C}
)\oplus 
\mathbb{C}
.$

\item In the case $d=2,$ the irreducible
representations $sgn$ and $(2,1,1)$ of the group $S(4)$%
 are not included in $A_{4}(2)$, because $h(\sgn)=4,$ $%
h(2,1,1)=3>d=2$, therefore, again from Th.87, Prop.16 we
get that the generating elements $V_{d}^{t_{n}}(\sigma ):\sigma
\in S(4)$ are linearly dependent. The ideal $M$ in 
algebra $A_{4}^{t_{4}}(2)$ has another structure of its irreducible
representations, because we have $d=n-2=2$ and from Prop.34 and Cor.67 it
follows that, in this case the irreducible irreducible representation $\Phi
_{A}^{id}$ is the induced representation $ind_{S(2)}^{S(3)}(\varphi ^{id})$
for the subalgebra $A_{3}(d)$ but the irreducible irreducible representation 
$\Phi _{A}^{\sgn}$ is defined only on two-dimensional subspace of the
representation $ind_{S(2)}^{S(3)}(\varphi ^{\sgn})$ (because $\det Q(\sgn)=0).$
The representation $\Phi _{A}^{id}$ $\ $has, in this case the same form as
for the case I) but we have to set $d=2$ and the irreducible representation $%
\Phi _{A}^{\sgn}$ of $A_{4}^{t_{4}}(d)$ has the following matrix form 
\[
M_{f}^{\sgn}(V^{t_{4}}(14))=\left( 
\begin{array}{cc}
1 & \varepsilon ^{2} \\ 
\varepsilon  & 1%
\end{array}%
\right) ,\quad M_{f}^{\sgn}(V^{t_{4}}(24))=\left( 
\begin{array}{cc}
1 & \varepsilon  \\ 
\varepsilon ^{2} & 1%
\end{array}%
\right) ,
\]%
\[
M_{f}^{\sgn}(V^{t_{4}}(34))=\left( 
\begin{array}{cc}
1 & 1 \\ 
1 & 1%
\end{array}%
\right) ,\quad M_{f}^{\sgn}(V(\sigma _{4}))=\psi ^{2}(\sigma _{4}),\quad
\sigma _{4}\in S(3)
\]%
In this case we can not use the Prop.49 because $\det Q(\sgn)=0$ for $d=2.$
We have also 
\[
M=M(3,%
\mathbb{C}
)\oplus M(2,%
\mathbb{C}
).
\]%
Similarly, from Th.30b and 35 it follows
that, in this case the representations $\Psi ^{\sgn}$ and $\Psi
^{\nu _{2}}$of the group $S(3)$ are not included in the
ideal $S$, because $h(\sgn)=3,$ $h(\nu _{2})=2\nless d=2,$%
 Therefore the ideal $S$ contains only one irreducible
representation $\Psi ^{id}$ and has the following structure
\[
S\simeq 
\mathbb{C}
.
\]
\end{enumerate}
\end{example}

The results obtained in these examples were obtained, in an equivalent form,
in~\cite{Studzinski1} were different method has been used.

\section{\protect\bigskip Construction of the irreducible representations of
the algebra $A_{n}^{t_{n}}(d).$}
\label{construction1}

\bigskip

In this section we will prove our main results stated in Theorems
30, 31 and 35 of previous section. In the proof we will construct all
representations of the algebra $A_{n}^{t_{n}}(d).$ Because the algebra $%
A_{n}^{t_{n}}(d)$ is semi-simple then, from the Theorem 5 and Proposition 7
we get that all irreducible representations of the algebra $A_{n}^{t_{n}}(d)$
are included, as a left minimal ideals, in the left regular representation
of this algebra. More precisely these left minimal ideals are included in
the matrix ideals of $A_{n}^{t_{n}}(d)$, so in fact we have to look for the
matrix ideals of the algebra $A_{n}^{t_{n}}(d).$ This a purely algebraical
approach, alternative to representation approach described in~\cite{Studzinski1}, is based on he properties of the multiplication in the
algebra $A_{n}^{t_{n}}(d)$ described in Th.19. First we we will construct the
matrix ideals of $M$ and all irreducible representations of the algebra $%
A_{n}^{t_{n}}(d)$ included in the ideal $M$, next all irreducible
representations included in the complementary ideal $S,$ which is closely
related with the algebra $%
\mathbb{C}
\lbrack S(n-1)].$

\subsection{ Matrix ideals and left minimal ideals of the ideal $M$ of the
algebra $A_{n}^{t_{n}}(d).$}
\label{subconstr1}

\bigskip We define new generating elements of the ideal $M$ in the following
way

\begin{definition}
Let $\varphi ^{\alpha }$, $\alpha =1,....,k$ be all inequivalent irreps of
the group $S(n-2)$ which are supposed to be unitary and $w^{\alpha
}=\dim \varphi ^{\alpha }.$ For any $\varphi ^{\alpha }$, $\alpha =1,....,k$
we define 
\[
u_{ij}^{ab}(\alpha )=\frac{w^{\alpha }}{(n-2)!}V(an)^{t_{n}}\sum_{\sigma \in
S(n-2)}\varphi _{ji}^{\alpha }(\sigma ^{-1})V[(a \ n-1)(\sigma )(b \ n-1)] 
\]
\[
=V(an)^{t_{n}}V[(a \ n-1)]V[E_{ij}^{\alpha }]V[(b \ -1)] 
\]
where $a,b=1,..,n-1$ and $E_{ij}^{\alpha }$ are matrix
operators of the representation $\varphi ^{\alpha }$ defined in the Def. 69,
Appendix C and $i,j=1,..,w^{\alpha }$ are indices of the matrix form of $%
\varphi ^{\alpha }.$
\end{definition}

\begin{remark}
\label{rem39}
\bigskip Here, a natural question arises, why the group $S(n-2)$ and its
representations appears in this Definition. The group $S(n-2)$ arises in a
more natural way in the studies of irreducible representations of the
algebra $A_{n}^{t_{n}}(d))$ in the representation approach where one
consider the properties of the operators of the algebra $A_{n}^{t_{n}}(d)$
acting on a given basis of its natural representation space $\mathcal{(%
\mathbb{C}
}^{d})^{\otimes n}$~\cite{Studzinski1}. Using however, the results of this
section, one can show that the ideal $M$ is a direct sum of subalgebras
homomorphic to the group algebras $%
\mathbb{C}
\lbrack S(n-2)]$ (but we omit this derivation) which allows to construct
left ideals in $M$ (i.e. representations of the algebra $A_{n}^{t_{n}}(d))$
using irreducible representations of the group $S(n-2).$
\end{remark}

For a given irreducible representation $\varphi ^{\alpha }$ of the group $%
S(n-2)$ we get a set of $((n-1)\dim \varphi ^{\alpha })^{2}$ elements $%
\{u_{ij}^{ab}(\alpha )\}$ which are either all non-zero vectors or all of
them are zero. In fact we have

\begin{proposition}
\label{prop40}
\begin{enumerate}[a)]
\item For a given irreducible representation $\varphi ^{\alpha }$ of
the group $S(n-2)$ the \ operators $u_{ij}^{ab}(\alpha ):a,b=1,...,n-1,\quad
\quad i,j=1,..,w^{\alpha }$ \ are different from zero iff the the
irreducible representation \ $\varphi ^{\alpha }$ of the group $S(n-2)$
appears in the permutation representation $V_{d}[S(n-2)]$ of the group $%
S(n-2)$.

\item The vectors $\{u_{ij}^{ab}(\alpha )\}$ span the ideal $M$ i.e. we have 
\[
M=\Span_{%
\mathbb{C}
}\{u_{ij}^{ab}(\alpha ):a,b=1,...,n-1,\quad \varphi ^{\alpha }\in
V_{d}[S(n-2)],\quad i,j=1,..,w^{\alpha }\} 
\]
\end{enumerate}
\end{proposition}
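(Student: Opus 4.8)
The plan is to prove part a) by a direct computation using the multiplication rule of Theorem~\ref{th19}, and then derive part b) from a dimension/spanning argument together with Proposition~\ref{prop40}a).

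For part a), first I would rewrite $u_{ij}^{ab}(\alpha)$ in the compact form $V(an)^{t_n}V[(a\ n-1)]V[E_{ij}^{\alpha}]V[(b\ n-1)]$, where $E_{ij}^{\alpha}=\frac{w^{\alpha}}{(n-2)!}\sum_{\sigma\in S(n-2)}\varphi_{ji}^{\alpha}(\sigma^{-1})V(\sigma)$ is the matrix unit of the group algebra $\mathbb{C}[S(n-2)]$ (as defined in Appendix~C, Def.~69). The key observation is that $E_{ij}^{\alpha}$ lives in the subalgebra $A_{n-2}(d)=\Span_{\mathbb{C}}\{V(\sigma):\sigma\in S(n-2)\}$, so the operators $E_{ij}^{\alpha}$ are nonzero precisely when $\varphi^{\alpha}$ appears in the permutation representation $V_d[S(n-2)]$. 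This is exactly the content of Proposition~\ref{prop13} applied to $S(n-2)$: $\varphi^{\alpha}\in V_d[S(n-2)]$ iff $d\geq h(\alpha)$, and in that case the image of $E_{ij}^{\alpha}$ under $V_d$ is a genuine (nonzero) matrix unit in $A_{n-2}(d)$. Since the factors $V(an)^{t_n}$, $V[(a\ n-1)]$ and $V[(b\ n-1)]$ are invertible as operators (the latter two are genuine permutation operators, and $V(an)^{t_n}$ is an essential projector that does not annihilate the relevant image), multiplying $E_{ij}^{\alpha}$ by them on either side cannot turn a nonzero element into zero. Conversely, if $\varphi^{\alpha}\notin V_d[S(n-2)]$ then $V_d[E_{ij}^{\alpha}]=0$ already in $A_{n-2}(d)$, forcing every $u_{ij}^{ab}(\alpha)=0$. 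This gives the ``all nonzero or all zero'' dichotomy of the statement.

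For part b), I would argue that $\Span_{\mathbb{C}}\{u_{ij}^{ab}(\alpha)\}\subseteq M$ because each $u_{ij}^{ab}(\alpha)$ is a linear combination of terms $V(an)^{t_n}V[\tau]$ with $\tau\in S(n-1)$, and by the composition rule of Theorem~\ref{th19} (second and third lines) such products are of the form $V(\sigma_{(c,d)})^{t_n}$ with $c,d\neq n$, hence lie in $M$ by Corollary~\ref{col25}. For the reverse inclusion I would count dimensions: by Corollary~\ref{col25}, $M$ is spanned by the $\{V(\sigma_{(a,b)})^{t_n}:a,b\neq n\}$, and the map from these natural generators to the reorganized family $\{u_{ij}^{ab}(\alpha)\}$ is (up to the known linear-dependence relations governed by $d$ versus $n$) essentially a change of basis inside the group algebra $\mathbb{C}[S(n-2)]$ lifted by the invertible operators $V(an)^{t_n}V[(a\ n-1)]$ and $V[(b\ n-1)]$. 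Summing $((n-1)w^{\alpha})^2$ over all $\alpha$ with $\varphi^{\alpha}\in V_d[S(n-2)]$ should reproduce exactly $\dim M$, giving equality of spans.

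The main obstacle I anticipate is the careful bookkeeping of the index $a$ that labels both a coset representative $(a\ n-1)$ and appears inside the transposed generator $V(an)^{t_n}$; one must verify that conjugating $E_{ij}^{\alpha}$ by $(a\ n-1)$ and $(b\ n-1)$ correctly produces the block-$(a,b)$ structure and that the factor $V(an)^{t_n}$ interacts with these permutations via the $d^{\delta_{aq}}$ factor of Theorem~\ref{th19} in a way consistent with the multiplication law $u_{ij}^{ab}(\alpha)u_{kl}^{pq}(\beta)=\delta_{\alpha\beta}Q_{ik}^{bp}(\alpha)u_{il}^{aq}(\alpha)$ announced in Theorem~\ref{th29}. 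Establishing nonvanishing rigorously will likely require showing that $V(an)^{t_n}$ does not annihilate the specific image $V[(a\ n-1)]V[E_{ij}^{\alpha}]$, which I expect follows from the explicit action of $V(an)^{t_n}$ on tensor-product basis vectors together with the fact that this operator, being an essential projector (Example~\ref{ex22}), has a large nonzero range intersecting the relevant subspace.
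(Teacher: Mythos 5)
Your proof of part a) has a genuine gap at its central step. You assert that ``the factors $V(an)^{t_n}$, $V[(a\ n-1)]$ and $V[(b\ n-1)]$ are invertible as operators'', but $V(an)^{t_n}$ is \emph{not} invertible: Corollary~\ref{col25} of the paper states precisely that the elements $V(\sigma_{(a,b)})^{t_n}$ with $a,b\neq n$ are non-invertible (indeed $V(an)^{t_n}$ is $d$ times a projector of rank $d^{n-2}$, cf.\ Example~\ref{ex22}). You partially acknowledge this by substituting the claim that $V(an)^{t_n}$ ``does not annihilate the relevant image'', but that claim \emph{is} the nonvanishing statement to be proved, and your last paragraph leaves it as an expectation (``will likely require\dots I expect follows''). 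The paper closes exactly this gap with a different device: it introduces the untransposed elements $\widetilde{u}_{ij}^{ab}(\alpha)=V(an)V[(a\ n-1)]V[E_{ij}^{\alpha}]V[(b\ n-1)]$, in which every factor multiplying $E_{ij}^{\alpha}$ is a genuine permutation operator, hence invertible, so that $\widetilde{u}_{ij}^{ab}(\alpha)\neq 0$ iff $E_{ij}^{\alpha}\neq 0$ iff $\varphi^{\alpha}\in V_{d}[S(n-2)]$; it then uses that the partial transposition $t_n$ is an invertible \emph{linear map} on $\Hom(\mathcal{(\mathbb{C}}^{d})^{\otimes n})$ together with $u_{ij}^{ab}(\alpha)=\bigl(\widetilde{u}_{ij}^{ab}(\alpha)\bigr)^{t_n}$, an identity which follows from the rule $V(\sigma_{(a,b)})^{t_n}V(\rho_{n})=V(\sigma_{(a,b)}\rho_{n})^{t_n}$ of Theorem~\ref{th19}, since everything to the right of $V(an)$ represents permutations fixing $n$. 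The invertibility that does the work is that of the map $t_n$ on operator space, not of the operator $V(an)^{t_n}$ --- this is the idea missing from your proposal.

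The same problem infects part b), and in addition your dimension count cannot work as stated: when $d<n$ the vectors $u_{ij}^{ab}(\alpha)$ are in general linearly dependent (Proposition~\ref{prop41}), so summing $((n-1)w^{\alpha})^{2}$ over $\alpha$ with $\varphi^{\alpha}\in V_{d}[S(n-2)]$ does not reproduce $\dim M$, and your hedge ``up to the known linear-dependence relations'' is carrying the entire burden of proof. No counting is needed. The paper argues by an exact identity of spans, valid for every $d$: for each fixed pair $(a,b)$ one checks that
\[
S_{ab}=\Span_{\mathbb{C}}\{V(\sigma):\sigma=\sigma_{ab}\}
=V(an)V[(a\ n-1)]\,V_{d}[S(n-2)]\,V[(b\ n-1)],
\]
and since $V_{d}[S(n-2)]=\bigoplus_{\varphi^{\alpha}\in V_{d}[S(n-2)]}\Span_{\mathbb{C}}\{E_{ij}^{\alpha}\}$ (Appendix~\ref{AppC}), this gives $\widetilde{M}=\Span_{\mathbb{C}}\{\widetilde{u}_{ij}^{ab}(\alpha)\}$; applying the linear bijection $t_n$ termwise then yields $M=\Span_{\mathbb{C}}\{u_{ij}^{ab}(\alpha)\}$. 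Your forward inclusion (that the $u_{ij}^{ab}(\alpha)$ lie in $M$) is fine, but the reverse inclusion requires this span identity, not a dimension argument.
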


\begin{proof}
\begin{enumerate}[a)]
\item The representation $V:$ $S(n)$ $\rightarrow \Hom(\mathcal{(%
\mathbb{C}
}^{d})^{\otimes n})$ of the group $S(n)$ is also, in a natural way, the
representation of its subgroup $S(n-2)$ i.e. we have also $V_{d}:$ $S(n-2)$ $%
\rightarrow \Hom(\mathcal{(%
\mathbb{C}
}^{d})^{\otimes n})$. From the statements of the Appendix~\ref{AppC} we know that the
matrix operators%
\[
E_{ij}^{\alpha }=\frac{w^{\alpha }}{(n-2)!}\sum_{\sigma \in S(n-2)}\varphi
_{ji}^{\alpha }(\sigma ^{-1})V_{d}[\sigma ]\in \Hom(\mathcal{(%
\mathbb{C}
}^{d})^{\otimes n}) 
\]%
\bigskip of the representation $\varphi ^{\alpha }$ of the subgroup $S(n-2)$
are nonzero iff the irreducible representation $\varphi ^{\alpha }$ belongs
to the representation $V:$ $S(n-2)$ $\rightarrow \Hom(\mathcal{(%
\mathbb{C}
}^{d})^{\otimes n}).$ From the invertibility of the operators $V(an)$ and $%
V[(an-1)]$ (in the representation $V:$ $S(n)$ $\rightarrow \Hom(\mathcal{(%
\mathbb{C}
}^{d})^{\otimes n}))$ we get that for any $a,b=1,...,n-1$ the elements 
\[
\widetilde{u}_{ij}^{ab}(\alpha )=V(an)V[(a \ n-1)]V[E_{ij}^{\alpha
}]V[(b \ n-1)]\in A_{n}(d) 
\]%
are nonzero iff $\varphi ^{\alpha }$ belongs to the representation $V:$ $%
S(n-2)$ $\rightarrow \Hom(\mathcal{(\mathbb{C}}^{d})^{\otimes n}).$ Now the statement of the Proposition follows from the
invertibility of the partial transpose $t_{n}$ (which is a linear
transformation) and from properties of multiplication in the algebra $%
A_{n}^{t_{n}}(d))$ Th.~\ref{th19}.

\item The algebra $A_{n}(d)$ , which homomorphic with the group algebra $\mathbb{C}
\lbrack S(n)],$ is a sum of two subspaces 
\[
 A_{n}(d)=\widetilde{M}+V[%
\mathbb{C}
\lbrack S(n-1)]
\]%
where 
\[
\widetilde{M}=\Span_{%
\mathbb{C}
}\{V(\sigma _{ab}):a,b\neq n\}. 
\]%
The subspace $\widetilde{M}$ has the following form
\[
\widetilde{M}=\sum_{a,b=1,...,n-1}S_{ab}:S_{ab}=span_{%
\mathbb{C}
}\{V(\sigma ):\sigma \in S(n)\wedge \sigma =\sigma _{ab}\}. 
\]%
It is easy to check that for a given pair $(a,b),$ $a,b=1,...,n-1$ we have 
\[
S_{ab}=V(an)V[(a \ n-1)]V_{d}[S(n-2)]V[(b \ n-1)], 
\]%
so taking into account (see Appendix~\ref{AppC}) that 
\[
V_{d}[S(n-2)]=\bigoplus _{\varphi ^{\alpha }\in V_{d}[S(n-2)]} \Span_{%
\mathbb{C}
}\{E_{ij}^{\alpha }\} 
\]%
we get that 
\[
\widetilde{M}=\Span_{%
\mathbb{C}
}\{\widetilde{u}_{ij}^{ab}(\alpha ):a,b=1,...,n-1,\quad ,\varphi ^{\alpha
}\in V_{d}[S(n-2)],\quad i,j=1,..,w^{\alpha }\} 
\]%
and again from the invertibility of the partial transpose we get the
statement b) of the Proposition.
\end{enumerate}
\end{proof}

\bigskip

Similarly as the generating elements $V(\sigma )^{t_{n}}:\sigma \in S(n)$
the vectors $u_{ij}^{ab}(\alpha ),$ in general, need not to be linearly
independent which depends on the the dimension parameter $d.$ Arguing
however, in a similar way as in the above proof and using Prop.~\ref{prop14}, one gets

\begin{proposition}
\label{prop41}
a) \bigskip The vectors $\{u_{ij}^{ab}(\alpha )\}$, in general, need not to
bee linearly independent but for a given pair $(a,b),$ $a,b=1,...,n-1$ and
given irreducible representation $\varphi ^{\alpha }$ $\in $ $V_{d}[$ $%
S(n-2)]$ the set of $(w^{\alpha })^{2}$ nonzero vectors $u_{ij}^{ab}(\alpha
),\quad i,j=1,..,w^{\alpha }$ is linearly independent.

b) When $d\geq n$ then the both sets of vectors $\{V(\sigma )^{t_{n}}:\sigma
\in S(n)\}$ and $\{u_{ij}^{ab}(\alpha ):a,b=1,...,n-1,\quad \alpha
=1,....,k,\quad i,j=1,..,w^{\alpha }\}$ are linearly independent and they
form a bases of the ideal $M.$
\end{proposition}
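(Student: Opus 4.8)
The plan is to deduce both statements from the invertibility of the partial transpose $t_n$ together with the known structure of the semisimple algebra $A_{n-2}(d)$ recalled in Appendix~\ref{AppC}. For part a), first I would observe that $u_{ij}^{ab}(\alpha)=t_n(\widetilde u_{ij}^{ab}(\alpha))$, where $\widetilde u_{ij}^{ab}(\alpha)=V(an)V[(a\ n-1)]V[E_{ij}^{\alpha}]V[(b\ n-1)]\in A_n(d)$ is the untransposed element introduced in the proof of Prop.~\ref{prop40}. Since $t_n$ is a linear bijection of $\Hom((\mathbb{C}^{d})^{\otimes n})$, the family $\{u_{ij}^{ab}(\alpha)\}$ is linearly independent exactly when $\{\widetilde u_{ij}^{ab}(\alpha)\}$ is, so the whole question transports to the ordinary (untransposed) algebra $A_n(d)$.

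Fix now a pair $(a,b)$ with $a,b=1,\dots,n-1$ and an irrep $\varphi^{\alpha}\in V_d[S(n-2)]$. The operators $V(an)$, $V[(a\ n-1)]$ and $V[(b\ n-1)]$ represent permutations and are therefore invertible in $\Hom((\mathbb{C}^{d})^{\otimes n})$; consequently the map $X\mapsto V(an)V[(a\ n-1)]\,X\,V[(b\ n-1)]$ is a linear automorphism of the operator space. Hence $\{\widetilde u_{ij}^{ab}(\alpha):i,j=1,\dots,w^{\alpha}\}$ is linearly independent if and only if the matrix operators $\{V[E_{ij}^{\alpha}]:i,j=1,\dots,w^{\alpha}\}$ are. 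The latter is precisely the content of the structure theory in Appendix~\ref{AppC}: for $\varphi^{\alpha}\in V_d[S(n-2)]$ the operators $V[E_{ij}^{\alpha}]$ are the matrix units spanning the matrix ideal $M(w^{\alpha},\mathbb{C})\subset A_{n-2}(d)$ and are linearly independent (this is the same nonvanishing/orthogonality input that already yielded Prop.~\ref{prop40}a). This establishes part a), and it explains why independence can fail \emph{across} different pairs $(a,b)$ once the natural generators $V(\sigma)^{t_n}$ themselves become dependent, i.e. when $d<n$.

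For part b), suppose $d\ge n$. By Prop.~\ref{prop14} the operators $V(\sigma):\sigma\in S(n)$ are linearly independent, and applying the bijection $t_n$ gives the same for $\{V(\sigma)^{t_n}:\sigma\in S(n)\}$; in particular $\dim A_n^{t_n}(d)=n!$. Splitting this basis according to Notation~\ref{not9}, the $(n-1)!$ elements with $\sigma(n)=n$ are the $V(\sigma_n)$, while the remaining $(n-1)(n-1)!$ elements $V(\sigma_{ab})^{t_n}$ with $a,b\neq n$ lie in $M$ and, by Cor.~\ref{col25}, span it; being a subset of a linearly independent set they are independent, so $\dim M=(n-1)(n-1)!$. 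On the other hand, since $d\ge n>n-2\ge h(\alpha)$ for every partition $\alpha$ of $n-2$, Prop.~\ref{prop13} guarantees that \emph{every} $\varphi^{\alpha}$ occurs in $V_d[S(n-2)]$, so by Prop.~\ref{prop40} all the $u_{ij}^{ab}(\alpha)$ are nonzero and span $M$. Their total number is $\sum_{\alpha}(n-1)^{2}(w^{\alpha})^{2}=(n-1)^{2}\sum_{\alpha}(w^{\alpha})^{2}=(n-1)^{2}(n-2)!=(n-1)(n-1)!$, using $\sum_{\alpha}(w^{\alpha})^{2}=(n-2)!$. A spanning set of $M$ whose cardinality equals $\dim M$ must be a basis, so the $u_{ij}^{ab}(\alpha)$ are linearly independent and form a basis of $M$.

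I expect the only real difficulty to be this dimension bookkeeping: one must match the value $(n-1)(n-1)!$ obtained from $\dim A_n^{t_n}(d)-\dim V[\mathbb{C}[S(n-1)]]$ against the count of $u$-vectors via $\sum_{\alpha}(w^{\alpha})^{2}=(n-2)!$, and one must ensure the argument is not circular. The crucial ordering is that the independence of the full natural set $\{V(\sigma)^{t_n}\}$, coming from Prop.~\ref{prop14} and the $t_n$ bijection, pins down $\dim M$ \emph{before} the $u$-vectors are analysed, so that their independence follows purely from the spanning property of Prop.~\ref{prop40} together with the matching count, rather than being assumed.
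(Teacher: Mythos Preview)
Your argument is correct and follows precisely the route the paper indicates: the paper does not give a detailed proof but only says to argue ``in a similar way as in the above proof and using Prop.~\ref{prop14}'', i.e.\ to combine the invertibility of $t_n$ and of the operators $V(an),V[(a\ n-1)],V[(b\ n-1)]$ (as in the proof of Prop.~\ref{prop40}) with the linear-independence input from Prop.~\ref{prop14} and the matrix-unit structure of $A_{n-2}(d)$ from Appendix~\ref{AppC}. Your dimension count $(n-1)^2\sum_\alpha (w^\alpha)^2=(n-1)^2(n-2)! =(n-1)(n-1)!$ to conclude that the spanning set of Prop.~\ref{prop40}b) is a basis when $d\ge n$ is a clean way to finish part b) and is exactly in the spirit of the paper's hint.
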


Using multiplication rule for the algebra $A_{n}^{t_{n}}(d))$ (Th.~\ref{th19}), one
can derive the following properties of the elements $\{u_{ij}^{ab}(\alpha
)\}$ of $M\subset A_{n}^{t_{n}}(d)$

\begin{proposition}
\label{prop42}
Suppose that $\varphi ^{\alpha }\in V[S(n-2)]$, i.e. $u_{ij}^{pq}(\alpha
)\neq 0$, then for any $V(\sigma _{ab})^{t_n }\in M$ we have 
\be
\label{42eq1}
V(\sigma _{ab})^{t_n }u_{ij}^{pq}(\alpha )=d^{\delta
_{ap}}\sum_{k=1}^{w^{\alpha }}\varphi _{ki}^{\alpha }[(b \ n-1)\widehat{\sigma }%
_{ab}(ap)(p \ n-1)]u_{kj}^{bq}(\alpha ),
\ee
where $\widehat{\sigma }_{ab}=(bn)\sigma _{ab}=\sigma _{ab}(an)\in S(n-1),$ $%
a,b\neq n$, in particular 
\[
V(an)^{t_n }u_{ij}^{pq}(\alpha )=d^{\delta _{ap}}\sum_{k=1}^{w^{\alpha
}}\varphi _{ki}^{\alpha }[(a \ n-1)(ap)(p \ n-1)]u_{kj}^{aq}(\alpha ), 
\]
and for any $\sigma _{n}\in S(n-1)$ 
\be
\label{42eq2}
V(\sigma _{n})u_{ij}^{pq}(\alpha )=\sum_{k=1}^{w^{\alpha }}\varphi
_{ki}^{\alpha }[(\sigma _{n}(p) \ n-1)\sigma _{n}(p \ n-1)]u_{kj}^{\sigma
_{n}(p)q}(\alpha ),
\ee
the multiplication rule for these vectors is the following%
\be
\label{42eq3}
u_{ij}^{ab}(\alpha )u_{kl}^{pq}(\beta )=\delta _{\alpha \beta }d^{\delta
_{bp}}\varphi _{jk}^{\alpha }[(b \ n-1)(bp)(p \ n-1)]u_{il}^{aq}(\alpha ). 
\ee
In particular one has 
\[
u_{ii}^{aa}(\alpha )u_{ii}^{aa}(\beta )=\delta _{\alpha \beta
}du_{ii}^{aa}(\alpha ), 
\]%
i.e. the vectors $u_{ii}^{aa}(\alpha )$ \ are essential projectors.
\end{proposition}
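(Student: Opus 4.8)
The plan is to reduce every identity to the composition rules of Theorem~\ref{th19} together with the standard algebra of the matrix operators $E_{ij}^{\alpha}$ recalled in Appendix~\ref{AppC}: the matrix-unit relation $E_{ij}^{\alpha}E_{kl}^{\beta}=\delta_{\alpha\beta}\delta_{jk}E_{il}^{\alpha}$ and the intertwining identities $V[g]E_{ij}^{\alpha}=\sum_{m}\varphi_{mi}^{\alpha}(g)E_{mj}^{\alpha}$ and $E_{ij}^{\alpha}V[g]=\sum_{m}\varphi_{jm}^{\alpha}(g)E_{im}^{\alpha}$, valid for $g\in S(n-2)$. Throughout I work with the factored form $u_{ij}^{pq}(\alpha)=V(pn)^{t_{n}}\,V[(p\ n-1)]\,E_{ij}^{\alpha}\,V[(q\ n-1)]$ of the definition (with $V[E_{ij}^{\alpha}]=E_{ij}^{\alpha}$), in which the three rightmost factors form an element of $V[\mathbb{C}[S(n-1)]]$ and are hence unchanged by $t_{n}$. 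I would prove the action formulas \eqref{42eq1} and \eqref{42eq2} first, then the multiplication rule \eqref{42eq3}, and finally read off the projector statement.

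For \eqref{42eq1} I multiply the factored form on the left by $V(\sigma_{ab})^{t_{n}}$. Since $(pn)=\rho_{(p,p)}$, the two-transpose rule of Theorem~\ref{th19} gives $V(\sigma_{ab})^{t_{n}}V(pn)^{t_{n}}=d^{\delta_{ap}}V[(\sigma_{ab}(p)\ n)\sigma_{ab}]^{t_{n}}$, which already produces the factor $d^{\delta_{ap}}$. The linchpin is the conjugation identity $(\sigma_{ab}(p)\ n)\sigma_{ab}=\sigma_{ab}(ap)$, immediate from $n=\sigma_{ab}(a)$, together with $\sigma_{ab}=(bn)\widehat{\sigma}_{ab}$ for $\widehat{\sigma}_{ab}=(bn)\sigma_{ab}\in S(n-1)$; these let me write $V[(\sigma_{ab}(p)\ n)\sigma_{ab}]^{t_{n}}=V(bn)^{t_{n}}V[(b\ n-1)]\,V[(b\ n-1)\widehat{\sigma}_{ab}(ap)]$ by means of the mixed rule $V(bn)^{t_{n}}V(\mu_{n})=V[(bn)\mu_{n}]^{t_{n}}$. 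Multiplying the remaining $S(n-1)$-part $V[(p\ n-1)]E_{ij}^{\alpha}V[(q\ n-1)]$ on the right, the residual permutation becomes $g=(b\ n-1)\widehat{\sigma}_{ab}(ap)(p\ n-1)$; a one-line check shows $g$ fixes both $n-1$ and $n$, so $g\in S(n-2)$ and the intertwining identity converts the block into $\sum_{k}\varphi_{ki}^{\alpha}(g)\,u_{kj}^{bq}(\alpha)$, which is exactly \eqref{42eq1}. The special case $V(an)^{t_{n}}$ is $\sigma_{ab}=(an)$, i.e.\ $b=a$ and $\widehat{\sigma}_{aa}=\id$, and \eqref{42eq2} follows identically but through the mixed rule $V(\sigma_{n})V(\rho_{(p,q)})^{t_{n}}=V(\sigma_{n}\rho_{(p,q)})^{t_{n}}$, which carries no power of $d$.

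The multiplication rule \eqref{42eq3} uses the same three ingredients inside the product $u_{ij}^{ab}(\alpha)u_{kl}^{pq}(\beta)$, whose two transposed generators $V(an)^{t_{n}}$ and $V(pn)^{t_{n}}$ are separated by an untransposed $S(n-1)$-block. The laborious part of the calculation is to bring the product, by repeated use of the mixed rule and one application of the two-transpose rule, to the form $d^{\delta_{bp}}\,V(an)^{t_{n}}V[(a\ n-1)]\,E_{ij}^{\alpha}V[h]E_{kl}^{\beta}\,V[(q\ n-1)]$, where the factor $d^{\delta_{bp}}$ comes from the single collision of the two transposed generators and $h=(b\ n-1)(bp)(p\ n-1)$ is again checked to lie in $S(n-2)$. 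The central block then collapses via $V[h]E_{kl}^{\beta}=\sum_{m}\varphi_{mk}^{\beta}(h)E_{ml}^{\beta}$ followed by $E_{ij}^{\alpha}E_{ml}^{\beta}=\delta_{\alpha\beta}\delta_{jm}E_{il}^{\alpha}$, yielding $\delta_{\alpha\beta}\varphi_{jk}^{\alpha}(h)E_{il}^{\alpha}$ and hence exactly $\delta_{\alpha\beta}d^{\delta_{bp}}\varphi_{jk}^{\alpha}[(b\ n-1)(bp)(p\ n-1)]\,u_{il}^{aq}(\alpha)$. The essential-projector identity is then immediate from \eqref{42eq3}: setting $a=b=p=q$ and $i=j=k=l$ collapses the argument to $(a\ n-1)(aa)(a\ n-1)=\id$, so $\varphi_{ii}^{\alpha}(\id)=1$ and $d^{\delta_{aa}}=d$ give $u_{ii}^{aa}(\alpha)u_{ii}^{aa}(\alpha)=d\,u_{ii}^{aa}(\alpha)$.

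The main obstacle is not conceptual but combinatorial. At each step one must verify that the permutations fed into $\varphi^{\alpha}$ — namely $g=(b\ n-1)\widehat{\sigma}_{ab}(ap)(p\ n-1)$ in \eqref{42eq1} and $h=(b\ n-1)(bp)(p\ n-1)$ in \eqref{42eq3} — genuinely fix both $n-1$ and $n$, so that they belong to $S(n-2)$ and $\varphi^{\alpha}$ is defined on them, and one must track that the output coset labels ($b$ in \eqref{42eq1}, $a$ in \eqref{42eq3}) and the surviving representation indices emerge correctly from the two-transpose rule. Some care is also needed at the boundary values $a,b,p,q=n-1$, where the transpositions $(a\ n-1)$ etc.\ reduce to the identity; these degenerate cases are covered by the simplification conventions already recorded in Def.~\ref{def28}. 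Once this index bookkeeping is organized, each formula is a single invocation of Theorem~\ref{th19} followed by one use of the $E_{ij}^{\alpha}$ algebra.
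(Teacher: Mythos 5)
Your proposal is correct and takes essentially the paper's intended route: the paper gives no written proof of Proposition~\ref{prop42}, stating only that it follows from the composition rules of Theorem~\ref{th19}, and your argument---factoring $u_{ij}^{pq}(\alpha)=V(pn)^{t_n}V[(p\ n-1)]E_{ij}^{\alpha}V[(q\ n-1)]$, applying the mixed and two-transpose rules, and collapsing the result with the intertwining and matrix-unit identities of Appendix~\ref{AppC}---is precisely that omitted ``laborious but purely technical'' calculation. All of your key intermediate claims check out: the conjugation identity $(\sigma_{ab}(p)\ n)\,\sigma_{ab}=\sigma_{ab}(ap)$, the verification that the permutations fed into $\varphi^{\alpha}$ fix both $n-1$ and $n$, and the crucial $\sigma$-independent identification $\mu(p)=a\Leftrightarrow p=b$ (for $\mu=(a\ n-1)\sigma(b\ n-1)$, $\sigma\in S(n-2)$) that makes the collision factor in the product rule uniformly equal to $d^{\delta_{bp}}$.
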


From the transformation law of the natural generators (Eq.1, Eq.2) we see
that the left action of an arbitrary element $V(\sigma )^{t_n}\in A$ on
the elements $u_{ij}^{cq}(\alpha )$ changes only the first upper and first
lower indices in $u_{ij}^{cq}(\alpha ),$ whereas the second indices remain
unchanged. Therefore we may formulate

\begin{corollary}
\label{col43}
For a given partition $\alpha $ such that $\varphi ^{\alpha }\in V[S(n-2)]$
and for arbitrary fixed values of $b\in \{1,...,n-1\}$ and $j\in
\{1,...,w^{\alpha }\}$ a linear space 
\[
U_{j}^{b}(\alpha )\equiv \Span_{%
\mathbb{C}
}\{u_{ij}^{ab}(\alpha ):a=1,...,n-1,\quad i=1,...,w^{\alpha }\}\subset M 
\]%
is a left submodule of the algebra $A_{n}^{t_{n}}(d)$ in $M$ i.e. it is a
representation of the algebra $A_{n}^{t_{n}}(d)$ which appears in the
regular representation of the algebra $A_{n}^{t_{n}}(d)$ with the
multiplicity $(n-1)\dim \varphi ^{\alpha }$ in $M.$ It is clear also that
for $\alpha $ such that $\varphi ^{\alpha }\in V[S(n-2)]$ the subspace 
\[
U(\alpha )=\bigoplus _{a=1,i=1}^{n-1,\dim \varphi ^{\alpha }}U_{i}^{a}(\alpha
)=\Span_{%
\mathbb{C}
}\{u_{ij}^{ab}(\alpha ):a,b=1,...,n-1,\quad i,j=1,...,w^{\alpha }\} 
\]%
is an ideal of \ $M$ $\subset A_{n}^{t_{n}}(d)$, generated by $((n-1)\dim
\varphi ^{\alpha })^{2}$ elements. We have also the following decomposition
of the ideal $M$ into a direct sum of ideals 
\[
M=\bigoplus _{\alpha }U(\alpha), 
\]%
where the sum is over $a:\varphi ^{\alpha }\in V[S(n-2)].$
\end{corollary}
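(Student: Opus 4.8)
The whole statement can be read off from the multiplication formulas of Proposition~\ref{prop42}, so the plan is to use \eqref{42eq1}, \eqref{42eq2} and \eqref{42eq3} as the only computational input, together with the spanning property of Proposition~\ref{prop40}b).

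First I would prove that $U_j^b(\alpha)$ is a left submodule. The key observation is that in both transformation laws \eqref{42eq1} and \eqref{42eq2} the left action of a natural generator on $u_{ij}^{pq}(\alpha)$ produces a linear combination of elements in which the \emph{second} upper index $q$ and the \emph{second} lower index $j$ are untouched, only the first (row) indices being transformed. Hence, freezing the pair of second indices $(b,j)$ and letting $a$ and $i$ vary, the subspace $U_j^b(\alpha)$ is stable under left multiplication by every generator $V(\sigma)^{t_n}$, and therefore, by linearity, under all of $A_n^{t_n}(d)$; this is precisely the submodule claim. Moreover the matrix of each generator acting on $U_j^b(\alpha)$, expressed in the basis $\{u_{ij}^{ab}(\alpha)\}_{a,i}$, is the one given by \eqref{42eq1} and \eqref{42eq2} and does not depend on the frozen pair $(b,j)$. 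Thus all $(n-1)w^{\alpha}$ modules $U_j^b(\alpha)$ carry one and the same representation of $A_n^{t_n}(d)$, and since there are exactly $(n-1)\dim\varphi^{\alpha}$ admissible pairs $(b,j)$, this is the asserted multiplicity inside $M$.

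Next I would show that $U(\alpha)=\sum_{a,i}U_i^a(\alpha)$ is a two-sided ideal of $M$. Being a sum of left submodules, it is automatically a left ideal. For the right-hand side I would combine the product rule \eqref{42eq3} with the spanning statement of Proposition~\ref{prop40}b): since $M=\Span_{\mathbb{C}}\{u_{kl}^{pq}(\beta)\}$, and \eqref{42eq3} shows that $u_{ij}^{ab}(\alpha)\,u_{kl}^{pq}(\beta)$ vanishes for $\beta\neq\alpha$ and equals a scalar multiple of $u_{il}^{aq}(\alpha)\in U(\alpha)$ for $\beta=\alpha$, we get $U(\alpha)\,M\subseteq U(\alpha)$. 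Hence $U(\alpha)$ is an ideal of $M$ generated by the $((n-1)\dim\varphi^{\alpha})^{2}$ listed elements, and the same formula \eqref{42eq3} simultaneously yields the orthogonality $U(\alpha)U(\beta)=0$ for $\alpha\neq\beta$.

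Finally, for $M=\bigoplus_{\alpha}U(\alpha)$ the spanning is immediate from Proposition~\ref{prop40}b), which already gives $M=\sum_{\alpha}U(\alpha)$ over those $\alpha$ with $\varphi^{\alpha}\in V_d[S(n-2)]$, so the only real work is directness. Here I would use that $M$, as an ideal of the semisimple algebra $A_n^{t_n}(d)$ (Corollary~\ref{col24}, Proposition~\ref{prop26}), is itself semisimple and hence unital, so that each ideal $U(\alpha)$ possesses a unit $e_{\alpha}$. Applying $e_{\beta}$ on the left to a hypothetical relation $\sum_{\alpha}x_{\alpha}=0$ with $x_{\alpha}\in U(\alpha)$ annihilates every term with $\alpha\neq\beta$ (by $U(\beta)U(\alpha)=0$) and returns $x_{\beta}$, forcing $x_{\beta}=0$ and giving the direct sum. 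I expect the main obstacle to be precisely this production of the units $e_{\alpha}$ if one insists on an explicit construction rather than invoking semisimplicity: the natural building blocks are the essential projectors $u_{ii}^{aa}(\alpha)$, which satisfy $u_{ii}^{aa}(\alpha)^{2}=d\,u_{ii}^{aa}(\alpha)$, but when $d<n$ the generators $u_{ij}^{ab}(\alpha)$ are linearly dependent (Corollary~\ref{col17}, Proposition~\ref{prop41}), so a naive sum of rescaled projectors is not idempotent and the unit must be assembled using the spectral data of the matrix $Q(\alpha)$ of Definition~\ref{def28}.
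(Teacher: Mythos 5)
Your proposal is correct and follows essentially the same route as the paper: the submodule and ideal claims are read off from the multiplication rules of Proposition~\ref{prop42} together with the spanning statement of Proposition~\ref{prop40}b), which is exactly the argument the paper gives in the paragraph immediately preceding the corollary. Your explicit argument for the directness of $M=\bigoplus_{\alpha}U(\alpha)$ via the units $e_{\alpha}$ is the one the paper itself defers to Section~\ref{mat_ideals} (Proposition~\ref{prop52}), where those units are constructed in the reduced matrix basis---confirming your closing remark that assembling them in the linearly dependent case requires the spectral data of $Q(\alpha)$ from Definition~\ref{def28}.
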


In fact the representation $U_{i}^{a}(\alpha )$ of the algebra $%
A_{n}^{t_{n}}(d)$ is generated by the irreducible representation $\varphi
^{\alpha }$ of the group $S(n-2)$ . We see also that each irreducible
representation $\varphi ^{\alpha },$ $\alpha =1,....,k$ of the group $%
S(n-2), $ such that $\varphi ^{\alpha }\in V[S(n-2)],$ generates $(n-1)\dim
\varphi ^{\alpha }$ isomorphic representations of the algebra $%
A_{n}^{t_{n}}(d)$, indexed by $a\in \{1,...,n-1\}$ and $i\in
\{1,...,w^{\alpha }\}.$ This isomorphism follows from the fact that the
transformation laws (Eq.~\eqref{42eq1}, Eq.~\eqref{42eq2}) in Prop.~\ref{prop42} does not depend on the indices $q$
and $j.$ We will show that these representations are in fact, all
irreducible representations of the algebra $A_{n}^{t_{n}}(d).$

Obviously the representation $U_{j}^{b}(\alpha )$ of the algebra $%
A_{n}^{t_{n}}(d)$ is a also representation of its subalgbera $%
\mathbb{C}
\lbrack S(n-1)]$ and consequently of the group $S(n-1).$ From the
transformation rule (Eq.~\eqref{42eq2}), for the elements $V(\sigma _{n})\in $ $V[S(n-1)]$
one deduce

\begin{theorem}
\label{th44}
Suppose that the vectors $\{u_{ij}^{ab}(\alpha )\}$ are linearly independent
(so they form a basis of $U(\alpha )$), then for any $b=1,...,n-1$ and $%
j=1,..,\dim \varphi ^{\alpha }$ the representation $U_{j}^{b}(\alpha )$ \ of
the algebra $A_{n}^{t_{n}}(d)$ from Cor.~\ref{col43}, as a reducible representation of
the group $S(n-1)\subset A_{n}^{t_{n}}(d),$ is in fact the representation of $%
S(n-1)$ induced by the irrep $\varphi ^{\alpha }$ of the subgroup $%
S(n-2)\subset S(n-1)$, i.e. $U_{j}^{b}(\alpha )$ is the representation space of $%
\ind_{S(n-2)}^{S(n-1)}(\varphi ^{\alpha }).$ for any $b=1,...,n-1$, $j=1,..,\dim \varphi ^{\alpha }$.
\end{theorem}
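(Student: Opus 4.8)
The plan is to exhibit an explicit identification of the basis of $U_{j}^{b}(\alpha)$ with the standard matrix basis of $\ind_{S(n-2)}^{S(n-1)}(\varphi^{\alpha})$ from Def.~\ref{def8}, and then verify that the $S(n-1)$-action agrees on both sides. First I would fix a transversal of the left cosets $S(n-1)/S(n-2)$. Since $S(n-2)\subset S(n-1)$ is the stabilizer of the point $n-1$, two elements lie in the same left coset iff they agree on $n-1$, so the cosets are indexed by $a=g(n-1)\in\{1,\dots,n-1\}$, and I take the representatives $g_{a}=(a\ n-1)$ (with $g_{n-1}=\id$). This matches precisely the range of the upper index $a$ in $u_{ij}^{ab}(\alpha)$.

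Next I would set up the identification itself. For fixed $b$ and $j$ (which, by the remark following Prop.~\ref{prop42} and by Cor.~\ref{col43}, play no role in the left action), I map the basis vector $u_{ij}^{ab}(\alpha)$ to the standard induced-representation vector $e_{i}^{a}(\alpha)$, where $a$ labels the coset and $i=1,\dots,\dim\varphi^{\alpha}$ labels the internal $\varphi^{\alpha}$-index. The hypothesis that the $\{u_{ij}^{ab}(\alpha)\}$ are linearly independent (which by Prop.~\ref{prop41}b holds for $d\geq n$) guarantees that for fixed $b,j$ the $(n-1)\dim\varphi^{\alpha}$ vectors are independent, so that $\dim U_{j}^{b}(\alpha)=(n-1)\dim\varphi^{\alpha}=\dim\ind_{S(n-2)}^{S(n-1)}(\varphi^{\alpha})$ and the map is a genuine linear isomorphism rather than a surjection onto a quotient.

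The heart of the argument is then to compare the two actions of $V(\sigma_{n})$, $\sigma_{n}\in S(n-1)$. On the induced side, Def.~\ref{def8} gives $\pi(\sigma_{n})e_{i}^{p}=\sum_{a,k}\widehat{\varphi}_{ki}^{\alpha}(g_{a}^{-1}\sigma_{n}g_{p})\,e_{k}^{a}$ with $g_{a}^{-1}\sigma_{n}g_{p}=(a\ n-1)\sigma_{n}(p\ n-1)$. I would observe that this element lies in $S(n-2)$, i.e.\ fixes $n-1$, precisely when $(a\ n-1)$ sends $\sigma_{n}(p)$ to $n-1$, that is, when $a=\sigma_{n}(p)$; for every other $a$ the factor $\widehat{\varphi}$ vanishes. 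Hence only the single term $a=\sigma_{n}(p)$ survives, and on it the argument becomes $(\sigma_{n}(p)\ n-1)\sigma_{n}(p\ n-1)\in S(n-2)$, so $\widehat{\varphi}=\varphi$. This yields exactly $\pi(\sigma_{n})e_{i}^{p}=\sum_{k}\varphi_{ki}^{\alpha}[(\sigma_{n}(p)\ n-1)\sigma_{n}(p\ n-1)]\,e_{k}^{\sigma_{n}(p)}$, which under the identification is verbatim the transformation law Eq.~\eqref{42eq2}. Since this holds for all $\sigma_{n}$ and for each fixed $b,j$, the module $U_{j}^{b}(\alpha)$ is the representation space of $\ind_{S(n-2)}^{S(n-1)}(\varphi^{\alpha})$.

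The main obstacle I anticipate is purely the index bookkeeping in the third step: making sure the chosen transversal $g_{a}=(a\ n-1)$ produces exactly the conjugated permutation $(\sigma_{n}(p)\ n-1)\sigma_{n}(p\ n-1)$ appearing in Eq.~\eqref{42eq2}, and checking the degenerate cases (when $p$, $\sigma_{n}(p)$, or $a$ equals $n-1$, where some transpositions collapse to the identity, exactly as noted in Def.~\ref{def28}). Once the single surviving coset index and the precise form of the $\varphi^{\alpha}$-argument are confirmed, the isomorphism with the induced representation follows at once.
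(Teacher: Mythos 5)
Your proposal is correct and follows essentially the same route as the paper's proof: choose the transpositions $(a\ n-1)$ as representatives of the cosets $S(n-1)/S(n-2)$, compute the matrix of $\ind_{S(n-2)}^{S(n-1)}(\varphi^{\alpha})$ from Def.~\ref{def8}, and observe that it coincides with the transformation law Eq.~\eqref{42eq2} of Prop.~\ref{prop42} under the identification $u_{ij}^{ab}(\alpha)\mapsto e_{i}^{a}(\alpha)$ for fixed $b,j$. You merely spell out details the paper leaves implicit (the survival of the single coset index $a=\sigma_{n}(p)$, the dimension count via linear independence), which is sound.
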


\begin{proof}
In our case $G=S(n-1)\supset S(n-2)=H$ and we chose the transpositions $%
(an-1),$ $a=1,...,n-1$ as the natural representatives of the left cosets $%
S(n-1)/S(n-2).$ Then taking the representation $\varphi ^{\alpha }$ of the
subgroup $S(n-2)$, using the Definition~\ref{def8}, we get 
\[
\forall \sigma \in S(n-1)\quad \pi _{ai,bj}(\sigma )=\delta _{a\sigma
(b)}\varphi _{ij}^{\alpha }[(\sigma (b) \ n-1)\sigma (b \ n-1)] 
\]%
which is exactly the block matrix representing the elements $\sigma =\sigma
_{n}\in S(n-1)\subset A_{n}^{t_{n}}(d)$ in the irreducible representation $%
U_{s}^{x}(\alpha )$ \ of the algebra $A_{n}^{t_{n}}(d)$ given in the
Proposition~\ref{prop42}$.$
\end{proof}

The multiplication rule for the elements $\{u_{ij}^{ab}(\alpha )\}$ is
similar to the matrix multiplication and in fact, it is the matrix
multiplication when the representation $\varphi ^{\alpha }$ is the
identity representation of the group $S(n-2).$ When the representation $%
\varphi ^{\alpha }$ is not the trivial representation then the properties of
the multiplication rule for the elements $\{u_{ij}^{ab}(\alpha )\}$ depends
on the properties of the matrix $Q(\alpha )$ (defined in Def.~\ref{def28}), which
appears in the multiplication law of the elements $\{u_{ij}^{ab}(\alpha )\}$
(Prop.~\ref{prop42} Eq.~\eqref{42eq3}).

\bigskip The properties and the structure of the matrices $Q(\alpha )$ are
described in the \ Appendix~\ref{AppA}. The multiplication rule for the elements $%
\{u_{ij}^{ab}(\alpha )\}$ in the ideal $U(\alpha )$ (eq.~\eqref{42eq3}), may be written
now 
\[
u_{ij}^{ab}(\alpha )u_{kl}^{pq}(\beta )=\delta _{\alpha \beta
}Q_{ik}^{bp}(\alpha )u_{il}^{aq}(\alpha ). 
\]

\bigskip

In the Appendix B we derived a basic properties of the algebras with
multiplication of this type. The algebra 
\[
U(\alpha )=\Span_{%
\mathbb{C}
}\{u_{ij}^{ab}(\alpha ):a,b=1,...,n-1,\quad i,j=1,...,w^{\alpha }\} 
\]%
is semisimple (because it is $C^{\ast }-$algebra, Th.4) and its properties
depends strongly on the properties of the matrix $Q(\alpha )$. In particular
the dimension of the algebra $U(\alpha )$ depend on the rank of the matrix $%
Q(\alpha ).$

From the results derived in the Appendices A and B it follows that if $\det
Q(\alpha )\neq 0$, then all vectors $\{u_{ij}^{ab}(\alpha )\}$ are linearly
independent (so they form a basis of $U(\alpha )$) and the algebra $U(\alpha
)$ is isomorphic with the matrix algebra $M((n-1)\dim \varphi ^{\alpha },%
\mathbb{C}
)$ (see Th.~\ref{th74} and Th.~\ref{th77} in Appendix~\ref{AppB})$.$ The case when $\det Q(\alpha )=0$
is more complicated, because then the vectors $\{u_{ij}^{ab}(\alpha )\}$ are
linearly dependent and we have to construct a basis of the algebra $%
U(\alpha ).$ The universal construction of a reduced basis, which applies to
the both cases is described in the Appendix~\ref{AppB} (Th.~\ref{th77}). The basis constructed
in this theorem we will call the reduced basis of the ideal $U(\alpha )$
because firstly it is constructed using the matrix $Z(\alpha ),$ that reduces
the induced representation $\ind_{S(n-2)}^{S(n-1)}(\varphi ^{\alpha })$ to
the direct sum of irreducible representations and secondly, in case when $%
\det Q(\alpha )=0$ in its construction we reduce the set of linearly
dependent generators to linearly independent one.

Now, using Th.~\ref{th77} from the Appendix~\ref{AppB} we will construct the reduced basis in
the algebra $U(\alpha )$ when $\varphi ^{\alpha }\in V_{d}[S(n-2)].$ The
construction of the basis is based on the diagonalization of the matrix $%
Q(\alpha )$ appearing in the multiplication law of the algebra $U(\alpha )$
(Eq.~\eqref{42eq3}), which is realized by the the following similarity transformation (see
Th.~\ref{th58}, Eq.~\eqref{A2} Appendix~\ref{AppA}) 
\[
\sum_{ak}\sum_{bl}z^{\dagger}(\alpha )_{j_{\rho }k}^{\rho a}Q(\alpha
)_{kl}^{ab}z(\alpha )_{lj_{\mu }}^{b\mu }=\delta ^{\rho \mu }\delta
_{j_{\rho }j_{\mu }}\lambda _{\mu }, 
\]%
where the unitary matrix $Z(\alpha )=($ $z(\alpha )_{lj_{\mu }}^{b\mu })$
reduces the induced representation $\ind_{S(n-2)}^{S(n-1)}(\varphi ^{\alpha
}) $ to the sum $\oplus _{\lambda }\psi ^{\lambda }$ of irreducible
representations of $S(n-1)$ (Prop.~\ref{prop27}).

Now using the matrix $Z(\alpha )$, according to the Th.~\ref{th77} in the Appendix~\ref{AppB}
we define new elements in the ideal $U(\alpha )$ in the following way 
\be
\label{eq4}
y_{j_{\nu }j_{\mu }}^{\nu \mu }(\alpha )=\sum_{ai,bk}(z^{-1})_{j_{\mu
}i}^{\mu a}u_{ki}^{ba}(\alpha )z_{kj_{\nu }}^{b\nu },
\ee
where the indices $\mu ,\nu $ labels the irreducible representations $\psi
^{\mu },\psi ^{\upsilon }$ of the group $S(n-1),$ which appears in the
reduction of the induced representation $\Phi ^{\alpha
}=\ind_{S(n-2)}^{S(n-1)}\varphi ^{\alpha }$ and $j_{\nu \text{ }}$are matrix
indices of $\psi ^{\upsilon }$ (see Appendix~\ref{AppA}). \ From the Th.~\ref{th77} in the
Appendix~\ref{AppB} it follows, that if $\det Q(\alpha )=0,$ which means that for one
(and only one , up to the multiplicity, Cor.~\ref{col67} App.~\ref{AppA}) irreducible
representation $\psi ^{\theta }$ of $S(n-1)$ appearing in $%
\ind_{S(n-2)}^{S(n-1)}(\varphi ^{\alpha })=\bigoplus _{\lambda }\psi ^{\lambda }$
the corresponding eigenvalue $\lambda _{\theta }$ of the matrix $Q(\alpha )$
is zero, then we have 
\[
y_{j_{\theta }j_{\mu }}^{\theta \mu }(\alpha )=y_{j_{\mu }j_{\theta }}^{\mu
\theta }(\alpha )=0:\forall \mu ,j\mu 
\]%
and the all remaining non-zero vectors $\{y_{j_{\upsilon }j_{\mu }}^{\nu \mu
}(\alpha )\}$ form the reduced basis of the ideal $U(\alpha )$ such that 
\[
y_{j_{\upsilon }j_{\mu }}^{\nu \mu }(\alpha )y_{j_{\sigma }j_{\rho
}}^{\sigma \rho }(\beta )=\delta ^{\alpha \beta }\delta ^{\mu \sigma
}\lambda _{\mu }y_{j_{\nu }j_{\rho }}^{\nu \rho }(\alpha ),\quad \mu ,\nu
,\rho ,\sigma \neq \theta, 
\]%
where $\mu ,\nu ,\rho ,\sigma $ label the irreducible representations of $%
S(n-1)$ that appears in the reduction of the induced representation $\Phi
^{\alpha }=\ind_{S(n-2)}^{S(n-1)}\varphi ^{\alpha }$, such that the
corresponding eigenvalues $\lambda _{\mu },$ $\lambda _{\nu },$ $\lambda
_{\rho },$ $\lambda _{\sigma }$ of the matrix $Q(\alpha )$ are non-zero. In
particular from this multiplication rule we get%
\[
(y_{j_{\nu }j_{\nu }}^{\nu \nu }(\alpha ))^{2}=\lambda _{\upsilon }y_{j_{\nu
}j_{\nu }}^{\nu \nu }(\alpha ) 
\]%
e.i $y_{j_{\upsilon }j_{\upsilon }}^{\nu \upsilon }(\alpha )$ is an
essential projector. Again from the last theorem in the Appendix~\ref{AppB} it
follows that the ideal $U(\alpha )$ is isomorphic with the matrix algebra $%
M(\rank Q(\alpha ),%
\mathbb{C}
)$ and the dimension of the ideal $U(\alpha )$ is 
\[
\dim U(\alpha )=(\rank Q(\alpha ))^{2}=((n-1)\dim \varphi ^{\alpha }-\dim \psi
^{\theta })^{2}. 
\]%
Note that in this case the vectors $\{u_{ij}^{ab}(\alpha )\}$, as linearly
dependent do not form a basis of $U(\alpha )$.

If all eigenvalues of the the matrix $Q(\alpha )$ are non zero (e.i when $%
\det Q(\alpha )\neq 0$), then from the Th.~\ref{th77} in the Appendix~\ref{AppB} we get that,
all the reduced vectors $\{$ $y_{j_{\nu }j_{\mu }}^{\nu \mu }(\alpha )\}$
defined in the (Eq.~\eqref{eq4}) are non-zero and they form a new basis of the algebra $%
U(\alpha )$, because in this case the vectors $\{u_{ij}^{ab}(\alpha )\}$
also form a basis of the algebra $\ U(\alpha )$ and the ideal $U(\alpha )$ is
isomorphic to the matrix algebra $M((n-1)\dim \varphi ^{\alpha },%
\mathbb{C}
)$. In both cases, when $\det Q(\alpha )\neq 0$ and $\det Q(\alpha )=0)$, we
may rescall the vectors of the the basis $\{y_{j_{\nu }j_{\mu }}^{\nu \mu
}(\alpha ):\mu ,\nu \neq \theta \},$ according the formula given in Th.~\ref{th77} in
the Appendix~\ref{AppB}, in the following way 
\[
y_{j_{\upsilon }j_{\mu }}^{\nu \mu }(\alpha )\rightarrow f_{j_{\upsilon
}j_{\mu }}^{\nu \mu }(\alpha )=\frac{1}{\sqrt{\lambda _{\mu }\lambda _{\nu }}%
}y_{j_{\upsilon }j_{\mu }}^{\nu \mu }(\alpha ) 
\]%
and a new basis $\{f_{j_{\upsilon }j_{\mu }}^{\nu \mu }(\alpha )\}$
satisfies the matrix multiplication rule%
\[
f_{j_{\upsilon }j_{\mu }}^{\nu \mu }(\alpha )f_{j_{\sigma }j_{\rho
}}^{\sigma \rho }(\beta )=\delta ^{\alpha \beta }\delta ^{\mu \sigma
}f_{j_{\nu }j_{\rho }}^{\nu \rho }(\alpha ),\quad \mu ,\nu ,\rho ,\sigma
\neq \theta 
\]%
and therefore we will call this basis reduced matrix basis.

From this matrix multiplication rule it follows that , for a fixed values of 
$\mu \neq \theta ,$ $j_{\mu }$, the basis vectors $\{f_{j_{\upsilon }j_{\mu
}}^{\nu \mu }(\alpha ):\nu \neq \theta ,j_{\nu }=1,..,\dim \psi ^{\nu }\}$
span the irreducible left modules of the algebra $A_{n}^{t_{n}}(d)$ (Th.~\ref{th5}
and Prop.~\ref{prop7}). Now we have to derive the transformation rule for the natural
generators $V(\sigma )^{t_n}\in A_{n}^{t_{n}}(d)$, $\sigma \in S(n)$ in
these irreducible representations. A laborious but purely technical
calculation shows that, they are the following

\begin{proposition}
\label{prop45}
Let $\{f_{j\nu j_{\mu }}^{\nu \mu }:\mu ,\nu ,\neq \theta \}$ be the reduced
matrix basis of the ideal $U(\alpha ),$ then for a fixed values of $\mu ,$ $%
j_{\mu }$ the vectors $\{$ $f_{j\nu j_{\mu }}^{\nu \mu }\}$ form a basis of
an irreducible left module $U_{j_{\mu }}^{\mu }(\alpha )=span_{%
\mathbb{C}
}\{f_{j\nu j_{\mu }}^{\nu \mu }:\nu \neq \theta ,j_{\nu }=1,..,\dim \psi
^{\nu }\}$ and we have 
\[
V(an)^{t_{n}}{}f_{j\nu j_{\mu }}^{\nu \mu }(\alpha )=\sum_{\rho \neq \theta
,j_{\rho }}(\sum_{k}\sqrt{\lambda _{\rho }}z^{+}(\alpha )_{j_{\rho }k}^{\rho
a}z(\alpha )_{kj_{\nu }}^{a\nu }\sqrt{\lambda _{\nu }}f_{j_{\rho }j_{\mu
}}^{\rho \mu }(\alpha ),
\]%
where the unitary matrix $Z(\alpha )=(z(\alpha )_{kj_{\nu }}^{a\nu })$ has
the form 
\[
 z(\alpha )_{kj_{\nu }}^{a\nu }=\frac{\dim \psi ^{\nu }}{\sqrt{%
N_{j_{\nu }}^{\nu }}(n-1)!}\sum_{\sigma \in S(n-1)}\psi _{j_{\nu }j_{\nu
}}^{\nu }(\sigma ^{-1})\delta _{a\sigma (q)}\varphi _{kr}^{\alpha
}[(a \ n-1)\sigma (q \ n-1)],
\]%
with 
\[
 N_{j_{\nu }}^{\nu }=\frac{\dim \psi ^{\nu }}{(n-1)!}%
\sum_{\sigma \in S(n-1)}\psi _{j_{\nu }j_{\nu }}^{\nu }(\sigma ^{-1})\delta
_{q\sigma (q)}\varphi _{rr}^{\alpha }[(q \ n-1)\sigma (q \ n-1)],
\]%
where the indices $q=1,..,n-1,$ $r=1,..,\dim \varphi
^{\alpha }$ are fixed and such that $N_{j_{\nu }}^{\nu }>0$
\ (see Th.59, 85 and Cor. 86) and for $\sigma _{n}\in S(n-1)$ we have 
\[
V(\sigma _{n})f_{j_{\nu }j_{\mu }}^{\nu \mu }(\alpha )=\sum_{j_{\nu ^{\prime
}}}\psi _{i_{\nu ^{\prime }}j_{\nu }}^{\nu }(\sigma _{n})f_{i_{\nu ^{\prime
}}j_{\mu }}^{\nu \mu }(\alpha ),
\]%
so for the elements $V(\sigma _{n})\in V(S(n-1)$ the irreducible left modul $%
U_{j_{\mu }}^{\mu }(\alpha )$ is a direct sum of the irreducible
representations $\psi ^{\nu }$ of the group $S(n-1),$ appearing in the
induced representation $\ind_{S(n-2)}^{S(n-1)}(\varphi ^{\alpha })$, such
that $\mu \neq \theta .$
\end{proposition}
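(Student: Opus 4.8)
My plan is to transport the left-action rules of Proposition~\ref{prop42}, which are written in the generating basis $\{u_{ij}^{ab}(\alpha)\}$, through the unitary change of basis $Z(\alpha)$ of Eq.~\eqref{eq4} that diagonalizes $Q(\alpha)$, and then to absorb the eigenvalues by the rescaling $f=(\lambda_\nu\lambda_\mu)^{-1/2}y$. The module structure comes first and is essentially free: by the displayed formulas of Proposition~\ref{prop42} (see the remark preceding Corollary~\ref{col43}) every generator acts only on the first upper and lower indices of $u_{ij}^{ab}(\alpha)$, leaving the second pair $(b,j)$ fixed. Since in Eq.~\eqref{eq4} the right label $(\mu,j_\mu)$ of $y$ (hence of $f$) is assembled solely from these second indices, the subspace $U_{j_\mu}^{\mu}(\alpha)=\Span_{\mathbb{C}}\{f_{j_\nu j_\mu}^{\nu\mu}(\alpha):\nu\neq\theta\}$ is stable under all of $A_n^{t_n}(d)$. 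Irreducibility is then read off from the matrix-algebra structure $U(\alpha)\simeq M(\rank Q(\alpha),\mathbb{C})$ proved above: in the reduced matrix basis the product is $f^{\nu\mu}f^{\sigma\rho}=\delta^{\mu\sigma}f^{\nu\rho}$, so the columns indexed by the fixed pair $(\mu,j_\mu)$ are exactly the minimal left ideals generated by the diagonal idempotents, as in Example~\ref{ex6} and Proposition~\ref{prop7}.

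The heart of the proof is the action of the transposed generators; write the generator index as $c$ (it appears as $a$ in the statement). Proposition~\ref{prop42} together with Definition~\ref{def28} gives the compact form
\be
V(cn)^{t_n}u_{ki}^{ba}(\alpha)=\sum_{m}Q_{mk}^{cb}(\alpha)\,u_{mi}^{ca}(\alpha),
\ee
so that $V(cn)^{t_n}$ acts on the first composite index $(b,k)$ as the $c$-th block-row of $Q(\alpha)$ and trivially on the second index $(a,i)$. Substituting this into Eq.~\eqref{eq4}, the factor attached to the second index contracts, by unitarity of $Z(\alpha)$, to $\delta^{\mu\sigma}\delta_{j_\mu j_\sigma}$ (which again confirms preservation of the fixed right label), while the factor attached to the first index meets the eigenvector relation
\be
\sum_{bk}Q_{mk}^{cb}(\alpha)\,z(\alpha)_{kj_\nu}^{b\nu}=\lambda_\nu\,z(\alpha)_{mj_\nu}^{c\nu},
\ee
which is the diagonalization $Z^{\dagger}Q Z=\Lambda$ of Th.~\ref{th58}, Eq.~\eqref{A2} read column-wise. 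These two contractions yield
\be
V(cn)^{t_n}y_{j_\nu j_\mu}^{\nu\mu}(\alpha)=\lambda_\nu\sum_{\rho\neq\theta,\,j_\rho}\Big(\sum_{m}z^{\dagger}(\alpha)_{j_\rho m}^{\rho c}z(\alpha)_{mj_\nu}^{c\nu}\Big)y_{j_\rho j_\mu}^{\rho\mu}(\alpha),
\ee
and after the rescaling $y=\sqrt{\lambda_\nu\lambda_\mu}\,f$ the factor attached to the fixed index $\mu$ cancels, $\lambda_\nu/\sqrt{\lambda_\nu}=\sqrt{\lambda_\nu}$ survives, and a factor $\sqrt{\lambda_\rho}$ is produced, reproducing the claimed symmetric coefficient $\sqrt{\lambda_\rho}\,z^{+}(\alpha)_{j_\rho k}^{\rho c}z(\alpha)_{kj_\nu}^{c\nu}\sqrt{\lambda_\nu}$.

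For the untransposed generators $V(\sigma_n)$, $\sigma_n\in S(n-1)$, I would lean on Theorem~\ref{th44}: in the $u$-basis the module carries the induced representation $\ind_{S(n-2)}^{S(n-1)}(\varphi^{\alpha})$, and by its very construction $Z(\alpha)$ reduces this representation to $\bigoplus_\nu\psi^{\nu}$. Hence in the $y$-basis $V(\sigma_n)$ is block diagonal in the irrep label $\nu$ and acts within block $\nu$ by $\psi^{\nu}(\sigma_n)$. Because this action preserves $\nu$ (and the fixed $\mu$), the rescaling factors $\sqrt{\lambda_\nu\lambda_\mu}$ agree on the two sides and cancel, leaving $V(\sigma_n)f_{j_\nu j_\mu}^{\nu\mu}=\sum_{j_{\nu'}}\psi_{j_{\nu'}j_\nu}^{\nu}(\sigma_n)f_{j_{\nu'}j_\mu}^{\nu\mu}$. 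The explicit entries $z(\alpha)_{kj_\nu}^{a\nu}$ and the normalizations $N_{j_\nu}^{\nu}$ are not re-derived inside this proof; they are precisely the normalized eigenvectors of $Q(\alpha)$ built in Appendix~\ref{AppA} (Th.~59,~85 and Cor.~86), so one simply inserts them.

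The main obstacle is the index bookkeeping of the two $Z$-contractions and, in particular, recognizing that they play different roles: the contraction on the preserved second index closes up by full unitarity to a Kronecker delta, whereas the contraction on the active first index is only a \emph{partial} sum over the $\varphi^{\alpha}$-index $m$ at the fixed coset block $c$, so it does not collapse to a delta but instead produces the genuine representation matrix element $\sum_{m}z^{\dagger}(\alpha)_{j_\rho m}^{\rho c}z(\alpha)_{mj_\nu}^{c\nu}$. One must also check that the degenerate block $\psi^{\theta}$ with $\lambda_\theta=0$ drops out consistently — the vectors carrying a $\theta$ label vanish identically (Th.~\ref{th77}) — so that everything is well defined on the $\rank Q(\alpha)$-dimensional space $\{f_{j_\nu j_\mu}^{\nu\mu}:\nu,\mu\neq\theta\}$ and the square roots $\sqrt{\lambda_\nu}$ are never taken of zero.
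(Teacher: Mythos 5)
Your proposal is correct and follows essentially the paper's own route: irreducibility of the columns $U_{j_{\mu }}^{\mu }(\alpha )$ is read off the matrix-unit multiplication rule exactly as the paper does (via Th.~\ref{th5}, Prop.~\ref{prop7}), and the generator actions are obtained by transporting the rules of Prop.~\ref{prop42} through the change of basis of Eq.~\eqref{eq4}, using the eigenvector relation $QZ=Z\Lambda$ from Th.~\ref{th58} together with unitarity of $Z(\alpha )$, followed by the rescaling $f=(\lambda _{\nu }\lambda _{\mu })^{-1/2}y$. The paper dismisses this step as ``a laborious but purely technical calculation''; your explicit bookkeeping — the full contraction on the preserved right index giving $\delta ^{\mu \sigma }\delta _{j_{\mu }j_{\sigma }}$, the partial contraction at the fixed block $a$ giving $\sum_{k}z^{\dagger }(\alpha )_{j_{\rho }k}^{\rho a}z(\alpha )_{kj_{\nu }}^{a\nu }$, and the vanishing of the $\theta$-labelled vectors — supplies precisely that calculation correctly.
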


\begin{remark}
\label{rem46}
If the matrix $Q(\alpha )$ is invertible then the above transformation rule
for the elements $V(\sigma _{n})^{t_{n}}=V(\sigma _{n})$ is the
transformation law for the induced representation $\ind_{S(n-2)}^{S(n-1)}(%
\varphi ^{\alpha })$ in the form reduced to the direct sum of irreducible
components ( Th.~\ref{th58} Eq.~\eqref{A1})
\end{remark}

\bigskip From transformation rules for the natural generators of the algebra 
$A_{n}^{t_{n}}(d)$ in the reduced matrix bases one can easily deduce the the
matrix forms of the irreducible representations of the algebra $%
A_{n}^{t_{n}}(d)$, which are given in Th.~\ref{th30} and Prop.~\ref{prop32} in Sec.~\ref{irreps1}

Unfortunately the transformation rules of the natural generators $V(\sigma
)^{t_{n}}\in A_{n}^{t_{n}}(d)$ in the reduced basis of the ideal $U(\alpha
), $ given in the Prop.~\ref{prop45} , are complicated because the matrix $Z(\alpha )$,
which appears in the transformation rules has complicated structure.
However, in case when $\det Q(\alpha )\neq 0,$ then transformation rules for
the irreducible left modules included in the ideal $U(\alpha )$ can be
written equivalently in simpler form given in the Prop.~\ref{prop42}, because in this
case, the vectors $\{u_{ij}^{ab}(\alpha ):a,b=1,...,n-1,\quad
i,j=1,...,w^{\alpha }\}$ also form a basis of the ideal $U(\alpha ).$The
transformation rules for the natural generators of the algebra $%
A_{n}^{t_{n}}(d)$ are much simpler in this basis but this is not a matrix
basis. Using the fact that $\det Q(\alpha )\neq 0,$ one can construct a new
matrix basis without using the matrix $Z(\alpha ),$ in which the
transformation rules remains as simple as in Prop.~\ref{prop42} The new basis is
defined as follows

\begin{definition}
\label{def47}
If the matrix $Q(\alpha )$ \ of the ideal $U(\alpha )$ is invertible then we
define the new elements of $U(\alpha )$ in the following way 
\[
e_{ij}^{ab}(\alpha )\equiv \sum_{qs}(Q^{-1})_{jq}^{bs}(\alpha
)u_{qi}^{sa}(\alpha ):a,b,s=1,...,n-1,\quad i,j,k=1,...,\dim \varphi
^{\alpha }, 
\]%
where $\alpha :\varphi ^{\alpha }\in V_{d}[S(n-2)].$
\end{definition}

\bigskip From the invertibility of the matrix $Q(\alpha )$ if follows that
the elements $\{e_{ij}^{ab}(\alpha )\}$ form a new basis of the ideal $M.$
For this basis of $U(\alpha )$ we have

\begin{proposition}
\label{prop48}
\[
e_{ij}^{ab}(\alpha )e_{kl}^{cd}(\beta )=\delta _{\alpha \beta }\delta
^{bc}\delta _{jk}e_{il}^{ad}(\alpha ), 
\]%
\[
V(\sigma _{ab})^{t_n }e_{ij}^{cq}(\alpha )=d^{\delta
_{ac}}\sum_{k=1}^{w^{\alpha }}\varphi _{ki}^{\alpha }[(b \ n-1)\widehat{\sigma }%
_{ab}(ac)(c \ n-1)]e_{kj}^{bq}(\alpha ),\quad a,b\neq n,
\]%
where $\widehat{\sigma }_{ab}=(bn)\sigma _{ab}=\sigma _{ab}(an)\in S(n-1)$
and $(b \ n-1)\widehat{\sigma }_{ab}(ad)(d \ n-1)\in S(n-2)$ and 
\[
V(\sigma _{n})e_{ij}^{cq}(\alpha )=\sum_{k=1}^{w^{\alpha }}\varphi
_{ki}^{\alpha }[(\sigma _{n}(c) \ n-1)\sigma _{n}(c \ n-1)]e_{kj}^{\sigma
_{n}(c) \ q}(\alpha ). 
\]
\end{proposition}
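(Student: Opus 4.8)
The plan is to prove all three identities by direct substitution of Definition~\ref{def47}, using as the single nontrivial input the invertibility of $Q(\alpha)$, i.e. the block-matrix inverse relation $\sum_{c,k}(Q^{-1})_{ik}^{ac}(\alpha)\,Q_{kj}^{cb}(\alpha)=\delta^{ab}\delta_{ij}$ where the contracted pair $(c,k)$ is treated as a single matrix index. Conceptually, the multiplication law \eqref{42eq3} of the generators $u_{ij}^{ab}(\alpha)$ is nothing but a $Q$-twisted matrix product: with the combined index $I=(a,i)$, the rule reads schematically $u_I{}^{J}u_K{}^{L}=Q_{JK}\,u_I{}^{L}$, which is the product $X\ast Y:=X\,Q(\alpha)\,Y$ transported to the span of the $u$'s. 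Since $Q(\alpha)$ is invertible, $X\mapsto Q(\alpha)X$ is an algebra isomorphism of $(M_N,\ast)$, $N=(n-1)\dim\varphi^{\alpha}$, onto the ordinary matrix algebra $M(N,\mathbb{C})$, and the elements $e_{ij}^{ab}(\alpha)$ of Definition~\ref{def47}, built from the $u$'s by a $Q^{-1}(\alpha)$-contraction, are precisely the preimages of the ordinary matrix units. This is the structural reason the first identity must hold; the remaining work is to check that the three explicit formulas fall out of the index bookkeeping.

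For the first identity I would substitute $e_{ij}^{ab}(\alpha)$ and $e_{kl}^{cd}(\beta)$, use the $\delta_{\alpha\beta}$ in \eqref{42eq3} to reduce to $\alpha=\beta$, and apply \eqref{42eq3} to the single inner product of two generators $u$. That product contributes exactly one factor $Q(\alpha)$ carried on the contracted (internal) indices; this factor meets one of the two factors $Q^{-1}(\alpha)$ supplied by Definition~\ref{def47}, and the inverse relation collapses that double sum to the Kronecker deltas $\delta^{bc}\delta_{jk}$. The surviving factor $Q^{-1}(\alpha)$, together with the surviving $u$, reassembles a single $e_{il}^{ad}(\alpha)$, which is exactly the claimed right-hand side.

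For the two module identities I would again substitute Definition~\ref{def47} and push $V(\sigma_{ab})^{t_n}$ (respectively $V(\sigma_n)$) through the sum onto the generators $u$, invoking the left-action formulas \eqref{42eq1} and \eqref{42eq2} of Proposition~\ref{prop42}. The key observation is that these left actions affect only the index of $u$ carrying the module structure, while leaving untouched the index that is contracted against $Q^{-1}(\alpha)$ in Definition~\ref{def47}. Consequently the scalar coefficient $d^{\delta_{ac}}\varphi_{ki}^{\alpha}[\cdots]$ (respectively $\varphi_{ki}^{\alpha}[\cdots]$) factors straight out of the $Q^{-1}(\alpha)$-weighted sum, and what remains is by definition the appropriate $e_{kj}^{bq}(\alpha)$ (respectively $e_{kj}^{\sigma_n(c)q}(\alpha)$). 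Thus these two formulas follow with no further use of the structure of $Q(\alpha)$ beyond its mere presence in the contraction.

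The main obstacle is purely the index bookkeeping: one must verify that the product \eqref{42eq3} of two $u$'s contributes precisely one factor $Q(\alpha)$ on the internal indices, and that it annihilates the correct one of the two $Q^{-1}(\alpha)$'s through the inverse relation, and—dually, for the module formulas—that the left action of the generators of $M$ touches only the free index so that the $Q^{-1}(\alpha)$-contraction passes through inertly. Keeping the block index $(a,b)$ and the representation index $(i,j)$ of every object rigidly separated, and consistently treating the pair $(\mathrm{block},\mathrm{within})$ as one matrix index whenever the inverse relation is invoked, is what makes the cancellations transparent; invertibility of $Q(\alpha)$ (the standing hypothesis $\det Q(\alpha)\neq 0$) is the only nontrivial fact used.
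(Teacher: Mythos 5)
Your overall strategy---substituting Definition~\ref{def47} into each identity, using \eqref{42eq3} together with the inverse relation for $Q(\alpha)$ to get the first one, and \eqref{42eq1}, \eqref{42eq2} for the other two---is exactly the (unwritten) derivation the paper intends: Proposition~\ref{prop48} is asserted there without proof as an immediate consequence of these ingredients. The problem is that your key bookkeeping claims are false for the formulas as actually printed, and the bookkeeping is, by your own account, the entire content of the proof. In Definition~\ref{def47} as written, $e_{ij}^{ab}(\alpha)=\sum_{qs}(Q^{-1})_{jq}^{bs}(\alpha)\,u_{qi}^{sa}(\alpha)$, the factor $Q^{-1}$ is contracted against the \emph{first} index pair $(s,q)$ of $u$, while the free pair $(a,i)$ of $e$ sits in the \emph{second} slot of $u$. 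But \eqref{42eq1} and \eqref{42eq2} act on precisely the first pair of $u$, and in \eqref{42eq3} the factor $Q$ attaches to the first pair of the right-hand factor. So the left action of the generators lands on the contracted indices, not the free ones: your claim that the action leaves the $Q^{-1}$-contracted index untouched is the opposite of what the printed formulas say, the coefficient $d^{\delta_{ac}}\varphi^{\alpha}_{ki}[\cdots]$ does not factor out, and one is left instead with a contraction of $Q^{-1}$ against a $\widehat{\sigma}_{ab}$-dependent matrix that does not simplify. Likewise, carrying out your computation of the first identity literally, the collapse produces $\delta^{ad}\delta_{il}$ (linking the first factor's free pair to the second factor's contraction) and the surviving element is $e_{kj}^{cb}$, i.e.\ one obtains $e_{ij}^{ab}e_{kl}^{cd}=\delta^{ad}\delta_{il}\,e_{kj}^{cb}$ rather than the stated $\delta^{bc}\delta_{jk}\,e_{il}^{ad}$; moreover that collapse is a contraction of $Q^{-1}$ and $Q$ along like (column) indices, so it equals $(Q^{-1}Q^{T})$ and requires $Q^{T}=Q$ (e.g.\ $\varphi^{\alpha}$ real orthogonal)---invertibility alone, which you call the only nontrivial fact used, does not suffice.

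All three identities do follow by your argument, with exactly the stated index placement and with no hypothesis beyond $\det Q(\alpha)\neq 0$, once the definition is read as $e_{ij}^{ab}(\alpha)=\sum_{qs}(Q^{-1})_{jq}^{bs}(\alpha)\,u_{iq}^{as}(\alpha)$, i.e.\ with the two index pairs of $u$ interchanged relative to the printed Definition~\ref{def47}. Then $Q^{-1}$ sits on the second (inert) pair of $u$, the coefficients in \eqref{42eq1}, \eqref{42eq2} pass through the contraction untouched, and in the first identity the inner sum is a genuine row-by-column product $\sum_{qs}(Q^{-1})_{jq}^{bs}Q_{qk}^{sc}=\delta^{bc}\delta_{jk}$, after which the leftover $Q^{-1}$ and $u$ reassemble into $e_{il}^{ad}(\alpha)$. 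This mismatch between Definition~\ref{def47} and Proposition~\ref{prop48} is a defect of the paper, but a proof that advertises itself as pure index bookkeeping must detect and resolve it; as written, your substitutions produce the transposed identities, not the stated ones, so the proof has a genuine gap.
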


\bigskip So the transformation rule of the natural generators of the algebra 
$A_{n}^{t_{n}}(d)$ in the basis $\{e_{ij}^{ab}(\alpha )\}$ are expressed
only by matrices of the irreducible representations $\varphi ^{\alpha }$ of $%
S(n-2)$ and they are much simpler then analogous formulas for the reduced
matrix basis $\{f_{j\nu j_{\mu }}^{\nu \mu }\}.$

The multiplication rule of the basis elements $e_{ij}^{ab}(\alpha )$ is in
fact the matrix multiplication rule. To see this we introduce the following
notation%
\[
e_{ij}^{ab}(\alpha )=e_{AB}(\alpha ):A=(a-1)w^{\alpha }+i,\quad
B=(b-1)w^{\alpha }+j, 
\]%
where $w^{\alpha }=\dim \varphi ^{\alpha }$, then 
\[
e_{AB}(\alpha )e_{CD}(\beta )=\delta _{\alpha \beta }\delta _{BC}e_{AD}, 
\]%
and for fixed irreducible representation $\alpha $ the elements $%
\{e_{AB}(\alpha ):A,B=1,...,(n-1)w^{\alpha }\}$ span an ideal in $M$
isomorphic with matrix algebra $M((n-1)w^{\alpha },%
\mathbb{C}
).$ The elements $\{e_{ii}^{aa}(\alpha )=e_{AA}(\alpha ):\alpha
=1,....,k,\quad a=1,...,n-1,\quad i=1,...,\dim \varphi ^{\alpha }\}$ are
primitive (irreducible) idempotents of the algebra $A_{n}^{t_{n}}(d)$.

\bigskip From the above theorem easily follows

\begin{corollary}
\label{col49}
For any $x=1,...,n-1,\quad s=1,...,\dim \varphi ^{\alpha }$ the subspace 
\[
E_{s}^{x}(\alpha )\equiv \Span_{%
\mathbb{C}
}\{e_{js}^{ax}(\alpha ):a=1,...,n-1,\quad j=1,...,w^{\alpha }\}\subset
U(\alpha )\subset M 
\]%
is a left minimal ideal (i.e. is an irreducible representation) of the
algebra $A$ such that%
\[
U(\alpha )=\bigoplus _{x,s}E_{s}^{x}(\alpha )\simeq M((n-1)w^{\alpha },%
\mathbb{C}
) 
\]%
and all the irreducible representations $E_{s}^{x}(\alpha )$ for any $%
x=1,...,n-1,\quad s=1,...,\dim \varphi ^{\alpha }$ are isomorphic, so a
fixed irreducible representation $\varphi ^{\alpha }$ of $S(n-2)$ gives us $%
(n-1)w^{\alpha }$ isomorphic irreducible representations.
\end{corollary}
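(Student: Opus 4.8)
The plan is to read the statement off directly from the matrix-unit structure of $U(\alpha)$ established in Prop.~\ref{prop48}, combined with the abstract description of matrix algebras in Ex.~\ref{ex6} and Prop.~\ref{prop7}. First I would pass to the relabelling $e_{ij}^{ab}(\alpha)=e_{AB}(\alpha)$ with $A=(a-1)w^{\alpha}+i$ and $B=(b-1)w^{\alpha}+j$ introduced just after Prop.~\ref{prop48}. Under this relabelling the multiplication rule $e_{ij}^{ab}(\alpha)e_{kl}^{cd}(\beta)=\delta_{\alpha\beta}\delta^{bc}\delta_{jk}e_{il}^{ad}(\alpha)$ becomes precisely the matrix-unit relation $e_{AB}(\alpha)e_{CD}(\alpha)=\delta_{BC}e_{AD}(\alpha)$, with all products across distinct $\alpha,\beta$ vanishing. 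This already exhibits
\[
U(\alpha)=\Span_{\mathbb{C}}\{e_{AB}(\alpha):A,B=1,\dots,(n-1)w^{\alpha}\}\simeq M((n-1)w^{\alpha},\mathbb{C}),
\]
which is the claimed isomorphism, and it identifies the $e_{ss}^{xx}(\alpha)=e_{XX}(\alpha)$ as a complete orthogonal system of primitive idempotents.

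Next I would identify $E_s^x(\alpha)$ with a single column of this matrix algebra. Fixing $x$ and $s$ fixes the column index $X=(x-1)w^{\alpha}+s$, so that $E_s^x(\alpha)=\Span_{\mathbb{C}}\{e_{AX}(\alpha):A=1,\dots,(n-1)w^{\alpha}\}=U(\alpha)\,e_{XX}(\alpha)$ is the left ideal generated by the primitive idempotent $e_{XX}(\alpha)$, exactly in the form $I=Ae$ of Th.~\ref{th5}. By Ex.~\ref{ex6} and Prop.~\ref{prop7}, such a column is a minimal left ideal of the matrix algebra $U(\alpha)$, hence an irreducible $U(\alpha)$-module isomorphic to $\mathbb{C}^{(n-1)w^{\alpha}}$; summing over all columns recovers the full matrix algebra, giving $U(\alpha)=\bigoplus_{x,s}E_s^x(\alpha)$. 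That all these columns are isomorphic as left modules is the standard fact that right multiplication by $e_{XX'}(\alpha)$ sends $e_{AX}(\alpha)\mapsto e_{AX'}(\alpha)$ and is a left-module isomorphism with inverse right multiplication by $e_{X'X}(\alpha)$; this yields the $(n-1)w^{\alpha}$ isomorphic copies indexed by $x,s$.

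The only step that is not a pure translation of Ex.~\ref{ex6} into this notation is the promotion of these assertions from $U(\alpha)$-modules to modules over the whole algebra $A\equiv A_{n}^{t_{n}}(d)$, and this is where I would be most careful. Because $A=\bigoplus_{\alpha}U(\alpha)\oplus S$ is an orthogonal direct sum of two-sided ideals (Prop.~\ref{prop26} together with Cor.~\ref{col43}, which give $U(\beta)U(\alpha)=0$ for $\beta\neq\alpha$ and $SM\subseteq S\cap M=0$), for any $a\in A$ only its $U(\alpha)$-component acts nontrivially on $E_s^x(\alpha)\subset U(\alpha)$. Hence $E_s^x(\alpha)$ is a left ideal of $A$ and its $A$-submodules coincide with its $U(\alpha)$-submodules, so minimality over $U(\alpha)$ is equivalent to irreducibility over $A$.

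In short, the heavy lifting is already contained in Prop.~\ref{prop48}: once the $e_{ij}^{ab}(\alpha)$ are shown to obey matrix-unit relations, the corollary is essentially a dictionary entry reading off the column structure of $M((n-1)w^{\alpha},\mathbb{C})$, and the single genuinely verifiable point is the orthogonality of the ideal decomposition that lets one transfer irreducibility from $U(\alpha)$ to $A$.
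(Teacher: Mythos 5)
Your proposal is correct and follows essentially the same route as the paper: the paper likewise passes to the relabelled matrix units $e_{AB}(\alpha)$, identifies $U(\alpha)\simeq M((n-1)w^{\alpha},\mathbb{C})$, and exhibits $E_{s}^{x}(\alpha)=\{a\,e_{ss}^{xx}(\alpha):a\in A\}$ as the minimal left ideal generated by the primitive idempotent $e_{ss}^{xx}(\alpha)$, invoking Th.~\ref{th5}, Ex.~\ref{ex6} and Prop.~\ref{prop7}. The only difference is that you spell out two points the paper leaves implicit, namely the column-to-column isomorphisms via right multiplication by $e_{XX'}(\alpha)$ and the transfer of minimality from $U(\alpha)$-modules to $A$-modules using the orthogonality of the ideal decomposition, both of which are correct and strengthen the write-up.
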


The fact that the subspace $E_{s}^{x}(\alpha )$ is the irreducible
representation space follows from the relation%
\[
E_{s}^{x}(\alpha )=\{ae_{ss}^{xx}(\alpha ):a\in A\}, 
\]%
where $e_{ss}^{xx}(\alpha )$ is a primitive (irreducible) idempotent which
follows immediately from the matrix multiplication rule of the elements $%
\{e_{js}^{ax}(\alpha )\}$ (Th.~\ref{th5}).

The matrix form of irreducible representations of the algebra $%
A_{n}^{t_{n}}(d)$ for the bases $\{e_{ij}^{ab}(\alpha )\}$ are the following

\begin{proposition}
\label{prop50}
When $\det Q(\alpha )\neq 0,$ the natural generators $V(\sigma
_{ab})^{t_n }$ and $V(\sigma _{n})$ of $A_{n}^{t_{n}}(d)$ are represented
in the irreducible representation space $E_{s}^{x}(\alpha )=\Span_{%
\mathbb{C}
}\{e_{js}^{ax}(\alpha ):a=1,...,n-1,\quad j=1,...,w^{\alpha }\}\subset
U(\alpha )$ by the matrices of the form 
\[
M_{e}^{\alpha }(V(\sigma )_{ab}^{t_n })_{cp,qi}=\delta _{bc}d^{\delta
_{aq}}\varphi _{pi}^{\alpha }[(c \ n-1)\widehat{\sigma }_{ab}(aq)(q \ n-1)],\qquad
\quad a,b\neq n 
\]%
and 
\[
M_{e}^{\alpha }(V(\sigma _{n}))_{pl,ai}=\delta _{p\sigma _{n}(a)}\varphi
_{li}^{\alpha }[(\sigma _{n}(a) \ n-1)\sigma _{n}(a \ n-1)] 
\]%
\ and we see that that these matrices does not depend on the indices $x,s$
so in each isomorphic irreducible representation space $E_{s}^{x}(\alpha )$ $%
x=1,...,n-1,\quad s=1,...,\dim \varphi ^{\alpha }$ the operators $V(\sigma
_{ab})^{t_n}$ are represented by the same matrices.
\end{proposition}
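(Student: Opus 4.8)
The plan is to read both matrix formulas straight off the left-module structure already established in Proposition~\ref{prop48}, after restricting the action to the minimal left ideal $E_s^x(\alpha)$. First I would record that the hypothesis $\det Q(\alpha)\neq 0$ is exactly what makes Definition~\ref{def47} legitimate, so that $\{e_{ij}^{ab}(\alpha)\}$ really is a basis of $U(\alpha)$ and, for fixed $x\in\{1,\dots,n-1\}$ and $s\in\{1,\dots,w^\alpha\}$, the family $\{e_{js}^{ax}(\alpha):a=1,\dots,n-1,\ j=1,\dots,w^\alpha\}$ is a genuine basis of the $(n-1)w^\alpha$-dimensional space $E_s^x(\alpha)$. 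I then index a basis vector of $E_s^x(\alpha)$ by the pair $(\text{coset }a,\ \text{representation index }j)$, the second upper label $x$ and the second lower label $s$ being frozen, and I define the matrix entries of a generator $g$ by $g\,e_{is}^{qx}(\alpha)=\sum_{c,p}M_e^\alpha(g)_{cp,qi}\,e_{ps}^{cx}(\alpha)$.

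Next I would substitute the frozen labels into the two transformation rules of Proposition~\ref{prop48}. Applying the first rule to $e_{is}^{qx}(\alpha)$ (that is, taking its first upper index to be $q$, its second upper index $x$, its lower indices $i$ and $s$) gives
\[
V(\sigma_{ab})^{t_n}e_{is}^{qx}(\alpha)=d^{\delta_{aq}}\sum_{k=1}^{w^\alpha}\varphi_{ki}^{\alpha}[(b\ n-1)\widehat{\sigma}_{ab}(aq)(q\ n-1)]\,e_{ks}^{bx}(\alpha).
\]
Since every term on the right has upper-first index $b$, comparison with the defining expansion forces $c=b$ and $p=k$; reading off the coefficient yields $M_e^\alpha(V(\sigma_{ab})^{t_n})_{cp,qi}=\delta_{bc}\,d^{\delta_{aq}}\varphi_{pi}^{\alpha}[(c\ n-1)\widehat{\sigma}_{ab}(aq)(q\ n-1)]$, where the Kronecker delta licenses writing $c$ for $b$ inside the permutation. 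The generator $V(\sigma_n)$ is handled identically from the second rule of Proposition~\ref{prop48}: applied to $e_{is}^{ax}(\alpha)$ it maps this vector into the single coset $\sigma_n(a)$, and reading off the coefficient of $e_{ls}^{px}(\alpha)$ produces $M_e^\alpha(V(\sigma_n))_{pl,ai}=\delta_{p\,\sigma_n(a)}\varphi_{li}^{\alpha}[(\sigma_n(a)\ n-1)\sigma_n(a\ n-1)]$, as claimed.

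Finally I would dispatch the closure and $(x,s)$-independence that give the statement its meaning. Closure of $E_s^x(\alpha)$ under the whole algebra is immediate from the two displayed actions, since neither the factor $d^{\delta}$, nor the $\varphi^\alpha$ coefficients, nor the output labels ever disturb the frozen indices $x$ and $s$; hence the action stays inside $E_s^x(\alpha)$ and, moreover, the entries literally contain neither $x$ nor $s$, so the same matrix represents the generator on every $E_s^x(\alpha)$. I do not anticipate a genuine obstacle, as all the content already lives in Proposition~\ref{prop48}; the only point demanding care—and thus the part I would write most carefully—is the bookkeeping of the four indices of $e_{ij}^{ab}(\alpha)$, in particular keeping the characteristic numbers $(a,b)$ of the permutation $\sigma_{ab}$ notationally separate from the coset labels of the basis, so that the frozen and the active indices are matched to the correct slots of the multiplication rule.
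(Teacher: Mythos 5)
Your proposal is correct and follows essentially the same route as the paper: the paper derives Proposition~\ref{prop50} exactly by restricting the transformation rules of Proposition~\ref{prop48} to the minimal left ideal $E_{s}^{x}(\alpha)$ of Corollary~\ref{col49} and reading off the coefficients, which is precisely your argument, including the observation that the $\delta_{bc}$ factor justifies replacing $b$ by $c$ inside the permutation and that the resulting matrices are independent of the frozen indices $x,s$.
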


One can check that the matrices $M^{\alpha }(\sigma ^{t_n})$ that
represent the elements $V(\sigma )^{t_n}$ of the algebra $%
A_{n}^{t_{n}}(d)$ satisfy the multiplication rule of the algebra $%
A_{n}^{t_{n}}(d)$ (see Th.~\ref{th19} above) i.e. we have

\bigskip \quad 
\[
M^{\alpha }(\sigma _{ab}^{t_n})M^{\alpha }(\rho _{cq}^{t_n
})=d^{\delta _{aq}}M^{\alpha }((\sigma (d)n)\sigma _{ab}(dn)\rho
_{cq}^{t_n }), 
\]%
\[
M^{\alpha }(\sigma _{n})M^{\alpha }(\rho _{cq}^{t_n})=M^{\alpha }(\sigma
_{n}\rho _{cq}^{t_n}),\quad M^{\alpha }(\sigma _{n})M^{\alpha }(\rho
_{n})=M^{\alpha }(\sigma _{n}\rho _{n}), 
\]%
so we have to deal with the irreducible matrix representations of the
algebra $A_{n}^{t_{n}}(d).$

We may conclude this subsection formulating the following

\begin{corollary}
\label{col51}
The ideal $M$ is a direct sum of irreducible representations (left minimal
ideals), indexed by irreducible representations $\varphi ^{\alpha }\in
V_{d}[S(n-2)]$ of $S(n-2)$ and each irreducible representations has
multiplicity in $M$ equal to its dimension.
\end{corollary}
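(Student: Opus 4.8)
The plan is to obtain this as a direct assembly of the matrix-ideal structure of each $U(\alpha)$ together with the standard decomposition of a matrix algebra into minimal left ideals, so that almost no new computation is required. First I would recall from Cor.~\ref{col43} that $M=\bigoplus_\alpha U(\alpha)$, where the sum runs precisely over those $\alpha$ with $\varphi^\alpha\in V_d[S(n-2)]$ and each $U(\alpha)$ is an ideal of $A_n^{t_n}(d)$. The construction carried out above --- using the reduced matrix basis $\{f_{j_\nu j_\mu}^{\nu\mu}(\alpha)\}$ and Th.~\ref{th77} of Appendix~\ref{AppB} --- shows that $U(\alpha)\simeq M(\rank Q(\alpha),\mathbb{C})$, so each $U(\alpha)$ is a matrix ideal in the sense of Th.~\ref{th3} and Prop.~\ref{prop7}.

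Next I would invoke Prop.~\ref{prop7}, illustrated by Ex.~\ref{ex6}: a matrix ideal isomorphic to $M(m,\mathbb{C})$ is a direct sum of its $m$ minimal left ideals generated by the diagonal primitive idempotents, and all of these minimal left ideals are mutually isomorphic, realizing the single $m$-dimensional irreducible module of $M(m,\mathbb{C})$. Applied with $m=\rank Q(\alpha)$, this gives $U(\alpha)=\bigoplus_{x,s}E_s^x(\alpha)$ with every $E_s^x(\alpha)$ isomorphic to one irreducible representation, denoted $\Phi_A^\alpha$; this is exactly the content of Cor.~\ref{col49} in the case $\det Q(\alpha)\neq 0$, and when $\det Q(\alpha)=0$ the same conclusion holds because $U(\alpha)$ is still a genuine matrix algebra, now of size $\rank Q(\alpha)$. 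Since the number of isomorphic summands equals $\rank Q(\alpha)=\dim\Phi_A^\alpha$, the irreducible representation $\Phi_A^\alpha$ occurs in $U(\alpha)$ with multiplicity equal to its own dimension.

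Finally I would verify that distinct $\alpha$ yield non-isomorphic irreducibles, so that the labelling by $\varphi^\alpha\in V_d[S(n-2)]$ is honest: the $U(\alpha)$ are ideals in a direct sum, and restricted to the subalgebra $V_d[S(n-1)]\subset A_n^{t_n}(d)$ the representation $\Phi_A^\alpha$ reduces (when $\det Q(\alpha)\neq 0$) to $\ind_{S(n-2)}^{S(n-1)}(\varphi^\alpha)$, which determines $\alpha$ uniquely via Prop.~\ref{prop27}; the zero-eigenvalue case only deletes one induced component and still distinguishes $\alpha$. Combining the three points, $M=\bigoplus_\alpha U(\alpha)$ is a direct sum of irreducible representations indexed by $\varphi^\alpha\in V_d[S(n-2)]$, each appearing with multiplicity equal to its dimension.

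The only step genuinely carrying content --- everything else being a restatement of structure already established --- is the uniformity assertion that within each $U(\alpha)$ all minimal left ideals are isomorphic and that their common number coincides with the dimension. This, however, is precisely the generic behaviour of $M(m,\mathbb{C})$ as a left module over itself (it is the $m$-fold copy of its unique irreducible, the column module $\mathbb{C}^m$), so once $U(\alpha)\simeq M(\rank Q(\alpha),\mathbb{C})$ is in hand the claim follows with no further calculation.
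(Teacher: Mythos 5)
Your proposal is correct and takes essentially the same route as the paper: it assembles Cor.~\ref{col43} (the decomposition $M=\bigoplus_{\alpha}U(\alpha)$ over $\varphi^{\alpha}\in V_{d}[S(n-2)]$), the identification $U(\alpha)\simeq M(\rank Q(\alpha),\mathbb{C})$ obtained from the reduced matrix basis of Th.~\ref{th77}, and the standard column decomposition of a matrix algebra (Prop.~\ref{prop7}, Cor.~\ref{col49}, Prop.~\ref{prop45}), which yields multiplicity equal to dimension in both the invertible and the degenerate case of $Q(\alpha)$. Your added verification that distinct $\alpha$ give non-isomorphic irreducibles (left implicit in the paper) reaches the right conclusion, though the restriction-to-$S(n-1)$ argument is delicate when $\det Q(\alpha)=0$; the cleaner justification is that $U(\alpha)U(\beta)=0$ for $\alpha\neq\beta$, so $U(\beta)$ annihilates every minimal left ideal inside $U(\alpha)$ while acting through its unit $e_{\alpha}$ as the identity on those of $U(\alpha)$.
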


\subsection{Matrix ideals and left minimal ideals of the ideal $S$ of the
algebra $A_{n}^{t_{n}}(d).$}
\label{mat_ideals}

In this section we will describe the structure of ideal $S$ of the algebra $%
A_{n}^{t_{n}}(d),$ complementary to the ideal $M$ constructed in Prop.~\ref{prop27}.
From the general properties of the semi-simple algebras~\cite{Littlewood} it follows
that each matrix ideal $U(\alpha )$ in $M$, being semisimple, has a unit element $e_{\alpha }$%
, which is an idempotent of the algebra $A_{n}^{t_{n}}(d).$ These units of
the ideal $U(\alpha )$ are of the form 
\[
e_{\alpha }=\sum_{\nu \neq \theta ,j_{\nu }}f_{j_{\nu }j_{\nu }}^{\nu \nu }, 
\]%
in the reduced matrix basis, where $\theta :\lambda _{\theta }= 0$ and
when $\det Q(\alpha )\neq 0$ 
\[
e_{\alpha }=\sum_{a=1,...,n-1}\sum_{i=1,..,w^{\alpha }}e_{ii}^{aa}(\alpha ) 
\]%
in the basis $\{e_{ij}^{ab}(\alpha )\}$ of $U(\alpha ).$The sum of all these
units is a unit $e$ of the ideal $M$ and we have

\begin{proposition}
\label{prop52}
The element 
\[
e=\sum_{\alpha :\varphi ^{\alpha }\in V[S(n-2)]}e_{\alpha } 
\]%
is a unit of the ideal $M$ i.e. we have 
\[
\forall m\in M\quad me=em=m 
\]%
and 
\[
M=eA_{n}^{t_{n}}(d)e. 
\]
\end{proposition}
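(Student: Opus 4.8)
The plan is to reduce everything to three facts already in hand: the direct-sum decomposition $M=\bigoplus_\alpha U(\alpha)$ into ideals (Cor.~\ref{col43}); the fact that each $U(\alpha)$, being a matrix ideal, possesses a two-sided unit $e_\alpha$; and the mutual orthogonality $U(\alpha)U(\beta)=0$ for $\alpha\neq\beta$, which is read off directly from the factor $\delta_{\alpha\beta}$ in the multiplication rule of Eq.~\eqref{42eq3} (Prop.~\ref{prop42}), and equally from Prop.~\ref{prop48}.

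First I would verify that $e_\alpha$ is indeed a two-sided unit of $U(\alpha)$. Working in the reduced matrix basis, where $e_\alpha=\sum_{\nu\neq\theta,j_\nu}f_{j_\nu j_\nu}^{\nu\nu}$ and the $f$-basis obeys a matrix multiplication rule, a direct evaluation on each basis vector $f_{j_\sigma j_\rho}^{\sigma\rho}(\alpha)$ shows that $e_\alpha$ fixes it from both sides; hence $e_\alpha$ acts as the identity on all of $U(\alpha)$. When $\det Q(\alpha)\neq 0$ the same conclusion follows even more transparently from $e_\alpha=\sum_{a,i}e_{ii}^{aa}(\alpha)$ together with the matrix rule $e_{ij}^{ab}(\alpha)e_{kl}^{cd}(\alpha)=\delta^{bc}\delta_{jk}e_{il}^{ad}(\alpha)$ of Prop.~\ref{prop48}, the sum over $(a,i)$ collapsing via the Kronecker deltas.

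With these in place the unit property of $e=\sum_\alpha e_\alpha$ is formal. Any $m\in M$ splits as $m=\sum_\alpha m_\alpha$ with $m_\alpha\in U(\alpha)$, and then
\[
me=\sum_{\alpha,\beta}m_\alpha e_\beta=\sum_\alpha m_\alpha e_\alpha=\sum_\alpha m_\alpha=m,
\]
where the cross terms with $\alpha\neq\beta$ die by orthogonality and the diagonal terms survive because $e_\alpha$ is the unit of $U(\alpha)$; the identity $em=m$ is obtained symmetrically.

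Finally, for $M=eA_{n}^{t_{n}}(d)e$: since every $e_\alpha$ lies in $U(\alpha)\subset M$ we have $e\in M$, and because $M$ is a two-sided ideal (Cor.~\ref{col25}) it follows that $eA_{n}^{t_{n}}(d)e\subseteq M$. For the reverse inclusion, take $m\in M$; the unit property just established gives $eme=(em)e=me=m$, so $m\in eA_{n}^{t_{n}}(d)e$, whence $M\subseteq eA_{n}^{t_{n}}(d)e$ and the two sets coincide. I do not anticipate any genuine obstacle: every ingredient is already available, and the only point deserving care is confirming that each $e_\alpha$ is a two-sided (rather than merely one-sided) unit, which is exactly where the explicit multiplication rules of Prop.~\ref{prop48} and its reduced-basis analogue are used.
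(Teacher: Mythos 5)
Your proof is correct and follows essentially the same route as the paper: the paper also obtains $e$ by summing the units $e_\alpha$ of the matrix ideals $U(\alpha)$ in the decomposition $M=\bigoplus_\alpha U(\alpha)$, with $e_\alpha$ given by $\sum_{\nu\neq\theta,j_\nu}f_{j_\nu j_\nu}^{\nu\nu}$ (or $\sum_{a,i}e_{ii}^{aa}(\alpha)$ when $\det Q(\alpha)\neq 0$). The only difference is that you verify the two-sided unit property and the orthogonality $U(\alpha)U(\beta)=0$ explicitly from the multiplication rules, and spell out the double inclusion for $M=eA_{n}^{t_{n}}(d)e$, whereas the paper cites the general theory of semi-simple algebras and leaves these details implicit.
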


Using the properties of the idempotents $e,$ $\text{\noindent\(\mathds{1}\)}-e$ one gets the
decomposition of the algebra $A_{n}^{t_{n}}(d)$ given in the Proposition~\ref{prop27}.

\bigskip Using again the basics properties of the idempotents 
and the equations describing the linear dependence of the generating
elements $V(\sigma )^{t_{n}}:\sigma \in S(n)$ 
\[
h(\phi ^{\mu })>d\Rightarrow E_{ij}^{t_{n}}(\phi ^{\mu })=\frac{\dim \phi
^{\mu }}{n!}\sum_{\sigma \in S(n)}\phi _{ji}^{\mu }(\sigma
^{-1})V^{t_{n}}(\sigma )=0,~i,j=1,..,\dim \phi ^{\mu },
\]%
one can prove

\begin{theorem}
\label{th53}
The ideal is of the form 
\[
S=\Span_{%
\mathbb{C}
}\{V(\sigma _{n})(\text{\noindent\(\mathds{1}\)}-e):\sigma _{n}\in S(n-1)\}=A_{n-1}(d)(\text{\noindent\(\mathds{1}\)}%
-e), 
\]%
where $A_{n-1}(d)\subset A_{n}^{t_{n}}(d)$ the the elements $\{V(\sigma
_{n})(\text{\noindent\(\mathds{1}\)}-e):\sigma _{n}\in S(n-1)\}$ are natural generators of the
ideal $S.$ The ideal $S$ (as a left $A_{n}^{t_{n}}(d)-$module ) is a
representation of the algebra $A_{n}^{t_{n}}(d).$ The natural generators of
algebra $A_{n}^{t_{n}}(d)$ acts on the basis elements of $S$ in the
following way%
\[
\forall m\in M\quad m(V(\sigma _{n})(\text{\noindent\(\mathds{1}\)}-e))=0,\qquad 
\]%
\[
\forall V(\rho _{n})\in V_{d}[S(n-1)]\quad V_{d}(\rho _{n})(V_{d}(\sigma
_{n})(\text{\noindent\(\mathds{1}\)}-e))=V_{d}(\rho _{n}\sigma _{n})(\text{\noindent\(\mathds{1}\)}-e) 
\]%
i.e. the elements of the ideal $M$ acts on $S$ trivially as zero operators
(the elements of $M$ are not invertible). We have also 
\[
\forall V_{d}(\rho _{n}),V_{d}(\sigma _{n})\in V_{d}[S(n-1)]\quad V_{d}(\rho
_{n})(\text{\noindent\(\mathds{1}\)}-e)(V_{d}(\sigma _{n})(\text{\noindent\(\mathds{1}\)}-e))=V_{d}(\rho _{n}\sigma
_{n})(\text{\noindent\(\mathds{1}\)}-e) 
\]%
Thus the ideal $S$ is a representation of the algebra $%
\mathbb{C}
\lbrack (S(n-1)]$ i.e. is homomorphic with this algebra.
The elements $V(\sigma ):\sigma \in S(n)$ are, in general
linearly depended, but we have: if $d\geq n$ then 
\[
S=\bigoplus _{\nu }span_{%
\mathbb{C}
}\{E_{ij}^{S(n-1)}(\psi ^{\nu })(\text{\noindent\(\mathds{1}\)}-e)\}, 
\]%
where
\[
E_{ij}^{S(n-1)}(\psi ^{\nu })=\frac{\dim \psi ^{\nu }}{(n-1)!}\sum_{\sigma
_{n}\in S(n-1)}\psi _{ji}^{\nu }(\sigma _{n}^{-1})V(\sigma _{n}) 
\]%
 and $\nu $ runs over all irreducible representations of 
$S(n-1)$ and if $d<n$, then
\[
S=\bigoplus _{\nu }span_{%
\mathbb{C}
}\{E_{ij}^{S(n-1)}(\psi ^{\nu })(\text{\noindent\(\mathds{1}\)}-e)\}, 
\]%
where $\nu :h(\psi ^{\nu })<d$ and the elements $%
E_{ij}^{S(n-1)}(\psi ^{\nu })(1-e)$ in both cases are linearly independent.
\end{theorem}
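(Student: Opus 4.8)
The plan is to realise $S$ as a homomorphic image of $\mathbb{C}[S(n-1)]$ under right multiplication by $\mathds{1}-e$, and then to determine which irreducible blocks survive. Write $A\equiv A_{n}^{t_{n}}(d)$. Because $A=M\oplus S$ splits a semisimple algebra into two two-sided ideals (Prop.~\ref{prop26}), the unit $e$ of $M$ is a \emph{central} idempotent and $\mathds{1}-e$ is the central unit of $S$; in particular $me=em=m$ for $m\in M$, so $M(\mathds{1}-e)=0$, while $\mathds{1}-e$ commutes with each $V(\sigma_{n})$ and $S=(\mathds{1}-e)A(\mathds{1}-e)=A(\mathds{1}-e)$. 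Feeding the splitting $A=M+A_{n-1}(d)$ of Cor.~\ref{col25} into this gives
\[
S=\big(M+A_{n-1}(d)\big)(\mathds{1}-e)=A_{n-1}(d)(\mathds{1}-e)=\Span_{\mathbb{C}}\{V(\sigma_{n})(\mathds{1}-e):\sigma_{n}\in S(n-1)\},
\]
which is the first assertion.

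The module formulas then follow directly. That $M$ annihilates $S$ is immediate from $MS=0$ in Prop.~\ref{prop26}, so $m\,(V(\sigma_{n})(\mathds{1}-e))=0$ for every $m\in M$. For the generators of $A_{n-1}(d)$ I would use the line $V(\rho_{n})V(\sigma_{n})=V(\rho_{n}\sigma_{n})$ of Th.~\ref{th19} together with the centrality and idempotency of $\mathds{1}-e$ to obtain
\[
V(\rho_{n})\big(V(\sigma_{n})(\mathds{1}-e)\big)=V(\rho_{n}\sigma_{n})(\mathds{1}-e),\qquad \big(V(\rho_{n})(\mathds{1}-e)\big)\big(V(\sigma_{n})(\mathds{1}-e)\big)=V(\rho_{n}\sigma_{n})(\mathds{1}-e).
\]
Hence $\sigma_{n}\mapsto V(\sigma_{n})(\mathds{1}-e)$ extends to a surjective algebra homomorphism $\Phi:\mathbb{C}[S(n-1)]\to S$, which proves that $S$ is a homomorphic image of $\mathbb{C}[S(n-1)]$ and which sends the group-algebra matrix units $E_{ij}^{S(n-1)}(\psi^{\nu})$ to the claimed generators $E_{ij}^{S(n-1)}(\psi^{\nu})(\mathds{1}-e)$ of $S$.

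It remains to compute $\ker\Phi$, which is the only delicate point. Since source and target are semisimple, $\ker\Phi$ is a union of full simple blocks, so $S\cong\bigoplus_{\nu\in T}M(\dim\psi^{\nu},\mathbb{C})$ for a set $T$ of partitions of $n-1$, and the images of the matrix units of the surviving blocks are then automatically linearly independent and span $S$. I would factor $\Phi$ as
\[
\mathbb{C}[S(n-1)]\twoheadrightarrow A_{n-1}(d)\xrightarrow{\ \cdot(\mathds{1}-e)\ }S.
\]
The first arrow is the representation map $\sigma_{n}\mapsto V(\sigma_{n})$, whose kernel is by Prop.~\ref{prop13} exactly the sum of the blocks with $h(\nu)>d$, so only the blocks $h(\nu)\le d$ reach $A_{n-1}(d)$. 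The kernel of the second arrow is the two-sided ideal $A_{n-1}(d)\cap M$ (indeed for $x\in A_{n-1}(d)$ one has $x(\mathds{1}-e)=0\Leftrightarrow x=xe\Leftrightarrow x\in M$), again a union of blocks.

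The hard part is to show $A_{n-1}(d)\cap M$ is precisely the sum of the boundary blocks $h(\nu)=d$, equivalently that $E_{ij}^{S(n-1)}(\psi^{\nu})(\mathds{1}-e)\neq 0$ iff $h(\nu)<d$. For $d\ge n$ this is immediate: every $\nu\vdash(n-1)$ has $h(\nu)\le n-1<d$, the generators $V(\sigma)^{t_{n}}$ are independent, $A_{n-1}(d)\cap M=0$, and $T$ is the full set, matching case I). For $d<n$ the generators become dependent and the obstruction is genuine; here I would feed the vanishing relations $h(\phi^{\mu})>d\Rightarrow E^{t_{n}}_{ij}(\phi^{\mu})=0$ for $S(n)$ into the branching $S(n)\!\downarrow\!S(n-1)$, so that a partition $\nu$ with $h(\nu)=d$ admits a box-addition to some $\mu$ with $h(\mu)=d+1>d$ whose vanishing forces $E^{S(n-1)}(\psi^{\nu})$ into $M$, whereas for $h(\nu)<d$ every box-addition stays within $h(\mu)\le d$ and no such obstruction arises. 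The identification can be cross-checked by the dimension count $\dim S=\dim A_{n}^{t_{n}}(d)-\dim M$, using $\dim A_{n}^{t_{n}}(d)=\dim A_{n}(d)$ (Prop.~\ref{prop16}) and the ranks $\rank Q(\alpha)$ from Th.~\ref{th29} and App.~\ref{AppA}, against $\sum_{h(\nu)<d}(\dim\psi^{\nu})^{2}$. I expect matching these two descriptions of $A_{n-1}(d)\cap M$ to be the principal obstacle.
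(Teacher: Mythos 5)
Your proposal follows essentially the same route as the paper, whose entire proof consists of citing exactly the two ingredients you deploy: the idempotent calculus for the central unit $e$ of $M$ (yielding $S=A_{n-1}(d)(\mathds{1}-e)$, the module formulas, and the surjection from $\mathbb{C}[S(n-1)]$), and the vanishing relations $h(\phi ^{\mu })>d\Rightarrow E_{ij}^{t_{n}}(\phi ^{\mu })=0$ pushed through the branching of $S(n)$ down to $S(n-1)$ to kill the blocks with $h(\nu )\geq d$. The one step you flag as the principal obstacle --- showing that every block with $h(\nu )<d$ genuinely survives, i.e.\ that $E_{ij}^{S(n-1)}(\psi ^{\nu })(\mathds{1}-e)\neq 0$ --- is not supplied by the paper either, which merely asserts ``one can prove'', so your attempt is, if anything, more detailed than the published argument.
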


From this it follows

\begin{corollary}
Let $\psi ^{\nu }$ $\nu =1,...,q$
be all irreducible representations of the group $S(n-1).$ Then 
\[
 S\simeq \bigoplus _{\nu }M(\dim \psi ^{\nu },%
\mathbb{C}
)
\]%
where, if $d\geq n$, the direct sum is over all irreducible
representations $\psi ^{\nu }$ of the group $S(n-1),$  and
when $d<n$  then the direct sum is over $\nu :h(\psi ^{\nu })<d.$%
 Each irreducible representation $\psi ^{\nu }$of the
group $S(n-1)$ defines an irreducible representation $\Psi ^{\nu
},$ of the algebra $A_{n}^{t_{n}}(d)$ in the following way%
\[
\Psi ^{\nu }(a)=%
\left\{ \begin{array}{ll}
0:a\in M, \\ 
\psi ^{\nu }(\sigma _{n}):a=\sigma _{n}\in S(n-1).%
\end{array} \right.
\]%
So in this representation the non-invertible elements of the ideal $M$ are
represented trivially by zero and therefore we call these representation of
the algebra $A_{n}^{t_{n}}(d)$ semi-trivial. The matrix forms of these
irreducible representations \ are simply matrix forms of the irreducible
representations of the group algebra $%
\mathbb{C}
\lbrack S(n-1)]\subset A_{n}^{t_{n}}(d)$ and zero matrices for the elements
of the ideal $M\subset A_{n}^{t_{n}}(d).$
\end{corollary}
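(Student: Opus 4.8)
The plan is to read the structure of $S$ off Theorem~\ref{th53}, in which the generators $E_{ij}^{S(n-1)}(\psi^\nu)(\mathds{1}-e)$ are already exhibited together with their linear independence. First I would recall that the Wedderburn elements $E_{ij}^{S(n-1)}(\psi^\nu)$ are matrix units of the semisimple group algebra $\mathbb{C}[S(n-1)]$, obeying
\[
E_{ij}^{S(n-1)}(\psi^\nu)\,E_{kl}^{S(n-1)}(\psi^\mu)=\delta^{\nu\mu}\delta_{jk}\,E_{il}^{S(n-1)}(\psi^\nu),
\]
which is precisely the matrix-ideal decomposition of a group algebra recalled in Appendix~\ref{AppC} (Th.~\ref{th81}). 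Everything then reduces to transporting these relations through right multiplication by $(\mathds{1}-e)$.

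The key structural fact I would establish is that $(\mathds{1}-e)$ is a \emph{central} idempotent of $A_{n}^{t_{n}}(d)$. This is forced by Proposition~\ref{prop26}: since $M=eA_{n}^{t_{n}}(d)e$ is a two-sided ideal with unit $e$, for every $a$ one has $ae\in M$ and $ea\in M$, whence $eae=ae$ and $eae=ea$, so $e$ and hence $\mathds{1}-e$ are central and $S=(\mathds{1}-e)A_{n}^{t_{n}}(d)$. Centrality together with $(\mathds{1}-e)^{2}=\mathds{1}-e$ immediately gives, for the nonvanishing generators,
\[
\bigl(E_{ij}^{S(n-1)}(\psi^\nu)(\mathds{1}-e)\bigr)\bigl(E_{kl}^{S(n-1)}(\psi^\mu)(\mathds{1}-e)\bigr)=\delta^{\nu\mu}\delta_{jk}\,E_{il}^{S(n-1)}(\psi^\nu)(\mathds{1}-e).
\]
Combined with the linear independence asserted in Theorem~\ref{th53}, this shows that for each surviving $\nu$ the span of $\{E_{ij}^{S(n-1)}(\psi^\nu)(\mathds{1}-e)\}$ is a matrix ideal isomorphic to $M(\dim\psi^\nu,\mathbb{C})$, that distinct such ideals are mutually orthogonal, and hence that $S\simeq\bigoplus_\nu M(\dim\psi^\nu,\mathbb{C})$. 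The range of the sum---all $\nu$ when $d\geq n$, and only those with $h(\psi^\nu)<d$ when $d<n$---is exactly the nonvanishing condition already recorded in Theorem~\ref{th53}.

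Finally, to exhibit the semi-trivial representations $\Psi^\nu$ I would pass to a minimal left ideal inside each block $M(\dim\psi^\nu,\mathbb{C})$, obtained by fixing a right (column) index; this is an irreducible left module of dimension $\dim\psi^\nu$. On it the generators act as dictated by Theorem~\ref{th53}: every $m\in M$ acts as $0$, since $MS=0$ by Proposition~\ref{prop26}, while $V(\rho_n)$ with $\rho_n\in S(n-1)$ sends $V(\sigma_n)(\mathds{1}-e)\mapsto V(\rho_n\sigma_n)(\mathds{1}-e)$, which on the $\nu$-block is precisely left multiplication by $\psi^\nu(\rho_n)$. This yields the displayed piecewise formula for $\Psi^\nu$ and its irreducibility. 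I expect no serious obstacle here: the genuine work---determining exactly which $\nu$ survive, through the linear-dependence relations $h(\phi^\mu)>d\Rightarrow E_{ij}^{t_{n}}(\phi^\mu)=0$---has already been carried out in Theorem~\ref{th53}, so the corollary is the routine Wedderburn-style translation of matrix units across the central idempotent $\mathds{1}-e$.
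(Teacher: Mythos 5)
Your proposal is correct and takes essentially the same route as the paper: the corollary is obtained by reading the structure of $S$ off Th.~\ref{th53} (which already supplies the generators $E_{ij}^{S(n-1)}(\psi ^{\nu })(\mathds{1}-e)$, their linear independence, and the surviving set of $\nu$) and combining it with the matrix-unit relations of Appendix~\ref{AppC}, exactly as the paper's ``from this it follows'' intends. Your explicit check that $e$ is central --- deduced from $M=eA_{n}^{t_{n}}(d)e$ being a two-sided ideal with unit $e$ (Prop.~\ref{prop26}) --- only makes precise the ``basic properties of the idempotents $e$, $\mathds{1}-e$'' that the paper leaves implicit, so it is a welcome detail rather than a different argument.
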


\bigskip \bigskip \bigskip

Thus from the above results we get that the algebra $A_{n}^{t_{n}}(d)$
admits irreducible representations of two kinds. The irreducible
representations from the ideal $M$ are generated by the irreducible
representations of the group $S(n-2)$ and in these representations the
algebra $A_{n}^{t_{n}}(d)$ is represented nontrivialy. The second kind of
the irreducible representations of $A_{n}^{t_{n}}(d)$ are included in the
ideal $S$ and they are generated by irreducible representations of the group 
$S(n-1).$ These representations are in fact, the irreducible representations
of the subalgebra $%
\mathbb{C}
\lbrack V(S(n-1)]\subset $ $A_{n}^{t_{n}}(d)$ in which all the partially
transposed operators $V(\sigma _{ab})^{\prime }\in M$ are represented
trivially by zero operators.

\section{Acknowledgment}
Authors would like to thank Aram Harrow and Mary Beth Ruskai for valuable discussion. M.S. would like to thank also to Institute for Theoretical Physics, University of Wroc{\l}aw for hospitality, where some part of this work was done. M. S. is supported by the International PhD Project "Physics of future quantum-based information technologies": grant MPD/2009-3/4 from Foundation for Polish Science. Authors are also supported by NCN Ideas-Plus Grant(IdP2011000361). 
Part of this work was done in National Quantum Information Centre of Gda\'nsk.

\appendix
\section{The properties of the matrices $Q(\protect\alpha ).$}
\label{AppA}

In this appendix we derive the basic properties of the matrices $Q(\alpha )$
were defined in Def~\ref{def28}.

\bigskip Directly from the definition it follows that the matrix $Q(\alpha ) 
$ are hermitian. In fact we have

\begin{proposition}
\label{prop55}
For any unitary representation $\varphi ^{\alpha }$ of the group $S(n-2)$
the matrix $Q(\alpha )$ is hermitian. This result holds even when the
representation $\varphi ^{\alpha }$ is not reducible.
\end{proposition}

\bigskip The structure of the matrix $Q(\alpha )$ is strictly connected with
the representation of the group $S(n-1)$ induced by the irreducible
representation $\varphi ^{\alpha }$ of its subgroup $S(n-2),$ namely we have

\begin{proposition}
\label{prop56}
The matrix $Q(\alpha )$ has the following form%
\[
Q(\alpha )\equiv Q_{n-1}(\alpha )=\sum_{1\leq a<b\leq n-1}\Phi ^{\alpha
}(ab)+(d-\digamma )\Phi ^{\alpha }(\id), 
\]%
where $\digamma =\frac{(n-2)(n-3)}{2}\frac{\chi ^{\alpha }(12)}{\dim \varphi
^{\alpha }}$, $\chi ^{\alpha }$ is the character of the representation $%
\varphi ^{\alpha }$ and $\Phi ^{\alpha }=\ind_{S(n-2)}^{S(n-1)}(\varphi
^{\alpha }).$ Thus we see that the matrix $Q_{n-1}(\alpha )$ is the sum of
matrices representing, in the induced representation of $S(n-1),$ all
transposition in the group $S(n-1)$ plus the unit matrix multiplied by the
number $d-F$ depending only on $\ d$ and the parameters of the
representation ($\varphi ^{\alpha },V^{\alpha })$ of the group $S(n-2).$
\end{proposition}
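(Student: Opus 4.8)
The plan is to verify the claimed identity block by block, comparing the definition of $Q(\alpha)$ in Def.~\ref{def28} with the explicit matrix form of the induced representation $\Phi^{\alpha}=\ind_{S(n-2)}^{S(n-1)}(\varphi^{\alpha})$ from Def.~\ref{def8} and Th.~\ref{th44}, using the transpositions $g_a=(a\,n-1)$, $a=1,\dots,n-1$, as the coset representatives of $S(n-1)/S(n-2)$. With this choice one has $\Phi^{\alpha}(\sigma)_{ai,bj}=\widehat\varphi_{ij}(g_a\sigma g_b)$, and in particular $\Phi^{\alpha}(\id)_{ai,bj}=\widehat\varphi_{ij}((a\,n-1)(b\,n-1))=\delta_{ab}\delta_{ij}$, so $\Phi^{\alpha}(\id)$ is the identity matrix. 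It therefore suffices to show that the $(a,b)$ block of $\sum_{1\le p<q\le n-1}\Phi^{\alpha}((pq))$ reproduces the off-diagonal blocks of $Q(\alpha)$ exactly and contributes precisely $\digamma\,\delta_{ij}$ on each diagonal block, since then adding $(d-\digamma)\Phi^{\alpha}(\id)$ restores the diagonal entries $d\,\delta_{ij}=Q^{aa}_{ij}(\alpha)$ of $Q(\alpha)$.

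For the off-diagonal blocks ($a\neq b$) I would first determine which transpositions contribute. The matrix element $\widehat\varphi_{ij}((a\,n-1)(pq)(b\,n-1))$ is nonzero only when $(a\,n-1)(pq)(b\,n-1)\in S(n-2)$, i.e. when it fixes $n-1$; tracking the image of $n-1$ shows this requires $(pq)(b)=a$, forcing $\{p,q\}=\{a,b\}$, so a single transposition contributes to each off-diagonal block. The computation already recorded in the remark after Def.~\ref{def28} gives $(a\,n-1)(ab)(b\,n-1)=(ab)$ when $a,b\neq n-1$ and $=\id$ when one of $a,b$ equals $n-1$, so in every case the contribution equals $\varphi^{\alpha}_{ij}[(a\,n-1)(ab)(b\,n-1)]=Q^{ab}_{ij}(\alpha)$, matching the off-diagonal blocks of $Q(\alpha)$, which carry no factor of $d$ since $\delta_{ab}=0$.

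The decisive step is the diagonal block $a=b$. Here each summand is $\widehat\varphi_{ij}((a\,n-1)(pq)(a\,n-1))$, the conjugate of $(pq)$ by $\tau=(a\,n-1)$; tracking $n-1$ shows the result lies in $S(n-2)$ precisely when $a\notin\{p,q\}$, the cases $a\in\{p,q\}$ moving $n-1$ off itself and contributing zero. Since conjugation by $\tau$ is an automorphism of $S(n-1)$ carrying the transpositions that do not involve $a$ bijectively onto those that do not involve $\tau(a)=n-1$ — that is, onto the full set $T$ of transpositions of $S(n-2)$ — the diagonal sum collapses to $\sum_{\tau'\in T}\varphi^{\alpha}_{ij}(\tau')$, manifestly independent of $a$ (the degenerate case $a=n-1$, where $\tau=\id$, is subsumed automatically). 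Now $\sum_{\tau'\in T}\tau'$ is a class sum, hence central in $\mathbb{C}[S(n-2)]$, so Schur's lemma gives $\sum_{\tau'\in T}\varphi^{\alpha}(\tau')=c_\alpha I$ with $c_\alpha=\frac{|T|\,\chi^{\alpha}(12)}{\dim\varphi^{\alpha}}=\frac{(n-2)(n-3)}{2}\frac{\chi^{\alpha}(12)}{\dim\varphi^{\alpha}}=\digamma$, using that all transpositions of $S(n-2)$ are conjugate and that $|T|=\binom{n-2}{2}$. Assembling the three blocks yields $Q(\alpha)=\sum_{p<q}\Phi^{\alpha}((pq))+(d-\digamma)\Phi^{\alpha}(\id)$. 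The main obstacle is exactly this block-by-block bookkeeping — in particular verifying the bijection of transpositions and the vanishing of the $a\in\{p,q\}$ terms uniformly in $a$ — after which the appearance of $\digamma$ is forced by centrality and Schur's lemma.
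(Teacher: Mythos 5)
Your proof is correct and follows essentially the same route as the paper: both write out the block structure of $\Phi^{\alpha}=\ind_{S(n-2)}^{S(n-1)}(\varphi^{\alpha})$ at transpositions using the coset representatives $(a\ n-1)$, match the off-diagonal blocks directly against Def.~\ref{def28}, and invoke the class-sum/Schur argument (the paper's Th.~\ref{th57}) to identify each diagonal block of $\sum_{p<q}\Phi^{\alpha}(pq)$ with $\digamma\,\mathds{1}$. In fact you spell out the bookkeeping (which transpositions hit which block, and the bijection of transpositions under conjugation by $(a\ n-1)$) that the paper compresses into ``comparing these matrices \ldots we get the statement.''
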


\bigskip In the proof of this Proposition, as well in the of the main
theorem of this Appendix below, we will need the following

\begin{theorem}
\label{th57}
Let ($\varphi ^{\alpha },V^{\alpha })$ be an irreducible representation of
the group $S(m)$ with the character $\chi ^{\alpha }$. Then%
\[
\sum_{\sigma \in K}\varphi ^{\alpha }(\sigma )=\frac{n_{K}}{\dim \varphi
^{\alpha }}\chi ^{\alpha }(\sigma ^{t_n })\id_{V^{\alpha }}, 
\]%
where $K$ is a class of conjugated elements, $n_{K}$ is the number of
permutations in the class $K$ and $\sigma ^{t_n}$ is any representant of
the class $K.$ In particular for the class of the transpositions we have%
\[
\sum_{(ab)\in S(m)}\varphi ^{\alpha }(ab)=\frac{m(m-1)}{2}\frac{\chi
^{\alpha }(12)}{\dim \varphi ^{\alpha }}\id_{V^{\alpha }}. 
\]
\end{theorem}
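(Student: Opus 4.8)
The plan is to recognize the left-hand side as the image under $\varphi ^{\alpha }$ of a class sum in the group algebra $\mathbb{C}[S(m)]$ and then invoke Schur's lemma. Set
\[
T=\sum_{\sigma \in K}\varphi ^{\alpha }(\sigma )\in \Hom(V^{\alpha }).
\]
First I would establish that $T$ intertwines $\varphi ^{\alpha }$ with itself, i.e.\ that it commutes with every operator $\varphi ^{\alpha }(g)$. This is the conceptual heart of the argument and rests on the fact that a conjugacy class is stable under conjugation.

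The key step is the commutation relation. For any $g\in S(m)$, using that $\varphi ^{\alpha }$ is a homomorphism and that conjugation by $g$ merely permutes the elements of the class $K$ (so that $gKg^{-1}=K$), one computes
\[
\varphi ^{\alpha }(g)\,T\,\varphi ^{\alpha }(g)^{-1}=\sum_{\sigma \in K}\varphi ^{\alpha }(g\sigma g^{-1})=\sum_{\tau \in K}\varphi ^{\alpha }(\tau )=T.
\]
Hence $T$ commutes with all $\varphi ^{\alpha }(g)$. Since $\varphi ^{\alpha }$ is an irreducible representation over the algebraically closed field $\mathbb{C}$, Schur's lemma forces $T=c\,\id_{V^{\alpha }}$ for some scalar $c\in \mathbb{C}$.

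It then remains to identify $c$, which I would do by taking traces on both sides. On one hand $\tr T=\tr(c\,\id_{V^{\alpha }})=c\dim \varphi ^{\alpha }$. On the other hand, because the character $\chi ^{\alpha }$ is a class function, every $\sigma \in K$ satisfies $\tr \varphi ^{\alpha }(\sigma )=\chi ^{\alpha }(\sigma )=\chi ^{\alpha }(\sigma ^{t_{n}})$ for any fixed representative $\sigma ^{t_{n}}$ of $K$, so that $\tr T=\sum_{\sigma \in K}\chi ^{\alpha }(\sigma )=n_{K}\chi ^{\alpha }(\sigma ^{t_{n}})$. Comparing the two expressions gives $c=n_{K}\chi ^{\alpha }(\sigma ^{t_{n}})/\dim \varphi ^{\alpha }$, which is precisely the asserted formula. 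For the particular case I would simply specialize: the transpositions of $S(m)$ form a single conjugacy class of cardinality $n_{K}=\tfrac{m(m-1)}{2}$ with representative $(12)$, and substituting these values yields the stated identity.

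There is essentially no serious obstacle here: this is the standard ``class sums are central, hence act as a scalar on any irreducible representation'' computation. The only points requiring care are that the ground field be $\mathbb{C}$, which is needed for the scalar conclusion of Schur's lemma, and that $K$ be a genuine full conjugacy class, which guarantees $gKg^{-1}=K$; both conditions hold in our setting.
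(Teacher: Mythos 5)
Your proof is correct and follows essentially the same route as the paper: the paper likewise observes that the class sum is central in $\varphi^{\alpha}(S(m))$ and invokes Schur's lemma, with the trace computation identifying the scalar left implicit. You have simply written out in full the conjugation-invariance and trace steps that the paper's one-line argument takes for granted.
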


The statement of the theorem follows from the fact that $\sum_{\sigma \in
K}\varphi ^{\alpha }(\sigma )$ belongs to the center of the algebra $\varphi
^{\alpha }(S(m))$, so from the irreducibility of the representation ($%
\varphi ^{\alpha },V^{\alpha })$ and Schur Lemma we get the result. Now we
go back to the proof of the Proposition~\ref{prop56}.

\begin{proof}
Let ($\varphi ^{\alpha },V^{\alpha })$ be an irreducible representation of
the group $S(n-2)$ characterized by the partition(or the Young diagram) $%
\alpha .$ Consider the representation $\Phi ^{\alpha }$ of the group $S(n-1)$
induced by the representation ($\varphi ^{\alpha },V^{\alpha })$ of the
subgroup $S(n-2)\subset S(n-1).$ The block matrices $\Phi ^{\alpha }(\sigma
),$ $\sigma \in S(n-1)$ have dimension $N=(n-1)\dim \varphi ^{\alpha }$. If
we chose the representatives of the cosets $S(n-1)\backslash S(n-2)$ in the
standard way i.e. as $(1~n-1),$ $(2~n-1),..,(n-2~n-1)$ and use the standard
formula for induced matrix representation, then we have 
\[
\Phi ^{\alpha }(ab)=\left( 
\begin{array}{cccccccc}
\varphi ^{\alpha }(ab) & 0 & . &  &  &  & . & 0 \\ 
0 & . &  &  &  &  &  & . \\ 
&  & 0 &  &  & \varphi ^{\alpha }(ab) &  &  \\ 
&  &  & . &  &  &  &  \\ 
&  &  &  & \varphi ^{\alpha }(ab) &  &  &  \\ 
&  & \varphi ^{\alpha }(ab) &  &  & 0 &  &  \\ 
&  &  &  &  &  & . & 0 \\ 
0 &  &  &  &  &  &  & \varphi ^{\alpha }(ab)%
\end{array}%
\right) , 
\]%
for $a,b\leq n-2$ and%
\[
\Phi ^{\alpha }(n-1b)=\left( 
\begin{array}{cccccccc}
\varphi ^{\alpha }(1b) & 0 &  &  &  &  & 0 & 0 \\ 
0 & \varphi ^{\alpha }(2b) &  &  &  &  &  & 0 \\ 
&  & . &  &  &  &  &  \\ 
&  &  & \varphi ^{\alpha }(b-1b) &  &  &  &  \\ 
&  &  &  & 0 &  &  & 1 \\ 
&  &  &  &  & . &  &  \\ 
0 &  &  &  &  &  & \varphi ^{\alpha }(n-2b) & 0 \\ 
0 & 0 &  &  & 1 &  & 0 & 0%
\end{array}%
\right) , 
\]%
where $b\leq n-2.$ Comparing these matrices with the matrix $Q(\alpha
)\equiv Q_{n-1}(\alpha )$ given in its definition we get the statement of
the Proposition 
\[
Q(\alpha )\equiv Q_{n-1}(\alpha )=\sum_{1\leq a<b\leq n-1}\Phi ^{\alpha
}(ab)+(d-\digamma )\Phi ^{\alpha }(\id), 
\]%
where $\digamma =\frac{(n-2)(n-3)}{2}\frac{\chi ^{\alpha }(12)}{\dim \varphi
^{\alpha }}$ and we see that the matrix $Q_{n-1}(\alpha )$ belongs to the
algebra $\Phi ^{\alpha }[S(n-1)].$
\end{proof}

The solution of the most important problem of eigenvalues and eigenvectors
of the matrix $Q(\alpha )$ is given in the following main theorem of this
Appendix

\begin{theorem}
\label{th58}
\begin{enumerate}[a)]
\item Let $\varphi ^{\alpha }$ any irreducible representation of the group $%
S(n-2),$ $\alpha =(\alpha _{1},...,\alpha _{k})$ its partition and $%
Y^{\alpha }$ the corresponding Young diagram . The rank $r=r(\alpha )$ (or $%
r( $ $Y^{\alpha })$ ) of the partition $\alpha $ is the length of diagonal
of its Young diagram. Suppose that for some $i$ the sequence $\nu =(\alpha
_{1},..\alpha _{i}+1,..,\alpha _{k})$ is a partition of $n-1,$ so it defines
an irreducible representation $\psi ^{\nu }$ of the group $S(n-1)$ and for
the Young diagrams it means that, the Young diagram $Y^{\nu }$ \ is obtained
from the Young diagram $Y^{\alpha }$ by adding, in the $i-$th row , one box.
Then the corresponding matrix $Q_{n-1}(\alpha )$ has the following eigenvalue
\begin{enumerate}[i)]
\item if $r(Y^{\alpha })=r(Y^{\nu }),$ then 
\[
\qquad \lambda _{\nu }=d+\alpha _{i}+1-i,\quad \quad i=1,...,k+1, 
\]%
and if $i=k+1$ we set $\alpha _{k+1}=0.$

\item if $r(Y^{\alpha })+1=r(Y^{\nu })$ which may occur only if $i=r+1,$ then 
\[
\qquad \lambda _{\nu }=d.\quad 
\]%
The case $ii)$ describes the situation when adding, in a proper way one box
to Young diagram $Y^{\alpha }$ we extend its diagonal. The multiplicity of
the eigenvalue $\lambda _{\nu }$ is equal to $dim\psi ^{\nu }$ and the
number of pairwise distinct eigenvalues of the matrix $Q_{n-1}(\alpha )$ is
equal to Young diagrams $Y^{\nu }$ that one can obtain from the Young
diagram $Y^{\alpha }$ by adding, in a proper way, one box.
\end{enumerate}
\item  The unitary matrix $Z(\alpha )=(z(\alpha )_{kj_{\nu }}^{a\nu })$
which reduce the induced representation $\Phi ^{\alpha
}=\ind_{S(n-2)}^{S(n-1)}\varphi ^{\alpha }$ into the irreducible components
has the form 
\[
\Rrightarrow z(\alpha )_{kj_{\nu }}^{a\nu }=\frac{1}{\sqrt{N_{j_{\nu}}}^{\nu}}(E_{j_{\nu }j_{\nu }}^{\nu
})_{aq}^{kr}=\frac{\dim \psi ^{\nu }}{\sqrt{%
N_{j_{\nu }}^{\nu }}(n-1)!}\sum_{\sigma \in S(n-1)}\psi _{j_{\nu }j_{\nu
}}^{\nu }(\sigma ^{-1})\delta _{a\sigma (q)}\varphi _{kr}^{\alpha
}[(a \ n-1)\sigma (q \ n-1)],\Lleftarrow 
\]%
with 
\[
 N_{j_{\nu }}^{\nu }\equiv (E_{j_{\nu }j_{\nu }}^{\nu
})_{rr}^{qq}=\frac{\dim \psi ^{\nu }}{(n-1)!}\sum_{\sigma \in S(n-1)}\psi
_{j_{\nu }j_{\nu }}^{\nu }(\sigma ^{-1})\delta _{q\sigma (q)}\varphi
_{rr}^{\alpha }[(q \ n-1)\sigma (q \ n-1)],
\]%
where $\psi ^{\nu }$ are representations of the group $S(n-1)$ whose Young
diagrams are obtained from the Young diagram $\alpha $ of $\varphi ^{\alpha
} $ by adding, in a proper way, one box and $(\psi _{j_{\nu }j_{\nu }}^{\nu
}(\sigma ))$ is a matrix form of $\sigma \in S(n-1)$ in the representation $%
\psi ^{\nu }$, $E_{j_{\nu }j_{\nu }}^{\nu }$ is a
hermitian projector of rank one in the representation space $\Phi ^{\alpha
} $ defined by $\psi ^{\nu }$ (see Def. 79 in App. C) and
the double index $(q,r)$ is fixed and chosen in such a way that $%
N_{j_{\nu }}^{\nu }>0$, which according Th. 84 and Cor.85, is always
possible. The double index $(q,r)$ shows which non-zero column from the projector $\left(E_{j_{\nu}j_{\nu}}^{\nu}\right)_{kr}^{aq}$ we chose to construct the matrix $z(\alpha)^{a\nu}_{kj_{\nu}}$. We have also 
\[
\label{A1}
\sum_{ak}\sum_{bl}z^{\dagger}(\alpha )_{j_{\rho }k}^{\rho a}\Phi ^{\alpha }(\sigma
)_{kl}^{ab}z(\alpha )_{lj_{\mu }}^{b\mu }=\delta ^{\rho \mu }\psi _{j_{\rho
}j_{\mu }}^{\mu }(\sigma ).
\]%
In particular 
\be
\label{A2}
\sum_{ak}\sum_{bl}z^{\dagger}(\alpha )_{j_{\rho }k}^{\rho a}Q(\alpha
)_{kl}^{ab}z(\alpha )_{lj_{\mu }}^{b\mu }=\delta ^{\rho \mu }\delta
_{j_{\rho }j_{\mu }}\lambda _{\mu }
\ee
and the columns of the matrix $Z(\alpha )=(z(\alpha )_{kj_{\nu }}^{a\nu })$
are eigenvectors of the matrix $Q(\alpha ).$
\end{enumerate}
\end{theorem}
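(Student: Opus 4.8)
The plan is to reduce everything to the structural identity of Proposition~\ref{prop56}, which writes $Q(\alpha)$ as the image, in the induced representation $\Phi^\alpha=\ind_{S(n-2)}^{S(n-1)}(\varphi^\alpha)$, of the class sum $C=\sum_{1\le a<b\le n-1}(ab)\in\mathbb{C}[S(n-1)]$ shifted by a scalar:
\[
Q(\alpha)=\Phi^\alpha(C)+(d-\digamma)\Phi^\alpha(\id),\qquad \digamma=\tfrac{(n-2)(n-3)}{2}\tfrac{\chi^\alpha(12)}{\dim\varphi^\alpha}.
\]
Since $C$ is a class sum it is central in $\mathbb{C}[S(n-1)]$, so the whole problem becomes the diagonalisation of a central element in a completely reducible representation.

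For part a) I would first invoke Proposition~\ref{prop27} to write $\Phi^\alpha=\bigoplus_\nu\psi^\nu$, the sum running over the partitions $\nu$ of $n-1$ obtained from $\alpha$ by adding one box. By Schur's lemma in the form of Theorem~\ref{th57}, $C$ acts on each isotypic block $\psi^\nu$ as the scalar $\tfrac{(n-1)(n-2)}{2}\tfrac{\chi^\nu(12)}{\dim\psi^\nu}$ with multiplicity $\dim\psi^\nu$, so $\lambda_\nu$ equals this scalar plus $d-\digamma$, and both the multiplicity claim and the count of eigenvalues are immediate from Proposition~\ref{prop27}. The decisive step is the classical content-sum identity $\tfrac{m(m-1)}{2}\tfrac{\chi^\lambda(12)}{\dim\lambda}=\sum_{(i,j)\in Y^\lambda}(j-i)$ for $\lambda\vdash m$ (provable, for instance, by noting that $C$ is the sum of the Jucys--Murphy elements, which act on the Young basis by box contents). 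Applying it to $\nu\vdash n-1$ and to $\alpha\vdash n-2$, the difference of the two central-character terms collapses to the content $j-i$ of the single box by which $Y^\nu$ exceeds $Y^\alpha$; if that box lies in row $i$, its content is $\alpha_i+1-i$, so $\lambda_\nu=d+\alpha_i+1-i$. This one formula already contains both stated cases: case ii) is exactly the situation where the added box sits on the main diagonal ($i=r+1$, extending the Durfee square), where its content is zero and $\lambda_\nu=d$. That distinct addable boxes have pairwise distinct contents then guarantees the stated number of distinct eigenvalues.

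For part b) I would work with the transfer operators $E_{ij}^\nu=\tfrac{\dim\psi^\nu}{(n-1)!}\sum_{\sigma}\psi^\nu_{ji}(\sigma^{-1})\Phi^\alpha(\sigma)$ from Appendix~\ref{AppC}, using their algebra $E_{ij}^\nu E_{kl}^\mu=\delta^{\nu\mu}\delta_{jk}E_{il}^\nu$, their Hermiticity $(E_{ij}^\nu)^\dagger=E_{ji}^\nu$, and the intertwining rule $\Phi^\alpha(\sigma)E_{ij}^\nu=\sum_l\psi^\nu_{li}(\sigma)E_{lj}^\nu$. The proposed columns of $Z(\alpha)$ are the fixed $(q,r)$ columns of the Hermitian projectors $E_{j_\nu j_\nu}^\nu$, normalised by $1/\sqrt{N_{j_\nu}^\nu}$ with $N_{j_\nu}^\nu=(E_{j_\nu j_\nu}^\nu)_{rr}^{qq}$. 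Their orthonormality is immediate: the projector product rule together with Hermiticity collapses the inner product of the $(\nu,j_\nu)$ and $(\mu,j_\mu)$ columns to the fixed $(q,r)$ diagonal entry of $E_{j_\nu j_\nu}^\nu E_{j_\mu j_\mu}^\mu$, equal to $\delta^{\nu\mu}\delta_{j_\nu j_\mu}N_{j_\nu}^\nu$. A dimension count by Proposition~\ref{prop27} shows $Z(\alpha)$ is square of size $N=(n-1)\dim\varphi^\alpha$, hence unitary. The reduction formula~\eqref{A1} then follows by substituting the intertwining rule into $\sum z^\dagger\Phi^\alpha(\sigma)z$ and collapsing with the product rule, and its specialisation~\eqref{A2} is obtained by summing~\eqref{A1} over $\sigma=(ab)$ and $\sigma=\id$ as in Proposition~\ref{prop56} and using Theorem~\ref{th57} to replace $\sum_{a<b}\psi^\nu(ab)$ by the scalar found in part a), giving $\delta^{\rho\mu}\delta_{j_\rho j_\mu}\lambda_\mu$. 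Finally, unitarity together with~\eqref{A2} yields $Q(\alpha)Z(\alpha)=Z(\alpha)\,\mathrm{diag}(\lambda_\mu)$, so the columns of $Z(\alpha)$ are the eigenvectors of $Q(\alpha)$.

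The hardest and most essential point is the content-sum identity driving part a): recognising that the difference between the normalised transposition characters of $S(n-1)$ and $S(n-2)$ is exactly the content of the added box is what converts the abstract Schur scalar into the explicit eigenvalue $d+\alpha_i+1-i$ and makes the case split transparent. By contrast, the computations in part b), though longer, are routine once the transfer-operator relations of Appendix~\ref{AppC} are available; the only care needed there is the bookkeeping that ensures the reference index $(q,r)$ can always be chosen with $N_{j_\nu}^\nu>0$.
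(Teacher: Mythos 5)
Your part a) is correct, but it reaches the eigenvalue formula by a genuinely different route from the paper. Both arguments start the same way: Proposition~\ref{prop56} exhibits $Q(\alpha)$ as the image under $\Phi^{\alpha}=\ind_{S(n-2)}^{S(n-1)}(\varphi^{\alpha})$ of the transposition class sum shifted by $d-\digamma$, Proposition~\ref{prop27} splits $\Phi^{\alpha}$ into multiplicity-free blocks $\psi^{\nu}$, and Theorem~\ref{th57} turns the class sum into a scalar on each block; this is exactly the paper's Lemma~\ref{lem61}. From there the paper goes through Frobenius' formula (Theorem~\ref{th64}), expressing the transposition characters through the characteristic $(a_i;b_i)$ of the Young diagrams, then a three-case analysis (Lemma~\ref{lem65}, giving $d+b_i+1$, $d-(a_j+1)$, or $d$), and finally an unwritten ``simple analysis'' reassembling these into $d+\alpha_i+1-i$. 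You instead invoke the content-sum identity (class-sum eigenvalue $=\sum (j-i)$ over the diagram, via Jucys--Murphy elements), so the difference of the two central characters collapses at once to the content $\alpha_i+1-i$ of the added box; this unifies cases i) and ii) (a box added on the diagonal has content zero) and yields the count of distinct eigenvalues from the fact that addable corners have pairwise distinct contents. Your route is shorter and more conceptual; the paper's is more self-contained relative to what it cites, since Frobenius' formula is quoted from the literature whereas the Jucys--Murphy fact must be imported or proved.

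For part b), your plan --- columns of $Z(\alpha)$ as normalised $(q,r)$-columns of the rank-one projectors $E^{\nu}_{j_{\nu}j_{\nu}}$, orthonormality from the matrix-unit relations of Theorem~\ref{th81}, unitarity by a dimension count, the reduction formula by the intertwining rule, the diagonalisation formula from part a), eigenvectors from unitarity --- is precisely the calculation the paper gestures at (it offers only ``laborious but purely technical, using orthogonality relations''), and it is sound except at one point. The collapse you invoke for the reduction formula does not produce $\delta^{\rho\mu}\psi^{\mu}_{j_{\rho}j_{\mu}}(\sigma)$ on the nose: writing $E^{\mu}_{jj}=|v_{j}\rangle\langle v_{j}|$, the column $z^{(\mu,j)}=N_{j}^{-1/2}E^{\mu}_{jj}e_{qr}$ equals $v_{j}$ only up to the phase of $\langle v_{j}|e_{qr}\rangle$, so the product rule actually yields
\[
\delta^{\rho\mu}\,\psi^{\mu}_{j_{\rho}j_{\mu}}(\sigma)\,
\frac{\bigl(E^{\mu}_{j_{\rho}j_{\mu}}\bigr)_{qr,qr}}{\sqrt{N^{\mu}_{j_{\rho}}N^{\mu}_{j_{\mu}}}},
\]
and the last factor is a unimodular number that need not equal $1$; blockwise one obtains $D^{\dagger}\psi^{\mu}(\sigma)D$ for some diagonal phase matrix $D$. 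This costs nothing for the diagonalisation formula or for the eigenvector claim, since there the blocks are scalar multiples of the identity, but the reduction formula in its stated form cannot be proved for an arbitrary unitary matrix form of $\psi^{\nu}$: replacing $\psi^{\nu}$ by a rephased form $D^{\dagger}\psi^{\nu}D$ leaves the diagonal entries, hence the projectors $E^{\nu}_{j_{\nu}j_{\nu}}$, $N^{\nu}_{j_{\nu}}$ and $Z(\alpha)$, unchanged, while it changes the right-hand side. The clean repair is to make exactly that observation: adjust $\psi^{\nu}$ by the diagonal phase matrix built from the $\langle v_{j}|e_{qr}\rangle$ (legitimate, since the matrix units and $Z(\alpha)$ are insensitive to it), after which your computation closes. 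You should state this step explicitly; the imprecision is inherited from the theorem's own formulation, which fixes the matrix form of $\psi^{\nu}$ in advance, but a complete proof has to confront it.
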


\begin{remark}
\label{rem60}
The part a) of this theorem gives an explicit and remarkable simple
dependence or eigenvalues of the matrix $Q_{n-1}(\alpha )$ on the partition $%
\alpha =(\alpha _{1},...,\alpha _{k})$ which defines the the irreducible
representation $\varphi ^{\alpha }$ and consequently the matrix $%
Q_{n-1}(\alpha )$.
\end{remark}

The proof of this theorem uses for lemmas which are succeeding steps of this
proof. The first is the the following

\begin{lemma}
\label{lem61}
Let $\varphi ^{\alpha }$ any irreducible representation of the group $%
S(n-2), $ $\alpha $ its partition (or Young diagram), $\chi ^{\varphi
^{\alpha }}$ its character and let $\psi ^{\nu }$ be all irreducible
representations of the group $S(n-1)$ whose Young diagrams are obtained
from the Young diagram $\alpha $ of $\varphi ^{\alpha }$ by adding, in a
proper way, one box. By $\chi ^{\psi ^{\nu }}$ we denote their characters,
where $\nu $ is the partition of $n-1$ which labels the representation $\psi
^{\nu }$. Then the distinct eigenvalues of the matrix $Q(\alpha )$ generated
by the irreducible representation $\varphi ^{\alpha }$ of $S(n-2)$ are
labelled by the partitions $\nu $ and are of the form%
\[
\lambda _{\nu }=d+\frac{(n-1)(n-2)}{2}\frac{\chi ^{\psi ^{\nu }}(ab)}{\dim
\psi ^{\nu }}-\frac{(n-2)(n-3)}{2}\frac{\chi ^{\varphi ^{\alpha }}(ab)}{\dim
\varphi ^{\alpha }}, 
\]%
where $(ab),$ $a,b<n-2$ is an arbitrary transposition in $S(n-2),$ the
eigenvalue $\lambda _{\nu }$ has multiplicity $\dim \psi ^{\nu }.$
\end{lemma}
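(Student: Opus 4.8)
The plan is to reduce the eigenvalue computation to the evaluation of a central class sum on the irreducible components of the induced representation, taking Proposition~\ref{prop56} as the starting point. By that Proposition one may write
\[
Q(\alpha)=\sum_{1\le a<b\le n-1}\Phi^\alpha(ab)+(d-\digamma)\Phi^\alpha(\id),
\]
where $\Phi^\alpha=\ind_{S(n-2)}^{S(n-1)}(\varphi^\alpha)$ and $\digamma=\frac{(n-2)(n-3)}{2}\frac{\chi^\alpha(12)}{\dim\varphi^\alpha}$. The first term is exactly the image, under the extension of $\Phi^\alpha$ to the group algebra $\mathbb{C}[S(n-1)]$, of the class sum $C=\sum_{(ab)}(ab)$ of all transpositions of $S(n-1)$, since the transpositions of $S(n-1)$ are precisely the $(ab)$ with $1\le a<b\le n-1$; the second term is a scalar multiple of the identity operator on the representation space.

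Next I would exploit the fact that $C$ is a central element of $\mathbb{C}[S(n-1)]$, so that $\Phi^\alpha(C)$ commutes with every operator of the induced representation. By Proposition~\ref{prop27} the representation $\Phi^\alpha$ decomposes multiplicity-freely as $\bigoplus_\nu\psi^\nu$, where $\nu$ runs over the partitions of $n-1$ obtained from $\alpha$ by adding one box. On each irreducible block $\psi^\nu$, Schur's lemma forces $\psi^\nu(C)$ to be a scalar, and Theorem~\ref{th57} (applied with $m=n-1$) evaluates this scalar explicitly as $\frac{(n-1)(n-2)}{2}\frac{\chi^{\psi^\nu}(ab)}{\dim\psi^\nu}$. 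Hence, in a basis adapted to the decomposition $\bigoplus_\nu\psi^\nu$, the matrix $Q(\alpha)$ is block-diagonal and acts on the $\psi^\nu$-block as the scalar
\[
\lambda_\nu=d+\frac{(n-1)(n-2)}{2}\frac{\chi^{\psi^\nu}(ab)}{\dim\psi^\nu}-\frac{(n-2)(n-3)}{2}\frac{\chi^{\varphi^\alpha}(ab)}{\dim\varphi^\alpha},
\]
where I have used that $\chi^\alpha(12)=\chi^{\varphi^\alpha}(ab)$ because all transpositions of $S(n-2)$ are conjugate. This is exactly the claimed formula.

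Finally I would read off the multiplicities: since the branching in Proposition~\ref{prop27} is multiplicity-free, each $\psi^\nu$ contributes a single block of dimension $\dim\psi^\nu$ on which $Q(\alpha)$ equals $\lambda_\nu\,\id$, so $\lambda_\nu$ occurs with multiplicity $\dim\psi^\nu$ and the eigenvalues are indexed by the admissible partitions $\nu$. The only genuinely delicate point is the passage from ``$\Phi^\alpha(C)$ is central'' to ``$\Phi^\alpha(C)$ is scalar on each block'': this step requires the decomposition to be multiplicity-free, for otherwise a repeated isotypic component would allow $\Phi^\alpha(C)$ to act merely as an element of a nontrivial commutant rather than as a scalar. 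Proposition~\ref{prop27} supplies precisely this multiplicity-freeness of the branching $S(n-2)\subset S(n-1)$, so the argument closes, and the remaining steps are the routine bookkeeping already encapsulated in Theorem~\ref{th57}.
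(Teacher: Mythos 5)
Your proof is correct and follows essentially the same route as the paper's: both start from Prop.~\ref{prop56}, block-diagonalize $Q(\alpha)$ using the decomposition $\Phi^{\alpha}=\bigoplus_{\nu}\psi^{\nu}$ of Prop.~\ref{prop27}, and evaluate the sum of transpositions on each irreducible block via Th.~\ref{th57}. One small correction: the point you flag as delicate is not actually delicate, because $\Phi^{\alpha}(C)$ is the image of a central element of the group algebra and hence a linear combination of the operators $\Phi^{\alpha}(g)$, so it preserves every irreducible summand and acts on it by the Schur scalar regardless of multiplicities; multiplicity-freeness is needed only to conclude that the eigenvalue $\lambda_{\nu}$ has multiplicity exactly $\dim\psi^{\nu}$, not for the scalarity itself.
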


\begin{proof}
The induced representation $\Phi ^{\alpha }$ of the group $S(n-1)$ is
reducible, in fact we have~\cite{Fulton}
\[
\Phi ^{\alpha }=\bigoplus _{\lambda }\psi ^{\lambda }, 
\]%
where $\psi ^{\lambda }$ are irreducible representations of the group $%
S(n-1) $ whose Young diagrams are obtained from the Young diagram $\alpha $
of $\varphi ^{\alpha }$ by adding, in a proper way, one box. This means
that one can transform the matrix $Q_{n-1}(\alpha )$ to the reduced block
diagonal $Q_{n-1}^{R}(\alpha )$ form by a similarity transformation
generated by a block matrix $\ Z=(z_{ij_{\lambda }}^{a\lambda }),$ $%
j_{\lambda }=1,..,\dim \psi ^{\lambda }$ such that 
\[
Q_{n-1}^{R}(\alpha )=Z^{-1}Q_{n-1}(\alpha )Z=\bigoplus _{\lambda }(\sum_{1\leq
a<b\leq n-1}\psi ^{\lambda }(ab))+(d-\digamma )\text{\noindent\(\mathds{1}\)}, 
\]%
where the representations $\psi ^{\lambda }$ of $S(n-1)$ are irreducible so
from the Theorem we get 
\[
\sum_{1\leq a<b\leq n-1}\psi ^{\lambda }(ab)=\frac{(n-1)((n-2)}{2}\frac{\chi
^{\psi ^{\lambda }}(12)}{\dim \psi ^{\lambda }}\text{\noindent\(\mathds{1}\)}_{\psi ^{\lambda }} 
\]%
and it is clear that the similarity transformation simply diagonalizes the
matrix $Q_{n-1}(\alpha )$ giving all eigenvalues with their multiplicities.
The matrix $Q(\alpha )$, as a block matrix of an induced representation, has
in natural way, double indices $a,j:a=1,...,n-1,\quad j=1,...,\dim \varphi
^{\alpha }$, where $\varphi ^{\alpha }$ is an irreducible representation of
the group $S(n-2).$ After the reduction of the induced representation $\Phi
^{\alpha }$ of the group $S(n-1)$ to the form%
\[
\Phi ^{\alpha }=\bigoplus _{\lambda }\psi ^{\lambda }, 
\]%
where $\psi ^{\lambda }$ are irreducible representations of the group $%
S(n-1),$ the matrix $Q(\alpha )$ is transformed to the reduced form $%
Q_{n-1}^{R}(\alpha )$ which has indices naturally related to its block
diagonal structure%
\[
Q_{n-1}^{R}(\alpha )=((Q_{n-1}^{R})_{j\nu j_{\mu }}^{\nu \mu }(\alpha )) 
\]%
and 
\[
(Q_{n-1}^{R})_{j\nu j_{\mu }}^{\nu \mu }(\alpha
)=\sum_{ai,bk}(z^{-1})_{j_{\nu }i}^{\nu a}Q_{jk}^{ab}(\alpha )z_{kj_{\mu
}}^{b\mu } 
\]
\end{proof}

The formula for the eigenvalue of the matrix $Q_{n-1}(\alpha )$, given in
the Lemma~\ref{lem61}, is not entirely analytical because there is no analytical
formula for irreducible characters of the group $S(n).$ For arbitrary $n\in 
\mathbb{N}
.$ However for a given $n-2\in 
\mathbb{N}
$ and given partition $\alpha $ of $n-2$ and can always calculate the value
of the corresponding character on the class of transpositions as well the
value of the irreducible character $\chi ^{\psi ^{\nu }}$ on the class of
transpositions.

The formula for the eigenvalues of the matrix $Q_{n-1}(\alpha )$ derived in
the Lemma~\ref{lem61} express the eigenvalues by characters if irreducible
representations of $S(n-2)$ and $S(n-1).$ The next step necessary for the
proof of the Th.~\ref{th58}, is to derive the formula for the eigenvalues of the
matrix $Q_{n-1}(\alpha )$ which allows to write down (practically without
calculations) these eigenvalues using some characteristic of Young diagram
of the irreducible representation $\varphi ^{\alpha }$ of the group $S(n-2).$
For this we will need the following

\begin{definition}~\cite{Fulton}
\label{def62}
Let $\alpha =(\alpha _{1},...,\alpha _{k})$ be a partition of $n-2$ and \ $%
Y^{\alpha }$ corresponding Young diagram. We define the rank $r=r(\alpha )$
(or $r($ $Y^{\alpha })$ ) of the partition $\alpha $ is the length of
diagonal of its Young digram. Let $a_{i}$, $b_{i}$, $i=1,...,r$ be
respectively, the number of the boxes below and to the right of the $i-$th
box in the diagonal, reading from the upper left to the lower right. So we
have 
\[
a_{1}>a_{2}>..>a_{r},\quad b_{1}>b_{2}>..>b_{r}, 
\]%
and 
\[
\left( 
\begin{array}{cccccc}
a_{1} & a_{2} & . & . & . & a_{r} \\ 
b_{1} & b_{2} & . & . & . & b_{r}%
\end{array}%
\right) 
\]%
is called the characteristic of the partition $\alpha $ or equivalently of
its Young diagram.
\end{definition}

\begin{example}
\label{ex63}
$\alpha =(4,2,2)$ so 
\[
Y^{\alpha }=%
\begin{array}{cccc}
\bullet & \bullet & \bullet & \bullet \\ 
\bullet & \bullet &  &  \\ 
\bullet & \bullet &  & 
\end{array}%
\]%
then the rank is $r=2$ and the characteristic is 
\[
\left( 
\begin{array}{cc}
2 & 1 \\ 
3 & 0%
\end{array}%
\right) 
\]
\end{example}

The important point in the proof of the part a) of the main theorem is the
following result proved by Frobenius~\cite{Fulton}.

\begin{theorem}
\label{th64}
Let $\varphi ^{\alpha }$ the irreducible representation of the group $S(m)$
indexed by the partition $\alpha $ or rank $r$ and characteristic 
\[
\left( 
\begin{array}{cccccc}
a_{1} & a_{2} & . & . & . & a_{r} \\ 
b_{1} & b_{2} & . & . & . & b_{r}%
\end{array}%
\right) 
\]%
then the value of its character on the class of transpositions is give by
very simple formula 
\[
\chi ^{\alpha }(12)=\frac{\dim V^{\alpha }}{m(m-1)}%
\sum_{i=1}^{r}(b_{i}(b_{i}+1)-a_{i}(a_{i}+1)). 
\]
\end{theorem}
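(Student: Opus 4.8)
The plan is to compute $\chi^{\alpha}(12)$ indirectly, by letting the \emph{class sum of all transpositions} act on the irreducible module $V^{\alpha}$, identifying the resulting central scalar with the sum of the \emph{contents} of the boxes of the Young diagram $Y^{\alpha}$, and then evaluating that sum in closed form from the characteristic $(a_i,b_i)$ of Def.~\ref{def62}. Throughout I write $c(i,j)=j-i$ for the content of the box in row $i$ and column $j$.

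First I set $K=\sum_{1\le a<b\le m}(ab)\in\mathbb{C}[S(m)]$. Since $K$ is a class sum it is central, so by Schur's Lemma (exactly as in Th.~\ref{th57}) it acts on $V^{\alpha}$ as a scalar $\kappa_{\alpha}$, and taking traces gives $\kappa_{\alpha}\dim V^{\alpha}=\tfrac{m(m-1)}{2}\chi^{\alpha}(12)$, that is $\kappa_{\alpha}=\tfrac{m(m-1)}{2}\,\chi^{\alpha}(12)/\dim V^{\alpha}$. The substantive step is to evaluate $\kappa_{\alpha}$ a second, combinatorial way. For this I would use the Jucys--Murphy elements $X_b=\sum_{a<b}(ab)$, $b=2,\dots,m$, which satisfy $K=\sum_{b=2}^{m}X_b$. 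In the Young seminormal basis of $V^{\alpha}$, indexed by standard tableaux $T$ of shape $\alpha$, each $X_b$ acts diagonally by $X_b v_T=c_T(b)\,v_T$, where $c_T(b)$ is the content of the box of $T$ containing the entry $b$. Summing over $b$ makes $K$ act on every $v_T$ by the \emph{same} number $\sum_{(i,j)\in Y^{\alpha}}c(i,j)$, whence $\kappa_{\alpha}=\sum_{(i,j)\in Y^{\alpha}}(j-i)$.

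It then remains to evaluate this content sum. I decompose $Y^{\alpha}$ into its $r$ principal hooks, the $i$-th one consisting of the diagonal box $(i,i)$ (content $0$), the $b_i$ boxes to its right (contents $1,\dots,b_i$) and the $a_i$ boxes below it (contents $-1,\dots,-a_i$). The content sum over the $i$-th hook is therefore $\tfrac{b_i(b_i+1)}{2}-\tfrac{a_i(a_i+1)}{2}$, and since the principal hooks partition the diagram,
\[
\sum_{(i,j)\in Y^{\alpha}}(j-i)=\frac{1}{2}\sum_{i=1}^{r}\bigl(b_i(b_i+1)-a_i(a_i+1)\bigr).
\]
Equating this with $\kappa_{\alpha}=\tfrac{m(m-1)}{2}\chi^{\alpha}(12)/\dim V^{\alpha}$ and solving for $\chi^{\alpha}(12)$ yields precisely the stated formula.

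The main obstacle is the second evaluation of $\kappa_{\alpha}$: the identity $\kappa_{\alpha}=\sum_{\text{boxes}}(j-i)$ is the real input and rests on the Jucys--Murphy eigenvalue theorem (equivalently, on the explicit Young seminormal form). If one prefers to avoid that machinery, the same scalar can be obtained inductively through the branching rule $V^{\alpha}\!\downarrow_{S(m-1)}=\bigoplus_{\mu}V^{\mu}$ (over $\mu$ obtained by removing one box) together with $K=K_{m-1}+X_m$: the element $X_m=K-K_{m-1}$ commutes with $S(m-1)$, hence acts on each component $V^{\mu}$ by a scalar equal to the content of the box $\alpha/\mu$, giving $\kappa_{\alpha}=\kappa_{\mu}+c(\alpha/\mu)$ and a recursion in $m$ whose solution is the content sum. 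The remaining principal-hook bookkeeping with the characteristic $(a_i,b_i)$ is then routine.
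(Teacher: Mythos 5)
Your proof is correct, but it takes a genuinely different route from the paper: the paper does not prove Theorem~\ref{th64} at all, it simply quotes it as a classical result of Frobenius with a citation to Fulton--Harris, where the formula is extracted from the Frobenius character formula by symmetric-function methods. Your argument replaces that citation with a self-contained representation-theoretic computation: the class sum of transpositions acts on $V^{\alpha}$ by the scalar $\tfrac{m(m-1)}{2}\chi^{\alpha}(12)/\dim V^{\alpha}$ (the same Schur-lemma step the paper uses in Th.~\ref{th57}), this scalar is identified with the content sum $\sum_{(i,j)\in Y^{\alpha}}(j-i)$ via the Jucys--Murphy eigenvalue theorem, and the content sum is evaluated hook by hook, which is exactly where the characteristic of Def.~\ref{def62} enters; your sign conventions match the paper's ($a_i$ counts boxes below the diagonal, giving contents $-1,\dots,-a_i$, and $b_i$ boxes to the right, giving $1,\dots,b_i$), so the final identity comes out exactly as stated. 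The trade-off is clear: your route needs Young's seminormal form (equivalently the Jucys--Murphy theorem) as external input, while the textbook route needs the full Frobenius character formula; both are standard, but yours explains structurally why contents govern this character value, which is also the mechanism behind the eigenvalue formulas of Th.~\ref{th58} and Lemma~\ref{lem65} that Theorem~\ref{th64} feeds into. One caveat on your closing remark: the proposed "alternative avoiding that machinery" does not actually avoid it --- the branching rule plus centrality only tells you that $X_m=K-K_{m-1}$ acts on the component $V^{\mu}\subset V^{\alpha}$ by the scalar $\kappa_{\alpha}-\kappa_{\mu}$; the further claim that this scalar equals the content $c(\alpha/\mu)$ is precisely the Jucys--Murphy eigenvalue statement for the top JM element, so as written that recursion assumes what it needs to prove. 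This does not affect your main argument, which stands on the JM theorem properly cited.
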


A substitution of this formula into the formula for eigenvalues of the
matrix $Q_{n-1}(\alpha )$ derived in the Lemma~\ref{lem61} gives us

\begin{lemma}
\label{lem65}
Let $\varphi ^{\alpha }$ any irreducible representation of the group $%
S(n-2), $ $\alpha =(\alpha _{1},...,\alpha _{k})$ its partition of rank $%
r=r(\alpha ) $ with characteristic 
\[
\left( 
\begin{array}{cccccc}
a_{1} & a_{2} & . & . & . & a_{r} \\ 
b_{1} & b_{2} & . & . & . & b_{r}%
\end{array}%
\right) 
\]%
and the Young diagram $Y^{\alpha }$. Suppose that for some $i$ the sequence $%
\nu =(\alpha _{1},..\alpha _{i}+1.,\alpha _{k})$ is a partition of $n-1,$ so
it defines an irreducible representation $\psi ^{\nu }$ of the group $S(n-1)$
and for the Young diagrams it means that the Young diagram $Y^{\nu }$ \ is
obtained from the Young diagram $Y^{\alpha }$ by adding, in the $i-$th row ,
one box. Then the corresponding matrix $Q_{n-1}(\alpha )$ has the following
eigenvalue%
\begin{enumerate}[a)]
\item 
\[
\lambda _{\nu }=d+b_{i}+1\quad if\quad i\leq r, 
\]%
\item
\[
\lambda _{\nu }=d-(a_{j}+1):j=\lambda _{i}+1\quad if\quad i>r\quad
and\quad r(Y^{\alpha })=r(Y^{\nu }), 
\]%
\item
\[
\lambda _{\nu }=d\quad if\quad i=r+1\quad and\quad r(Y^{\alpha
})+1=r(Y^{\nu }). 
\]%
\end{enumerate}
The case $c)$ describes the situation when adding, in a proper way one box
to Young diagram $Y^{\alpha }$ we extend its diagonal. The multiplicity of
the eigenvalue $\lambda _{\nu }$ is equal to $dim\psi ^{\nu }$ and the
number of pairwise distinct eigenvalues of the matrix $Q_{n-1}(\alpha )$ is
equal to Young diagrams $Y^{\nu }$ that one can obtain from the Young
diagram $Y^{\alpha }$ by adding, in a proper way, one box.
\end{lemma}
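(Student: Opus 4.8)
The plan is to derive the three eigenvalue formulas by substituting the Frobenius character identity (Th.~\ref{th64}) into the expression already obtained in Lemma~\ref{lem61}, and then to track how the \emph{characteristic} of the Young diagram changes when a single box is added. I would begin from
\[
\lambda _{\nu }=d+\frac{(n-1)(n-2)}{2}\frac{\chi ^{\psi ^{\nu }}(12)}{\dim \psi ^{\nu }}-\frac{(n-2)(n-3)}{2}\frac{\chi ^{\varphi ^{\alpha }}(12)}{\dim \varphi ^{\alpha }}.
\]
Applying Th.~\ref{th64} to $\varphi ^{\alpha }$ (an irrep of $S(n-2)$, so $m=n-2$) and to $\psi ^{\nu }$ (an irrep of $S(n-1)$, so $m=n-1$), the prefactors $\tfrac{(n-2)(n-3)}{2}$ and $\tfrac{(n-1)(n-2)}{2}$ cancel exactly the binomial denominators produced by Frobenius's formula. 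Writing $(\hat{a}_{i},\hat{b}_{i})$ for the characteristic of $\nu$, this collapses the eigenvalue to
\[
\lambda _{\nu }=d+\tfrac{1}{2}\sum_{i}\big(\hat{b}_{i}(\hat{b}_{i}+1)-\hat{a}_{i}(\hat{a}_{i}+1)\big)-\tfrac{1}{2}\sum_{i}\big(b_{i}(b_{i}+1)-a_{i}(a_{i}+1)\big),
\]
so the entire computation reduces to the difference of the two characteristic sums.

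The heart of the argument is the combinatorial bookkeeping, which I would organise according to where the added box sits relative to the diagonal. If $i\le r$, the box lies strictly to the right of the $i$-th diagonal box, so the characteristic changes only by $b_{i}\mapsto b_{i}+1$, the rank and all other entries being untouched; the sum difference is $(b_{i}+1)(b_{i}+2)-b_{i}(b_{i}+1)=2(b_{i}+1)$, yielding $\lambda _{\nu }=d+b_{i}+1$. If $i>r$ and the rank is unchanged, the box falls strictly below the diagonal in the column $j=\alpha _{i}+1$, which carries the $j$-th diagonal box, so only $a_{j}\mapsto a_{j}+1$; here the difference is $-\big((a_{j}+1)(a_{j}+2)-a_{j}(a_{j}+1)\big)=-2(a_{j}+1)$, yielding $\lambda _{\nu }=d-(a_{j}+1)$. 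Finally, if $i=r+1$ and the rank grows by one, the box occupies the new diagonal corner $(r+1,r+1)$, whose arm and leg lengths are both $0$, so it contributes nothing and $\lambda _{\nu }=d$.

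The point requiring the most care is ensuring that these three cases are exhaustive and mutually exclusive, which hinges on the requirement that $\nu$ be a genuine partition. In particular, for $i>r$ I must rule out the a priori possibility that the box lands below the diagonal in a non-diagonal column $c$ with $r<c<i$: the partition constraint $\alpha _{i-1}\ge \alpha _{i}+1$ would then force $\alpha _{r+1}\ge r+1$, putting a box at $(r+1,r+1)$ and contradicting $r(\alpha )=r$. This shows that for $i>r$ the new box sits either in a diagonal column $j=\alpha _{i}+1\le r$ (leg box, rank preserved, case~b) or at the corner $c=i=r+1$ (diagonal, rank increased, case~c), so the affected characteristic entry is unambiguous. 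Once this is settled, the three evaluations follow from the elementary algebra above, while the multiplicity $\dim \psi ^{\nu }$ of each $\lambda _{\nu }$ and the count of distinct eigenvalues transfer verbatim from Lemma~\ref{lem61}, since substituting an identity for the characters alters neither which $\psi ^{\nu }$ occur nor their dimensions.
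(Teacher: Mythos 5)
Your proof takes essentially the same route as the paper: the paper obtains this lemma precisely by substituting the Frobenius formula of Th.~\ref{th64} into the character expression of Lemma~\ref{lem61}, leaving the case-by-case bookkeeping of how the characteristic changes implicit. Your explicit verification of the three cases --- including ruling out a box landing below the diagonal in a non-diagonal column, and reading case b) with $j=\alpha_i+1$ (the statement's $j=\lambda_i+1$ is a typo) --- is correct and supplies exactly the details the paper omits.
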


Thus derivation of the eigenvalues of the matrix $Q_{n-1}(\alpha )$ using
this Lemma need not any calculations. It is enough to look what Young
diagrams $Y^{\nu }$ one can obtain from the Young diagram $Y^{\alpha }$ by
adding, in a proper way, one box and then the place of this extra box in $%
Y^{\nu }$ determines what number from the characteristic of the Young
diagram $Y^{\alpha }$ one should add to $d\pm 1$ or to $d$ in order to get
an eigenvalue of the matrix $Q_{n-1}(\alpha ).$

The last step of the proof of the part a) of the Th.~\ref{th58} is a simple analysis
of dependence between the numbers of the partition $\alpha =(\alpha
_{1},...,\alpha _{k})$ of the irreducible representation $\varphi ^{\alpha }$
on the numbers of its characteristic $\left( 
\begin{array}{cccccc}
a_{1} & a_{2} & . & . & . & a_{r} \\ 
b_{1} & b_{2} & . & . & . & b_{r}%
\end{array}%
\right) ,$ which leads to the remarkable simply formula for the eigenvalues
of the matrix $Q_{n-1}(\alpha )$ given in the part $a)$ of the main theorem
in this Appendix.

The proof of the part $b)$ of the main theorem consists of a rather
laborious but purely technical calculation where, in order to show the
statements of the the theorem, one uses the orthogonality relations for
irreducible representations of the symmetric group which are of the form%
\[
\frac{1}{m!}\sum_{\sigma \in S(m)}\varphi _{ij}^{\alpha }(\sigma
^{-1})\varphi _{kl}^{\beta }(\sigma )=\frac{1}{\dim \varphi ^{\alpha }}%
\delta ^{\alpha \beta }\delta _{il}\delta _{jk}, 
\]%
where $\varphi ^{\alpha }$ and $\varphi _{kl}^{\beta }$ are irreducible
representations of the symmetric group $S(m).$

The explicit formulas for the eigenvalues and eigenvectors of the matrix $%
Q_{n-1}(\alpha )$ \ derived in the Th.~\ref{th58} leads to several conclusions.

For any Young diagram $Y^{\alpha }$ we may always add one box at the end of
first line as well below the last line of $Y^{\alpha },$ therefore from the
main theorem one deduce

\begin{corollary}
\label{col66}
Under assumptions of the theorem~\ref{th58} we have: for $\alpha =(\alpha
_{1},...,\alpha _{k})$ the numbers 
\[
\lambda _{\nu }=d+\alpha _{1},\qquad \lambda _{\nu }=d-k 
\]%
are always the eigenvalues of the matrix $Q_{n-1}(\alpha )$ and 
\[
d=k\Rightarrow \lambda _{k}=0. 
\]
\end{corollary}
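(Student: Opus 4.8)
The plan is to read off the two claimed eigenvalues as special instances of the eigenvalue formula in Theorem~\ref{th58} a), using the fact that every Young diagram $Y^{\alpha }$ admits two canonical box-additions that are available no matter what $\alpha $ is. First I would observe that $\nu =(\alpha _{1}+1,\alpha _{2},\dots ,\alpha _{k})$ and $\nu =(\alpha _{1},\dots ,\alpha _{k},1)$ are always partitions of $n-1$ obtained from $\alpha $ by adding one box---to the end of the first row, respectively below the last row---so by Proposition~\ref{prop27} the corresponding irreducible representations $\psi ^{\nu }$ genuinely occur in $\ind_{S(n-2)}^{S(n-1)}(\varphi ^{\alpha })$, and hence, by Theorem~\ref{th58} a), each contributes an eigenvalue of the matrix $Q_{n-1}(\alpha )$.

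For the first addition I am adding to row $i=1$. I would check that this does not lengthen the diagonal: the new box sits in column $\alpha _{1}+1\geq 2$ of the first row, which lies strictly to the right of the diagonal, so $r(Y^{\nu })=r(Y^{\alpha })$ and we are in case i) of Theorem~\ref{th58} a). The formula $\lambda _{\nu }=d+\alpha _{i}+1-i$ with $i=1$ then gives $\lambda _{\nu }=d+\alpha _{1}$, the first asserted eigenvalue.

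For the second addition I am adding a box in row $i=k+1$ with the convention $\alpha _{k+1}=0$. Here the new box sits at row $k+1$ and column $1$, hence strictly below the diagonal whenever $k\geq 1$ (it would lie on the diagonal only if $k+1=1$), so again $r(Y^{\nu })=r(Y^{\alpha })$ and case i) applies. Substituting $i=k+1$ and $\alpha _{k+1}=0$ into $\lambda _{\nu }=d+\alpha _{i}+1-i$ yields $\lambda _{\nu }=d-k$, the second asserted eigenvalue; setting $d=k$ makes this eigenvalue vanish, which is exactly the final claim $\lambda _{k}=0$.

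The only point requiring care---the ``main obstacle,'' though it is a mild one---is ensuring that these two box-additions land in case i) rather than case ii) of Theorem~\ref{th58} a), i.e.\ that neither extends the diagonal of $Y^{\alpha }$. Both checks reduce to the elementary observation that a box added in the first row (column $\geq 2$) or in the first column (row $\geq 2$) can never occupy a diagonal cell, so $r(Y^{\nu })=r(Y^{\alpha })$ in both cases and the exceptional value $\lambda _{\nu }=d$ of case ii) never interferes. With this verified, the corollary is an immediate specialization of the main theorem.
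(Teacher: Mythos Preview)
Your proposal is correct and follows exactly the same approach as the paper, which simply notes that one can always add a box at the end of the first row or below the last row of $Y^{\alpha}$ and then invokes Theorem~\ref{th58}. Your only addition is the explicit verification that neither of these two additions lengthens the diagonal (so that case i) rather than case ii) applies), a detail the paper leaves implicit.
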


The existence of zero eigenvalues of the matrices $Q_{n-1}(\alpha )$ is
essential for the inversibility of these matrices. The following conclusion
describes when the zero eigenvalues in matrices $Q_{n-1}(\alpha )$ do
appear.

\begin{corollary}
\label{col67}
The matrix $Q_{n-1}(\alpha )$ has an eigenvalue equal to zero iff \ for some 
$i$ the sequence $\nu =(\alpha _{1},..\alpha _{i}+1,..,\alpha _{k})$ is a
partition of $n-1,$ so it defines an irreducible representation $\psi ^{\nu
} $ of the group $S(n-1)$ and 
\[
d=i-\alpha _{i}-1. 
\]%
The multiplicity of this zero eigenvalue is equal to $\dim \psi ^{\nu }$ and
such a zero eigenvalue may appear only for one irreducible representation $%
\psi ^{\nu }.$
\end{corollary}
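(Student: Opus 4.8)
The plan is to read everything off the eigenvalue formula already established in Th.~\ref{th58} a). Recall that the eigenvalues of $Q_{n-1}(\alpha)$ are indexed by the Young diagrams $Y^\nu$ obtainable from $Y^\alpha$ by adding one box, that is, by those $i$ for which $\nu=(\alpha_1,\dots,\alpha_i+1,\dots,\alpha_k)$ is a partition of $n-1$ (defining $\psi^\nu$), and that they take one of two shapes: $\lambda_\nu=d+\alpha_i+1-i$ when the added box does not lengthen the diagonal ($r(Y^\alpha)=r(Y^\nu)$), and $\lambda_\nu=d$ when it does ($r(Y^\alpha)+1=r(Y^\nu)$). The multiplicity of $\lambda_\nu$ is $\dim\psi^\nu$ in each case.

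First I would dispose of the diagonal-extending case: since $d\in\mathbb{N}$ is the Hilbert-space dimension we have $d>0$, so an eigenvalue of the form $\lambda_\nu=d$ can never vanish. Hence a zero eigenvalue can only arise from the first formula, and $\lambda_\nu=d+\alpha_i+1-i=0$ is equivalent to $d=i-\alpha_i-1$. This gives the stated characterization in both directions: if $Q_{n-1}(\alpha)$ has a zero eigenvalue then $d=i-\alpha_i-1$ for some admissible $i$ (one for which $\nu$ is a partition of $n-1$), and conversely if $d=i-\alpha_i-1$ for such an $i$ then Th.~\ref{th58} a) produces the eigenvalue $\lambda_\nu=0$. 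The multiplicity statement $\dim\psi^\nu$ is then immediate from the multiplicity clause of Th.~\ref{th58} a) (it is the same special case recorded in Cor.~\ref{col66}, where $i=k+1$, $\alpha_{k+1}=0$ and $d=k$ forces $\lambda_k=0$).

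The only remaining point is uniqueness, namely that at most one representation $\psi^\nu$ can yield the zero eigenvalue. For this I would observe that the map $i\mapsto\alpha_i+1-i$ is strictly decreasing over $i=1,\dots,k+1$ (with the convention $\alpha_{k+1}=0$). Indeed, since $\alpha$ is a partition we have $\alpha_i\ge\alpha_{i+1}$, whence $\alpha_i-i\ge\alpha_{i+1}-i>\alpha_{i+1}-(i+1)$, so the quantities $\alpha_i+1-i$ are pairwise distinct. Consequently the equation $d=i-\alpha_i-1$, i.e. $\alpha_i+1-i=-d$, can hold for at most one index $i$, and therefore the zero eigenvalue, if present, is attached to a single partition $\nu$ and a single irreducible representation $\psi^\nu$ of $S(n-1)$. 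This monotonicity remark is the entire content of the uniqueness claim; I do not anticipate a genuine obstacle here, since all the analytic work was carried out in Th.~\ref{th58}, and the corollary amounts to setting $\lambda_\nu=0$, discarding the case $\lambda_\nu=d\neq 0$, and noting that the admissible eigenvalue labels have distinct ``contents'' $\alpha_i+1-i$.
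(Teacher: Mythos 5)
Your proposal is correct and takes essentially the same route as the paper: Cor.~\ref{col67} is read off directly from the eigenvalue formula of Th.~\ref{th58}, discarding the diagonal-extending case because $\lambda_{\nu}=d\geq 1$ cannot vanish, solving $d+\alpha_{i}+1-i=0$ for the remaining case, and inheriting the multiplicity $\dim \psi ^{\nu }$ from the multiplicity clause of Th.~\ref{th58}. Your explicit monotonicity argument (the strict decrease of $i\mapsto \alpha _{i}+1-i$, so that $d=i-\alpha _{i}-1$ has at most one solution) is precisely the content of the clause in Th.~\ref{th58} asserting that distinct added boxes yield pairwise distinct eigenvalues, which the paper invokes implicitly for the uniqueness claim.
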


Directly from the formula for the eigenvalues of the matrix $Q_{n-1}(\alpha
),$ given in Th.59 one gets also

\begin{corollary}
\label{68}
If $d>n-2$ then the matrix $Q(\alpha )$ is (strictly) positive i.e. $%
Q(\alpha )>0$ and consequently if $d>n-2$ the matrix $Q(\alpha )$ is
invertible for any irreducible representation $\varphi ^{\alpha }.$
\end{corollary}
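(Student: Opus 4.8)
The plan is to read the eigenvalues of $Q(\alpha)$ directly off Theorem~\ref{th58} a) and verify that, under the hypothesis $d>n-2$, they are all strictly positive. Since $Q(\alpha)$ is Hermitian (Prop.~\ref{prop55}), strict positivity of all eigenvalues is equivalent to $Q(\alpha)>0$, and it in particular forces $\det Q(\alpha)\neq 0$, i.e. invertibility. So the whole statement reduces to an elementary estimate on the explicit eigenvalue formula.

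First I would recall from Theorem~\ref{th58} a) that every eigenvalue $\lambda_\nu$ of $Q(\alpha)$ arises by adding one box to the Young diagram $Y^\alpha$ of the partition $\alpha=(\alpha_1,\dots,\alpha_k)$ of $n-2$, and has one of two forms: either $\lambda_\nu=d$ (case ii), when the extra box lengthens the diagonal), or $\lambda_\nu=d+\alpha_i+1-i$ (case i)) for some row index $i\in\{1,\dots,k+1\}$, with the convention $\alpha_{k+1}=0$. Eigenvalues of the first type are immediately harmless: since $d>n-2\ge 0$ we have $\lambda_\nu=d>0$.

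The core of the argument is to bound the eigenvalues of the second type from below. Writing $g(i)=\alpha_i+1-i$ so that $\lambda_\nu=d+g(i)$, I would show that $g$ is strictly decreasing over the admissible range. Indeed, because $\alpha$ is a partition we have $\alpha_{i+1}\le\alpha_i$, so $g(i+1)-g(i)=\alpha_{i+1}-\alpha_i-1\le-1<0$. Hence the minimum of $g$ is attained at the largest index $i=k+1$, where $\alpha_{k+1}=0$ gives $g(k+1)=-k$. Thus the smallest eigenvalue of the second type is precisely $d-k$, in agreement with Cor.~\ref{col66}.

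It then remains only to relate $k$ to $n$. Since $\alpha$ is a partition of $n-2$ into $k=h(\alpha)$ nonzero parts, each part at least $1$, we have $k\le n-2$. Therefore every eigenvalue of the second type satisfies $\lambda_\nu\ge d-k\ge d-(n-2)>0$ under the hypothesis $d>n-2$. Combining this with the positivity $\lambda_\nu=d>0$ for the first type, all eigenvalues of the Hermitian matrix $Q(\alpha)$ are strictly positive, whence $Q(\alpha)>0$ and in particular $Q(\alpha)$ is invertible for every irreducible representation $\varphi^\alpha$, as claimed. I do not foresee any genuine obstacle: once Theorem~\ref{th58} a) is in hand the result is a one-line estimate, the only point deserving a moment's care being the verification that $g(i)$ attains its minimum at the row below the diagram ($i=k+1$) rather than at some interior row.
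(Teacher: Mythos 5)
Your proof is correct and follows exactly the paper's route: the paper justifies Cor.~\ref{68} simply by saying it follows ``directly from the formula for the eigenvalues'' in Th.~\ref{th58}, and your estimate (every eigenvalue is either $d$ or $d+\alpha_i+1-i\ge d-k\ge d-(n-2)>0$, plus hermiticity from Prop.~\ref{prop55}) is precisely the omitted verification, consistent with Cor.~\ref{col66}. No gaps; the only implicit point worth keeping in mind is that the admissible rows $i$ form a subset of $\{1,\dots,k+1\}$, so bounding $g$ over the full range suffices, as you note.
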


\bigskip An alternative proof of this statement is given in~\cite{Studzinski1}.

Using the formula of this theorem one can easily calculate the eigenvalues
for particular matrices $Q_{n-1}(\alpha )$ .

The simplest cases are the following

\begin{example}
\label{ex69}
When the irreducible representation $\varphi ^{\alpha }$ of the group $%
S(n-2) $ is the identity representation then from the theorem we get 
\[
\lambda _{0}=d+n-2,\qquad \lambda _{1}=d-1, 
\]%
with the multiplicities $1$ and $n-2$ respectively.
\end{example}

Similarly

\begin{example}
\label{ex70}
When the irreducible representation $\varphi ^{\alpha }$ of the group $
S(n-2)$ is the $\sgn$ representation then from the theorem we get
\end{example}

\[
\lambda _{0}=d-(n-2),\qquad \lambda _{1}=d+1, 
\]%
with the multiplicities $1$ and $n-2$ respectively.

These results one can obtain independently, by a direct calculation of
eigenvalues of the matrices $Q(\alpha )$ using standard procedures, which
are easily applicable in these simplest cases. The next examples are less
simple

\begin{example}
\label{ex71}
Let $\alpha =(2,1,...,1)$ be a partition of $n-2$. So $r(\alpha )=1$. Now by
adding to the Young diagram $Y^{\alpha }$ one box one can obtain three young
diagrams $Y^{\nu }$ describing the irreducible representations of the group $%
S(n-1).$

\begin{enumerate}[I)]
\item $\nu =(3,1,1,...,1)$ \ and this the case $a)$ $i)$ in the main theorem,
so the eigenvalue takes the form%
\[
\lambda _{\nu }=d+2 
\]%
\item $\nu =(2,2,1,...,1)$ so and this is the case $a)$ $ii)$, where the
diagonal of the Young diagram $Y^{\alpha }$ is extended therefore 
\[
\lambda _{\nu }=d 
\]%
\item $\nu =(2,1,1,...,1)$ then we have to deal again with the case $a)$ $i)$
of the theorem so we get 
\[
\lambda _{\nu }=d-n+3. 
\]
\end{enumerate}
\end{example}

\begin{example}
\label{ex72}
Let $\alpha =(n-3,1,...,0)$ be a partition of $n-2$. So $r(\alpha )=1$ and
by adding to the Young diagram $Y^{\alpha }$ one box one obtain three Young
diagrams $Y^{\nu }$ describing the irreducible representations of the group $%
S(n-1).$

\begin{enumerate}[I)]
\item $\nu =(n-2,1,0,...,0)$ \ and this the case $a)$ $i)$ in the main theorem,
so the eigenvalue takes the form%
\[
\lambda _{\nu }=d+n-3 
\]%
\item $\nu =(n-3,2,0,...,0)$ and this is the case $a)$ $ii)$, where the
diagonal of the Young diagram $Y^{\alpha }$ is extended therefore%
\[
\lambda _{\nu }=d 
\]%
\item $\nu =(n-3,1,1,...,0)$ now it is the case $a)$ $i)$ in the Theorem so
we get 
\[
\lambda _{\nu }=d-2. 
\]
\end{enumerate}
\end{example}

\section{}
\label{AppB}

In this Appendix we will consider the properties of the following algebra

\begin{definition}
\label{def73}
Let $A=(a_{ij})\in M(m,%
\mathbb{C}
)$, then the algebra $X_{A}$ is defined as 
\[
X_{A}=\Span_{%
\mathbb{C}
}\{x_{ij}:i,j=1,...,m\}, 
\]%
where 
\[
x_{ij}x_{kl}=a_{jk}x_{il}. 
\]%
and we do not assume that the elements $\{x_{ij}\}$ are linearly
independent, thus the algebra $X_{A}$ ic a complex finite-dimensional
algebra of the dimension at most $m^{2}.$
\end{definition}

Obviously the properties of the algebra $X_{A}$ depends on the properties of
the matrix $A.$ In fact we have

\begin{theorem}
\label{th74}
Suppose that the matrix $A$ \ in the algebra $X_{A}$ is invertible then we
have two possibilities

a) $X_{A}=\{0\}$ e.i the algebra $X_{A}$ a zero algebra,

b) If $X_{A}\neq \{0\}$ then algebra $X_{A}$ is isomorphic to the matrix
algebra $M(m,%
\mathbb{C}
)$ and the elements $\{x_{ij}:i,j=1,...,m\}$ are linearly independent, in
particular $x_{ij}\neq 0:i,j=1,...,m$, and form the basis of $X_{A}.$ and in
this case the unit of the algebra $X_{A}$ is of the form%
\[
\text{\noindent\(\mathds{1}\)}=\sum_{i,j=1,..,m}(a_{ij}^{-1})x_{ij}, 
\]%
where $A^{-1}=(a_{ij}^{-1}).$
\end{theorem}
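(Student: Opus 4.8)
The plan is to recognize $X_{A}$, after an $A^{-1}$-twist of its generators, as the linear span of a family of matrix units, and then invoke the classification of algebras spanned by matrix units already recalled in Ex.~\ref{ex6}. First I would verify the claimed unit. Set $e=\sum_{i,j}a_{ij}^{-1}x_{ij}$, where $A^{-1}=(a_{ij}^{-1})$. Combining the defining relation $x_{ij}x_{kl}=a_{jk}x_{il}$ with the orthogonality relations $\sum_{j}a_{ij}^{-1}a_{jk}=\delta_{ik}$ and $\sum_{i}a_{qi}a_{ij}^{-1}=\delta_{qj}$, a one-line computation gives $e\,x_{pq}=x_{pq}$ and $x_{pq}\,e=x_{pq}$ for all $p,q$. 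Since the $x_{ij}$ span $X_{A}$, this makes $e$ a two-sided unit; in particular $e\neq 0$ exactly when $X_{A}\neq\{0\}$.

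Next I would pass to matrix units by defining
\[
\xi_{ij}=\sum_{k}a_{jk}^{-1}x_{ik},\qquad i,j=1,\dots,m.
\]
Using the same orthogonality relations one checks the two key identities
\[
\xi_{ij}\xi_{kl}=\delta_{jk}\xi_{il},\qquad x_{ij}=\sum_{k}a_{jk}\,\xi_{ik},
\]
the second one valid because $AA^{-1}=\mathds{1}$. The second identity shows that $\{\xi_{ij}\}$ and $\{x_{ij}\}$ have the same linear span, so $X_{A}=\Span_{\mathbb{C}}\{\xi_{ij}\}$ and the whole multiplication is now governed by the matrix-unit relation $\xi_{ij}\xi_{kl}=\delta_{jk}\xi_{il}$.

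Then comes the dichotomy. From the special cases $\xi_{ij}=\xi_{ip}\xi_{pj}$ and $\xi_{ip}=\xi_{ip}\xi_{pp}$ of the matrix-unit relation, one sees that if a single $\xi_{pq}$ vanishes then $\xi_{pp}=\xi_{pq}\xi_{qp}=0$, whence $\xi_{ip}=\xi_{ip}\xi_{pp}=0$ for every $i$, and then $\xi_{ij}=\xi_{ip}\xi_{pj}=0$ for all $i,j$; by the spanning property this is precisely case a), $X_{A}=\{0\}$. Otherwise every $\xi_{ij}\neq 0$, and linear independence follows from a sandwiching argument: applying $\xi_{pi}(\,\cdot\,)\xi_{jq}$ to a relation $\sum_{a,b}c_{ab}\xi_{ab}=0$ collapses it to $c_{ij}\xi_{pq}=0$, which forces $c_{ij}=0$ since $\xi_{pq}\neq 0$. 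Hence the $m^{2}$ elements $\xi_{ij}$ form a basis obeying the matrix-unit relations, so $\xi_{ij}\mapsto e_{ij}$ is an isomorphism $X_{A}\simeq M(m,\mathbb{C})$ by Ex.~\ref{ex6}; and since $x_{ij}=\sum_{k}a_{jk}\xi_{ik}$ with $A$ invertible, the $x_{ij}$ are nonzero and likewise a basis. Finally $\sum_{i}\xi_{ii}=\sum_{i,k}a_{ik}^{-1}x_{ik}=e$ recovers the claimed form of the unit.

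The main obstacle is really the initial guess of the correctly twisted generators $\xi_{ij}=\sum_{k}a_{jk}^{-1}x_{ik}$ that linearize the relation into matrix-unit form; once they are in hand, every remaining step is a short manipulation of $x_{ij}x_{kl}=a_{jk}x_{il}$ together with $\sum_{k}a_{jk}^{-1}a_{kl}=\delta_{jl}$, and the conclusion is the standard structure theory of algebras spanned by matrix units.
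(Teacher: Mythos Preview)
Your proof is correct and follows essentially the same route as the paper: both arguments pass from the generators $x_{ij}$ to a family of matrix units by an $A^{-1}$-twist, then read off the dichotomy and the isomorphism with $M(m,\mathbb{C})$. The only cosmetic differences are that the paper twists on the first index, setting $y_{kj}=\sum_{i}a_{ik}^{-1}x_{ij}$, and runs the ``one zero forces all zero'' argument directly on the $x_{ij}$ before changing basis; you twist on the second index and argue the vanishing at the level of the $\xi_{ij}$, and you add an explicit sandwiching argument for linear independence that the paper leaves implicit.
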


\begin{proof}
\bigskip Suppose that $x_{ij}=0$ for some indices $i,j=1,...,m$. Then from
multiplication law for the algebra $X_{A}$ we get%
\[
\forall k,l=1,...,m,\quad x_{ij}x_{kl}=a_{jk}x_{il}=0 
\]
and because the matrix $A=(a_{ij})$ is invertible (so it has no zero columns
or zero rows) then we get 
\[
x_{ij}=0\Rightarrow \forall k,l=1,...,m\quad x_{ik}=x_{li}=0 
\]%
and consequently $\forall k,l=1,...,m\quad x_{kl}=0.$ If $X_{A}\neq \{0\}$
then defining new basis 
\[
y_{kj}=\sum_{i=1,..,m}(a_{ik}^{-1})x_{ij} 
\]
we get 
\[
y_{ij}y_{kl}=\delta _{jk}y_{il}. 
\]
\end{proof}

So we see that the that the invertibility of the matrix $A$ strongly
determines the properties of the vectors $\{x_{ij}:i,j=1,...,m\}$ which span
the algebra $X_{A}.$ We have also

\begin{theorem}
\label{th75}
Suppose the the vectors $\{x_{ij}:i,j=1,...,m\}$ which span the algebra $%
X_{A}$ are linearly independent and $\det (A)=0$ then there exist in the
algebra $X_{A}$ a nonzero properly nilpotent element and consequently the
algebra $X_{A}$ is not semisimple.
\end{theorem}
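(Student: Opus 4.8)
The plan is to exhibit an explicit nonzero properly nilpotent element of $X_A$ produced directly from the singularity of $A$, and then to invoke Definition~\ref{df2}, by which a semisimple algebra contains no nonzero properly nilpotent element. Since $\det A=0$, the rows of $A$ are linearly dependent, so there is a nonzero left null vector $u=(u_1,\dots,u_m)$, i.e. $\sum_i u_i a_{ik}=0$ for every $k$. Fixing any index $j$, I would set
\[
n=\sum_{i=1}^m u_i\,x_{ji}.
\]
Because the $\{x_{ij}\}$ are assumed linearly independent and $u\neq0$, this element is nonzero; the task is to verify that it is properly nilpotent.

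First I would show that $n$ annihilates $X_A$ on the right. Using the multiplication rule $x_{ij}x_{kl}=a_{jk}x_{il}$ one computes, for any basis element $x_{kl}$,
\[
n\,x_{kl}=\sum_i u_i\, x_{ji}x_{kl}=\sum_i u_i a_{ik}\, x_{jl}=\Big(\sum_i u_i a_{ik}\Big)x_{jl}=0,
\]
so $nx=0$ for every $x\in X_A$, and in particular every product $nx$ is trivially nilpotent. For the products $xn$ I would introduce the auxiliary vectors $m_k=\sum_i u_i x_{ki}$ (so that $m_j=n$) and note that $x_{kl}\,n=a_{lj}\,m_k$. The decisive identity is that these auxiliary vectors multiply to zero:
\[
m_k m_{k'}=\sum_{i,i'}u_i u_{i'}\,x_{ki}x_{k'i'}=\sum_{i,i'}u_i u_{i'}a_{ik'}\,x_{ki'}=\sum_{i'}u_{i'}\Big(\sum_i u_i a_{ik'}\Big)x_{ki'}=0,
\]
again by the null-vector property. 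Consequently, for an arbitrary $x=\sum_{k,l}c_{kl}x_{kl}$ one obtains $xn=\sum_k\big(\sum_l c_{kl}a_{lj}\big)m_k$, a linear combination of the $m_k$, whence $(xn)^2=0$.

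Thus both $nx$ and $xn$ are nilpotent for every $x\in X_A$, so $n$ is a nonzero properly nilpotent element, and by Definition~\ref{df2} the algebra $X_A$ is not semisimple. The only genuinely nontrivial point, and the step I would treat most carefully, is establishing nilpotency of $xn$ for \emph{all} $x$ rather than merely for basis elements; this is handled uniformly by the relation $m_k m_{k'}=0$, which forces every such product to have square zero. I would finally remark that the linear independence hypothesis enters only to guarantee $n\neq0$: without it the same formula could collapse to the zero element, and no contradiction with semisimplicity would arise.
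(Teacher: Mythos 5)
Your proof is correct and takes essentially the same route as the paper: the paper picks a right null vector $u$ (with $Au=0$) and forms $w_{l}=\sum_{k}u_{k}x_{kl}$, which satisfies $x\,w_{l}=0$ for all $x\in X_{A}$, while you work with the transposed data, a left null vector and the element $n=\sum_{i}u_{i}x_{ji}$ satisfying $n\,x=0$ for all $x$; in both cases linear independence of the $x_{ij}$ guarantees the element is nonzero. The only substantive difference is that you verify nilpotency of the products $xn$ explicitly, via the auxiliary elements $m_{k}$ and the identity $m_{k}m_{k'}=0$, whereas the paper concludes proper nilpotency from the one-sided annihilation alone, implicitly invoking the remark in Definition~\ref{df2} that nilpotency of $xa$ for all $x$ already forces nilpotency of $ax$ --- so your version is marginally more self-contained but not a different argument.
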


\begin{proof}
From the assumption \bigskip $\det (A)=0$ it follows that there exist s
nonzero vector $u=(u_{1},u_{2},....,u_{m})\in 
\mathbb{C}
^{m}$ such that 
\[
Au=0\Leftrightarrow \sum_{k=1,..,m}(a_{ik})u_{k}=0:i=1,..,m. 
\]%
Consider now, for arbitrary $l=1,...,m$, an element $w_{l}=%
\sum_{k=1,..,m}u_{k}x_{kl}\in X_{A}$ which from the first assumption is
nonzero. From the multiplication law of the algebra $X_{A}$ we get%
\[
x_{ij}w_{l}=\sum_{k=1,..,m}u_{k}x_{ij}x_{kl}=%
\sum_{k=1,..,m}u_{k}a_{jk}x_{il}=0\quad \forall i,j=1,...,m. 
\]%
which means that the nonzero elements $w_{l}$ are properly nilpotent and
therefore from the Def.~\ref{df2} the algebra $X_{A}$ is not semisimple.
\end{proof}

From this theorem it follows

\begin{corollary}
\label{col76}
If the algebra $X_{A}$ is semisimple and $\det (A)=0,$ then the vectors $%
\{x_{ij}:i,j=1,...,m\}$ which span the algebra $X_{A}$ are linearly
dependent.
\end{corollary}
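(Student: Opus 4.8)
The plan is to obtain this statement as the immediate logical contrapositive of Theorem~\ref{th75}, so essentially no new work is required beyond invoking that result. Theorem~\ref{th75} asserts the implication: if the spanning vectors $\{x_{ij}\}$ are linearly \emph{independent} and $\det(A)=0$, then $X_A$ is \emph{not} semisimple. The corollary to be proved has exactly the matching hypotheses $\det(A)=0$ together with the negated conclusion (semisimplicity), and asks for the negated premise (linear dependence), which is precisely the contrapositive form.

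First I would argue by contradiction. Assume that $X_A$ is semisimple and that $\det(A)=0$, but suppose toward a contradiction that the spanning set $\{x_{ij}:i,j=1,\dots,m\}$ is linearly independent. Then both hypotheses of Theorem~\ref{th75} are satisfied, namely linear independence of $\{x_{ij}\}$ and vanishing of $\det(A)$. Applying that theorem produces a nonzero properly nilpotent element of $X_A$; concretely, one chooses a nonzero vector $u$ in the kernel of $A$ and forms $w_l=\sum_{k}u_k x_{kl}$, which is nonzero by the linear independence assumption and which satisfies $x_{ij}w_l=0$ for all $i,j$, hence is properly nilpotent in the sense of Definition~\ref{df2}.

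By the definition of semisimplicity (Def.~\ref{df2}), the existence of a nonzero properly nilpotent element shows that $X_A$ is not semisimple, contradicting our standing assumption that it is. Therefore the supposition of linear independence is untenable, and the vectors $\{x_{ij}:i,j=1,\dots,m\}$ must be linearly dependent, as claimed.

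I do not anticipate any genuine obstacle here: all the substantive content—constructing the nilpotent element out of a kernel vector of $A$ and verifying that it annihilates the whole spanning set—has already been carried out in the proof of Theorem~\ref{th75}. The only point to state carefully is that linear independence is exactly what guarantees $w_l\neq 0$, so that the conclusion of Theorem~\ref{th75} genuinely exhibits a \emph{nonzero} properly nilpotent element and thereby obstructs semisimplicity; the corollary then follows formally by contraposition.
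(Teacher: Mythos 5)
Your proposal is correct and is exactly the paper's own route: the paper derives Corollary~\ref{col76} from Theorem~\ref{th75} with the single phrase ``From this theorem it follows,'' i.e.\ as the immediate contrapositive. Your careful restatement of why the kernel vector construction yields a nonzero properly nilpotent element (and hence contradicts Definition~\ref{df2}) adds no new content but fills in the same logical step the paper leaves implicit.
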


In this case the question naturally arises, how to reduce the set of
linearly dependent vectors $\{x_{ij}:i,j=1,...,m\}$ to the set of linearly
independent ones. If the algebra $X_{A}$ is semisimple then we have the
following method of constructing basis of $X_{A}.$

\begin{theorem}
\label{th77}
Let the algebra $X_{A}$ be a semisimple algebra such that 
\[
X_{A}=\Span_{%
\mathbb{C}
}\{x_{ij}:i,j=1,...,m\}, 
\]%
where 
\[
x_{ij}x_{kl}=a_{jk}x_{il}. 
\]%
and the matrix $A=(a_{ij})$ is diagonalizable i.e.%
\[
Z^{-1}AZ=diag(\lambda _{1}\neq 0,...,\lambda _{p}\neq
0,0,..,0)\Leftrightarrow \sum_{jk}z_{ij}^{-1}a_{jk}z_{kl}=\lambda _{i}\delta
_{il},\quad Z\in M(m,%
\mathbb{C}
). 
\]%
So if $p=m,$ then the matrix $A$ is invertible. Define a new elements of the
algebra $X_{A}$ 
\[
y_{sr}=\sum_{jk}z_{rj}^{-1}x_{ij}z_{is}. 
\]%
These elements have the following properties 
\[
y_{sr}=y_{rs}=0\qquad \forall s=1,...,m,\quad r>p, 
\]%
the remaining non-zero vectors $\{y_{ij}:i,j=1,...,p\}$ form the basis of
the algebra $X_{A},$which we call a reduced basis, and they satisfy the
following multiplication rule 
\[
y_{ij}y_{kl}=\lambda _{j}\delta _{jl}y_{il},\quad i,j,k,l=1,...,p. 
\]%
A simple rescaling of the basis vectors $\{y_{ij}:i,j=1,...,p\}$ of the form%
\[
y_{ij}\rightarrow f_{ij}=\frac{1}{\sqrt{\lambda _{i}\lambda _{j}}}y_{ij} 
\]%
gives a new basis of the algebra $X_{A}$, which satisfies the matrix
multiplication rule 
\[
f_{ij}f_{kl}=\delta _{jl}f_{il},\quad i,j,k,l=1,...,p 
\]%
and this proves that the algebra $X_{A}$ is isomorphic with the matrix
algebra $M(p,%
\mathbb{C}
).$
\end{theorem}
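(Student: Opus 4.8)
The plan is to treat Theorem~\ref{th77} as a direct computation with the diagonalizing matrix $Z$, followed by an appeal to semisimplicity to discard the directions attached to the zero eigenvalues. Reading the definition as the double sum $y_{sr}=\sum_{i,j}z_{rj}^{-1}x_{ij}z_{is}$, I would first evaluate the product of two such elements straight from the defining relation $x_{ij}x_{kl}=a_{jk}x_{il}$. Substituting the two sums and collecting the inner contraction produces $\sum_{j,k}z_{rj}^{-1}a_{jk}z_{kt}=(Z^{-1}AZ)_{rt}=\lambda_{r}\delta_{rt}$, while the remaining factors reassemble into another generator. Concretely one obtains
\[
y_{sr}y_{tu}=\lambda_{r}\delta_{rt}\,y_{su},
\]
which, in the index letters of the statement, is $y_{ij}y_{kl}=\lambda_{j}\delta_{jk}y_{il}$. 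After the rescaling $f_{ij}=(\lambda_i\lambda_j)^{-1/2}y_{ij}$ (legitimate because the $\lambda$'s with indices $\le p$ are nonzero and admit a complex square root) this collapses to the matrix-unit rule $f_{ij}f_{kl}=\delta_{jk}f_{il}$.

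Next I would establish the vanishing of the rows and columns indexed by the zero eigenvalues. Since $Z$ is invertible, the full family $\{y_{sr}\}$ spans $X_{A}$ exactly as $\{x_{ij}\}$ does. If $r>p$ then $\lambda_{r}=0$, so the product formula gives $y_{sr}y_{tu}=0$ for every $t,u$, hence $y_{sr}\,X_{A}=0$; this forces both $w\,y_{sr}$ and $y_{sr}\,w$ to be nilpotent for all $w\in X_{A}$, so $y_{sr}$ is properly nilpotent and semisimplicity (Def.~\ref{df2}, no nonzero properly nilpotent element) yields $y_{sr}=0$. The symmetric computation $y_{tu}y_{rs}=\lambda_{u}\delta_{ur}y_{ts}=0$ (again $\lambda_{r}=0$) shows $X_{A}\,y_{rs}=0$, making $y_{rs}$ properly nilpotent and hence zero. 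Thus $y_{sr}=y_{rs}=0$ whenever $r>p$.

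It then remains to see that the surviving elements $\{y_{ij}:i,j\le p\}$ form a basis. Spanning is immediate by inverting the linear change of generators: writing the defining relation in matrix form gives $x_{ij}=\sum_{s,r}z_{si}^{-1}y_{sr}z_{jr}$, and since every $y_{sr}$ with $s>p$ or $r>p$ has just been shown to vanish, each $x_{ij}$ lies in $\Span_{\mathbb{C}}\{y_{sr}:s,r\le p\}$. For independence I would pass to the $f_{ij}$ and use the matrix-unit relations: from $f_{ij}f_{ji}=f_{ii}$ the idempotents $f_{11},\dots,f_{pp}$ are either all zero (the degenerate case $X_{A}=\{0\}$) or all nonzero, in which case every $f_{ij}\neq0$; and sandwiching any relation $\sum_{i,j}c_{ij}f_{ij}=0$ between $f_{ab}$ and $f_{cd}$ extracts $c_{bc}f_{ad}=0$, forcing all $c_{ij}=0$. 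Hence the $p^{2}$ elements $f_{ij}$ are linearly independent matrix units and $X_{A}\cong M(p,\mathbb{C})$.

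The main obstacle I anticipate is the vanishing step. The product computation by itself only shows that a generator indexed by a zero eigenvalue annihilates $X_{A}$ on one side, and it is precisely the hypothesis of semisimplicity, through the proper-nilpotency criterion of Def.~\ref{df2}, that upgrades such a one-sided annihilator to the zero element; without it the $y_{sr}$ with $r>p$ need not disappear. The remaining ingredients are routine: careful index bookkeeping in the product, invertibility of $Z$ for spanning, and the standard matrix-unit argument for linear independence and the isomorphism with $M(p,\mathbb{C})$.
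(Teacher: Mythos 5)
Your proof is correct; note that the paper offers no proof of Theorem~\ref{th77} at all---Appendix~\ref{AppB} ends with the bare statement---so your argument fills a genuine gap rather than paralleling an existing one. Two remarks. First, your computation $y_{sr}y_{tu}=\lambda_{r}\delta_{rt}y_{su}$ is the correct one, and it tacitly repairs typos in the statement: the printed rules $y_{ij}y_{kl}=\lambda_{j}\delta_{jl}y_{il}$ and $f_{ij}f_{kl}=\delta_{jl}f_{il}$ must read $\delta_{jk}$ (inner-index contraction), which is exactly the form in which the theorem is invoked in Section~\ref{construction1} for the elements defined in Eq.~\eqref{eq4}; likewise your reading of the mistyped summation in the definition of $y_{sr}$ is the intended one. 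Second, your key step---that for $r>p$ the element $y_{sr}$ annihilates $X_{A}$ on one side, is therefore properly nilpotent in the sense of Def.~\ref{df2}, and hence vanishes by the semisimplicity hypothesis---is precisely the device of the paper's Theorem~\ref{th75} run in reverse: there a kernel vector of $A$ produces nonzero properly nilpotent elements to refute semisimplicity, while here semisimplicity kills the elements built from the kernel directions. Combined with the inversion $x_{ij}=\sum_{s,r}z_{si}^{-1}y_{sr}z_{jr}$ for spanning and the standard matrix-unit sandwich argument for linear independence, this yields a complete proof in the spirit of the paper's own Theorems~\ref{th74} and~\ref{th75}; your parenthetical exclusion of the degenerate case $X_{A}=\{0\}$ (where the conclusion $X_{A}\simeq M(p,\mathbb{C})$ would fail) is a caveat the paper's statement itself overlooks, consistent with the dichotomy already present in Theorem~\ref{th74}.
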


\section{}
\label{AppC}

\bigskip In this Appendix we describe the method of reducing the algebra
generated by operators representing a finite group in a given
representation, to the direct sum of matrix algebras.

\begin{notation}
\label{not78}
Let $G$ be a finite group of order $\left\vert G\right\vert =n$ which has $r$
classes of conjugated elements. Then $G$ has exactly $r$ inequivalent,
irreducible representations, in particular $G$ has exactly $r$ inequivalent,
irreducible matrix representations. Let 
\[
\varphi ^{\alpha }:G\rightarrow \Hom(V^{\alpha }),\qquad \alpha
=1,2,....,r,\qquad \dim V^{\alpha }=w^{\alpha } 
\]%
be all inequivalent, irreducible representations of $G$ and let chose these
representations to be all unitary (always possible) i.e. 
\[
\varphi ^{\alpha }(g)=(\varphi _{ij}^{\alpha }(g)),\qquad
i,j=1,2,....,w^{\alpha },\qquad (\varphi _{ij}^{\alpha }(g))^{\dagger}=(\varphi
_{ij}^{\alpha }(g))^{-1} 
\]%
and $V^{\alpha }$ are corresponding representation spaces.
\end{notation}

The matrix elements $\varphi _{ij}^{\alpha }(g)$ will play a crucial role in
the following.

Any complex finite-dimensional representation $D:G\rightarrow \Hom(V)$ of the
finite group $G$, where $V$ is a complex linear space ($\dim V=w)$,
generates a algebra $A_{V}[G] \subset \Hom(V)$ which homomorphic to the
group algebra $%
\mathbb{C}
\lbrack G]$ and in particular isomorphic if the representation $D$ is
faithful. Obviously 
\[
A_{V}[G]=\Span_{%
\mathbb{C}
}\{D(g),\quad g\in G\}. 
\]%
We have also the following decomposition of the representation $%
D:G\rightarrow \Hom(V)$ 
\[
D=\bigoplus _{\alpha =1}^{r}k_{\alpha }\varphi ^{\alpha },\quad V=\bigoplus
_{\alpha =1}^{r}k_{\alpha }V^{\alpha },\quad k_{\alpha }\in 
\mathbb{N}
\cup \{0\}, 
\]%
where $k_{\alpha }$ is the multiplicity of the irreducible representation $%
D^{\alpha }$ in $D.$If the operators $D(g)$ are linearly independent, then
they form a basis of the algebra $A_{V}[G]$ and in this case $\dim
A_{V}[G]=\left\vert G\right\vert $. It is also possible, using matrix
irreducible representations, to construct a new basis which has a remarkable
properties, very useful in applications of representation theory. Below we
define the new basis \ and we use it \ in the study of the linear
independence of the group operators $D(g),$ $g\in G$.

\begin{definition}
\label{def79}
Let $D:G\rightarrow \Hom(V)$ be an unitary representation of a finite group $%
G $ and let $\varphi ^{\alpha }:G\rightarrow \Hom(V^{\alpha })$ be all
inequivalent, irreducible representations of $G$ . Define a matrix operators
in the following way 
\[
E_{ij}^{\alpha }=\frac{w^{\alpha }}{n}\sum_{g\in G}\varphi _{ji}^{\alpha
}(g^{-1})D(g),\quad \alpha =1,2,...,r,\quad i,j=1,2,..,w^{\alpha },\quad
E_{ij}^{\alpha }\in A_{V}[G]\subset \Hom(V).\qquad 
\]
\end{definition}

\begin{remark}
\label{rem80}
\bigskip In this definition we do not assume that the operators $D(g),$ $%
g\in G$ are linearly independent.
\end{remark}

The matrix operators have noticeable properties listed in the

\begin{theorem}
\label{th81}
\begin{enumerate}[I)]
\item There are exactly $\left\vert G\right\vert =n$ \ \ operators $%
E_{ij}^{\alpha }$ but in general they need not to be distinct and we have 
\[
D(g)=\sum_{ij\alpha }\varphi _{ij}^{\alpha }(g)E_{ij}^{\alpha } 
\]%
\ \ \qquad \qquad\ \ \item the operators $E_{ij}^{\alpha }$ are orthogonal
with respect to the Hilbert-Schmidt scalar product in the space $\Hom(V).$ 
\[
(E_{ij}^{\alpha },E_{kl}^{\beta })=\tr((E_{ij}^{\alpha })^{\dagger}E_{kl}^{\beta
})=k_{\alpha }\delta ^{\alpha \beta }\delta _{ik}\delta _{jl},
\]%
where $k_{\alpha }$ \ is the multiplicity of the irreducible representation $%
\varphi ^{\alpha }$ in $D$ and it does not depend on$i,j=1,2,....,w^{\alpha
}.$

\item the operators $E_{ij}^{\alpha }$ satisfy the following composition rule%
\[
E_{ij}^{\alpha }E_{kl}^{\beta }=\delta ^{\alpha \beta }\delta
_{jk}E_{il}^{\alpha }, 
\]%
\ \ \ \ in particular $E_{ii}^{\alpha }$ are orthogonal projections.

\item \label{IV}If the irreducible representation $\varphi ^{\alpha }$ is included in
the representation $D$, then for a fixed $j=1,2,....,w^{\alpha }$ the
operators $E_{ij}^{\alpha }:i=1,2,....,w^{\alpha }$ are the basis of the
representation $\varphi ^{\alpha }$ i.e. we have 
\[
D(h)E_{ij}^{\alpha }=\sum_{k=1}^{w^{\alpha }}\varphi _{kj}^{\alpha
}(h)E_{kj}^{\alpha }. 
\]
\end{enumerate}
\end{theorem}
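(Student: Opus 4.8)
The plan is to reduce all four statements to the Schur orthogonality relations for the irreducible matrix representations of $G$, namely
\[
\frac{1}{n}\sum_{g\in G}\varphi_{ij}^{\alpha}(g^{-1})\varphi_{kl}^{\beta}(g)=\frac{1}{w^{\alpha}}\delta^{\alpha\beta}\delta_{il}\delta_{jk},
\]
together with the standard completeness relation $\sum_{\alpha}w^{\alpha}\chi^{\alpha}(g)=n\,\delta_{g,e}$ (the character of the regular representation) and the count $\sum_{\alpha}(w^{\alpha})^{2}=n$. Each part then follows by substituting the definition of $E_{ij}^{\alpha}$ and performing a single change of summation variable in the group sum.

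I would prove III) first, since it underlies the rest. Expanding $E_{ij}^{\alpha}E_{kl}^{\beta}$, using $D(g)D(h)=D(gh)$ and substituting $s=gh$ converts the double sum over $G\times G$ into a sum over $s$ whose inner $g$-sum has the form $\sum_{g}\varphi_{ji}^{\alpha}(g^{-1})\varphi_{mk}^{\beta}(g)$; the orthogonality relation collapses this to $\tfrac{n}{w^{\alpha}}\delta^{\alpha\beta}\delta_{jk}\delta_{im}$, and resumming one recognises $E_{il}^{\alpha}$, giving $E_{ij}^{\alpha}E_{kl}^{\beta}=\delta^{\alpha\beta}\delta_{jk}E_{il}^{\alpha}$, whence the $E_{ii}^{\alpha}$ are idempotents. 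For II) I would first record the adjoint relation $(E_{ij}^{\alpha})^{\dagger}=E_{ji}^{\alpha}$, which follows from unitarity of $D$ (so $D(g)^{\dagger}=D(g^{-1})$) and of $\varphi^{\alpha}$ (so $\overline{\varphi_{ji}^{\alpha}(g^{-1})}=\varphi_{ij}^{\alpha}(g)$). Then $\tr((E_{ij}^{\alpha})^{\dagger}E_{kl}^{\beta})=\tr(E_{ji}^{\alpha}E_{kl}^{\beta})=\delta^{\alpha\beta}\delta_{ik}\tr(E_{jl}^{\alpha})$ by III), and a short computation of $\tr E_{jl}^{\alpha}=\tfrac{w^{\alpha}}{n}\sum_{g}\varphi_{lj}^{\alpha}(g^{-1})\chi^{D}(g)$ with $\chi^{D}=\sum_{\beta}k_{\beta}\chi^{\beta}$ and the orthogonality relation yields $\tr E_{jl}^{\alpha}=k_{\alpha}\delta_{jl}$, which is exactly the Hilbert–Schmidt orthogonality asserted in II).

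Finally, for I) I would verify the inversion formula by inserting the definition into $\sum_{\alpha,i,j}\varphi_{ij}^{\alpha}(g)E_{ij}^{\alpha}$, interchanging the sums, and recognising the inner sum $\sum_{ij}\varphi_{ij}^{\alpha}(g)\varphi_{ji}^{\alpha}(h^{-1})=\chi^{\alpha}(gh^{-1})$; the completeness relation $\sum_{\alpha}w^{\alpha}\chi^{\alpha}(gh^{-1})=n\,\delta_{g,h}$ then collapses the $h$-sum to $D(g)$. The count ``exactly $n$ operators'' is $\sum_{\alpha}(w^{\alpha})^{2}=n$, and they may coincide or vanish precisely when the $D(g)$ are linearly dependent (some $k_{\alpha}=0$). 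For IV), assuming $\varphi^{\alpha}\subset D$ (so $k_{\alpha}>0$ and hence, by II), $E_{ij}^{\alpha}\neq 0$), I would compute $D(h)E_{ij}^{\alpha}=\tfrac{w^{\alpha}}{n}\sum_{g}\varphi_{ji}^{\alpha}(g^{-1})D(hg)$, substitute $s=hg$, expand $\varphi_{ji}^{\alpha}(s^{-1}h)=\sum_{k}\varphi_{jk}^{\alpha}(s^{-1})\varphi_{ki}^{\alpha}(h)$ by the homomorphism property, and resum to obtain $D(h)E_{ij}^{\alpha}=\sum_{k}\varphi_{ki}^{\alpha}(h)E_{kj}^{\alpha}$, which is precisely the transformation law of IV): for fixed $j$ the family $\{E_{ij}^{\alpha}\}_{i}$ carries $\varphi^{\alpha}$, and the matrix-unit relations of III) make these vectors linearly independent, hence a basis. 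There is no genuine conceptual obstacle here; the content is a repackaging of Schur orthogonality into the language of matrix units, and the only real care needed throughout is the index bookkeeping and the consistent use of unitarity to pass between $\varphi_{ji}^{\alpha}(g^{-1})$ and $\overline{\varphi_{ij}^{\alpha}(g)}$.
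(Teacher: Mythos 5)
Your proposal is correct, but there is nothing in the paper to compare it against: Theorem~\ref{th81} is stated in Appendix~\ref{AppC} without proof, as a standard fact about the matrix operators (matrix units) $E_{ij}^{\alpha}$ attached to a completely reducible representation. Your argument is the canonical one that this fact rests on, and all four reductions to Schur orthogonality go through: the form of the orthogonality relation you invoke is exactly the one the paper itself records at the end of Appendix~\ref{AppA}, the substitution $s=gh$ in III) collapses correctly to $\tfrac{n}{w^{\alpha}}\delta^{\alpha\beta}\delta_{jk}\delta_{im}$, the adjoint relation $(E_{ij}^{\alpha})^{\dagger}=E_{ji}^{\alpha}$ plus $\tr E_{jl}^{\alpha}=k_{\alpha}\delta_{jl}$ gives II), and the regular-character completeness relation gives the inversion formula in I). One point worth flagging explicitly: your computation in IV) yields
\[
D(h)E_{ij}^{\alpha}=\sum_{k=1}^{w^{\alpha}}\varphi_{ki}^{\alpha}(h)\,E_{kj}^{\alpha},
\]
whereas the paper prints $\varphi_{kj}^{\alpha}(h)$ in place of $\varphi_{ki}^{\alpha}(h)$; the paper's version is a typo (the free index $i$ on the left does not appear on its right-hand side), and your formula is the correct transformation law. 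Finally, your linear-independence argument for the basis claim in IV) can be shortened: once $k_{\alpha}\geq 1$, part II) already exhibits the $E_{ij}^{\alpha}$ as pairwise orthogonal vectors of Hilbert--Schmidt norm squared $k_{\alpha}>0$, so independence is immediate without invoking the matrix-unit relations.
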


\begin{remark}
\label{rem82}
Note that from point II) of the theorem it follows that if $k_{\alpha }=0$
for some $\alpha \in \{1,...,r\}$ (that is irreducible representation $%
\varphi ^{\alpha }$ is not included in $D)$, then necessarily $%
E_{ij}^{\alpha }=0$ for all $i,j=1,2,....,w^{\alpha }$.This important
property will be considered in details below. From this part of the theorem
it follows also that the equations.%
\[
E_{ij}^{\alpha }=\frac{w^{\alpha }}{n}\sum_{g\in G}\varphi _{ji}^{\alpha
}(g^{-1})D(g) 
\]%
describe transformation of orthogonalization of operators $D(g),$ $\ g\in G$
in the space \bigskip $\Hom(V)$ with the Hilbert-Schmidt scalar product.
\end{remark}

\bigskip Using elementary methods or the famous Peter-Weyl theorem one can
prove

\begin{theorem}
\label{th84}
The operators $\{E_{ij}^{\alpha }:\quad \alpha =1,2,...,r,\quad
i,j=1,2,..,w^{\alpha }\}$ are linearly independent if and only if the
operators $\{D(g),$ $\ g\in G\}$ are linearly independent. Obviously in this
case both sets form a bases of the algebra $A_{V}[G]$ .
\end{theorem}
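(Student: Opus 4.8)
The plan is to package the two expansions relating $\{D(g):g\in G\}$ and $\{E_{ij}^{\alpha}\}$ into a single invertible linear change of coordinates, after which the equivalence of linear independence is automatic. Both families are finite of the same cardinality $n=|G|$: the operators $D(g)$ are $n$ in number, while the index $(\alpha,i,j)$ runs over $\sum_{\alpha=1}^{r}(w^{\alpha})^{2}$ values, which equals $n$ by the classical identity $\sum_{\alpha}(w^{\alpha})^{2}=|G|$. Thus I can regard $E=(E_{ij}^{\alpha})$ and $D=(D(g))$ as column vectors of the same length $n$ with entries in $A_{V}[G]$, and the whole argument reduces to comparing their ranks (dimensions of their spans).

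First I would write the two known expansions in matrix form. The defining formula $E_{ij}^{\alpha}=\tfrac{w^{\alpha}}{n}\sum_{g}\varphi_{ji}^{\alpha}(g^{-1})D(g)$ reads $E=TD$ with $T_{(\alpha ij),g}=\tfrac{w^{\alpha}}{n}\varphi_{ji}^{\alpha}(g^{-1})$, while Theorem~\ref{th81}~I), namely $D(g)=\sum_{\alpha ij}\varphi_{ij}^{\alpha}(g)E_{ij}^{\alpha}$, reads $D=SE$ with $S_{g,(\alpha ij)}=\varphi_{ij}^{\alpha}(g)$. The crucial point is that both $T$ and $S$ are $n\times n$ matrices depending only on the irreducible matrix coefficients of $G$, not on the representation $D$ at all; so the same fixed invertible change of coordinates governs every $D$.

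The one nontrivial step is to verify $TS=I$. Its $((\alpha ij),(\beta kl))$ entry equals $\tfrac{w^{\alpha}}{n}\sum_{g\in G}\varphi_{ji}^{\alpha}(g^{-1})\varphi_{kl}^{\beta}(g)$, and the Schur orthogonality relations for the irreducible matrix elements of $G$---the finite-group form of the Peter--Weyl theorem, the same orthogonality already used in Appendix~\ref{AppA}---evaluate the inner sum to $\tfrac{n}{w^{\alpha}}\delta^{\alpha\beta}\delta_{ik}\delta_{jl}$, so that $TS=I$. Since $T$ and $S$ are square of equal size, a one-sided inverse suffices: $TS=I$ forces $T$ to be invertible with $T^{-1}=S$ and $ST=I$ as well (the latter simply re-deriving Theorem~\ref{th81}~I) as a consistency check). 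I expect this orthogonality computation to be the main, and essentially the only, obstacle; the dimension count $\sum_{\alpha}(w^{\alpha})^{2}=n$ is exactly what licenses the square-matrix shortcut, and without it one would have to argue invertibility more carefully.

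To finish, because $E=TD$ with $T$ invertible the two families span the same subspace of $\Hom(V)$, namely $A_{V}[G]=\Span_{\mathbb{C}}\{D(g)\}$, and they have equal rank; hence $\{E_{ij}^{\alpha}\}$ is linearly independent if and only if $\{D(g)\}$ is, which is the asserted equivalence. In that independent case each family consists of $n=\dim A_{V}[G]$ vectors spanning $A_{V}[G]$, so both are bases, establishing the final sentence of the statement.
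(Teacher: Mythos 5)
Your proof is correct and is essentially the paper's own route: the paper justifies Theorem~\ref{th84} only by appeal to ``elementary methods or the famous Peter--Weyl theorem,'' and your argument --- writing the definition $E_{ij}^{\alpha}=\tfrac{w^{\alpha}}{n}\sum_{g}\varphi_{ji}^{\alpha}(g^{-1})D(g)$ as $E=TD$ and inverting the representation-independent square matrix $T$ via the Schur orthogonality relations (the finite-group form of Peter--Weyl), so that linear dependences among the $E_{ij}^{\alpha}$ and among the $D(g)$ correspond bijectively --- is exactly that elementary proof spelled out. The supporting steps (the count $\sum_{\alpha}(w^{\alpha})^{2}=|G|$ making $T$ square, the one-sided inverse $TS=I$ forcing invertibility, and the conclusion that both independent families span $A_{V}[G]$ and hence are bases) are all sound.
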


\bigskip From Th.81 it follows, in particular, that if the
irreducible representation $\varphi ^{\alpha }$ of $G$
has multiplicity one in the representation $D$ then the operators 
$E_{jj}^{\alpha },$ $j=1,2,..,w^{\alpha }$ are a
hermitian projectors of rank one. Such a projectors have the following
properties

\begin{theorem}
\textbf{Let }$P=(p_{ij})\in M(n,%
\mathbb{C}
)$ and $P^{2}=P=P^{+}$, $\rank(P)=1$ and let $%
P_{q}=(p_{iq}):i=1,...n$ be a non-zero coloumn of P, then 
\[
0<p_{qq}\leq 1. 
\]
\end{theorem}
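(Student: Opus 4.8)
The plan is to extract everything from the two defining relations $P=P^{\dagger}$ and $P^{2}=P$ at the level of a single diagonal entry, so that the rank hypothesis is barely needed. First I would use Hermiticity, $p_{kq}=\overline{p_{qk}}$, to rewrite the $(q,q)$ entry of $P^{2}=P$. Expanding $(P^{2})_{qq}=\sum_{k}p_{qk}p_{kq}$ and substituting $p_{qk}=\overline{p_{kq}}$ turns the product $p_{qk}p_{kq}$ into $|p_{kq}|^{2}$, so the idempotency relation $(P^{2})_{qq}=p_{qq}$ becomes the single scalar identity
\[
p_{qq}=\sum_{k=1}^{n}|p_{kq}|^{2}=\|P_{q}\|^{2}.
\]
In particular $p_{qq}$ is real and non-negative, and it vanishes exactly when the whole column $P_{q}$ is zero.

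The lower bound is then immediate: by hypothesis $P_{q}$ is a non-zero column, so $\|P_{q}\|^{2}>0$ and hence $p_{qq}>0$. For the upper bound I would isolate the $k=q$ summand in the same identity. Since $p_{qq}$ is real, that term equals $p_{qq}^{2}$, and all the remaining summands are non-negative, giving $p_{qq}\ge p_{qq}^{2}$. Dividing by $p_{qq}>0$ yields $p_{qq}\le 1$, which together with the previous step proves $0<p_{qq}\le 1$.

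A shorter, more geometric route uses the rank-one hypothesis directly: a Hermitian idempotent is an orthogonal projection, and a rank-one orthogonal projection has the form $P=vv^{\dagger}$ for some unit vector $v$, so that $p_{ij}=v_{i}\overline{v_{j}}$. Then $p_{qq}=|v_{q}|^{2}$ and $P_{q}=\overline{v_{q}}\,v$, whence $P_{q}\neq 0\iff v_{q}\neq 0$; combined with $\sum_{i}|v_{i}|^{2}=\|v\|^{2}=1$ this gives $0<|v_{q}|^{2}\le 1$ at once. I would present the first (index-level) argument as the main proof, since it is self-contained and in fact never uses $\rank(P)=1$ — the conclusion holds for any Hermitian idempotent and any of its non-zero columns. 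The only points requiring a word of care are fixing the Hermiticity convention $p_{ij}=\overline{p_{ji}}$ and noting that ``non-zero column'' means precisely $\|P_{q}\|^{2}>0$; there is no real obstacle beyond this bookkeeping.
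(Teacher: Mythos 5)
Your proof is correct. Note first that the paper itself states this theorem in Appendix~\ref{AppC} without any proof at all (it is used only to justify, in the subsequent corollary, that a non-zero column $q$ of the rank-one projector $E_{jj}^{\alpha}$ can be chosen with $(E_{jj}^{\alpha})_{qq}>0$, which in turn makes the normalization $N_{j_{\nu}}^{\nu}>0$ in Theorem~\ref{th58} well defined), so there is no printed argument to compare against; your write-up supplies the missing details. Your index-level computation is exactly the natural one: hermiticity turns the $(q,q)$ entry of $P^{2}=P$ into
\begin{equation}
p_{qq}=\sum_{k=1}^{n}\left\vert p_{kq}\right\vert ^{2}=\Vert P_{q}\Vert^{2},
\end{equation}
from which $p_{qq}>0$ follows from $P_{q}\neq 0$, and $p_{qq}\leq 1$ follows by isolating the $k=q$ term to get $p_{qq}\geq p_{qq}^{2}$. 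Your observation that $\rank(P)=1$ is never used is also correct: the bound $0<p_{qq}\leq 1$ holds for any hermitian idempotent (equivalently, $0\leq P\leq \mathds{1}$ forces every diagonal entry into $[0,1]$); the rank-one hypothesis in the paper merely reflects the situation in which the theorem is applied, namely projectors $E_{jj}^{\alpha}$ arising from multiplicity-one irreducible components. Your alternative route via $P=vv^{\dagger}$ with $\Vert v\Vert=1$ is equally valid and is where the rank-one assumption would genuinely enter; either version is acceptable, and the first has the advantage of proving the more general statement.
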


\begin{corollary}
\bigskip If the irreducible representation $\varphi ^{\alpha }$%
 of $G$ has multiplicity one in the representation $D$, then for any $j=1,2,..,w^{\alpha },$ there exist a
non-zero column $q$ in the projector $E_{jj}^{\alpha }$,
such that $(E_{jj}^{\alpha })_{qq}>0.$
\end{corollary}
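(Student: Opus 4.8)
The plan is to read off the Corollary as a direct application of the preceding Theorem on rank-one hermitian projectors, after checking that the operator $E_{jj}^{\alpha}$ meets its hypotheses under the multiplicity-one assumption.

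First I would record that, for each fixed $j$, the operator $P \equiv E_{jj}^{\alpha}$ is a rank-one hermitian projector. Idempotency is immediate from the composition rule of Th.~\ref{th81} III), which gives $E_{jj}^{\alpha}E_{jj}^{\alpha}=E_{jj}^{\alpha}$; hermiticity, $(E_{jj}^{\alpha})^{\dagger}=E_{jj}^{\alpha}$, follows from Def.~\ref{def79} together with the unitarity of $\varphi^{\alpha}$ (one checks $(E_{ij}^{\alpha})^{\dagger}=E_{ji}^{\alpha}$ by substituting $h=g^{-1}$ and using $\overline{\varphi^{\alpha}_{ji}(g)}=\varphi^{\alpha}_{ij}(g^{-1})$). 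To fix the rank I would invoke the Hilbert--Schmidt orthogonality of Th.~\ref{th81} II): since $P$ is a hermitian idempotent, $\tr P=\tr(P^{\dagger}P)=(E_{jj}^{\alpha},E_{jj}^{\alpha})=k_{\alpha}$, and the trace of an orthogonal projection equals its rank, so $\rank E_{jj}^{\alpha}=k_{\alpha}=1$ in the multiplicity-one case.

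With $P=E_{jj}^{\alpha}$ now known to be a rank-one hermitian projector, it is in particular non-zero, hence has at least one non-zero column; I would fix such a column index $q$, so that $P_{q}=((E_{jj}^{\alpha})_{iq})_{i=1}^{n}\neq 0$. Applying the preceding Theorem verbatim to this $P$ yields $0<(E_{jj}^{\alpha})_{qq}\leq 1$, and in particular $(E_{jj}^{\alpha})_{qq}>0$, which is precisely the assertion for this $j$. Letting $j$ range over $1,\ldots,w^{\alpha}$ finishes the argument.

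There is no genuinely hard step inside the Corollary: all the analytic content --- that a non-zero column of a rank-one hermitian projector has strictly positive diagonal entry --- is already carried by the quoted Theorem, whose proof rests on the factorization $P=vv^{\dagger}$ with $\|v\|=1$ and the identity $p_{qq}=|v_{q}|^{2}$. The only point deserving a moment of care is the bookkeeping identifying $\rank E_{jj}^{\alpha}$ with the multiplicity $k_{\alpha}$, for which the Hilbert--Schmidt relation of Th.~\ref{th81} II) is exactly the right tool.
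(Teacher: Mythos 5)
Your proof is correct and follows essentially the same route as the paper: the paper's Corollary is obtained precisely by noting (via Th.~81) that multiplicity one makes $E_{jj}^{\alpha}$ a rank-one hermitian projector, hence non-zero, and then applying the preceding Theorem on non-zero columns of such projectors. Your only addition is to spell out the details the paper leaves implicit (idempotency from the composition rule, hermiticity from unitarity, and $\operatorname{rank} E_{jj}^{\alpha}=k_{\alpha}=1$ from the Hilbert--Schmidt relation), which is a faithful filling-in rather than a different argument.
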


\bigskip

The structure of the algebra $A_{V}[G],$in
particular its dimension, is the following .

\begin{theorem}
\[
A_{V}[G]=\bigoplus _{\alpha }span_{%
\mathbb{C}
}\{E_{ij}^{\alpha }\}:\varphi ^{\alpha }\in D,
\]%
in particular 
\[
\dim A_{V}[G]=\sum_{\alpha :\varphi ^{\alpha }\in D}(\dim \varphi ^{\alpha
})^{2}.
\]%
The operators $D(g):g\in G$ are linearly independent, in
this case $\dim A_{V}[G]=|G|,$ iff each irreducible
representations $\varphi ^{\alpha },\alpha =1,2,....,r$, appears in
the decomposition $D=\oplus _{\alpha =1}^{r}k_{\alpha }\varphi ^{\alpha }$%
 e.i. $k_{\alpha }\geq 1,\quad \forall \alpha =1,2,...,r.$
\end{theorem}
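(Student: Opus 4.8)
The plan is to leverage the structural results already established for the matrix operators $E_{ij}^{\alpha}$ in Theorem~\ref{th81} together with the linear-independence criterion of Theorem~\ref{th84}, so that essentially no fresh computation is needed. First I would establish the spanning identity $A_{V}[G]=\Span_{\mathbb{C}}\{E_{ij}^{\alpha}\}$. One inclusion is immediate, since each $E_{ij}^{\alpha}$ is by definition a linear combination of the operators $D(g)$ and hence lies in $A_{V}[G]$; the reverse inclusion is exactly the reconstruction formula $D(g)=\sum_{ij\alpha}\varphi_{ij}^{\alpha}(g)E_{ij}^{\alpha}$ from part I) of Theorem~\ref{th81}, which exhibits every generator $D(g)$ inside the span of the $E_{ij}^{\alpha}$. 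By Remark~\ref{rem82}, $E_{ij}^{\alpha}=0$ whenever $\varphi^{\alpha}\notin D$ (that is, $k_{\alpha}=0$), so the spanning set may be restricted to those $\alpha$ with $\varphi^{\alpha}\in D$.

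Next I would show that this restricted set is linearly independent, which delivers both the dimension formula and the directness of the decomposition at once. The decisive observation is that the orthogonality relations in part II) of Theorem~\ref{th81}, namely $(E_{ij}^{\alpha},E_{kl}^{\beta})=k_{\alpha}\delta^{\alpha\beta}\delta_{ik}\delta_{jl}$, assert that the operators $\{E_{ij}^{\alpha}:\varphi^{\alpha}\in D\}$ are pairwise orthogonal for the Hilbert--Schmidt scalar product and each has strictly positive norm $k_{\alpha}\geq 1$; pairwise-orthogonal nonzero vectors are automatically linearly independent. Counting them then gives $\dim A_{V}[G]=\sum_{\alpha:\varphi^{\alpha}\in D}(w^{\alpha})^{2}=\sum_{\alpha:\varphi^{\alpha}\in D}(\dim\varphi^{\alpha})^{2}$. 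Moreover the composition rule in part III), $E_{ij}^{\alpha}E_{kl}^{\beta}=\delta^{\alpha\beta}\delta_{jk}E_{il}^{\alpha}$, shows that for each fixed $\alpha$ the subspace $\Span_{\mathbb{C}}\{E_{ij}^{\alpha}\}_{i,j}$ is a two-sided ideal isomorphic to $M(w^{\alpha},\mathbb{C})$ and that products of elements attached to different $\alpha$ vanish; together with the orthogonality this proves that the sum is direct,
\[
A_{V}[G]=\bigoplus_{\alpha:\varphi^{\alpha}\in D}\Span_{\mathbb{C}}\{E_{ij}^{\alpha}\}.
\]

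Finally, for the linear-independence criterion I would invoke the classical identity $\sum_{\alpha=1}^{r}(\dim\varphi^{\alpha})^{2}=|G|$ over the complete family of inequivalent irreducible representations. The dimension formula just obtained yields $\dim A_{V}[G]=\sum_{\alpha:\varphi^{\alpha}\in D}(\dim\varphi^{\alpha})^{2}\leq\sum_{\alpha=1}^{r}(\dim\varphi^{\alpha})^{2}=|G|$, with equality precisely when every $\alpha$ contributes, i.e.\ when $k_{\alpha}\geq 1$ for all $\alpha=1,\dots,r$. Since $\{D(g)\}$ is a spanning set of $A_{V}[G]$ of cardinality $|G|$, it is linearly independent if and only if it forms a basis, i.e.\ if and only if $\dim A_{V}[G]=|G|$ (which is also the content of Theorem~\ref{th84}); this equality is therefore equivalent to $k_{\alpha}\geq 1$ for every $\alpha$, the asserted criterion. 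I do not expect a genuine obstacle here: the single conceptual point, rather than a difficulty, is recognizing that Hilbert--Schmidt orthogonality upgrades directly to linear independence and thus makes the dimension count exact, after which everything is bookkeeping built on Theorems~\ref{th81} and~\ref{th84}.
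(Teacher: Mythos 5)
Your proof is correct. The paper states this theorem without any proof (offering it as an elementary consequence of the preceding appendix results), and your argument is exactly the natural completion along the lines the paper intends: the reconstruction formula and the Hilbert--Schmidt orthogonality of Theorem~\ref{th81}, together with Remark~\ref{rem82}, give the direct-sum decomposition and the exact dimension count, and the Burnside identity $\sum_{\alpha =1}^{r}(\dim \varphi ^{\alpha })^{2}=|G|$ converts that count into the stated criterion for linear independence of the $D(g)$.
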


\bigskip From this theorem it follows that in order to check the linear
independence of the operators $D(g),$ $\ g\in G$ in the representation
algebra algebra $A_{V}[G]$ it is enough to know the multiplicities $%
k_{\alpha }$ of the irreducible representations $D^{\alpha },\alpha
=1,2,....,r$ in the representation $D.$ 

\newpage


\begin{thebibliography}{27}
\expandafter\ifx\csname natexlab\endcsname\relax\def\natexlab#1{#1}\fi
\expandafter\ifx\csname bibnamefont\endcsname\relax
  \def\bibnamefont#1{#1}\fi
\expandafter\ifx\csname bibfnamefont\endcsname\relax
  \def\bibfnamefont#1{#1}\fi
\expandafter\ifx\csname citenamefont\endcsname\relax
  \def\citenamefont#1{#1}\fi
\expandafter\ifx\csname url\endcsname\relax
  \def\url#1{\texttt{#1}}\fi
\expandafter\ifx\csname urlprefix\endcsname\relax\def\urlprefix{URL }\fi
\providecommand{\bibinfo}[2]{#2}
\providecommand{\eprint}[2][]{\url{#2}}

\bibitem[{\citenamefont{Fulton and Harris}(1991)}]{Fulton}
\bibinfo{author}{\bibfnamefont{W.}~\bibnamefont{Fulton}} \bibnamefont{and}
  \bibinfo{author}{\bibfnamefont{J.}~\bibnamefont{Harris}},
  \emph{\bibinfo{title}{Representation Theory - A First Course}}
  (\bibinfo{publisher}{Springer-Verlag, New York}, \bibinfo{year}{1991}).
  
  \bibitem[{\citenamefont{Landsman}(1998)\citenamefont{Landsman}}]{Landsman}
\bibinfo{author}{\bibfnamefont{N.~P.} \bibnamefont{Landsman}},
  \urlprefix\url{http://arxiv.org/pdf/math-ph/9807030v1.pdf}.
  
  \bibitem[{\citenamefont{Curtis and Reiner}(1991)}]{Curtis}
\bibinfo{author}{\bibfnamefont{C.~W.}~\bibnamefont{Curtis}} \bibnamefont{and}
  \bibinfo{author}{\bibfnamefont{I.}~\bibnamefont{Reiner}},
  \emph{\bibinfo{title}{Representation theory of finite groups and
 associative algebras}}
  (\bibinfo{publisher}{Pure and Applied Mathematics Vol. XI, Interscience Publishers}, \bibinfo{year}{1991}).
  
\bibitem[{\citenamefont{Littlewood}(1991)}]{Littlewood}
\bibinfo{author}{\bibfnamefont{D.~E.}~\bibnamefont{Littlewood}},
  \emph{\bibinfo{title}{The Theory of Group Characters and Matrix Representations of Groups}}
  (\bibinfo{publisher}{Oxford at the Clarendon Press}, \bibinfo{year}{1991}).
  
  \bibitem[{\citenamefont{Boerner}(1970)}]{Boerner}
\bibinfo{author}{\bibfnamefont{H.}~\bibnamefont{Boerner}},
  \emph{\bibinfo{title}{Representations of Groups}}
  (\bibinfo{publisher}{North-Holland}, \bibinfo{year}{1970}).
  
  \bibitem[{\citenamefont{Studzinski et~al.}(2013)\citenamefont{Studzinski,Horodecki and Mozrzymas}}]{Studzinski1}
\bibinfo{author}{\bibfnamefont{M.} \bibnamefont{Studzi{\'n}ski}},
  \bibinfo{author}{\bibfnamefont{M.}~\bibnamefont{Horodecki}}, \bibnamefont{and}
  \bibinfo{author}{\bibfnamefont{M.}~\bibnamefont{Mozrzymas}},
  \bibinfo{journal}{J. Phys. A: Math. Theor.} \textbf{\bibinfo{volume}{46}},
  \bibinfo{pages}{395303} (\bibinfo{year}{2013}).
  
  \bibitem[{\citenamefont{Studzinski et~al.}(2014)\citenamefont{Studzinski, Cwiklinski, Horodecki and Mozrzymas}}]{Studzinski2}
\bibinfo{author}{\bibfnamefont{M.} \bibnamefont{Studzi{\'n}ski}},
\bibinfo{author}{\bibfnamefont{P.}~\bibnamefont{{\'C}wikli{\'n}ski}},
  \bibinfo{author}{\bibfnamefont{M.}~\bibnamefont{Horodecki}}, \bibnamefont{and}
  \bibinfo{author}{\bibfnamefont{M.}~\bibnamefont{Mozrzymas}},
  \urlprefix\url{http://arxiv.org/abs/1311.2772}.
  
  \bibitem[{\citenamefont{Horodecki et~al.}(1996)\citenamefont{Horodecki, Horodecki and Horodecki}}]{Horodecki2}
\bibinfo{author}{\bibfnamefont{M.} \bibnamefont{Horodecki}},
  \bibinfo{author}{\bibfnamefont{P.}~\bibnamefont{Horodecki}}, \bibnamefont{and}
  \bibinfo{author}{\bibfnamefont{R.}~\bibnamefont{Horodecki}},
  \bibinfo{journal}{Phys. Lett. A} \textbf{\bibinfo{volume}{223}},
  \bibinfo{pages}{1-8} (\bibinfo{year}{1996}).
  
   \bibitem[{\citenamefont{Eggeling et~al.}(1993)\citenamefont{Eggeling and Werner}}]{Eggeling1}
\bibinfo{author}{\bibfnamefont{T.} \bibnamefont{Eggeling}},
  \bibinfo{author}{\bibfnamefont{R.~F.}~\bibnamefont{Werner}},
  \bibinfo{journal}{Phys. Rev. A} \textbf{\bibinfo{volume}{63}},
  \bibinfo{pages}{042111} (\bibinfo{year}{2000}).
  
\bibitem[{\citenamefont{Peres}(1996)\citenamefont{Peres}}]{Peres1}
\bibinfo{author}{\bibfnamefont{A.} \bibnamefont{Peres}},
  \bibinfo{journal}{Phys. Rev. Lett.} \textbf{\bibinfo{volume}{77}},
  \bibinfo{pages}{1413-1415} (\bibinfo{year}{1996}).
  
  \bibitem[{\citenamefont{Zhang et~al.}(1993)\citenamefont{Zhang, Kauffman and Werner}}]{Zhang1}
\bibinfo{author}{\bibfnamefont{Y.} \bibnamefont{Zhang}},
  \bibinfo{author}{\bibfnamefont{L.~H.}~\bibnamefont{Kauffman}}, \bibnamefont{and}
  \bibinfo{author}{\bibfnamefont{R.~F.}~\bibnamefont{Werner}},
  \bibinfo{journal}{Int.J.Quant.Inf.} \textbf{\bibinfo{volume}{5}},
  \bibinfo{pages}{469-507} (\bibinfo{year}{2007}).
  
   \bibitem[{\citenamefont{Tura et~al.}(2012)\citenamefont{Tura}}]{Tura1}
\bibinfo{author}{\bibfnamefont{J.} \bibnamefont{Tura}},
  \bibinfo{author}{\bibfnamefont{R.}~\bibnamefont{Augusiak}},
  \bibinfo{author}{\bibfnamefont{P.}~\bibnamefont{Hyllus}}, 
  \bibinfo{author}{\bibfnamefont{J.}~\bibnamefont{Samsonowicz}},
  \bibnamefont{and}
  \bibinfo{author}{\bibfnamefont{M.}~\bibnamefont{Lewenstein}},
  \bibinfo{journal}{Phys. Rev. A} \textbf{\bibinfo{volume}{85}},
  \bibinfo{pages}{ 060302(R)} (\bibinfo{year}{2012}).
  
  \bibitem[{\citenamefont{Augusiak et~al.}(2012)\citenamefont{Augusiak}}]{Augusiak1}
\bibinfo{author}{\bibfnamefont{R.} \bibnamefont{Augusiak}},
  \bibinfo{author}{\bibfnamefont{J.}~\bibnamefont{Tura}},
  \bibinfo{author}{\bibfnamefont{J.}~\bibnamefont{Samsonowicz}}, \bibnamefont{and}
  \bibinfo{author}{\bibfnamefont{M.}~\bibnamefont{Lewenstein}},
   \bibinfo{journal}{Phys. Rev. A} \textbf{\bibinfo{volume}{86}},
  \bibinfo{pages}{042316} (\bibinfo{year}{2012}).
  
  \bibitem[{\citenamefont{Montanaro}()}]{Montanaro1}
\bibinfo{author}{\bibfnamefont{A.} \bibnamefont{Montanaro}},
  \emph{\bibinfo{title}{Weak multiplicativity for random quantum channels}},
  \urlprefix\url{http://arxiv.org/abs/1112.5271}.
  
  \bibitem[{\citenamefont{Brandao}()}]{Brandao1}
\bibinfo{author}{\bibfnamefont{F.~G.~S.~L.} \bibnamefont{Brandao}},
\bibinfo{author}{\bibfnamefont{A.~W.} \bibnamefont{Harrow}},
  \emph{\bibinfo{title}{Quantum de Finetti Theorems under Local Measurements with Applications}},
  \urlprefix\url{http://arxiv.org/abs/1210.6367}.
  
  \bibitem[{\citenamefont{Gavarini et~al.}(2012)\citenamefont{Gavarini and Papi}}]{Gavarini1}
\bibinfo{author}{\bibfnamefont{F.} \bibnamefont{Gavarini}},
  \bibinfo{author}{\bibfnamefont{P.}~\bibnamefont{Papi}},
  \bibinfo{journal}{Journal of Algebra} \textbf{\bibinfo{volume}{194}},
  \bibinfo{pages}{275-298} (\bibinfo{year}{1997}).
  
  \bibitem[{\citenamefont{Brauer}(2012)\citenamefont{Gavarini}}]{Brauer1}
\bibinfo{author}{\bibfnamefont{R.} \bibnamefont{Brauer}},
  \bibinfo{journal}{Annals of Mathematics} \textbf{\bibinfo{volume}{38}} \textbf{\bibinfo{No.}{4}},
  \bibinfo{pages}{857-872} (\bibinfo{year}{1937}).
  
  \bibitem[{\citenamefont{Pan}(1995)\citenamefont{Pan}}]{Pan1}
\bibinfo{author}{\bibfnamefont{F.} \bibnamefont{Pan}},
  \bibinfo{journal}{J. Phys.  A: Math. Gen.} \textbf{\bibinfo{volume}{28}},
  \bibinfo{pages}{3139-3156} (\bibinfo{year}{1995}).
  
  \bibitem[{\citenamefont{Koike}(1989)\citenamefont{Koike}}]{Koike1}
\bibinfo{author}{\bibfnamefont{K.} \bibnamefont{Koike}},
  \bibinfo{journal}{Advances Math.} \textbf{\bibinfo{volume}{74}},
  \bibinfo{pages}{57-86} (\bibinfo{year}{1989}).
  
  \bibitem[{\citenamefont{Turaev}(1989)\citenamefont{Turaev}}]{Turaev1}
\bibinfo{author}{\bibfnamefont{V.} \bibnamefont{Turaev}},
  \bibinfo{journal}{Izv. Akad. Nauk SSSR} \textbf{\bibinfo{volume}{53}},
  \bibinfo{pages}{1073-1107} (\bibinfo{year}{1989}).
  
  \bibitem[{\citenamefont{Brundan}(2009)\citenamefont{Brundan}}]{Brundan1}
\bibinfo{author}{\bibfnamefont{J.} \bibnamefont{Brundan}},
\bibinfo{author}{\bibfnamefont{C.} \bibnamefont{Stroppel}},
  \bibinfo{journal}{Advances Math.} \textbf{\bibinfo{volume}{231}} ,
  \bibinfo{pages}{709-773} (\bibinfo{year}{2012}).
  
  \bibitem[{\citenamefont{Benkart et al.}(1994)\citenamefont{Benkart}}]{Benkart1}
\bibinfo{author}{\bibfnamefont{G.} \bibnamefont{Benkart}},
\bibinfo{author}{\bibfnamefont{M.} \bibnamefont{Chakrabarti}},
\bibinfo{author}{\bibfnamefont{T.} \bibnamefont{Halverson}},
\bibinfo{author}{\bibfnamefont{R.} \bibnamefont{Leduc}},
\bibinfo{author}{\bibfnamefont{C.} \bibnamefont{Lee}}, \bibnamefont{and}
\bibinfo{author}{\bibfnamefont{J.} \bibnamefont{Stroomer}}
  \bibinfo{journal}{J. Algebra} \textbf{\bibinfo{volume}{166}},
  \bibinfo{pages}{529-567} (\bibinfo{year}{1994}).
  
  
   \bibitem[{\citenamefont{Cox et~al.}(1993)\citenamefont{Cox, Visscher, Doty and Martin}}]{Cox1}
\bibinfo{author}{\bibfnamefont{A.} \bibnamefont{Cox}},
  \bibinfo{author}{\bibfnamefont{M.}~\bibnamefont{Visscher}},
  \bibinfo{author}{\bibfnamefont{S.}~\bibnamefont{Doty}}, \bibnamefont{and}
  \bibinfo{author}{\bibfnamefont{P.}~\bibnamefont{Martin}},
  \bibinfo{journal}{ Journal of Algebra} \textbf{\bibinfo{volume}{320}},
  \bibinfo{pages}{169-212} (\bibinfo{year}{2008}).

\end{thebibliography}
\end{document}